\pgfplotsset{compat = 1.16}
\newtheorem{theorem}{Theorem}[section]
\newtheorem{definition}[theorem]{Definition}
\newtheorem{proposition}[theorem]{Proposition}
\newtheorem{lemma}[theorem]{Lemma}
\newtheorem{remark}[theorem]{Remark}
\newtheorem{example}[theorem]{Example}
\DeclareMathSymbol{\shortminus}{\mathbin}{AMSa}{"39}
\DeclareMathOperator*{\E}{\mathbb{E}}
\DeclareMathOperator{\id}{id}
\DeclareMathOperator{\Ima}{Im}
\DeclareMathOperator{\Ker}{Ker}
\DeclareMathOperator{\sign}{sign}
\DeclareMathOperator{\Span}{Span}
\DeclareMathOperator{\spec}{Spec}
\newcommand{\T}{\mathsf{T}}
\newcommand{\To}{\mathsf{\Gamma}}
\title{The asymmetric quantum cloning region}
\author{Ion Nechita}
\email{ion.nechita@univ-tlse3.fr}
\address{Laboratoire de Physique Th\'eorique, Universit\'e de Toulouse, CNRS, UPS, France}
\author{Clément Pellegrini}
\email{clement.pellegrini@math.univ-toulouse.fr}
\address{Institut de Mathématiques, Université de Toulouse, CNRS, UPS, France}
\author{Denis Rochette}
\email{denis.rochette@math.univ-toulouse.fr}
\address{Institut de Mathématiques, Université de Toulouse, CNRS, UPS, France}
\date{\today}
\begin{document}

\maketitle

\begin{abstract}
  Quantum cloning is a fundamental protocol of quantum information theory. Perfect universal quantum cloning is prohibited by the laws of quantum mechanics, only imperfect copies being reachable. Symmetric quantum cloning is concerned with case when the quality of the clones is identical. In this work, we study the general case of $1 \to N$ asymmetric cloning, where one asks for arbitrary qualities of the clones. We characterize, for all Hilbert space dimensions and number of clones, the set of all possible clone qualities. This set is realized as the nonnegative part of the unit ball of a newly introduced norm, which we call the $\mathcal Q$-norm.  We also provide a closed form expression for the quantum cloner achieving a given clone quality vector. Our analysis relies on the Schur-Weyl duality and on the study of the spectral properties of partially transposed permutation operators.
\end{abstract}

\tableofcontents

\section{Introduction}

In quantum mechanics, the information is encoded in the state $\vert\psi\rangle$ of a quantum system, that is a unit vector of a complex Hilbert space $\mathcal H$. When thinking of information and communication, a general task of a computer is to "copy and paste" the information. This task, up to a certain precision, is perfectly performed by a classical computer. At the level of quantum mechanics, the duplication of a quantum state is a protocol $\vert\psi\rangle\mapsto\vert\psi\rangle\otimes\vert\psi\rangle$ or $\vert\psi\rangle\mapsto\vert\psi\rangle\otimes\ldots\otimes\vert\psi\rangle=\vert\psi\rangle^{\otimes N}$ when one wants to obtain $N$ copies of $\vert\psi\rangle$. Such protocol carries the name of \emph{quantum cloning}. More generally, one aims at cloning mixed states $\rho\mapsto\rho^{\otimes N}$, where $\rho$ is a density matrix, that is a positive trace one operator on $\mathcal H$. If $\mathcal H=\mathbb C^d$, a quantum cloner is an application $T$ from $\mathcal M_d$ to $\mathcal M_d^{\otimes N}$, where $\mathcal M_d$ is the set of $d\times d$ matrices. As usual, such applications are required to be \emph{quantum channels}, that is completely positive and trace preserving linear maps. This scenario can be relaxed by allowing the output $T(\rho)$ to be a non-product state. In this case, only partial traces are required to be equal to the input,  that is $T_i(\rho)=\rho$, for all clone indices $i=1,\ldots,n$, where $T_i$ is the quantum channel from $\mathcal M_d$ on the $i$-th copy of $\mathcal M_d$ after taking the partial trace on the other copies of $\mathcal M_d$. In this sense $T_i(\rho)$ corresponds to the copy number $i$ of $\rho$. Such protocols are called \emph{perfect quantum cloning}.

Designing perfect quantum cloners, able to clone any state, is known to be impossible; this fundamental fact is known under the name of ``no-cloning theorem'' \cite{dieks1982communication,wootters1982single}. Indeed cloning arbitrary quantum information is forbidden by the rules of quantum mechanics. This no-go result is one of the fundamental differences between classical and quantum information processing, and has received a lot of attention in the recent years.  

Even if perfect cloning is impossible, the story is not over. Since perfect cloning is unreachable, one has to relax the task of obtaining perfect copies of quantum states. This is the central subject of \emph{approximate quantum cloning}. In this context one allows to obtain copies of states which are not perfect. The quality of the copies can be then measured as the distance between the input (i.e the state to be cloned) and the different copies $T_i(\rho)$, ($i=1,\ldots,n$), in terms of a figure of merit called \emph{quantum fidelity}. Two types of approximate cloning can be considered: the \emph{symmetric quantum cloning}, where one asks to have the same quality for each copies or the \emph{asymmetric quantum cloning} where copies are allowed to have different qualities. In terms of quantum fidelity, the non cloning theorem states that the quality can not be one for each copy. 

The quantum cloning problem is a central topic in quantum information theory. Since the pioneering work on universal quantum cloners \cite{buvzek1998universal}, many authors studied the different cloning scenarios : symmetric \cite{werner1998optimal, keyl1999optimal} vs.~asymmetric \cite{cerf2000asymmetric, kay2014optimal, studzinski2014group, hashagen2016universal}, qubit \cite{gisin1997optimal} vs.~qudit \cite{fiurasek2005highly}, phase covariant \cite{bruss2000phase}, probabilistic \cite{duan1998probabilistic}, state dependent \cite{lostaglio2020contextual}, etc \cite{bruss2000approximate, du2005experimental}. For a comprehensive review one can consult \cite{scarani2005quantum, fan2014quantum} where theoretical results as well as practical considerations are exposed. From a concrete point of view where the practical design of concrete quantum cloner are considered one can consult \cite{coyle2020variational} where techniques of machine learning are used to obtain quantum circuits which can clone families of prescribed quantum states. Quantum cloning is an essential ingredient in cryptography protocols: the impossibility of non-cloning prevents a malicious eavesdropper from intercepting a message and copying it without disturbing the original. This main idea is the basis of the BB84 protocol \cite{bennett2020quantum}.

An inherent question is the physically realisable figures of merit for quantum cloners. Briefly speaking, the \emph{asymmetric quantum cloning region} can be described in terms of the quality of the clones produced by quantum channels. For each point in the cloning region, a concrete expression of a quantum cloners corresponding to this figure of merit is of major interest, especially for the optimal quantum cloners. The simplest situation when $N=2$ has been deeply studied with extensive results obtained using different approaches: selective measurements \cite{buvzek1996quantum}, unitary transformations \cite{buvzek1998universal, bruss1998optimal}, prior information \cite{du2005experimental}. The symmetric~$1 \to N$ quantum cloning problem has been fully described by Werner and Keyl in the two papers \cite{werner1998optimal, keyl1999optimal}, where the cloning maps are given by quantum channels. Recently, the asymmetric~$1\to2$ quantum cloning has been revisited in \cite{hashagen2016universal}, where different figures of merit are addressed. In the authors' recent work \cite{nechita2021geometrical}, the complete region has been described in terms of union of ellipses, and the expression of all (not just the optimal) possible cloners has been provided in terms of particular permutation operators.

Several recent papers address the more general problem of asymmetric $1 \to N$ quantum cloning. One the one hand, Kay and his collaborators \cite{kay2009optimal,kay2012optimal, kay2014optimal} consider the optimal quantum fidelity of the different copies is bounded by the largest eigenvalue of an operator. On the other hand, \'Cwikli\'nski, Horodecki, Mozrzymas, and Studzi\'nski \cite{cwiklinski2012region,studzinski2014group} use techniques from group representation theory to analyse the commutant structure of the unitaries operators~$U^{\otimes (N - 1)} \otimes \:  U^*$ and the algebra of partially transposed permutation operators, developed earlier for a more general framework in papers \cite{studzinski2013commutant} and \cite{mozrzymas2014structure, mozrzymas2018simplified}. These papers are principally focused on the optimal quantum cloners. An important result in \cite{kay2012optimal}, describes the optimal quantum cloners in terms of a quadratic relation satisfied by the fildelities of their marginals:
\begin{equation}\label{eq:Kay-formula}
    \sum^N_{i = 1} \big{(} (d + 1) f_i - 1 \big{)} = (d - 1) + \frac{{\Big{(} \sum^N_{i = 1} \sqrt{(d + 1) f_i  - 1} \Big{)}}^2}{N + d - 1}.
\end{equation}
In the equation above, $d$ is the dimension of the Hilbert space, $N$ is the number of clones, and $f_i$ are respectively the optimal achievable fidelities. Using some unitary-equivariant constraints (i.e., commutation relation with all the unitaries $U^{\otimes p} \otimes \bar{U}^{\otimes q}$), Grinko and Ozols \cite{grinko2022linear} have reduced semidefinite optimization problems into linear optimization problems, and in particular the asymmetric cloning problem.

The current paper is devoted to the case of \emph{$1 \to N$  asymmetric quantum cloning}. The originality of our paper lies in the fact that we completely describe, using techniques from functional analysis and representation theory, the cloning region with the help of a newly defined norm $\norm{\cdot}_{\mathcal{Q}}$, which we call the \emph{$\mathcal Q$-norm}.
\begin{definition}
    The $\mathcal{Q}$-norm of a vector $x \in \mathbb{R}^N$ defined by
    \begin{equation*}
        \norm{x}_{\mathcal{Q}} := \frac{d \, \lambda_{\max} \big{(} \sum^{N}_{i = 1} |x_i| \cdot \omega_{(0,i)} \otimes I^{\otimes (N - 1)} \big{)} - \sum_{i=1}^N |x_i|}{d^2 - 1},
    \end{equation*}
    where $\lambda_{\max}(\cdot)$ is the largest eigenvalue.
\end{definition}
More precisely the cloning region
\begin{equation*}
    \mathcal{R}_{N,d} := \bigg{\{} p \in [0,1]^N \: \bigg{|} \: \exists \, T \text{ cloning map s.t. } T_i(\rho) = p_i \!\cdot\! \rho + (1 - p_i) \frac{I}{d} \bigg{\}}
\end{equation*}
appears as the non-negative part of the unit ball of the dual norm $\norm{\cdot}^*_{\mathcal{Q}}$:
$$\mathcal{R}_{N,d} = \bigg{\{} p \in [0,1]^N \: \bigg{|} \: \forall q \in [0,1]^N \, s.t.\, \|q\|_{\mathcal Q} \leq 1, \, \langle p, q\rangle \leq 1\bigg \}.$$

Here, we study the whole set of parameters $p$ reached by quantum cloners, and not only the optimal ones (in some given direction). We naturally recover the result \eqref{eq:Kay-formula} of \cite{kay2012optimal} corresponding to optimal cloners. Surprisingly, the parameters of these optimal quantum cloners do not delimit the whole region, see Figure \ref{fig:flatRegion}. We exhibit points on the border of the cloning region that can not be directly obtained with the quadratic equation \eqref{eq:Kay-formula} of optimality. We start from first principles and use Schur-Weyl duality to design quantum cloners as sum of particular permutation operators. In particular, the technical core of the asymmetric quantum cloning problem is finding the eigenvector associated to the maximal eigenvalue of a certain Hermitian operator appearing in the definition of the $\mathcal Q$-norm. 

In this paper, we shall use the following notation for sets of integers: 
\begin{align*}
\llbracket n \rrbracket &:= \{1,2,\ldots, n\}\\
\llbracket m, n \rrbracket &:= \{m,m+1,\ldots, n\}.
\end{align*}

\bigskip
The paper is structured as follows. Section \ref{sec:asymmetric-cloning} is devoted to present the asymmetric quantum cloning problem. We present our figure of merit approach linked with quantum fidelities. We address the worst case situation and we show that it is equivalent to the average situation. In Section \ref{sec:permutation-operators}, we present the Schur-Weyl ingredient and link the quantum cloners with permutation operators. Next, in Section \ref{sec:optimal-quantum-cloning}, we study in details the eigenvalues and eigenvectors of these permutation operators and we focus on the maximal eigenvalue. This allows us to delimit, in terms of the spectral properties of these operators, the asymmetric quantum cloning region. In Section \ref{sec:cloning-region}, we introduce the $\mathcal Q$-norm, which allows to finally obtain the description of the cloning region in terms of the dual of this norm. In order to streamline the presentation, the proofs of numerous technical lemmas are postponed to appendices.

\section{Asymmetric quantum cloning}\label{sec:asymmetric-cloning}

In this paper, we consider a finite dimensional Hilbert space $\mathcal{H} = \mathbb{C}^d$. We denote by $\mathcal{M}_d$ the corresponding set of complex $d \times d$ matrices and by $\mathcal{U}(d)$ the set of unitary matrices that is $\mathcal{U}(d) = \{ U \in \mathcal{M}_d \: \vert \: U^* U = I \}$, where $I$ denotes the identity matrix. The Hilbert space $\mathcal{H}$ describes a quantum system and the states on $\mathcal{H}$ are the usual density matrices which is denoted by 
\begin{equation*}
    \mathcal{D}_d = \big{\{} \rho \in \mathcal{M}_d \; \big{|} \; \rho \geq 0, \, \Tr(\rho) = 1 \big{\}}.
\end{equation*}
We denote the unnormalized \emph{maximally entangled state} by
\begin{equation}\label{eq:def-omega}
    \omega = \ketbra{\Omega}{\Omega},\quad \textrm{with}\quad\ket{\Omega} = \sum^d_{i = 1} \ket{ii}.
\end{equation}
On a composite systems~$\mathcal{H}^{\otimes N}$, we write~$\rho_{(i)}$ for the state~$\rho$ acting on the Hilbert space $\mathcal{H}$ corresponding to the subsystem~$i$:
$$\rho_{(i)}:= I \otimes I \otimes \cdots \otimes \rho \otimes \cdots \otimes I,$$
where the matrix $\rho$ appears in the $i$-th position. We shall use extensively the notation ~$\ket{\Omega}_{(j, k)}$ which is the unnormalized maximally entangled state between the $j$-th and the $k$-th subsystems, that is~$\ket{\Omega}_{(j, k)} = \sum^d_{i=1} \ket{i}_{(j)} \otimes \ket{i}_{(k)}$. Also, depending on context, we shall index the tensor factors either starting from $0$ or starting from $1$. 

The evolution of mixed states (density matrices) in quantum mechanics is described by a quantum channel $T: \mathcal{M}_d \to \mathcal{M}_{d'}$. Quantum channels are linear maps which are completely positive and trace preserving. In particular they map $\mathcal{D}_d$ into $\mathcal{D}_{d'}$.

When dealing with composite systems, we shall invoke the \emph{partial trace}. Namely, if $T: \mathcal{M}_d \to {(\mathcal{M}_d)}^{\otimes N}$ is a quantum channel, we write 
\begin{equation*}
    T_i(\rho) = \Tr_{\scriptscriptstyle [\![ 1,N ]\!] \setminus \{i\}} \big{[} T(\rho) \big{]},
\end{equation*}
for all $1 \leq i \leq N$ and for all $\rho \in \mathcal{M}_d$. The operation $\Tr_{\scriptscriptstyle [\![ 1,N ]\!] \setminus \{i\}}$ is the partial trace over all copies of $\mathcal{M}_d$ except the copy number $i$. The quantum channels $T_i$ are called the \emph{marginals} of $T$. Similarly, we denote the \emph{partial transpose} of an operator~$M \in \mathcal{M}_d$, on the $i$-th copy of~$\mathcal{M}_d$, by
\begin{equation*}
    M^{\T_i}.
\end{equation*}
When the partial transpose is taken on the first copy of~$\mathcal{M}_d$, we simply write~$$M^{\To}= M^{\T_1}.$$
On the space of complex matrices, we denote by~$\langle \cdot , \cdot \rangle$ the Frobenius inner product:
\begin{equation*}
    \langle A , B \rangle = \Tr \big{[} A^* B \big{]}.
\end{equation*}

\emph{The quantum cloning problem} consists in finding a quantum channel $T: \mathcal{M}_d \to {(\mathcal{M}_d)}^{\otimes N}$ which copies any pure state into \emph{product states}. The \emph{no-cloning theorem} \cite{dieks1982communication,wootters1982single} give a negative answer to this question.
\begin{theorem}[No-cloning theorem]
    There is no quantum channel~$T : \mathcal{M}_d \to {(\mathcal{M}_d)}^{\otimes N}$ such that for all pure quantum states~$\rho \in \mathcal{D}_d$ the following holds
    \begin{equation*}
        T(\rho) = \rho^{\otimes N}.
    \end{equation*}
\end{theorem}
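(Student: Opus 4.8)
The plan is to prove the no-cloning theorem by contradiction, exploiting linearity of the hypothetical cloning channel against the nonlinearity inherent in the map $\rho \mapsto \rho^{\otimes N}$.

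\bigskip

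First I would assume, for contradiction, that a quantum channel $T : \mathcal{M}_d \to (\mathcal{M}_d)^{\otimes N}$ exists with $T(\rho) = \rho^{\otimes N}$ for every pure state $\rho \in \mathcal{D}_d$. Since $T$ is a quantum channel it is in particular a \emph{linear} map on $\mathcal{M}_d$. I would then pick two distinct non-orthogonal pure states, say $\rho = \ketbra{\psi}{\psi}$ and $\sigma = \ketbra{\varphi}{\varphi}$, and form the convex combination $\tau = \tfrac{1}{2}\rho + \tfrac{1}{2}\sigma$, which is again a density matrix (though not pure unless $\rho = \sigma$). The point is that $\tau$ is itself a legitimate state, so we can evaluate $T$ on it, but it is \emph{not} pure, so the cloning hypothesis only directly constrains the pure inputs.

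\bigskip

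The contradiction comes from comparing two computations of $T(\tau)$. On one hand, by linearity of $T$,
\begin{equation*}
    T(\tau) = \tfrac{1}{2} T(\rho) + \tfrac{1}{2} T(\sigma) = \tfrac{1}{2}\rho^{\otimes N} + \tfrac{1}{2}\sigma^{\otimes N}.
\end{equation*}
On the other hand, I would observe that if the cloning property were to extend in any consistent way to mixtures, the output ought to resemble $\tau^{\otimes N}$, and the heart of the argument is that $\tfrac{1}{2}\rho^{\otimes N} + \tfrac{1}{2}\sigma^{\otimes N}$ is \emph{not} of the required product form for the marginals to all equal $\tau$. The cleanest route is to bypass $\tau$ entirely and instead take two distinct pure states $\rho \neq \sigma$ and directly compare the inner product $\langle \rho, \sigma \rangle = |\langle \psi | \varphi\rangle|^2$ against $\langle T(\rho), T(\sigma) \rangle = |\langle\psi|\varphi\rangle|^{2N}$. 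Since a linear trace-preserving map extending to a channel is a contraction in the relevant inner product (preserving overlaps only for unitary/isometric dynamics), the two quantities must agree for the hypothesized $T$ to be realizable as a channel, forcing $t = t^{N}$ where $t = |\langle\psi|\varphi\rangle|^2 \in (0,1)$; this has no solution for $N \geq 2$ and $t \in (0,1)$, the contradiction.

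\bigskip

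The main obstacle, and the step requiring the most care, is making precise \emph{why} the overlap must be preserved. The clean statement is that any physically realizable cloner arises from an isometry $V$ on a dilated space with $T(\rho) = \Tr_{\mathrm{env}}[V \rho V^*]$, and isometries preserve inner products of the underlying vectors; one then pushes this through to the pure-state overlaps. I would therefore structure the proof to invoke the Stinespring/isometric form of a quantum channel, reduce to the pure-state case $V\ket{\psi} = \ket{\psi}^{\otimes N}\otimes\ket{e_\psi}$, and use $\langle V\psi, V\varphi\rangle = \langle\psi,\varphi\rangle$ to derive $\langle\psi|\varphi\rangle = \langle\psi|\varphi\rangle^{N}\langle e_\psi | e_\varphi\rangle$, whence $|\langle\psi|\varphi\rangle| = |\langle\psi|\varphi\rangle|^{N}$ after bounding $|\langle e_\psi|e_\varphi\rangle|\le 1$; choosing $\psi,\varphi$ with $0 < |\langle\psi|\varphi\rangle| < 1$ completes the contradiction for any $N \geq 2$.
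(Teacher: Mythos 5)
Your proposal is correct, but it follows a genuinely different route from the paper, which in fact offers no proof of this statement at all: the paper cites the original references (Dieks; Wootters--Zurek) and observes that no-cloning follows as a corollary of the no-broadcasting theorem of Barnum et al., since a cloner's marginals satisfy $T_i(\rho)=\rho$ on pure states and hence, by linearity of $T_i$, on all states. Your argument is instead self-contained: Stinespring dilation gives an isometry $V$ with $T(\rho)=\Tr_{\mathrm{env}}[V\rho V^*]$; purity of the output $\rho^{\otimes N}$ forces the factorization $V\ket{\psi}=\ket{\psi}^{\otimes N}\otimes\ket{e_\psi}$ (a pure marginal of a bipartite pure state implies a product state); and preservation of inner products under $V$ yields $\langle\psi|\varphi\rangle=\langle\psi|\varphi\rangle^{N}\langle e_\psi|e_\varphi\rangle$, which is impossible when $0<|\langle\psi|\varphi\rangle|<1$ and $N\geq 2$. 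This buys a short, elementary proof not relying on the (harder) no-broadcasting theorem, at the cost of invoking Stinespring; the paper's route leverages a strictly stronger known result. Two small points of care: the conclusion from $|\langle e_\psi|e_\varphi\rangle|\leq 1$ is the inequality $|\langle\psi|\varphi\rangle|\leq|\langle\psi|\varphi\rangle|^{N}$, not the equality you wrote (the contradiction goes through either way, since the left side strictly exceeds the right for overlaps in $(0,1)$); and the heuristic claim in your middle paragraph that a trace-preserving map ``must preserve overlaps'' is not true for general channels and should be deleted, as your final Stinespring argument is the correct replacement for it. You also implicitly need $d\geq 2$ to choose two distinct non-orthogonal pure states, an assumption the theorem statement leaves tacit.
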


\emph{The quantum broadcasting problem} is a relaxation consisting in finding a quantum channel $T: \mathcal{M}_d \to {(\mathcal{M}_d)}^{\otimes N}$ such that all the \emph{marginals} are the identity. The no-cloning theorem is a corollary of the \emph{no-broadcasting} theorem \cite{barnum1996noncommuting}.
\begin{theorem}[No-broadcasting theorem]
        There is no quantum channel~$T : \mathcal{M}_d \to {(\mathcal{M}_d)}^{\otimes N}$ such that for all mixed quantum states~$\rho \in \mathcal{D}_d$, and for all~$1 \leq i \leq N$ the following holds
    \begin{equation*}
        T_i(\rho) = \rho.
    \end{equation*}
\end{theorem}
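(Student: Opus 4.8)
The plan is to derive the no-broadcasting theorem directly from the no-cloning theorem stated above. A perfect cloner is nothing but a broadcasting channel whose output is constrained to be a product state, so the implication ``no-broadcasting $\Rightarrow$ no-cloning'' is trivial; the point is the converse, which is what I would establish here. The mechanism is that a broadcasting channel has no freedom left once it is fed a \emph{pure} state: I would argue by contradiction, assuming the existence of a channel $T$ with $T_i(\rho) = \rho$ for every $\rho \in \mathcal{D}_d$ and every $1 \le i \le N$, and then show that necessarily $T(\ketbra{\psi}{\psi}) = \ketbra{\psi}{\psi}^{\otimes N}$ for every unit vector $\ket{\psi}$. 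This exhibits $T$ as a perfect cloner (we take $N \ge 2$, as in the no-cloning theorem, since for $N = 1$ the identity trivially broadcasts), contradicting the previous theorem.

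The one technical ingredient is the elementary observation that a pure marginal forces a product structure: if $\sigma$ is a state on $\mathcal{H}_A \otimes \mathcal{H}_B$ whose marginal $\Tr_B \sigma = \ketbra{\psi}{\psi}$ is pure, then $\sigma = \ketbra{\psi}{\psi} \otimes \tau$ for some state $\tau$ on $\mathcal{H}_B$. I would prove this by a support argument. Writing $Q = I - \ketbra{\psi}{\psi}$ for the complementary projection, the positive semidefinite operator $(Q \otimes I)\,\sigma\,(Q \otimes I)$ has partial trace $Q \ketbra{\psi}{\psi} Q = 0$, hence vanishing trace, and is therefore itself $0$. This forces $(Q \otimes I)\,\sigma^{1/2} = 0$, so $(Q \otimes I)\,\sigma = 0$ and the support of $\sigma$ is contained in $\mathbb{C}\ket{\psi} \otimes \mathcal{H}_B$. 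Since this first factor is one-dimensional, $\sigma$ must have the claimed product form. (Alternatively, the same conclusion drops out of the Schmidt decomposition of a purification of $\sigma$ across the cut $A \,\vert\, BC$.)

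With the lemma in hand, the theorem follows by peeling off one tensor factor at a time. Under the broadcasting hypothesis, each single-site marginal of $T(\ketbra{\psi}{\psi})$ equals $\ketbra{\psi}{\psi}$; applying the lemma to the bipartition (site $1$ versus the rest) gives $T(\ketbra{\psi}{\psi}) = \ketbra{\psi}{\psi} \otimes \sigma_{\llbracket 2, N \rrbracket}$, where $\sigma_{\llbracket 2, N \rrbracket}$ is a state on sites $2, \ldots, N$. Its marginal on site $2$ is again $\ketbra{\psi}{\psi}$, so the lemma strips off site $2$, and an induction on the number of remaining factors yields $T(\ketbra{\psi}{\psi}) = \ketbra{\psi}{\psi}^{\otimes N}$. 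As $\ket{\psi}$ is arbitrary, $T$ clones every pure state, the desired contradiction. I do not expect a genuine obstacle: the sole nontrivial step is the pure-marginal lemma, which is standard, and the remainder is a routine induction. The interest of the statement lies less in the difficulty of its proof than in its role as the conceptual origin of the whole cloning-region analysis that follows.
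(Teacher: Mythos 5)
Your proposal cannot be matched against a proof in the paper, because the paper does not prove this theorem at all: it states it as a known result, citing \cite{barnum1996noncommuting}, and the only logical connection it draws runs in the \emph{opposite} direction, remarking that no-cloning is a corollary of no-broadcasting (indeed, a channel cloning every pure state broadcasts every mixed state, by linearity applied to the spectral decomposition). Your reduction goes the other way, no-cloning $\Rightarrow$ no-broadcasting, and it is correct: in the pure-marginal lemma, the positive semidefinite operator $(Q \otimes I)\,\sigma\,(Q \otimes I)$ has vanishing trace and hence vanishes, which forces $(Q \otimes I)\,\sigma^{1/2} = 0$ and confines the support of $\sigma$ to $\mathbb{C}\ket{\psi} \otimes \mathcal{H}_B$, giving the product form; the peeling induction then correctly yields $T(\ketbra{\psi}{\psi}) = \ketbra{\psi}{\psi}^{\otimes N}$, contradicting the no-cloning theorem, and your exclusion of $N=1$ repairs an omission in the paper's statements. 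Two points should be made explicit. First, your argument is non-circular only because no-cloning has proofs independent of no-broadcasting (the linearity arguments of \cite{dieks1982communication, wootters1982single}, which the paper cites); if no-cloning were justified solely by the paper's remark that it follows from no-broadcasting, your scheme would close into a circle, so you should state that you invoke the independently established no-cloning theorem. Second, your proof delivers more than is claimed: you use the broadcasting hypothesis only on pure states, so you rule out even pure-state broadcasting, and combined with the paper's trivial converse this shows the two impossibility statements are equivalent. What your route buys is a short, self-contained derivation suitable for inclusion in the paper; what the citation buys is the genuinely stronger theorem of Barnum et al., which forbids broadcasting of any two non-commuting states, a refinement that a reduction to no-cloning cannot reach.
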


In this paper we are going to study an approximate version by allowing the marginals to be of the form
\begin{equation*}
    T_i(\rho) = p_i \! \cdot \! \rho + (1 - p_i) \frac{I}{d},
\end{equation*}
for some~$p_i \in [0,1]$. This problem is known as the \emph{approximate quantum cloning problem}.

\subsection{Quantum Fidelity}

The current section is devoted to rigorously presenting the problem we shall address in this paper. Asking the marginals $T_i$ to be the identity map means that perfect copies of states are required. The no-cloning Theorem expresses that it is not permitted. Then, it is natural to relax this ``search'' of perfection by allowing some ``imperfections''. 

The quantum cloning problem is then given by the description of figures of merit for quantum cloning maps. That is a measure of closeness between the output of the marginal $T_i(\rho)$ and the state $\rho$. We choose the \emph{quantum fidelity} to measure this distance. Recall that the quantum fidelity is defined by \cite{nielsen2010quantum}
\begin{equation*}
    F(\rho , \sigma) = \Tr^2 \bigg{(} \sqrt{\sqrt{\rho} \sigma \sqrt{\rho}} \bigg{)}
\end{equation*}
for all states $(\rho , \sigma) \in (\mathcal{D}_d)^2$. Recall that the quantum fidelity satisfies the following properties (below, $\rho, \sigma$ are density matrices, $x, y$ are unit norm vectors, and $U$ is a unitary operator)
\begin{align*}
    F(\rho , \sigma) &= F(\sigma , \rho) \\
    F(U \rho U^* , U \sigma U^*) &= F(\rho , \sigma) \\
    F \big{(} \rho , \ketbra{x}{x} \big{)} &= \Tr \big{(} \rho \ketbra{x}{x} \big{)} = \bra{x} \rho \ket{x} \\
    F \big{(} \ketbra{x}{x} , \ketbra{y}{y} \big{)} &= \braket{x}{y}^2 \\
    F(\rho , \sigma) &= 1 \Leftrightarrow \rho = \sigma.
\end{align*}
Recall also that the quantum fidelity is jointly concave \cite{nielsen2010quantum}, that is 
\begin{equation*}
    F \bigg{(} \sum^n_{i=1} p_i \cdot \rho_i , \sum^n_{i=1} p_i \cdot \sigma_i \bigg{)} \geq \sum^n_{i=1} p_i \cdot F(\rho_i , \sigma_i)
\end{equation*}
for all probability vectors $p=(p_1 , \ldots , p_n)$ and all density matrices $\rho_i, \sigma_i$ with~$1 \leq i \leq n$.

We shall concentrate on the marginal fidelities $F \big{(} \rho , T_i(\rho) \big{)}$, and we want to be as close to $1$ as possible on the $N$ marginals  according to a distribution $\alpha \in [0,1]^N$.  
More precisely, we are interested in, what we call the the \emph{Quantum cloning problem}. In details, for a quantum channel $T: \mathcal{M}_d \to {(\mathcal{M}_d)}^{\otimes N}$, let us introduce
\begin{equation*}
    F_\alpha(T) = \sum^N_{i=1} \alpha_i \cdot \inf_{\rho \in \mathcal{D}_d} F \big{(} \rho , T_i(\rho) \big{)}.
\end{equation*}
We aim to study the following optimization problem
\begin{equation} \label{eq:optim-worst}
    \sup_{T} F_\alpha(T)
\end{equation}
and find the quantum cloning map which reaches the supremum.

First, let us simplify the problem by showing that we can consider only pure states in the expression of $F_\alpha(T)$. Indeed for a density matrix $\rho$, we can consider its spectral decomposition
\begin{equation*}
    \rho = \sum^d_{i=1} v_i \cdot \ketbra{x_i}{x_i},
\end{equation*}
where $\forall i \in \llbracket d \rrbracket, v_i \geq 0$ and $\sum^d_{i=1} v_i = 1$. Using the concavity of the quantum fidelity, we have that for all quantum channels $\Phi : \mathcal{M}_d \to \mathcal{M}_d$,
\begin{align*}
    F \big{(} \rho , \Phi(\rho) \big{)} &\geq \sum^d_{i=1} v^2_i \cdot F \Big{(} \ketbra{x_i}{x_i} , \Phi \big{(} \ketbra{x_i}{x_i} \big{)} \Big{)} \\
    &\geq \min_i F \Big{(} \ketbra{x_i}{x_i} , \Phi \big{(} \ketbra{x_i}{x_i} \big{)} \Big{)}.
\end{align*}
This obviously implies that for all~$1 \leq i \leq N$
\begin{equation*}
    \inf_{\rho \in \mathcal{D}_d} F \big{(} \rho , T_i(\rho) \big{)} = \inf_{\norm{x} = 1} F \Big{(} \ketbra{x}{x} , T_i \big{(} \ketbra{x}{x} \big{)} \Big{)}.
\end{equation*}
Hence, our problem reduces to the analysis of the following quantity
\begin{equation*}
    F_\alpha(T) = \sum^N_{i=1} \alpha_i \cdot \inf_{\rho \in \mathcal{D}_d} \Tr \big{(} \rho \, T_i(\rho) \big{)},
\end{equation*}
where the infimum is taken on the set of pure states $\rho = \ketbra{x}{x}$.

When the weights $\alpha_i$ are all equal positive scalars, the problem is the so called \emph{symmetric Quantum cloning problem}. Otherwise the problem is called  \emph{Asymmetric Quantum cloning problem} which is the one addressed in this paper. In the next section we shall see that we can restrict the optimization problem \eqref{eq:optim-worst} to the particular class of $\mathcal{U}(d)$ covariant channels.

\subsection{Symmetrized quantum channels}

A particular class of quantum channel will play an important role: the \emph{symmetrized quantum channel}.

\begin{definition}
    For a quantum channel $T: \mathcal{M}_d \to {(\mathcal{M}_d)}^{\otimes N}$ we define $\widetilde{T}$ the symmetrized quantum channel
    \begin{equation*}
        \widetilde{T}(\rho) = \int_{\mathcal{U}(d)} U^{\otimes N} T \big{(} U^* \rho \; U \big{)} {\big{(} U^* \big{)}}^{\otimes N} \mathrm{d}U,
    \end{equation*}
    where the integral is taken with respected to the normalized Haar measure on the unitary group $\mathcal{U}(d)$. 
\end{definition}
In an equivalent way we can define the notion of \emph{$\mathcal{U}(d)$-covariance} for quantum channels.
\begin{definition}
    A quantum channel $T: \mathcal{M}_d \to {(\mathcal{M}_d)}^{\otimes N}$ is called $\mathcal{U}(d)$-covariant if for all $U \in \mathcal{U}(d)$
    \begin{equation*}
        T(U^* \rho \: U) = {\big{(} U^* \big{)}}^{\otimes N} T (\rho) U^{\otimes N}.
    \end{equation*}
\end{definition}
\noindent Note that $T$ is $\mathcal{U}(d)$-covariant if and only if $\widetilde{T} = T$ by the invariance property of the Haar measure.

A straightforward computation shows that the marginals of $\widetilde{T}$ are also $\mathcal{U}(d)$-covariant. Indeed for all marginals $\widetilde{T}_i$, we have
\begin{align*}
    \widetilde{T}_i(\rho) &= \Tr_{\scriptscriptstyle \llbracket N \rrbracket \setminus \{i\}} \Bigg{[} \int_{\mathcal{U}(d)} U^{\otimes N} T \big{(} U^* \rho \: U \big{)} {\big{(} U^* \big{)}}^{\otimes N} \mathrm{d}U \Bigg{]} \\
    &= \int_{\mathcal{U}(d)} U \: Tr_{\scriptscriptstyle \llbracket N \rrbracket \setminus \{i\}} \Big{[} T \big{(} U^* \rho \: U \big{)}  \Big{]} U^* \mathrm{d}U \\
    &= \int_{\mathcal{U}(d)} U \: T_i \big{(} U^* \rho \: U \big{)} \, U^* \mathrm{d}U.
\end{align*}
The marginals of $\widetilde{T}$ are symmetrized, and therefore they are also $\mathcal{U}(d)$-covariant. Now using the jointly concavity of the quantum fidelity we have for all marginals $\widetilde{T}_i$
\begin{equation*}
    \inf_{\rho \in \mathcal{D}_d} F \big{(} \rho , \widetilde{T}_i(\rho) \big{)} \geq \inf_{\rho \in \mathcal{D}_d} F \big{(} \rho , T_i(\rho) \big{)}.
\end{equation*}
where the inﬁma are taken over the pure states~$\rho$. Hence, we have
\begin{equation*}
    \sup_{\widetilde{T}} F_\alpha \big{(} \widetilde{T} \big{)} \geq \sup_T F_\alpha (T).
\end{equation*}
Our optimization problem can be thus be restricted to the simpler set of $\mathcal{U}(d)$-covariant quantum channels.

\subsection{Average vs.~worst fidelity}

The usual \emph{quantum cloning average problem} is defined as follows. For a quantum channel $T: \mathcal{M}_d \to {(\mathcal{M}_d)}^{\otimes N}$ and for a distribution $\alpha\in [0,1]^N$, we define
\begin{equation*}
    \bar{F}_\alpha(T) = \sum^N_{i=1} \alpha_i \E_{\rho \in \mathcal{D}_d} \Big{[} F \big{(} \rho , T_i(\rho) \big{)} \Big{]}
\end{equation*}
where the expectation is taken on the pure states~$\rho$, and
\begin{equation*}
    \E_{\rho \in \mathcal{D}_d} \big{[} f(\rho) \big{]} = \int f \big{(} \ketbra{x}{x} \big{)} \mathrm{d}\nu(x),
\end{equation*}
where $\nu$ is the uniform measure on the set of pure states, that is the (normalized) Lebesgue measure on the unit sphere of $\mathbb C^d$. We have thus restated the question as an optimization problem
\begin{equation*}
    \sup_T \bar{F}_\alpha(T).
\end{equation*}
As a first sight, it is not clear why the average and the worst case (see \eqref{eq:optim-worst}) problems coincide. This relies on the description of which marginals will be involved by solving the quantum cloning problem. In particular using the $\mathcal{U}(d)$-covariance, we shall see later that all the marginals~$T_i$ will be of the form
\begin{equation*}
    T_i(\rho) = p_i \! \cdot \! \rho + (1 - p_i) \frac{I}{d},
\end{equation*}
where the measure of closeness is given by the $p_i \in \mathbb{R}$.
 
 \begin{proposition}
    The worst case problem and the average case problem are equivalent. More precisely, for all $\mathcal{U}(d)$-covariant quantum channel $T : \mathcal{M}_d \to {(\mathcal{M}_d)}^{\otimes N}$ and for all distributions $\alpha \in [0,1]^N$ we have
    \begin{equation*}
        \sup_T \bar{F}_\alpha(T) = \sup_T F_\alpha(T).
    \end{equation*}
 \end{proposition}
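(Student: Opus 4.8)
The plan is to reduce \emph{both} optimization problems to the class of $\mathcal{U}(d)$-covariant channels, and then to observe that on this class the two objective functionals agree not merely at the supremum but pointwise. The worst-case reduction has already been carried out in the previous subsection, so the first task is to perform the analogous reduction for the average problem.

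First I would show that $\sup_T \bar F_\alpha(T)$ is unchanged when restricting to covariant channels, by checking $\bar F_\alpha(\widetilde T) \geq \bar F_\alpha(T)$. Writing the fixed input as the trivial average $\rho = \int_{\mathcal{U}(d)} \rho\, \mathrm{d}U$ and using joint concavity of the fidelity together with its unitary invariance,
$$F\big(\rho,\, \widetilde T_i(\rho)\big) \;\geq\; \int_{\mathcal{U}(d)} F\big(\rho,\, U\, T_i(U^*\rho\, U)\, U^*\big)\, \mathrm{d}U \;=\; \int_{\mathcal{U}(d)} F\big(U^*\rho\, U,\, T_i(U^*\rho\, U)\big)\, \mathrm{d}U.$$
Taking the expectation over pure states, Fubini and the unitary invariance of the measure $\nu$ (the law of $U^*\rho\, U$ coincides with that of $\rho$) turn the right-hand side into $\E_{\rho}\big[F(\rho, T_i(\rho))\big]$, giving $\bar F_\alpha(\widetilde T) \geq \bar F_\alpha(T)$. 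Hence, exactly as for the worst case, the supremum of $\bar F_\alpha$ over all channels equals its supremum over covariant ones.

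Second, I would invoke the structural fact announced in this section (and established later via Schur's lemma): the marginals of a $\mathcal{U}(d)$-covariant channel are depolarizing, $T_i(\rho) = p_i\,\rho + (1-p_i)\tfrac{I}{d}$. The crucial consequence is that for a pure input the resulting fidelity is \emph{constant} over the unit sphere,
$$F\big(\ketbra{x}{x},\, T_i(\ketbra{x}{x})\big) \;=\; \bra{x} T_i(\ketbra{x}{x}) \ket{x} \;=\; p_i\,\braket{x}{x}^2 + (1-p_i)\frac{1}{d} \;=\; p_i + \frac{1-p_i}{d},$$
which does not depend on $\ket{x}$. A constant function has equal infimum and $\nu$-average, so $\inf_{\rho} F(\rho, T_i(\rho)) = \E_{\rho}[F(\rho, T_i(\rho))]$ for each $i$; summing against the weights $\alpha_i$ gives $F_\alpha(T) = \bar F_\alpha(T)$ for every covariant $T$.

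Combining the two steps, each supremum equals the supremum of one and the same functional over covariant channels, which proves the claim. The only substantive ingredient is the depolarizing form of the marginals, which is where representation theory enters and which is deferred to a later section; once it is granted, the constancy of the pure-state fidelity makes the identification of worst case with average immediate, and the measure-invariance argument of the first step is a direct transcription of the covariance reduction already carried out for the worst-case problem. I therefore expect no genuine obstacle beyond correctly importing that structural fact.
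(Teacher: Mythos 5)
Your proposal is correct and takes essentially the same route as the paper: the core of both arguments is the forward appeal to Proposition~\ref{prop:decompChoi} (depolarizing form of the marginals of a $\mathcal{U}(d)$-covariant channel), which makes the pure-state fidelity constant, so that the infimum and the $\nu$-average coincide pointwise on covariant channels. Your first step --- explicitly symmetrizing the \emph{average} problem via joint concavity, unitary invariance of $F$, and invariance of the uniform measure on pure states --- is a detail the paper leaves implicit (it only carries out the worst-case reduction in the preceding subsection), so it is a welcome completion rather than a divergence.
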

 \begin{proof}
    As announced, due to the $\mathcal{U}(d)$-covariance, we shall see in Proposition \ref{prop:decompChoi} that the involved marginals of $T : \mathcal{M}_d \to {(\mathcal{M}_d)}^{\otimes N}$ will be of the form
    \begin{equation*}
        T_i(\rho) = p_i \! \cdot \! \rho + (1 - p_i) \frac{I}{d},
    \end{equation*}
    for all pure states~$\rho$, for some $p_i \in \mathbb{R}$ that does not depend on $\rho$. This way, the figures of merit $F \big{(} \rho , T_i(\rho) \big{)}$ of the marginals, on pure states $\rho = \ketbra{x}{x}$, are given by
    \begin{align*}
        F \big{(} \rho , T_i(\rho) \big{)} &= \bra{x} T_i(\rho) \ket{x} \\
        &= \Big{\langle} x \Big{|} p_i \! \cdot \! \rho + (1 - p_i) \frac{I}{d} \Big{|} x \Big{\rangle} \\
        &= p_i + \frac{1 - p_i}{d}.
    \end{align*}
    In particular they do not depend on the state. It is then clear that for pure states~$\rho$
    \begin{equation*}
        \inf_{\rho \in \mathcal{D}_d} F \big{(} \rho , T_i(\rho) \big{)} = \E_{\rho \in \mathcal{D}_d} \Big{[} F \big{(} \rho , T_i(\rho) \big{)} \Big{]},
    \end{equation*}
    which justifies the equivalence of the worst case and average problem.
\end{proof}

In the rest of the paper all quantum channels $\boldsymbol{T : \mathcal{M}_d \to {(\mathcal{M}_d)}^{\otimes N}}$ are considered to be $\boldsymbol{\mathcal{U}(d)}$-covariant.

\section{Schur-Weyl duality \& Permutation operators}\label{sec:permutation-operators}

This section is divided into two main parts which are about the study of permutations operators. On the one hand, we relate the Choi matrix of a quantum cloner to some particular sum of permutation operators. On the other hand we study the maximal eigenvalues and corresponding eigenvectors of such permutation operators. The crucial tool used here is the so-called \emph{Schur-Weyl duality} \cite{fulton2013representation}. More precisely, we shall use two important ingredients of this duality. The first ingredient concerns the fact that an operator which commutes with all operators of the form $U\otimes\ldots\otimes U$, where $U$ is an arbitrary unitary operator, is a sum of permutation operators. The second ingredient is the decomposition of $\mathcal H^{\otimes n}$ into a direct orthogonal sum of irreducible representations of $\mathfrak{S}_n$.

To be more precise, let $C_T$ be the \emph{Choi matrix} of a quantum channel $T : \mathcal{M}_d \to {(\mathcal{M}_d)}^{\otimes N}$, defined by
\begin{align*}
    C_T &= (\id \otimes \, T) \bigg{(} \sum_{i,j=1}^d \ketbra{i}{j} \otimes \ketbra{i}{j} \bigg{)} \\
    &= (\id \otimes \, T) \ketbra{\Omega}{\Omega},
\end{align*}
where $\ket \Omega$ was defined in \eqref{eq:def-omega}. In particular, we recover $T$ by the formula 
$$\forall \rho \qquad T(\rho) = \Tr_0 \Big{[} C_T \big{(} \rho^\T \otimes I^{\otimes N} \big{)} \big{]}.$$
Recall that a quantum channel $T$ is a completely positive and trace-preserving map. In the Choi matrix form, these conditions become (recall that the Choi matrix is an operator acting on $N+1$ copies of $\mathbb C^d$, indexed by the integers $0,1,2, \ldots, N$, with the index 0 corresponding to the input space) \cite{watrous2018theory}:
\begin{equation} \label{eq:ChoiConditions}
    C_T \geq 0 \quad \text{ and } \quad  \Tr_{\scriptscriptstyle \llbracket 0,N \rrbracket \setminus \{0\}} (C_T) = I.
\end{equation}
Our strategy is then the following. Firstly, by exploiting the $\mathcal U(d)$-covariance of quantum cloners, we shall give the form of admissible corresponding Choi matrices. Secondly, we study the location of the maximal eigenvalues of particular sum of permutations operator which appear in our optimization procedure from Section \ref{subsec:max}

\subsection{Choi matrices of quantum cloners}

In this subsection, we shall use the following version of Schur-Weyl duality \cite{fulton2013representation}. 
\begin{theorem}[Schur–Weyl]
    Let $\mathcal{H}$ be a finite $d$-dimensional Hilbert space, and $T$ a linear map acting on $\mathcal{H}^{\otimes n}$. If $[T , U^{\otimes n}] = 0$ holds for all unitaries $U \in \mathcal{U}(d)$, then $T$ is a linear combination of permutation operators $\Pi_\sigma$ on $\mathcal{H}^{\otimes n}$:
    \begin{equation*}
        T = \sum_{\sigma \in \mathfrak{S}_n} \alpha_\sigma \cdot \Pi_\sigma,
    \end{equation*}
    where $\mathfrak{S}_n$ is the symmetric group on $n$ elements and $\Pi_\sigma$ is the representation defined by
    \begin{equation}\label{eq:action-Sn}
        \Pi_\sigma(v_1 \otimes \cdots \otimes v_n) = v_{\sigma^{\shortminus 1}(1)} \otimes \cdots \otimes v_{\sigma^{\shortminus 1}(n)}.
    \end{equation}
\end{theorem}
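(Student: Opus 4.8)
The plan is to prove the stronger statement that the commutant of $\{U^{\otimes n} : U \in \mathcal{U}(d)\}$ inside the algebra of linear maps on $\mathcal{H}^{\otimes n}$ is \emph{exactly} $\mathcal{B} := \Span\{\Pi_\sigma : \sigma \in \mathfrak{S}_n\}$. One inclusion is immediate: a direct computation on simple tensors using \eqref{eq:action-Sn} shows $\Pi_\sigma U^{\otimes n} = U^{\otimes n} \Pi_\sigma$, so $\mathcal{B}$ is contained in the commutant. The substance of the theorem is the reverse inclusion. As a first reduction I would upgrade the hypothesis from unitaries to all matrices: if $[T, U^{\otimes n}] = 0$ for every $U \in \mathcal{U}(d)$, then $[T, A^{\otimes n}] = 0$ for every $A \in \mathcal{M}_d$. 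Indeed, each entry of $[T, A^{\otimes n}]$ is a polynomial in the entries of $A$ that vanishes on $\mathcal{U}(d)$, a set which is Zariski-dense in $\mathcal{M}_d$, and hence vanishes identically. It therefore suffices to identify the commutant $\mathcal{A}'$ of $\mathcal{A} := \Span\{A^{\otimes n} : A \in \mathcal{M}_d\}$.

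The heart of the argument is to compute, conversely, the commutant $\mathcal{B}'$ of the permutation operators. Identifying the linear maps on $\mathcal{H}^{\otimes n}$ with $\mathcal{M}_d^{\otimes n}$, a short computation based on \eqref{eq:action-Sn} shows that conjugation $X \mapsto \Pi_\sigma X \Pi_\sigma^{-1}$ acts on $\mathcal{M}_d^{\otimes n}$ precisely by permuting the $n$ tensor factors, namely $\Pi_\sigma (M_1 \otimes \cdots \otimes M_n) \Pi_\sigma^{-1} = M_{\sigma^{-1}(1)} \otimes \cdots \otimes M_{\sigma^{-1}(n)}$. Consequently $T$ commutes with every $\Pi_\sigma$ if and only if $T$, viewed in $\mathcal{M}_d^{\otimes n}$, is invariant under all factor permutations, i.e. $\mathcal{B}' = \operatorname{Sym}^n(\mathcal{M}_d)$. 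A polarization identity then shows that these symmetric tensors are linearly spanned by the pure powers $\{A^{\otimes n} : A \in \mathcal{M}_d\}$, so that $\mathcal{B}' = \mathcal{A}$. Note that $\mathcal{A}$ is already an algebra, since $A^{\otimes n} B^{\otimes n} = (AB)^{\otimes n}$, and it is $*$-closed.

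To close the loop I would invoke the bicommutant theorem. The span $\mathcal{B}$ is a unital $*$-subalgebra of the finite-dimensional matrix algebra $\mathcal{M}_d^{\otimes n}$: it contains $\Pi_{\mathrm{id}} = I$, and satisfies $\Pi_\sigma \Pi_\tau = \Pi_{\sigma\tau}$ as well as $\Pi_\sigma^* = \Pi_{\sigma^{-1}}$. Hence von Neumann's bicommutant theorem, which is elementary in finite dimension, gives $\mathcal{B} = \mathcal{B}''$. Combining this with the previous paragraph yields $\mathcal{A}' = (\mathcal{B}')' = \mathcal{B}'' = \mathcal{B}$, and together with the density reduction, any $T$ with $[T, U^{\otimes n}] = 0$ for all unitary $U$ lies in $\mathcal{A}' = \mathcal{B} = \Span\{\Pi_\sigma\}$, which is the claim. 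The only genuinely nontrivial input is the bicommutant theorem (equivalently, the semisimplicity of the group algebra of $\mathfrak{S}_n$ over $\mathbb{C}$); the conjugation computation and the polarization identity are routine, and the Zariski-density step is a standard technicality. The main point to watch is the bookkeeping of the $\sigma$ versus $\sigma^{-1}$ conventions in \eqref{eq:action-Sn}, which must be tracked carefully to confirm both that conjugation by $\Pi_\sigma$ is the factor permutation above and that $\sigma \mapsto \Pi_\sigma$ is a genuine homomorphism rather than an anti-homomorphism.
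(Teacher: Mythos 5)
Your proposal is correct, but there is nothing in the paper to compare it against: the paper states this theorem as a black box, citing Fulton--Harris for the proof, and never proves it itself. Your argument is the standard self-contained route to the commutation form of Schur--Weyl duality: (i) the unitarian trick (entries of $[T,A^{\otimes n}]$ are holomorphic polynomials in the entries of $A$ vanishing on $\mathcal{U}(d)$, which is a set of uniqueness for such polynomials, so the hypothesis upgrades to all $A \in \mathcal{M}_d$); (ii) the computation that conjugation by $\Pi_\sigma$ permutes tensor factors, so the commutant of $\Span\{\Pi_\sigma\}$ is the symmetric subspace of $\mathcal{M}_d^{\otimes n}$, which by polarization equals $\Span\{A^{\otimes n}\}$; (iii) the finite-dimensional bicommutant theorem applied to the unital $*$-algebra $\Span\{\Pi_\sigma\}$. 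All three steps are sound, and your sign bookkeeping is right: with the convention of Eq.~\eqref{eq:action-Sn}, $\sigma \mapsto \Pi_\sigma$ is a genuine homomorphism and $\Pi_\sigma (M_1 \otimes \cdots \otimes M_n)\Pi_\sigma^{-1} = M_{\sigma^{-1}(1)} \otimes \cdots \otimes M_{\sigma^{-1}(n)}$, as you claim. What this buys, compared with the textbook treatment the paper points to (which develops the decomposition of $\mathcal{H}^{\otimes n}$ into irreducibles via highest-weight theory), is that it isolates exactly the statement the paper needs using only elementary finite-dimensional operator algebra, with no representation theory beyond the double commutant theorem; the price is that it yields only the commutant statement, not the finer multiplicity-free decomposition into irreducible $\mathfrak{S}_n \times \mathcal{U}(d)$ blocks that the paper also uses later (e.g.\ the orthogonal decomposition of $\mathcal{H}^{\otimes (N-1)}$ into irreps), which would still have to be quoted separately.
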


As announced we shall combine $\mathcal U(d)$-covariance with the above theorem to obtain the generic form of $\mathcal U(d)$-covariant cloning. This is the content of the following proposition.

\begin{proposition} \label{prop:decompChoi}
    The Choi matrix~$C_T$ of a $\mathcal{U}(d)$-covariant quantum channel $T : \mathcal{M}_d \to {(\mathcal{M}_d)}^{\otimes N}$ is linear combination of partially transposed permutation operators~$\Pi_\sigma$ on $\mathcal{H}^{\otimes (N + 1)}_d$. More precisely, for all $\sigma \in \mathfrak{S}_{N + 1}$ there exists $\beta_\sigma \in \mathbb{C}$ such that
    \begin{equation} \label{eq:decompChoi}
        C_T = \sum_{\sigma \in \mathfrak{S}_{N + 1}} \beta_\sigma \cdot \Pi^{\To}_\sigma
    \end{equation}
    Then for all $1 \leq i \leq N$ there exists $p_i \in \mathbb{R}$ such that the marginals become
    \begin{equation*}
        T_i(\rho) = p_i \! \cdot \!\rho + (1 - p_i) \frac{I}{d},
    \end{equation*}
    for all pure states $\rho$.
\end{proposition}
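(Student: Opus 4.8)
The plan is to deduce both statements from Schur--Weyl duality, the only subtlety being that covariance produces a complex conjugate on the input leg of the Choi matrix. First I would translate the $\mathcal{U}(d)$-covariance of $T$ into a symmetry of $C_T$. Writing $C_T = \sum_{i,j} \ketbra{i}{j}_{(0)} \otimes T(\ketbra{i}{j})$ and using the covariance in the form $V^{\otimes N} T(\sigma)(V^*)^{\otimes N} = T(V \sigma V^*)$ together with $V V^* = I$, a short computation that collapses the resulting double sum via $\sum_i V_{ki}\overline{V_{mi}} = \delta_{km}$ gives
\begin{equation*}
    (\bar V \otimes V^{\otimes N}) \, C_T \, (\bar V \otimes V^{\otimes N})^* = C_T, \qquad V \in \mathcal{U}(d).
\end{equation*}
Thus $C_T$ commutes with $\bar V \otimes V^{\otimes N}$, and \emph{not} with a uniform power $V^{\otimes (N+1)}$. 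This mismatch is the main obstacle, since Schur--Weyl as stated requires commutation with $V^{\otimes (N+1)}$.

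To remove the conjugate I would pass to the partial transpose $D := C_T^{\To}$ on the input factor. Since $\ketbra{i}{j}^{\T} = \ketbra{j}{i}$ we have $D = \sum_{i,j} \ketbra{i}{j}_{(0)} \otimes T(\ketbra{j}{i})$, and rerunning the previous computation with $V$ in place of $\bar V$ on the input leg yields
\begin{equation*}
    V^{\otimes (N+1)}\, D\, (V^{\otimes (N+1)})^* = D, \qquad V \in \mathcal{U}(d).
\end{equation*}
Hence $D$ commutes with every $V^{\otimes (N+1)}$, and Schur--Weyl gives $D = \sum_{\sigma \in \mathfrak{S}_{N+1}} \beta_\sigma \Pi_\sigma$. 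Applying the involution $(\cdot)^{\To}$ once more and using $(\To)^2 = \id$ produces \eqref{eq:decompChoi}. The conceptual content of the argument is exactly that the conjugate forced onto the input leg of $C_T$ is undone by the partial transpose, which is precisely why the permutation operators appear transposed on that leg.

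For the marginals I would argue separately, using only covariance. Tracing out all factors except $i$ in the covariance identity and cancelling the unitaries on each traced-out factor through $\Tr_j[V^*_{(j)} X V_{(j)}] = \Tr_j[X]$ shows $T_i(U^* \rho U) = U^* T_i(\rho) U$, so each $T_i : \mathcal{M}_d \to \mathcal{M}_d$ intertwines the conjugation representation $\rho \mapsto U \rho U^*$ with itself. Decomposing $\mathcal{M}_d = \mathbb{C} I \oplus \{X : \Tr X = 0\}$ into its trivial and (irreducible) adjoint components, Schur's lemma forces $T_i$ to act as a scalar on each, i.e.\ $T_i(\rho) = a \,\tfrac{\Tr(\rho)}{d}\, I + b\,\big(\rho - \tfrac{\Tr(\rho)}{d} I\big)$; trace preservation gives $a = 1$ and Hermiticity preservation gives $b =: p_i \in \mathbb{R}$, so $T_i(\rho) = p_i \rho + (1-p_i)\tfrac{I}{d}$ on states. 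Everything here beyond the single covariance-to-commutation identity of the first paragraph is a direct application of Schur--Weyl and of Schur's lemma.
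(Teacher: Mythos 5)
Your proposal is correct, and for the main claim it follows the paper's proof exactly: both derive the invariance $(\bar V \otimes V^{\otimes N})\,C_T\,(\bar V \otimes V^{\otimes N})^* = C_T$ from covariance (the paper does this via $(\bar U \otimes U)\ket{\Omega} = \ket{\Omega}$), convert it by a partial transpose on the input leg into $[C_T^{\To}, V^{\otimes(N+1)}] = 0$, and invoke Schur--Weyl to obtain $C_T = \sum_{\sigma} \beta_\sigma \Pi_\sigma^{\To}$.

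Where you diverge is the marginal statement. The paper reuses the same machinery one level down: each marginal $T_i$ is again $\mathcal{U}(d)$-covariant, so its own Choi matrix is a combination of the two partially transposed $\mathfrak{S}_2$ operators, $\Pi_{\id}^{\To} = I \otimes I$ and $\Pi_{(1\,2)}^{\To} = \omega$, which correspond to the completely depolarizing and identity channels; the form $T_i(\rho) = p_i\rho + (1-p_i)\tfrac{I}{d}$ is then read off. You instead apply Schur's lemma directly to $T_i$ as an intertwiner of the conjugation representation, using the decomposition $\mathcal{M}_d = \mathbb{C}I \oplus \{X : \Tr X = 0\}$ into the trivial and (irreducible) adjoint components, and then pin down the two scalars by trace preservation and Hermiticity preservation. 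Both routes are sound and short; yours buys a cleaner justification that $p_i$ is real (the paper essentially asserts this through the depolarizing-channel parametrization), and it avoids introducing a second Choi matrix, while the paper's version has the virtue of uniformity --- the $N=1$ case of the very decomposition being proved --- which is the lens the rest of the paper keeps using.
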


\begin{proof}
From the $\mathcal{U}(d)$-covariance of $T$ and the formula $\big{(} \bar{U} \otimes U \big{)} \ket{\Omega} = \ket{\Omega}$ we have for all unitaries $U \in \mathcal{U}(d)$:
\begin{align*}
    C_T \big{(} \bar{U} \otimes U^{\otimes N} \big{)} &= \big{[} (\id \otimes \, T) \ketbra{\Omega}{\Omega} \big{]} \big{(} \bar{U} \otimes U^{\otimes N} \big{)} \\
    &= \Big{[} (\id \otimes \, T) \big{(} \bar{U}\otimes U \big{)} \ketbra{\Omega}{\Omega} \big{(} U^\T \otimes \, U^* \big{)} \Big{]} \big{(} \bar{U}\otimes U^{\otimes N} \big{)} \\
    &= \big{(} \bar{U} \otimes U^{\otimes N} \big{)} \big{[} (\id \otimes \, T) \ketbra{\Omega}{\Omega} \big{]} \Big{(} U^\T \otimes \, {\big{(} U^* \big{)}}^{\otimes N} \Big{)} \big{(} \bar{U} \otimes U^{\otimes N} \big{)} \\
    &= \big{(} \bar{U}\otimes U^{\otimes N} \big{)} C_T
\end{align*}
Therefore $\big{[} C_T , \bar{U} \otimes U^{\otimes N} \big{]} = 0$, which is equivalent to $\big{[} C^{\To}_T,U \otimes U^{\otimes N} \big{]} = 0$. Then the decomposition \eqref{eq:decompChoi} is a direct consequence of the fact that $ C^{\To}_T$ commutes with all $U^{\otimes (N + 1)}$ and the Schur-Weyl duality. This holds for all Choi matrices coming from $\mathcal{U}(d)$-covariant channels, which is the case for all marginals $T_i$. In this particular case, the two permutations of~$\mathfrak{S}_2$ are
    \begin{equation*}
        (1)(2) \quad \text{ and } \quad (1 \: 2)
    \end{equation*}
    and their partially transposed permutation operators correspond to the identity and the \emph{maximally depolarizing channels} defined on all~$\rho \in \mathcal{D}_d$ by:
    \begin{equation*}
        \Lambda_\lambda (\rho) = (1 - \lambda)\rho + \lambda \frac{I}{d},
    \end{equation*}
    with~$0 \leq \lambda \leq 1 + \frac{1}{d^2 - 1}$.
    Hence, the marginals $T_i$ are of the form:~$T_i(\rho) = p_i \! \cdot \!\rho + (1 - p_i) \frac{I}{d}$.
\end{proof}

\begin{remark}
    Note that conversely, a linear combination of partially transposed permutation operators is a Choi matrix of a quantum channel if the conditions of Eq.~\eqref{eq:ChoiConditions} hold.
\end{remark}

In the following, we recall that the permutations of the group $\mathfrak{S}_{n+1}$ will start from $0$, i.e. $\sigma \in \mathfrak{S}_{n+1}$ is a permutation of $\{ 0 , 1, \ldots , n - 1, n \}$. Study of permutations operators and their actions on a particular set of vectors wil be crucial in the following. In particular we shall concentrate on a particular operator
\begin{equation} \label{eq:S-op}
    S_x := \sum^N_{i = 1} x_i \cdot \Pi^{\To}_{(0, i)},
\end{equation}
where $x=(x_i)_{i=1}^N\in (\mathbb R^*_+)^N$. The relevance of this operator shall be justified in Section \ref{sec:optimal-quantum-cloning} as it appears in the optimization procedure on fidelities.

\subsection{Permutation operators}

In this subsection, in order to illustrate some examples and lemmas, we shall use a graphical calculus for tensors, derived from the Penrose graphical notation~\cite{penrose1971applications}. This helps to make computations faster and illuminate some otherwise abstract equations. Appendix \ref{app:permutation-operators} is dedicated to many examples illustrated by graphical calculus.
Similar graphical calculi have been developed in the context of tensor network state or categorical quantum information theory, which can be found in~\cite{wood2011tensor, coecke2018picturing, duncan2020graph}. See~\cite[Section 4]{nechita2021geometrical} for a complete description. In particular the permutations are read from right to left.
\bigskip

Note than in $S_x$ appears only transpositions of the form $(0,i)$. It is worth noticing that such permutations belong to $\big{\{} \sigma \in \mathfrak{S}_{N + 1} \; \big{|} \; \sigma(0) \neq 0 \big{\}}$. As it is discussed in Appendix \ref{app:permutation-operators} only such permutations contributes non trivially to the quantum cloning problem. In the sequel we collect information of the permutations of this set and then we plug them into $S$.

A permutation $\sigma \in \big{\{} \sigma \in \mathfrak{S}_{N + 1} \; \big{|} \; \sigma(0) \neq 0 \big{\}}$ can be parametrized by a couple $1 \leq a,b \leq N$ such that $\sigma(0) = a$ and $\sigma(b) = 0$. This gives a partition of
\begin{equation*}
    \big{\{} \sigma \in \mathfrak{S}_{N + 1} \; \big{|} \; \sigma(0) \neq 0 \big{\}} = \bigcup_{1 \leq a,b \leq N} \Sigma_{a,b}    
\end{equation*}
where 
\begin{equation}\label{eq:def-Simga-ab}
    \Sigma_{a,b} = \big{\{} \sigma \in \mathfrak{S}_{N + 1} \; \big{|} \; \sigma(0) = a \text{ and } \sigma(b) = 0 \big{\}}.
\end{equation}
Each set $\Sigma_{a,b}$ contains $(N - 1)!$ permutations. For $1 \leq a,b \leq N$, a generic element of $\Sigma_{a,b}$ will be denoted by $\sigma_{a,b}$. 

Before expressing the relevant result on these permutations, we need to introduce some notations. For two integers $i,j$ we denote by $(i \, : \, j)$ the permutation $\mathfrak{S}_{N-1}$ defined by

\begin{equation*}
    (i \, : \, j) =
    \begin{cases}
        \big{(} (i - 1) \: (i - 2) \: \cdots \: j \big{)} &\text{if } 0 \leq j < i \leq N-1 \\
        \big{(} i \: (i + 1) \: \cdots \: (j - 1) \big{)} &\text{if } 0 \leq i < j \leq N-1 \\
        (0)(1) \cdots (N - 2) &\text{otherwise}
    \end{cases}.
\end{equation*}
Let $1 \leq a,b,c \leq N$ and $\sigma$ in $\mathfrak{S}_{N-1}$, we shall also need the following permutations operators
 \begin{equation*}
        \Pi_{\sigma(a,b,c)} =
        \begin{cases}
            \Pi_{(0 \, : \, (a - 1))} \circ \Pi_{\sigma} \circ \Pi_{((b - 1) \, : \, 0)} &\text{if } b = c \\
            \Pi_{(0 \, : \, (a - 1))} \circ \Pi_{\sigma} \circ \Pi_{((b - 1) \, : \, 0)} \circ \Pi_{((c - 1) \, : \, (b - 1))} &\text{if } b \neq c.
        \end{cases}
    \end{equation*}

The following Lemma gives a similar tractable expression of all the permutation operators $\Pi^{\To}_\sigma$, and two sets which are useful to seek the eigenvectors of the permutation operators.
\begin{lemma}\label{lem:struct}
     Let $1 \leq a,b,c \leq N$, and let $\sigma \in \Sigma_{a,b}$, then
     
     \begin{itemize}
         
     \item There exists a unique $\hat{\sigma} \in \mathfrak{S}_{N - 1}$ such as the partially transposed permutation operator $\sigma^{\To}$ is
    \begin{equation*}
        \Pi_\sigma^{\To} = \Pi_{(1 \: a)} \: (\omega_{(0,1)} \otimes \Pi_{\hat{\sigma}}) \; \Pi_{(1 \: b)}.
    \end{equation*}
    
    \item Let $\phi \in \mathcal{H}^{\otimes (N - 1)}$, then
    \begin{equation*}
        \Pi_\sigma^{\To}(\ket{\Omega}_{(0,c)} \otimes \ket{\phi}) =
        \begin{cases}
            d \cdot \ket{\Omega}_{(0,a)} \otimes \Pi_{\hat{\sigma}(a,b,b)} \ket{\phi} &\text{if } b = c \\[1em]
            \ket{\Omega}_{(0,a)} \otimes \Pi_{\hat{\sigma}(a,b,c)} \ket{\phi} &\text{if } b \neq c.
        \end{cases}
    \end{equation*}
    
    \item In particular
    \begin{align*}
	    \Ima \Pi_\sigma^{\To} &= \Span \Big{\{} \ket{\Omega}_{(0,a)} \otimes \ket{\phi} \; \Big{|} \; \phi \in \mathcal{H}^{\otimes (N - 1)} \Big{\}} \\
	    \Ker \Pi_\sigma^{\To} &\supseteq \Span {\Big{\{} \ket{\Omega}_{(0,k)} \otimes \ket{\phi} \; \Big{|} \; 1 \leq k \leq N, \; \phi \in \mathcal{H}^{\otimes (N - 1)} \Big{\}}}^\perp.
    \end{align*}
    \end{itemize}
\end{lemma}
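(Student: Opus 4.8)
The plan is to first nail down the tensor structure of $\Pi_\sigma^{\To}$ by a direct matrix-element computation, and then read off the three items from it. Writing $\langle \vec k | \Pi_\sigma | \vec l\rangle = \prod_{j} \delta_{k_{\sigma(j)}, l_j}$ for multi-indices $\vec k, \vec l \in \llbracket d \rrbracket^{N+1}$, the partial transpose on the $0$-th leg swaps the roles of $k_0$ and $l_0$. Because $\sigma(0) = a$ and $\sigma(b) = 0$, the two factors involving leg $0$ become $\delta_{k_0, k_a}$ (from $j = 0$) and $\delta_{l_0, l_b}$ (from $j = b$), while the remaining factors $\prod_{j \neq 0, b} \delta_{k_{\sigma(j)}, l_j}$ encode a bijection of $\llbracket N \rrbracket \setminus \{b\}$ onto $\llbracket N \rrbracket \setminus \{a\}$. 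In words: $\Pi_\sigma^{\To}$ reads the input through the ``cup'' $\bra{\Omega}_{(0,b)}$, writes the output through the ``cap'' $\ket{\Omega}_{(0,a)}$, and permutes the surviving legs. This is exactly the content of the first item: the transposition $\Pi_{(1\,b)}$ moves the cup to position $(0,1)$, $\omega_{(0,1)}$ absorbs it and re-emits a cup on $(0,1)$, $\Pi_{\hat\sigma}$ permutes the $N-1$ spectator legs, and $\Pi_{(1\,a)}$ carries the output cup to $(0,a)$. Defining $\hat\sigma \in \mathfrak{S}_{N-1}$ as the reindexed restriction of $\sigma$ to $\llbracket N \rrbracket \setminus \{b\}$ and matching matrix elements gives the identity; uniqueness is immediate since distinct $\hat\sigma$ yield linearly independent operators $\omega_{(0,1)} \otimes \Pi_{\hat\sigma}$ and the outer transpositions are invertible.

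For the second item I would feed $\ket{\Omega}_{(0,c)} \otimes \ket\phi$ through the three-factor decomposition of the first item, one factor at a time. The single computation that matters is the action of $\omega_{(0,1)}$ on a state carrying a maximally entangled pair: if that pair already sits on legs $(0,1)$, then $\braket{\Omega}{\Omega} = d$ produces the scalar $d$ and reproduces the cup, which is the source of the factor $d$ in the case $b = c$; if instead the pair sits on legs $(0,c')$ with $c' \neq 1$, the snake (teleportation) identity contracts leg $0$ with leg $1$ and reconnects leg $1$ to leg $c'$, so that in the case $b \neq c$ a spectator leg is transported rather than a scalar produced. The remaining work is purely combinatorial: composing the leg moves induced by $\Pi_{(1\,b)}$, the teleportation, and $\Pi_{(1\,a)}$, and checking that the resulting permutation of the spectator legs is precisely $\Pi_{\hat\sigma(a,b,c)}$. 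This is where the cyclic shifts $(i:j)$ and the two-case definition of $\Pi_{\sigma(a,b,c)}$ come from, and it is the main obstacle of the lemma: keeping track of which index lands where, in particular the difference between $b = c$ (the entangled leg is the one the cup consumes) and $b \neq c$ (an extra shift $((c-1):(b-1))$ is needed to reconcile the two positions).

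The third item then follows at once. Since $\omega_{(0,1)}$ is rank one on legs $(0,1)$ with range $\mathbb{C}\ket{\Omega}_{(0,1)}$, the operator $\omega_{(0,1)} \otimes \Pi_{\hat\sigma}$ has range $\mathbb{C}\ket{\Omega}_{(0,1)} \otimes \mathcal{H}^{\otimes(N-1)}$ (using that $\Pi_{\hat\sigma}$ is onto), and applying the invertible $\Pi_{(1\,a)}$ carries this to $\Span\{\ket{\Omega}_{(0,a)} \otimes \ket\phi\}$, giving the image. For the kernel it suffices to prove $(\Ker \Pi_\sigma^{\To})^\perp \subseteq \Span\{\ket{\Omega}_{(0,k)} \otimes \ket\phi \mid 1 \leq k \leq N\}$. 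Here I would use $(\Ker \Pi_\sigma^{\To})^\perp = \Ima\big((\Pi_\sigma^{\To})^*\big)$ together with the identity $(\Pi_\sigma^{\To})^* = \Pi_{\sigma^{-1}}^{\To}$, checked on matrix elements and consistent with the observation that $\sigma^{-1} \in \Sigma_{b,a}$; the image part, now applied to $\sigma^{-1}$, gives $\Ima\big((\Pi_\sigma^{\To})^*\big) = \Span\{\ket{\Omega}_{(0,b)} \otimes \ket\phi\}$, which is contained in the claimed span. Alternatively one reads the inclusion straight off the matrix elements: the input is coupled only through $\bra{\Omega}_{(0,b)}$, so any vector orthogonal to all $\ket{\Omega}_{(0,k)} \otimes \ket\phi$ — a fortiori to $\ket{\Omega}_{(0,b)} \otimes \mathcal{H}^{\otimes(N-1)}$ — is annihilated.
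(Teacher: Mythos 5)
Your proposal is correct, and it reaches the decomposition at the heart of the lemma by a genuinely different route than the paper. For the first item you compute the matrix elements of $\Pi_\sigma^{\To}$ directly --- the factors $\delta_{k_0,k_a}$ and $\delta_{l_0,l_b}$ plus a bijection of the spectator legs --- and read the cup--cap--permutation structure off of them; the paper instead conjugates by $\Pi_{(1\:a)}$ and $\Pi_{(1\:b)}$ (which commute with the partial transpose on leg $0$) to reduce to a permutation in $\Sigma_{1,1}$, and extracts $\hat\sigma$ through the partial-trace formula $\Pi_{\hat\sigma}=\frac1d \Tr_{0,1}\big[\Pi_{(1\:a)}\,\Pi_\sigma^{\To}\,\Pi_{(1\:b)}\big]$, which packages uniqueness for free. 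One small repair to your version: your uniqueness argument appeals to \emph{linear independence} of the operators $\omega_{(0,1)}\otimes\Pi_{\hat\sigma}$, which can fail when $d<N-1$ (permutation operators then satisfy linear relations, e.g.\ the vanishing antisymmetrizer); all you actually need is injectivity of $\hat\sigma\mapsto\Pi_{\hat\sigma}$, i.e.\ faithfulness of the representation, which holds for every $d\ge 2$. Your second item follows the paper's own path (push $\ket{\Omega}_{(0,c)}\otimes\ket{\phi}$ through the three factors, with $\braket{\Omega}{\Omega}=d$ producing the factor $d$ when $b=c$ and the teleportation identity producing the shift $\Pi_{((c-1)\,:\,(b-1))}$ when $b\neq c$), except that you assert rather than execute the final bookkeeping identifying the composite of shifts with $\Pi_{\hat\sigma(a,b,c)}$; the paper carries this out via explicit commutation relations such as $\big(\omega_{(0,1)}\otimes I^{\otimes (N-1)}\big)\Pi_{(1\:b)}=\Pi_{(1\:b)}\big(\omega_{(0,b)}\otimes I^{\otimes (N-1)}\big)$, and since this is exactly where sign-off errors hide, it should be written out in a final version. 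Your third item is, if anything, cleaner than the paper's: you get the image from the rank-one range of $\omega_{(0,1)}$ sandwiched between invertible factors (the paper instead constructs an explicit preimage $\frac1d\ket{\Omega}_{(0,b)}\otimes\Pi^{-1}\ket{\phi}$), and the kernel from $(\Pi_\sigma^{\To})^*=\Pi_{\sigma^{-1}}^{\To}$ with $\sigma^{-1}\in\Sigma_{b,a}$, making explicit an identity the paper uses only tacitly. Overall, your matrix-element derivation is more elementary and self-contained, while the paper's conjugation trick shortens the algebra and its trace formula gives uniqueness without any independence considerations.
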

The proof of this lemma, which is rather lengthy and technical, as well as the graphical formalism used to derive it, is postponed to Appendix~\ref{sec:app-proof-lemma-permutation-operators}, see also Examples~\ref{ex:graphical1}, \ref{ex:graphical2} and \ref{ex:graphical3}. Note that for any $1 \leq i \leq N$, the partially transposed permutation operator associated to the permutation $(0 \: i) \in \Sigma_{i,i}$ is
\begin{equation*}
    \Pi^{\To}_{(0 \: i)} = \omega_{(0, i)} \otimes I^{\otimes (N - 1)}.
\end{equation*}
Let us comment on this lemma. The action of $\Pi_\sigma^{\To}$ onto vectors of the form $\ket{\Omega}_{(0,c)} \otimes \ket{\phi})$ makes appear a permutation operator on $\mathcal H^{\otimes N-1}$, which acts on $\ket{\phi}$. Since this permutation is a representation operator, it is then natural to involve the decomposition of $\mathcal H^{\otimes N-1}$ into irreducible representations in order to make more precise the action of $\Pi_\sigma^{\To}$. This shall be made clear in the next section.

\subsection{Maximal eigenvalue}\label{subsec:max}
This section is devoted to the analysis of the maximal eigenvalue of the operator
\begin{equation} 
    S_x = \sum^N_{i = 1} x_i \cdot \Pi^{\To}_{(0, i)},
\end{equation}
where~$x=(x_i)_{i=1}^N \in (0,\infty)^N$ (in this section, the vector $x$ is fixed and we do not write the subscript $x$). The study of the maximum eigenvalue and the location of corresponding eigenvectors is crucial in the main result of the paper. As announced, due to the structure of the~$\Pi^\To_{(0 \: i)}$, where permutations on~$\mathfrak{S}_{N - 1}$ appear, we shall consider the Schur-Weyl decomposition \cite{fulton2013representation} 
\begin{equation*}
    \mathcal{H}^{\otimes(N-1)} = \bigoplus_{\lambda\,\, \textrm{irrep}\,\,\mathfrak{S}_{N-1}} E^{\lambda}
\end{equation*}
In the sequel, given a permutation $\sigma \in \mathfrak{S}_{N - 1}$, we consider the decomposition of~$\Pi_\sigma$ with respect to this decomposition. Note that this decomposition is orthogonal and all the direct sum in the paper will be considered orthogonal. In particular we denote by $P^\lambda$ the orthogonal projection into the irreducible representation $\lambda$ of~$\mathfrak{S}_{N - 1}$. For $\sigma\in\mathfrak{S}_{N-1}$ note that \begin{equation*}
    \Pi_\sigma^{\lambda}:=P^\lambda \Pi_\sigma P^\lambda = P^\lambda \Pi_\sigma = \Pi_\sigma P^\lambda,
\end{equation*}
is a unitary operator.

Let~$\lambda$ be an irreducible representation of~$\mathfrak{S}_{N - 1}$, and let~${\big{\{} \mathbf{v}^{\lambda}_i \big{\}}}_i$ be an orthonormal basis of the irreducible subspace $E^\lambda$ of~$\mathfrak{S}_{N - 1}$ associated to~$\lambda$. Associated with this orthonormal basis, we introduce the following set of vectors
\begin{equation*}
   \mathcal{V}^{\lambda} = \Big{\{} \ket{\Omega}_{(0,i)} \otimes \ket{\mathbf{v}^{\lambda}_j} \; \Big{|} \; 1 \leq i \leq N, \quad 1 \leq j \leq \dim(E^\lambda) \Big{\}}.
\end{equation*}
The scalar products of the elements of $\mathcal{V}^\lambda$ are determined by the following lemma whose proof is postponed to Appendix \ref{app:scalarproduct}.
\begin{lemma}\label{lem:scalarProduct} 
    Let~$\lambda_1, \lambda_2$ be two distinct irreducible representations of~$\mathfrak{S}_{N - 1}$ on $\mathcal H^{\otimes(N-1)}$, let~$\mathbf{v}_1, \mathbf{v}_2$ be two normalized vectors in the irreducible subspace~$\lambda_1$ and~$\lambda_2$ of~$\mathfrak{S}_{N - 1}$, and let~$1 \leq k,l \leq N$. Then
    \begin{equation*}
        \braket{\Omega_{(0,k)} \otimes\mathbf{v}_1}{\Omega_{(0,l)} \otimes\mathbf{v}_2 } =
        \begin{cases}
            d \cdot \braket{\mathbf{v}_1}{\mathbf{v}_2} &\text{if } k = l \\
            \bra{\mathbf{v}_1} \Pi_{((l - 1) \, : \, (k - 1))} \ket{\mathbf{v}_2} &\text{if } k \neq l.
        \end{cases}
    \end{equation*}
    As a consequence
    \begin{align*}
        \braket{\Omega_{(0,k)} \otimes\mathbf{v}_1}{\Omega_{(0,k)} \otimes\mathbf{v}_1} &= d \\
        \braket{\Omega_{(0,k)} \otimes\mathbf{v}_1}{\Omega_{(0,l)} \otimes\mathbf{v}_1} &= \bra{\mathbf{v}_1} \Pi_{((l - 1) \, : \, (k - 1))} \ket{\mathbf{v}_1},
    \end{align*}
    and
    \begin{equation*}
        \braket{\Omega_{(0,k)} \otimes\mathbf{v}_1}{\Omega_{(0,l)} \otimes\mathbf{v}_2} = 0.
    \end{equation*}
    In particular the vector spaces $\Span \mathcal{V}^\lambda$ are in orthogonal direct sum.
\end{lemma}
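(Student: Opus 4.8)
The plan is to compute the inner product by expanding each unnormalized maximally entangled state in the computational basis, $\ket{\Omega}_{(0,i)} = \sum_{p=1}^d \ket{p}_{(0)} \otimes \ket{p}_{(i)}$, and then reading the resulting scalar product as a tensor contraction over the $N+1$ factors of $\mathcal{H}^{\otimes(N+1)}$. Throughout, $\ket{\mathbf{v}_1}$ occupies the physical subsystems $\llbracket 1,N\rrbracket \setminus \{k\}$ and $\ket{\mathbf{v}_2}$ occupies $\llbracket 1,N\rrbracket\setminus\{l\}$, each filled in increasing order of the subsystem label; keeping track of these two embeddings is exactly what produces the permutation in the statement. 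I would split the argument into the diagonal case $k=l$ and the off-diagonal case $k\neq l$.

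When $k=l$, the two entangled states live on the same pair $(0,k)$ and the supports of $\mathbf{v}_1,\mathbf{v}_2$ coincide, so the scalar product factorizes as $\braket{\Omega_{(0,k)}}{\Omega_{(0,k)}}\cdot\braket{\mathbf{v}_1}{\mathbf{v}_2}$. A one-line computation gives $\braket{\Omega_{(0,k)}}{\Omega_{(0,k)}} = \sum_{p,q}\braket{p}{q}_{(0)}\braket{p}{q}_{(k)} = \sum_p 1 = d$, which yields the first branch $d\cdot\braket{\mathbf{v}_1}{\mathbf{v}_2}$.

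The substance is the case $k\neq l$. Here I would first contract subsystem $0$, which is shared by $\bra{\Omega}_{(0,k)}$ and $\ket{\Omega}_{(0,l)}$: this forces the two summation indices to agree and leaves a single wire carrying one index that now sits on physical subsystem $k$ on the bra side and on physical subsystem $l$ on the ket side (the usual ``ricochet'' property of $\ket{\Omega}$). Contracting the remaining factors then pairs $\mathbf{v}_1$ against $\mathbf{v}_2$, but with a mismatch: subsystem $l$, which belongs to the support of $\mathbf{v}_1$, is tied to subsystem $k$, which belongs to the support of $\mathbf{v}_2$, while every other common subsystem is contracted identically. Translating this mismatch through the increasing-order embeddings of the two vectors --- the ``hole'' sitting at position $k$ for $\mathbf{v}_2$ and at position $l$ for $\mathbf{v}_1$, after the shift by one coming from the relabeling of $\llbracket 1,N-1\rrbracket$ --- realigns the $N-1$ tensor factors by exactly the cyclic permutation $((l-1):(k-1))\in\mathfrak{S}_{N-1}$ of the paper's notation, so that the contraction equals $\bra{\mathbf{v}_1}\Pi_{((l-1):(k-1))}\ket{\mathbf{v}_2}$. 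This careful index bookkeeping is the main obstacle, and I expect the graphical calculus of Appendix~\ref{app:permutation-operators} to make the identification of the precise cycle transparent; the transpose trick $(I\otimes A)\ket{\Omega} = (A^{\T}\otimes I)\ket{\Omega}$ is the algebraic engine behind the contraction.

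The stated consequences then follow quickly. Taking $k=l$ with $\mathbf{v}_1=\mathbf{v}_2$ normalized gives $d$, and taking $k\neq l$ with $\mathbf{v}_1=\mathbf{v}_2$ gives the matrix element $\bra{\mathbf{v}_1}\Pi_{((l-1):(k-1))}\ket{\mathbf{v}_1}$. For the orthogonality across distinct irreducible representations $\lambda_1\neq\lambda_2$, I would use that the permutation operator $\Pi_{((l-1):(k-1))}$ commutes with the (orthogonal) isotypic decomposition of $\mathcal{H}^{\otimes(N-1)}$, so $\Pi_{((l-1):(k-1))}\ket{\mathbf{v}_2}$ stays in the $\lambda_2$-component, which is orthogonal to $\ket{\mathbf{v}_1}\in E^{\lambda_1}$; in the diagonal case the factor $\braket{\mathbf{v}_1}{\mathbf{v}_2}$ already vanishes for the same reason. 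Hence every scalar product between an element of $\mathcal{V}^{\lambda_1}$ and an element of $\mathcal{V}^{\lambda_2}$ is zero, which is precisely the claim that the spaces $\Span\mathcal{V}^\lambda$ are in orthogonal direct sum.
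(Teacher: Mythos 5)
Your proposal is correct and follows essentially the same route as the paper's proof: the $k=l$ case by factorization together with $\braket{\Omega}{\Omega}=d$, the $k\neq l$ case via the contraction identity $\braket{\Omega_{(0,k)}\otimes\phi}{\Omega_{(0,l)}\otimes\psi}=\bra{\phi}\Pi_{((l-1)\,:\,(k-1))}\ket{\psi}$, and the cross-irrep orthogonality from the invariance of the (mutually orthogonal) isotypic components of $\mathcal H^{\otimes(N-1)}$ under all permutation operators. The only difference is one of detail, not of approach: you sketch the tensor-contraction bookkeeping (contracting subsystem $0$ first, then tracking the two embeddings) that produces the cycle $((l-1)\,:\,(k-1))$, whereas the paper simply invokes that identity as known.
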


In order to seek the maximal eigenvalue of~$S_x$ and the corresponding eigenvector, we shall use the following lemma, which links the range of~$S_x$ and the vector space generated by this~$v^\lambda$.
\begin{lemma} \label{lem:blockEigenvalue}
    We have the following inclusion for $\Ima S_x$ and $\Ker S_x$
    \begin{align*}
        \Ima S_x &\subseteq \bigoplus_\lambda \Span (\mathcal{V}^{\lambda}) \\
        \Ker S_x &\supseteq \Big(\Span (\mathcal{V}^{\lambda}, \lambda )\Big)^\perp 
    \end{align*}
    In particular we will consider the decomposition 
    $$\mathcal H^{\otimes N-1} = \bigoplus_\lambda \Span (\mathcal{V}^{\lambda}) \bigoplus \left(\Span (\mathcal{V}^{\lambda}, \lambda )\right)^\perp .$$
\end{lemma}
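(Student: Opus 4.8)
The plan is to prove the two inclusions for $\Ima S_x$ and $\Ker S_x$ by combining the structural description of each $\Pi^{\To}_{(0,i)}$ from Lemma~\ref{lem:struct} with the Schur–Weyl decomposition of $\mathcal{H}^{\otimes(N-1)}$ into irreducibles of $\mathfrak{S}_{N-1}$. First I would recall from the remark following Lemma~\ref{lem:struct} that $\Pi^{\To}_{(0,i)} = \omega_{(0,i)} \otimes I^{\otimes(N-1)}$, so that $S_x = \sum_{i=1}^N x_i \cdot \omega_{(0,i)} \otimes I^{\otimes(N-1)}$. The key observation is that $\omega_{(0,i)}$ acts on the input factor $0$ and the $i$-th output factor, with the remaining $N-1$ output factors untouched by each summand individually. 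Applying the image/kernel statements of Lemma~\ref{lem:struct} (the ``In particular'' bullet) to each transposition $(0\:i) \in \Sigma_{i,i}$ gives
\begin{equation*}
    \Ima \Pi^{\To}_{(0,i)} = \Span \big{\{} \ket{\Omega}_{(0,i)} \otimes \ket{\phi} \; \big{|} \; \phi \in \mathcal{H}^{\otimes(N-1)} \big{\}}.
\end{equation*}

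Next I would take the sum over $i$. Since $\Ima(A+B) \subseteq \Ima A + \Ima B$ for linear operators, and since each summand $x_i \cdot \Pi^{\To}_{(0,i)}$ has image contained in the span of the vectors $\ket{\Omega}_{(0,i)} \otimes \ket{\phi}$, we obtain
\begin{equation*}
    \Ima S_x \subseteq \Span \big{\{} \ket{\Omega}_{(0,i)} \otimes \ket{\phi} \; \big{|} \; 1 \leq i \leq N, \; \phi \in \mathcal{H}^{\otimes(N-1)} \big{\}}.
\end{equation*}
To rewrite this span in terms of the sets $\mathcal{V}^\lambda$, I would decompose each $\ket{\phi}$ along the orthogonal Schur–Weyl decomposition $\mathcal{H}^{\otimes(N-1)} = \bigoplus_\lambda E^\lambda$; writing $\ket{\phi} = \sum_\lambda \ket{\phi^\lambda}$ with $\ket{\phi^\lambda} \in E^\lambda$ expressed in the orthonormal basis $\{\mathbf{v}^\lambda_j\}_j$ shows that every generating vector $\ket{\Omega}_{(0,i)} \otimes \ket{\phi}$ lies in $\bigoplus_\lambda \Span(\mathcal{V}^\lambda)$, and conversely each $\mathcal{V}^\lambda$ generator is of the stated form. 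This yields the first inclusion. For the kernel inclusion I would dualize: since $S_x$ is Hermitian (each $\omega_{(0,i)}$ is a positive operator and the $x_i$ are real), we have $\Ker S_x = (\Ima S_x)^\perp$, and taking orthogonal complements of the image inclusion immediately gives
\begin{equation*}
    \Ker S_x \supseteq \Big( \bigoplus_\lambda \Span(\mathcal{V}^\lambda) \Big)^\perp = \big( \Span(\mathcal{V}^\lambda, \lambda) \big)^\perp.
\end{equation*}

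The main obstacle, and the step requiring the most care, is verifying that the span of the raw generators $\ket{\Omega}_{(0,i)} \otimes \ket{\phi}$ coincides with $\bigoplus_\lambda \Span(\mathcal{V}^\lambda)$ as an \emph{orthogonal} direct sum rather than merely a sum of subspaces. The orthogonality across distinct $\lambda$ is exactly the content of the last assertion of Lemma~\ref{lem:scalarProduct}, namely $\braket{\Omega_{(0,k)} \otimes \mathbf{v}_1}{\Omega_{(0,l)} \otimes \mathbf{v}_2} = 0$ when $\mathbf{v}_1, \mathbf{v}_2$ belong to distinct irreducibles; this is what legitimizes writing the decomposition as a genuine orthogonal direct sum and justifies the final display of the lemma. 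I would therefore invoke Lemma~\ref{lem:scalarProduct} precisely at this point to conclude that the $\Span(\mathcal{V}^\lambda)$ are mutually orthogonal, so that the ambient space splits as claimed and the block-diagonal structure of $S_x$ with respect to the $\lambda$-labelled subspaces is available for the subsequent eigenvalue analysis.
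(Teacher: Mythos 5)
Your proof is correct and follows essentially the same route as the paper's: both bound $\Ima S_x$ by the sum of the images $\Ima \Pi^{\To}_{(0,i)}$ given by Lemma~\ref{lem:struct}, both obtain the kernel inclusion from $S_x = S_x^*$ via $\Ker S_x = (\Ima S_x)^\perp$, and both invoke Lemma~\ref{lem:scalarProduct} for the orthogonality of the subspaces $\Span(\mathcal{V}^\lambda)$. The only difference is that you spell out, via the Schur--Weyl decomposition of $\ket{\phi}$, why the span of the generators $\ket{\Omega}_{(0,i)} \otimes \ket{\phi}$ lies in $\bigoplus_\lambda \Span(\mathcal{V}^\lambda)$ --- a detail the paper leaves implicit.
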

\begin{proof}
    The fact that the vector spaces $\Span \mathcal{V}^{\lambda}$ are in orthogonal direct sum is a consequence of Lemma~\ref{lem:scalarProduct}. For the first inclusion, since~$S_x = \sum^N_{i = 1} x_i \cdot \Pi^{\To}_{(0, i)}$ we have
    \begin{equation*}
        \Ima S_x \subseteq \sum^N_{i = 1} \Ima \Pi^{\To}_{(0, i)}.
    \end{equation*}
    Following Lemma~\ref{lem:struct}, for all~$1 \leq i \leq N$,
    \begin{equation*}
        \Ima \Pi^{\To}_{(0, i)} \subseteq \bigoplus_\lambda \Span (\mathcal{V}^{\lambda}).
    \end{equation*}
    Hence, $\Ima S_x \subseteq \bigoplus_\lambda \Span (\mathcal{V}^{\lambda})$. Now we have $S_x = S_x^*$ which yields $\Ker S_x = (\Ima S_x^*)^\perp = (\Ima S_x)^\perp$, then
    \begin{equation*}
        \Ker S_x \supseteq \Span (\mathcal{V}^{\lambda}, \lambda, )^\perp.
    \end{equation*}
    The last point is obvious.
\end{proof}

\begin{remark}
    From Lemma \ref{lem:struct} we have that, for all irrep $\lambda$,
    \begin{equation*}
        \Pi_\sigma^{\To}(\ket{\Omega}_{(0,c)} \otimes \ket{\mathbf{v}^\lambda}) =
        \begin{cases}
            d \cdot \ket{\Omega}_{(0,a)} \otimes \Pi_{\hat{\sigma}(a,b,b)} \ket{\mathbf{v}^\lambda} &\text{if } b = c \\[1em]
            \ket{\Omega}_{(0,a)} \otimes \Pi_{\hat{\sigma}(a,b,c)} \ket{\mathbf{v}^\lambda} &\text{if } b \neq c.
        \end{cases}
    \end{equation*}
    Since $\Pi_{\hat{\sigma}(a,b,c)} \ket{\mathbf{v}^\lambda} = \Pi_{\hat{\sigma}(a,b,c)} P^\lambda \ket{\mathbf{v}^\lambda} = P^\lambda \Pi_{\hat{\sigma}(a,b,c)} P^\lambda \ket{\mathbf{v}^\lambda}= \Pi^\lambda_{\hat{\sigma}(a,b,c)} \ket{\mathbf{v}^\lambda}$, we see that $\Span \mathcal{V}^{\lambda}$ is stabilized by $\Pi^\To_\sigma$ for all $\sigma$. This way, Lemma \ref{lem:blockEigenvalue} expresses the fact that $S_x$ can be block diagonalized with respect to the orthogonal decomposition $\bigoplus_\lambda \Span (\mathcal{V}^{\lambda}) \bigoplus \left(\Span (\mathcal{V}^{\lambda}, \lambda )\right)^\perp$.
    
    As a result of this block diagonalization, for any non-zero eigenvalue of $S$ one can associate an eigenvector of the form~$\chi = \sum_{ij} \beta_{ij} \cdot \ket{\Omega}_{(0,i)} \otimes \ket{\mathbf{v}^{\lambda}_j}$, for some coefficients~$\beta_{ij}$. Note that vectors of this form can also be in the kernel of $S$.
\end{remark}

At this stage, in order to localize the maximal eigenvalue of $S$ one has to study $S_{| \mathcal{V}^\lambda}$ which stands for the restriction of $S$ to the space $ \Span \mathcal{V}^{\lambda}$. Actually this is not straightforward since $\mathcal{V}^{\lambda}$ is not a basis of $\Span \mathcal{V}^{\lambda}$ in general. Nevertheless for the trivial representation $\lambda = (N-1)$ of $\mathfrak{S}_{N - 1}$, the corresponding set, which we denote by $\mathcal{V}^+$, is a basis and we can expressed the action of $S$ on this basis. This is the content of the following proposition.

\begin{proposition}
    The vectors of $\mathcal{V}^+$ are linearly independent. Let us define
    \begin{equation*}
        S_{+} :=
        \begin{pmatrix}
            d \, x_1 & x_1 & x_1 & \ldots & x_1 \\
            x_2 & d \, x_2 & x_2 & \ldots & x_2 \\
            \vdots & \vdots & \vdots & \ddots & \vdots \\
            x_N & x_N & x_N & \ldots & d \, x_N
        \end{pmatrix}.
    \end{equation*}
    Then the restriction of the operator~$S$, defined in Eq.~\eqref{eq:S-op}, onto the vectors~$\mathcal{V}^+$ can be written in block diagonal form
    \begin{equation*}
        S_{|_{\mathcal{V}^+}} =
        S_+^{\oplus \dim \big{(} \vee^{(N - 1)}(\mathcal{H}) \big{)}}.
    \end{equation*}
\end{proposition}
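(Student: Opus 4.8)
The plan is to prove both assertions — linear independence of $\mathcal{V}^+$ and the block-diagonal form — by exploiting the single structural fact that on the trivial representation $\lambda = (N-1)$, whose carrier space is the symmetric subspace $E^+ := E^{(N-1)} = \vee^{(N-1)}(\mathcal{H})$, every permutation operator $\Pi_\tau$ with $\tau \in \mathfrak{S}_{N-1}$ acts as the identity. This observation trivializes all the permutation factors appearing in Lemmas \ref{lem:scalarProduct} and \ref{lem:struct}, so that the only remaining content is elementary linear algebra.

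First I would establish linear independence by computing the Gram matrix of $\mathcal{V}^+$. Fixing an orthonormal basis $\{\mathbf{v}^+_j\}_{1\le j\le\dim E^+}$ of $E^+$ and applying Lemma \ref{lem:scalarProduct}, with $\Pi_{((l-1)\,:\,(k-1))}$ acting as the identity on symmetric vectors, the inner product $\braket{\Omega_{(0,i)} \otimes \mathbf{v}^+_a}{\Omega_{(0,k)} \otimes \mathbf{v}^+_b}$ equals $d\,\delta_{ab}$ when $i = k$ and $\delta_{ab}$ when $i \neq k$. Hence the Gram matrix factorizes as $M \otimes I_{\dim E^+}$, where $M = (d-1)\,I_N + J$, with $J$ the $N\times N$ all-ones matrix. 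Since $M$ has eigenvalues $d - 1 + N$ and $d - 1$ (the latter with multiplicity $N-1$), both strictly positive for $d \geq 2$, the matrix $M$ is positive definite; therefore the Gram matrix is invertible and the vectors of $\mathcal{V}^+$ are linearly independent.

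Next I would compute the action of $S$ on each basis vector using Lemma \ref{lem:struct}. For the transposition $(0\,i) \in \Sigma_{i,i}$ one has $a = b = i$ and $\hat{\sigma} = \id$, so Lemma \ref{lem:struct} gives $\Pi^{\To}_{(0,i)}\big(\ket{\Omega}_{(0,c)} \otimes \ket{\mathbf{v}^+_j}\big)$ equal to $d\,\ket{\Omega}_{(0,i)} \otimes \Pi_{\id(i,i,i)}\ket{\mathbf{v}^+_j}$ when $c = i$, and to $\ket{\Omega}_{(0,i)} \otimes \Pi_{\id(i,i,c)}\ket{\mathbf{v}^+_j}$ when $c \neq i$. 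As each $\Pi_{\id(i,i,\cdot)}$ is a composition of permutation operators, all acting as the identity on the symmetric vector $\mathbf{v}^+_j$, these reduce to $d\,\ket{\Omega}_{(0,i)} \otimes \ket{\mathbf{v}^+_j}$ and $\ket{\Omega}_{(0,i)} \otimes \ket{\mathbf{v}^+_j}$ respectively. Summing against the weights $x_i$ yields
\begin{equation*}
    S\big(\ket{\Omega}_{(0,c)} \otimes \ket{\mathbf{v}^+_j}\big) = d\,x_c \cdot \ket{\Omega}_{(0,c)} \otimes \ket{\mathbf{v}^+_j} + \sum_{i \neq c} x_i \cdot \ket{\Omega}_{(0,i)} \otimes \ket{\mathbf{v}^+_j}.
\end{equation*}

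Reading off coefficients, the matrix of $S$ restricted to $\Span\mathcal{V}^+$, in the basis ordered by $i$, has $(i,c)$-entry equal to $d\,x_i$ if $i = c$ and $x_i$ otherwise, which is precisely $S_+$. Crucially, the index $j$ labelling the symmetric basis vector is never altered, so $S$ leaves each of the $\dim(E^+)$ subspaces $\Span\{\ket{\Omega}_{(0,i)} \otimes \ket{\mathbf{v}^+_j} : 1 \leq i \leq N\}$ invariant and acts on each by the same matrix $S_+$; this establishes the block-diagonal form $S_+^{\oplus \dim(\vee^{(N-1)}(\mathcal{H}))}$. I expect the only delicate point to be the index bookkeeping in Lemma \ref{lem:struct} — confirming $a = b = i$ and $\hat{\sigma} = \id$ for the transpositions $(0\,i)$ — since the triviality of the permutation action on $E^+$ removes all representation-theoretic difficulty. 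I would also remark that, because $\mathcal{V}^+$ is not orthonormal, it is entirely consistent that the self-adjoint operator $S$ is represented by the non-symmetric matrix $S_+$.
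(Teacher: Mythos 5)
Your proof is correct and follows essentially the same route as the paper: both establish linear independence via the Gram matrix $G = (d-1)I_N + J$ (you diagonalize it, the paper evaluates the quadratic form $\langle v, Gv\rangle = (d-1)\|v\|^2 + (\sum_k v_k)^2$, which is the same computation), and both then read off the action of $S$ from Lemma~\ref{lem:struct} using the fact that $\hat\sigma$-type permutation operators act as the identity on $\vee^{(N-1)}(\mathcal{H})$, so that each subspace $\Span\{\ket{\Omega}_{(0,i)}\otimes\ket{\mathbf{v}_j}\colon 1\le i\le N\}$ is invariant with matrix $S_+$. Your closing remark that the non-symmetric matrix $S_+$ is consistent with $S$ being self-adjoint, because $\mathcal{V}^+$ is not orthonormal, is a correct and worthwhile observation not made explicit in the paper.
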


\begin{proof}
    From Lemma~\ref{lem:scalarProduct}, for any~$1 \leq i, j \leq N$ and any distinct~$\mathbf{v}, \mathbf{w}$ in the orthogonal basis of the irreducible subspace of~$\mathfrak{S}_{N - 1}$ associated to the trivial representation, we have the following relation, which holds for any orthonormal vectors $\mathbf v$ and $\mathbf w$ on $\vee^{(N - 1)}(\mathcal{H})$:
     
    \begin{align*}
        \braket{\Omega_{(0,i)} \otimes\mathbf{v}}{\Omega_{(0,j)} \otimes\mathbf{v}} = 
        \begin{cases}
            d &\text{if } i = j \\
            1 &\text{if } i \neq j
        \end{cases},\quad
        \braket{\Omega_{(0,i)} \otimes\mathbf{v}}{\Omega_{(0,j)} \otimes\mathbf{w}} = 0.
    \end{align*}
    Then the Gram matrix~$\tilde G$ of the vectors~$\mathcal V^+$ is block diagonal, where each block is the~$N \times N$ matrix
    \begin{equation*}
       G= \begin{pmatrix}
            d & 1 & \cdots & 1 \\
            1 & d & \cdots & 1 \\
            \vdots & \vdots & \ddots & \vdots \\
            1 & 1 & \cdots & d
        \end{pmatrix}.
    \end{equation*}
    That is
    \begin{equation*}
        \tilde G =
        G^{\oplus \dim \big{(} \vee^{(N - 1)}(\mathcal{H}) \big{)}}.
    \end{equation*}
    Note that each block is invertible. Indeed for all  $v=(v_1,\ldots,v_N)$, we have
    $$\langle v,Gv\rangle=(d-1)\Vert v\Vert^2+\left(\sum_{k=1}^N v_k\right)^2$$
    which clearly shows that $G$ is positive definite since $d>1$. This shows that~$\mathcal{V}^+$ is a linearly independent set.

    Now note that for all $\mathbf v\in\vee^{(N - 1)}(\mathcal{H})$ we have $\Pi_{\hat\sigma}\ket{\mathbf v}=\mathbf{ v}.$ This way, in the basis $\mathcal{V}^+$ of Span $\mathcal{V}^+$, from Lemma \ref{lem:struct} we have for all $i=1,\ldots,N$ and for all $\mathbf{v}\in\vee^{(N - 1)}(\mathcal{H})$
    \begin{equation*}
        \Pi_{(0,i)}^{\To}(\ket{\Omega}_{(0,c)} \otimes \ket{\mathbf{v}}) =
        \begin{cases}
            d \cdot \ket{\Omega}_{(0,i)} \otimes  \ket{\mathbf{v}}= &\text{if } b = c \\[1em]
            \ket{\Omega}_{(0,i)} \otimes \Pi_{\hat{\sigma}(a,b,c)} \ket{\mathbf{v}}=\ket{\Omega}_{(0,i)} \otimes  \ket{\mathbf{v}} &\text{if } b \neq c.
        \end{cases}
    \end{equation*}

Now fix $i\in \llbracket \dim(\vee^{(N - 1)}(\mathcal{H})) \rrbracket$ and consider $\mathbf v_i$ in the orthonormal basis $\vee^{(N - 1)}(\mathcal{H})$, the linear independent set of vectors $\{\ket{\Omega}_{(0,j)} \otimes \ket{\mathbf{v}_i},j=1,\ldots,N\}$ is stable under the action of $S$ and the corresponding matrix is $S^+$. Then reading $\mathcal{V}^+=\{\ket{\Omega}_{(0,1)} \otimes \ket{\mathbf{v}_1},\ldots,\ket{\Omega}_{(0,N)} \otimes \ket{\mathbf{v}_1},\ket{\Omega}_{(0,1)} \otimes \ket{\mathbf{v}_2},\ldots,\ket{\Omega}_{(0,N)} \otimes \ket{\mathbf{v}_2},\ldots\}$, we obtain the block decomposition.
\end{proof}
\begin{remark}
    With a change of basis i.e fix first $\ket{\Omega}_{(0,i)}$ and let $\mathbf v_i$ runs onto the orthonormal basis of $\vee^{(N - 1)}(\mathcal{H})$ , the operator~$S_{|_{\mathcal V^+}}$ can also be written
    \begin{equation*}
        S_{|_{\mathcal{V}^+}} =
        \begin{pmatrix}
            d \, x_1 \cdot I & x_1 \cdot I & \ldots & x_1 \cdot I \\
            x_2 \cdot I & d \, x_2 \cdot I & \ldots & x_2 \cdot I \\
            \vdots & \vdots  & \ddots & \vdots \\
            x_N \cdot I & x_N \cdot I & \ldots & d \, x_N \cdot I
        \end{pmatrix},
    \end{equation*}
    where each block~$I$ is the identity operator of the space~$\vee^{(N - 1)}(\mathcal{H}) \big{)}$.
\end{remark}
Even if $\mathcal{V}^{\lambda}$ is not necessarily a system of linearly independent vectors, one can still define a matrix $\tilde{S}_{| \mathcal{V}^\lambda}$ which describes the action of $S$ onto the $\Span V^\lambda$.
    To this end, let~$\lambda$ be any irreducible representation of~$\mathfrak{S}_{N - 1}$, and let~$\ket{\Omega}_{(0,l)} \otimes \ket{\mathbf{v}^\lambda}$ be a vector in~$V^{\lambda}$. Then for any~$1 \leq k \leq N$ the action of the operator~$\Pi^{\To}_{(0 \: k)}$ on the vector~$\ket{\Omega}_{(0,l)} \otimes \ket{\mathbf{v}^\lambda}$ is given by
    \begin{equation*}
        \Pi^{\To}_{(0 \: k)} \big{(} \ket{\Omega}_{(0,l)} \otimes \ket{\mathbf{v}^\lambda} \big{)} =
        \begin{cases}
            d \cdot \ket{\Omega}_{(1,k)} \otimes \ket{\mathbf{v}^\lambda} &\text{ if } k = l\\
            \ket{\Omega}_{(1,k)} \otimes \Pi^\lambda_{((l - 1) : (k - 1))} \ket{\mathbf{v}^\lambda} &\text{ otherwise}.
        \end{cases}
    \end{equation*}
Indeed for transposition $(0,k)$, the associated  ~$\hat{\sigma} = \id$ and~$\Pi_{(0 \, : \, (k - 1))} \circ \Pi_{\id} \circ \Pi_{((k - 1) \, : \, 0)} = \Pi_{\id}$, we have
    \begin{equation*}
        \Pi_{\id,k,k,l} =
        \begin{cases}
            \Pi_{\id} &\text{if } k = l \\
            \Pi_{((l - 1) \, : \, (k - 1))} &\text{if } k \neq l.
        \end{cases}
    \end{equation*}

    Now at this stage let us introduce $\tilde{S}_{|_{\scriptscriptstyle\mathcal{V}^{\lambda}}}$:
    \begin{equation*}
        \tilde{S}_{|_{\scriptscriptstyle\mathcal{V}^{\lambda}}} =
        \begin{pmatrix}
            d \, x_1 \cdot \Pi^\lambda_{\id} & x_1 \cdot \Pi^\lambda_{((2 - 1) : (1 - 1))} & \ldots & x_1 \cdot \Pi^\lambda_{((N - 1) : (1 - 1))} \\
            x_2 \cdot \Pi^\lambda_{((1 - 1) : (2 - 1))} & d \, x_2 \cdot \Pi^\lambda_{\id} & \ldots & x_2 \cdot \Pi^\lambda_{((N - 1) : (2 - 1))} \\
            \vdots & \vdots & \ddots & \vdots \\
            x_N \cdot \Pi^\lambda_{((1 - 1) : (N - 1))} & x_N \cdot \Pi^\lambda_{((2 - 1) : (N - 1))} & \ldots & d \, x_N \cdot \Pi^\lambda_{\id}
        \end{pmatrix}.
    \end{equation*}
    This matrix would be exactly the matrix of $S_{|_{\scriptscriptstyle\mathcal{V}^{\lambda}}}$ written if $\mathcal{V}^\lambda$ is linearly independent. Writing the set of vectors $V^\lambda$ as a column matrix $ V^\lambda$  we have the relation:$$ S_{|_{\scriptscriptstyle\mathcal{V}^{\lambda}}} V^\lambda = V^\lambda \tilde{S}_{|_{\scriptscriptstyle\mathcal{V}^{\lambda}}}.$$ 
    
    \begin{remark}
        Note that $\tilde{S}_{|_{\scriptscriptstyle \mathcal{V}^+}} = S_{|_{\scriptscriptstyle \mathcal{V}^+}}$.
    \end{remark}

We now arrive at the main result of this section, where we identify the largest eigenvalue and one corresponding eigenvector of the operator $S$ introduced in \eqref{eq:S-op}.

\begin{theorem} \label{thm:largest-eigenspace}
    Let~$S$ be the operator defined in Eq.~\eqref{eq:S-op}, and~$\mu$ the largest eigenvalue of~$S$. Then
    \begin{equation*}
        \mu=\Vert S_+\Vert.
    \end{equation*}
    and a corresponding eigenvector $\chi$ can be chosen of the form
    \begin{equation*}
        \chi = \sum_{i=1}^N \beta_{i} \cdot \ket{\Omega}_{(0,i)} \otimes \ket{\mathbf{v}},
    \end{equation*}
    for some vector $\mathbf v$ in $\vee^{(N - 1)}(\mathcal{H})$. The coefficients $\beta_i$ of the vector $\chi$ satisfy the equation
    \begin{equation*}
        (d - 1) \sum^N_{i = 1} \beta^2_i + \bigg{(} \sum^N_{i = 1} \beta_i \bigg{)}^2 = 1.
    \end{equation*}
\end{theorem}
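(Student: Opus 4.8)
The claims about the \emph{form} of the eigenvector and the normalization of its coefficients will follow cheaply from the block structure once the identity $\mu = \|S_+\|$ is in hand, so I would put almost all the effort into the latter. Observe first that $S$ is self-adjoint and positive semidefinite, being a positive combination (recall $x_i>0$) of the positive operators $\Pi^{\To}_{(0,i)} = \omega_{(0,i)}\otimes I^{\otimes(N-1)}$; hence $\mu=\lambda_{\max}(S)=\|S\|$. One inequality is then immediate: by the Proposition preceding the theorem, $\Span\mathcal{V}^+$ is an $S$-invariant subspace on which $S$ acts as $S_+^{\oplus\dim\vee^{(N-1)}(\mathcal H)}$, so every eigenvalue of $S_+$ is an eigenvalue of $S$, giving $\mu\ge\|S_+\|$ (here $\|S_+\|$ is the top eigenvalue of $S_+$, equivalently the norm of the self-adjoint operator $S_{|\mathcal V^+}$). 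The whole difficulty is the reverse bound $\mu\le\|S_+\|$: showing that no other irreducible block can beat the symmetric one.

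The device I would use is a factorization of $S$. Define $R:\mathcal H^{\otimes(N+1)}\to\big(\mathcal H^{\otimes(N-1)}\big)^{N}$ by $R\psi=\big((\bra{\Omega}_{(0,i)}\otimes I)\psi\big)_{i=1}^N$ and let $D_x=\diag(x_1,\dots,x_N)$ act blockwise. A one-line check gives $S=R^*D_xR$, since $R^*(\phi_i)_i=\sum_i\ket{\Omega}_{(0,i)}\otimes\phi_i$ and therefore $R^*D_xR\,\psi=\sum_i x_i(\omega_{(0,i)}\otimes I)\psi$. Writing $A=D_x^{1/2}R$, the nonzero spectrum of $S=A^*A$ coincides with the nonzero spectrum of the Gram operator $AA^*=D_x^{1/2}RR^*D_x^{1/2}$. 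Using Lemma~\ref{lem:scalarProduct} to evaluate the blocks of $RR^*$ I get $(RR^*)_{ii}=d\,I$ and, for $i\neq j$, $(RR^*)_{ij}=\Pi_{((j-1)\,:\,(i-1))}$, a \emph{unitary} permutation operator on $\mathcal H^{\otimes(N-1)}$; this is precisely the operator-valued matrix $\tilde S_{|\mathcal V^\lambda}$ recorded above, now assembled on the whole space.

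The heart of the argument, and the step I expect to be the main obstacle, is a clean domination of these unitary off-diagonal blocks by the identity. For a unit vector $\Phi=(\phi_1,\dots,\phi_N)$ I would estimate
\begin{equation*}
  \langle\Phi, AA^*\Phi\rangle = \sum_i d\,x_i\|\phi_i\|^2 + \sum_{i\neq j}\sqrt{x_ix_j}\,\langle\phi_i,(RR^*)_{ij}\phi_j\rangle \le \sum_i d\,x_i\|\phi_i\|^2 + \sum_{i\neq j}\sqrt{x_ix_j}\,\|\phi_i\|\,\|\phi_j\|,
\end{equation*}
the inequality being Cauchy--Schwarz together with $\|(RR^*)_{ij}\|=1$. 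Setting $u_i=\sqrt{x_i}\,\|\phi_i\|\ge 0$, the right-hand side becomes $(d-1)\sum_i u_i^2+\big(\sum_i u_i\big)^2=\langle u,((d-1)I+J)u\rangle$ subject to $\langle u,D_x^{-1}u\rangle=\|\Phi\|^2=1$, where $J$ is the all-ones matrix. The maximum of this generalized Rayleigh quotient is the top eigenvalue of $D_x^{1/2}((d-1)I+J)D_x^{1/2}$, which is similar to $S_+=D_x((d-1)I+J)$ via $D_x^{1/2}$; hence $\langle\Phi, AA^*\Phi\rangle\le\|S_+\|$. Thus $\mu=\|AA^*\|\le\|S_+\|$, and combined with the reverse inequality, $\mu=\|S_+\|$. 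The subtlety to flag is only that $S_+$ is not symmetric: it represents the self-adjoint $S_{|\mathcal V^+}$ in the non-orthonormal frame $\mathcal V^+$, so its eigenvalues are real and positive and ``$\|S_+\|$'' means its largest eigenvalue, not its largest singular value.

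Finally, for the eigenvector I would pick a unit vector $\mathbf v\in\vee^{(N-1)}(\mathcal H)$ in the symmetric subspace (the trivial representation, where every $(RR^*)_{ij}$ acts as the identity and the bound above is saturated) and set $\chi=\sum_i\beta_i\,\ket{\Omega}_{(0,i)}\otimes\mathbf v$ with $\beta=(\beta_i)$ a top eigenvector of $S_+$; that $S\chi=\mu\chi$ is immediate from $S_{|\mathcal V^+}=S_+^{\oplus\dim}$. For the normalization, Lemma~\ref{lem:scalarProduct} gives $\braket{\Omega_{(0,i)}\otimes\mathbf v}{\Omega_{(0,j)}\otimes\mathbf v}=d$ for $i=j$ and $\bra{\mathbf v}\Pi_{((j-1):(i-1))}\ket{\mathbf v}=1$ for $i\neq j$ (since $\Pi_\sigma\mathbf v=\mathbf v$ on the symmetric subspace), whence
\begin{equation*}
  \|\chi\|^2 = d\sum_i\beta_i^2 + \sum_{i\neq j}\beta_i\beta_j = (d-1)\sum_i\beta_i^2 + \Big(\sum_i\beta_i\Big)^2,
\end{equation*}
and imposing $\|\chi\|=1$ yields the stated quadratic relation.
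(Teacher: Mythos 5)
Your proposal is correct, and it reaches the crucial inequality $\lambda_{\max}(S)\le\lambda_{\max}(S_+)$ by a genuinely different route than the paper. The paper block-diagonalizes $S$ over \emph{all} irreducible representations of $\mathfrak{S}_{N-1}$ (Lemma~\ref{lem:blockEigenvalue}), passes from each restriction $S_{|\mathcal{V}^\lambda}$ to the operator-valued matrix $\tilde S_{|\mathcal{V}^\lambda}$ via the spectrum-inclusion Lemma~\ref{lem:spectrum-inclusion} (needed precisely because $\mathcal{V}^\lambda$ need not be linearly independent), and then dominates $\Vert\tilde S_{|\mathcal{V}^\lambda}\Vert$ by $\Vert S_+\Vert$ through the block-norm Lemma~\ref{lem:norm}, using $\Vert\Pi^\lambda_\sigma\Vert=1$. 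You instead factor $S=A^*A$ with $A=D_x^{1/2}R$ and use that the nonzero spectra of $A^*A$ and $AA^*$ coincide: the Gram operator $AA^*=D_x^{1/2}RR^*D_x^{1/2}$ has diagonal blocks $d\,I$ and unitary permutation operators off the diagonal (Lemma~\ref{lem:scalarProduct}), and a single blockwise Cauchy--Schwarz estimate reduces its Rayleigh quotient to an $N\times N$ generalized eigenvalue problem whose value is $\lambda_{\max}\big(D_x^{1/2}((d-1)I+J)D_x^{1/2}\big)=\lambda_{\max}(S_+)$ by similarity. Your Cauchy--Schwarz step is essentially the same estimate that powers the paper's Lemma~\ref{lem:norm}, but the $AA^*$ device absorbs the linear dependence of the vectors $\ket{\Omega}_{(0,i)}\otimes\ket{\phi}$ automatically, so both the irrep-by-irrep analysis and Lemma~\ref{lem:spectrum-inclusion} become unnecessary; this makes your argument leaner and more self-contained. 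Conversely, the paper's route exhibits the finer isotypic block structure of $S$, which has interest beyond the top eigenvalue alone. A further merit of your write-up is that it makes explicit a point the paper leaves loose: $S_+$ is not symmetric, and $\Vert S_+\Vert$ must be read as its largest eigenvalue (equal to its spectral radius, since $S_+$ is similar to the symmetric positive semidefinite matrix $D_x^{1/2}((d-1)I+J)D_x^{1/2}$); the paper's appeal to Perron--Frobenius for $\rho(S_+)=\Vert S_+\Vert$ is valid only under that reading. Your treatment of the lower bound, the form of the eigenvector, and the normalization identity coincides with the paper's.
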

\begin{proof}
    Note that $S$ is a positive semidefinite operator as a weighted sum of projections, hence its spectrum is non-negative. Thanks to the block diagonalization lemma \ref{lem:blockEigenvalue}, we have
    \begin{equation*}
        \spec (S) \cap \mathbb{R}^*_+ \subseteq \bigcup_\lambda \spec S_{|_{\scriptscriptstyle\mathcal{V}^{\lambda}}}.
    \end{equation*}
    Now since $S_{|_{\scriptscriptstyle\mathcal{V}^{\lambda}}} V^\lambda = V^\lambda \tilde{S}_{|_{\scriptscriptstyle\mathcal{V}^{\lambda}}}$, Lemma \ref{lem:spectrum-inclusion} implies that
    \begin{equation*}
        \spec S_{|_{\scriptscriptstyle\mathcal{V}^{\lambda}}} \subseteq \spec \tilde{S}_{|_{\scriptscriptstyle\mathcal{V}^{\lambda}}}.
    \end{equation*}
    Let $\lambda$ be an irrepr... But all eigenvalue $\mu$ of $\tilde{S}_{|_{\scriptscriptstyle\mathcal{V}^{\lambda}}}$ satisfies $|\mu| \leq \norm{\tilde{S}_{|_{\scriptscriptstyle\mathcal{V}^{\lambda}}}}$. Now by Lemma \ref{lem:norm}
    \begin{equation*}
        \norm{\tilde{S}_{|_{\scriptscriptstyle\mathcal{V}^{\lambda}}}} \leq \norm{{\big{(} \tilde{S}_{|_{\scriptscriptstyle\mathcal{V}^{\lambda}}} \big{)}}^{\prime}},
    \end{equation*}
    where
    \begin{equation*}
        {\big{(} \tilde{S}_{|_{\scriptscriptstyle\mathcal{V}^{\lambda}}} \big{)}}^{\prime} =
        \begin{pmatrix}
            d \, x_1 & x_1  & \ldots & x_1 \\
            x_2  & d \, x_2 & \ldots & x_2 \\
            \vdots & \vdots & \ddots & \vdots \\
            x_N & x_N & \ldots & d \, x_N
        \end{pmatrix} = S^+.
    \end{equation*}
    This holds since $\norm{\Pi^\lambda_{\sigma}} = 1$ as~$\Pi^\lambda_{\sigma}$ is a unitary representation.
    
    Now all eigenvalues $\mu$ of $S_{|_{\scriptscriptstyle\mathcal{V}^{\lambda}}}$, $\mu \leq \norm{S^+}$. This way, let~$\rho(s)$ denote the spectral radius of $S$, which is also the maximal eigenvalue of $S$ (recall that $S$ is a positive operator) then we have
    \begin{align*}
        \rho(S) &\leq \max_\lambda \max_\mu \big{\{} \mu \in \spec S_{|_{\scriptscriptstyle\mathcal{V}^{\lambda}}} \big{\}} \\
        &\leq \norm{S^+}.
    \end{align*}
    So by the Perron–Frobenius theorem~$\rho (S^+) = \norm{S^+}$, since~$S_{|_{\scriptscriptstyle V^+}}$ is a matrix with positive coefficients, which concludes the proof. A straightforward computation based on Lemma \ref{lem:scalarProduct} shows that
$$1=\langle\chi,\chi\rangle=(d - 1) \sum^N_{i = 1} \beta^2_i + {\bigg{(} \sum^N_{i = 1} \beta_i \bigg{)}}^2$$
implies that equation $(d - 1) \sum^N_{i = 1} \beta^2_i + {\bigg{(} \sum^N_{i = 1} \beta_i \bigg{)}}^2 = 1$ is satisfied.
\end{proof}

\section{Optimal Quantum Cloning}\label{sec:optimal-quantum-cloning}

This section concerns the solution of the optimization problem defined previously. We shall proceed in two steps: first, we upper bound the average fidelity by the largest eigenvalue of a matrix. Then we show that our upper bound is reached by explicit quantum cloning channel which shall be called optimal cloning channel. 

\subsection{Upper Bound}

We start with a proposition relating quantum fidelities to the maximal eigenvalue of the following matrix, acting on~$\mathcal{H}^{\otimes (N + 1)}$: 
\begin{eqnarray}\label{eq:def-R-alpha}
    R_\alpha &=& \sum^N_{i=1} \alpha_i \big{(} I_{(0,i)} + \omega_{(0,i)} \big{)} \otimes I^{\otimes (N - 1)} \in \mathcal{M}^{\otimes (N+1)}_d\nonumber\\
    &=&\left(\sum_{i=1}^N\alpha_i\right)I^{\otimes (N+1)}+S_\alpha
\end{eqnarray}
where, $\alpha \in [0,1]^N$ is a weight vector and $S_\alpha$ is defined in \eqref{eq:S-op}. Note that the maximal eigenvector derived for $S_\alpha$ is the same as the one of $R_\alpha$ since any non zero vector is a maximal eigenvector for the identity part of $R_\alpha$.

\begin{proposition} \label{prop:CloningUpperBound}
    Let $T : \mathcal{M}_d \to {(\mathcal{M}_d)}^{\otimes N}$ be a quantum channel having Choi matrix $C_T$. Then, for any weight vector $\alpha \in [0,1]^N$, we have 
    \begin{equation*}
        \bar{F}_\alpha(T) = \frac{1}{d (d + 1)} {\langle}C_T,R_\alpha{\rangle}.
    \end{equation*}
    The following upper bound holds: 
    \begin{align}
         \notag \sup_T \bar{F}_\alpha(T)\leq  \frac{\lambda_{\max}(R_\alpha)}{d + 1},
    \end{align}
    where the $\sup$ is taken over all quantum channels $T$, and $\lambda_{\max}(R_\alpha)$ is the largest eigenvalue of the matrix $R_\alpha$ of Eq.~\eqref{eq:def-R-alpha}.
\end{proposition}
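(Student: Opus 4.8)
The plan is to prove the two claims separately: first the exact formula for $\bar F_\alpha(T)$ as a Frobenius inner product, then the upper bound as a consequence. I would begin with the identity computation. From the earlier calculation in the equivalence proposition, for a $\mathcal U(d)$-covariant channel the marginals satisfy $T_i(\rho)=p_i\rho+(1-p_i)I/d$, so that on a pure state $\rho=\ketbra{x}{x}$ one has $F(\rho,T_i(\rho))=p_i+\frac{1-p_i}{d}$. The first step is therefore to express $\bar F_\alpha(T)$ directly through the Choi matrix. Using the reconstruction formula $T(\rho)=\Tr_0[C_T(\rho^\T\otimes I^{\otimes N})]$ and the fidelity identity $F(\rho,T_i(\rho))=\langle x|T_i(\rho)|x\rangle$ for pure $\rho$, I would write each marginal fidelity as a trace against $C_T$ and then integrate over the uniform measure $\nu$ on pure states. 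The key integration fact is the standard second-moment formula $\E_x\big[\ketbra{x}{x}^{\otimes 2}\big]=\frac{1}{d(d+1)}\big(I+\omega\big)$ (the projection onto the symmetric subspace, suitably normalized), which after accounting for the transpose on the input leg produces exactly the operators $(I_{(0,i)}+\omega_{(0,i)})\otimes I^{\otimes(N-1)}$ appearing in $R_\alpha$.

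Concretely, I would compute
\begin{align*}
    \bar F_\alpha(T)&=\sum_{i=1}^N\alpha_i\,\E_{\rho}\big[\Tr(\rho\,T_i(\rho))\big]\\
    &=\sum_{i=1}^N\alpha_i\,\E_{\rho}\Big[\Tr\big(C_T\,(\rho^\T\otimes \rho_{(i)}\otimes I^{\otimes(N-1)})\big)\Big],
\end{align*}
where the partial traces have been folded back into a single trace against $C_T$ on the $(N+1)$-fold space, with $\rho$ inserted on leg $0$ (transposed) and on leg $i$, and identities elsewhere. Taking the expectation over $x$ and using the second-moment formula on the two slots carrying $\rho^\T$ and $\rho$ turns $\E_\rho[\rho^\T\otimes\rho_{(i)}]$ into $\frac{1}{d(d+1)}(I+\omega)_{(0,i)}$; the transpose on leg $0$ combined with the swap structure of $\omega$ is exactly what makes the maximally entangled operator $\omega_{(0,i)}$ rather than a bare swap appear. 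Summing against $\alpha_i$ and comparing with the definition of $R_\alpha$ yields $\bar F_\alpha(T)=\frac{1}{d(d+1)}\langle C_T,R_\alpha\rangle$, where I use that $C_T$ is self-adjoint so the Frobenius inner product $\Tr[C_T^*R_\alpha]$ equals $\Tr[C_T R_\alpha]$.

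For the upper bound, I would pass from the inner product to the eigenvalue estimate using the constraints \eqref{eq:ChoiConditions} on the Choi matrix: $C_T\ge 0$ and $\Tr_{\llbracket 0,N\rrbracket\setminus\{0\}}(C_T)=I$. Since $R_\alpha\le \lambda_{\max}(R_\alpha)\,I^{\otimes(N+1)}$ as operators and $C_T\ge 0$, one gets $\langle C_T,R_\alpha\rangle=\Tr[C_T R_\alpha]\le \lambda_{\max}(R_\alpha)\Tr[C_T]$. The partial-trace constraint forces $\Tr[C_T]=\Tr\big[\Tr_{\llbracket 0,N\rrbracket\setminus\{0\}}(C_T)\big]=\Tr[I_{(0)}]=d$, so $\langle C_T,R_\alpha\rangle\le d\,\lambda_{\max}(R_\alpha)$, and dividing by $d(d+1)$ gives the claimed bound $\frac{\lambda_{\max}(R_\alpha)}{d+1}$. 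Taking the supremum over $T$ preserves the inequality. The main obstacle I anticipate is the bookkeeping in the first step: correctly tracking the transpose on the input leg and verifying that the symmetric-subspace projector, after the transpose, reproduces precisely $I_{(0,i)}+\omega_{(0,i)}$ rather than a permutation operator; once that Weingarten-type moment computation is pinned down, the bound itself is a routine operator inequality.
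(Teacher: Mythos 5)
Your proposal is correct and follows essentially the same route as the paper's proof: writing $\bar F_\alpha(T)$ as $\frac{1}{d(d+1)}\langle C_T, R_\alpha\rangle$ via the second-moment identity $\E_\rho\big[\rho^\T\otimes\rho\big]=\frac{1}{d(d+1)}(I+\omega)$, then relaxing the marginal constraint $\Tr_{\llbracket 0,N\rrbracket\setminus\{0\}}(C_T)=I$ to the weaker $\Tr[C_T]=d$ and applying $\Tr[AB]\leq\Tr[A]\,\lambda_{\max}(B)$ for $A\geq 0$. The only blemish is that your displayed formula $\E_x\big[\ketbra{x}{x}^{\otimes 2}\big]=\frac{1}{d(d+1)}(I+\omega)$ should have the swap operator in place of $\omega$ (it is $\E_\rho[\rho^\T\otimes\rho]$ that equals $\frac{1}{d(d+1)}(I+\omega)$), but you correct this yourself in the very next sentences by identifying the partial transpose on leg $0$ as what converts the swap into $\omega$.
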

\begin{proof}
    We start from the expression of the average fidelity (below, the expectations are taken with respect to the uniform measure on pure quantum states $\rho$): 
    \begin{align*}
        \bar{F}_\alpha(T) &= \sum^N_{i=1} \alpha_i \cdot \mathbb{E}_\rho \Big{[} F \big{(} \rho , T_i(\rho) \big{)} \Big{]} \\
        &= \sum^N_{i=1} \alpha_i \cdot \mathbb{E}_\rho \Big{[} \Tr \big{(} \rho \: T_i(\rho) \big{)} \Big{]} \\
        &= \sum^N_{i=1} \alpha_i \cdot \mathbb{E}_\rho \Big{[} \big{\langle} T(\rho) , \rho_{(i)} \otimes I \big{\rangle} \Big{]} \\
        &= \sum^N_{i=1} \alpha_i \cdot \mathbb{E}_\rho \Big{[} \big{\langle} C_T , \rho_{(0)}^\T \otimes \rho_{(i)} \otimes I \big{\rangle} \Big{]},
        \\
        &= \sum^N_{i=1} \alpha_i \cdot   \big{\langle} C_T , \mathbb{E}_\rho \left[\rho_{(0)}^\T \otimes \rho_{(i)}\right] \otimes I \big{\rangle}.
    \end{align*}
    To simplify further $\bar{F}_\alpha(T)$, note that taking the integral over the pure states~$\rho$.
    \begin{align*}
       \mathbb{E}_\rho \Big{[}  \rho^\T \otimes \rho  \Big{]} &=\int_{\mathcal{D}_d} \!\!\! \rho^\T \otimes \rho \; \mathrm{d}\rho \\&= {\bigg{[} \int_{\mathcal{D}_d} \!\!\! \rho \otimes \rho \; \mathrm{d}\rho \bigg{]}}^{\To} \\
        &= \frac{2}{d (d + 1)} {\big{(} P^+_{\mathfrak{S}_2} \big{)}}^{\To} \\
        &= \frac{1}{d (d + 1)} {\big{(} \Pi_{(0)(1)} + \Pi_{(0 \: 1)} \big{)}}^{\To}\\
        &= \frac{1}{d (d + 1)} (I + \omega).
    \end{align*}
   Plugging this into the expression above for $\bar{F}_\alpha(T)$ yields the first claim. Let us now show the upper bound in the statement. 
    \begin{align*}
        \sup_T \bar{F}_\alpha(T) &= \sup_T \frac{\langle C_T , R_\alpha \rangle}{d (d + 1)} \\
        &= \sup_{\substack{C \geq 0 \\ \Tr_{[1,N]} (C) = I}} \frac{1}{d (d + 1)} \langle C,R_\alpha \rangle \\
        &\leq \sup_{\substack{C \geq 0 \\ \Tr (C) = d}} \frac{1}{d (d + 1)} \langle C,R_\alpha \rangle \\
        &= \frac{\lambda_\text{max}(R_\alpha)}{d + 1}.
    \end{align*}
    The last equality comes from the fact that for any positive semi-definite matrix~$A$, and for any symmetric matrix~$B$, the following holds:
    \begin{equation*}
        \Tr [A B] \leq \Tr [A] \cdot \lambda_{\text{max}}(B).
    \end{equation*}
    Then taking $C=d\vert \chi\rangle\langle\chi\vert$ for an eigenvector corresponding to $\lambda_\text{max}(R_\alpha)$ from Theorem \ref{thm:largest-eigenspace} shows that the bound is reached.
\end{proof}
\begin{remark}
    The only place where an inequality was used in the derivation above was when we relaxed the condition $\Tr_{\scriptscriptstyle [\![ 0,N ]\!] \setminus \{0\}} (C) = I$ to the (weaker) condition $\Tr (C) = d$. As it turns out, we shall construct in the next Section \ref{sec:lower-bound} examples of quantum channels $T$ matching this upper bound, justifying a posteriori the above manipulation.
\end{remark}

\subsection{Saturation and Construction of the Asymmetric Quantum cloning channel}\label{sec:lower-bound}

In this section we build a Choi matrix candidate for saturating the upper bound of the Proposition~\ref{prop:CloningUpperBound}. We shall prove the following result.
\begin{theorem} \label{thm:candiate}
Let $\alpha\in[0,1]^N$. Let $\chi=\sum_{i=1}^N\beta_i\ket{\Omega_{(0,i}}\otimes \ket{\mathbf v}$ be a norm one eigenvector associated to $\lambda_{max}(R_\alpha)$  from Theorem \ref{thm:largest-eigenspace}. The coefficient vector $\beta$ (which is real and depends on the direction $\alpha$) satisfies the normalization condition \eqref{projectionConditionEquation}). 
Consider the associated quantum channel $T_\beta$ defined by
 \begin{equation} \label{eq:optimalCloning}
    T_\beta(\rho) := \frac{d N (N + d - 1)}{\Tr P^+_{\mathfrak{S}_N}} P_\beta \Big{(} \rho \otimes I^{\otimes (N - 1)} \Big{)} P^\T_\beta,
\end{equation}
where
\begin{equation*}
    P_\beta := \frac{1}{N!} \sum_{\sigma \in \mathfrak{S}_N} \beta_{\scriptscriptstyle \sigma(0) + 1} \cdot \Pi_\sigma.
\end{equation*}
Then $T_\beta$ is an optimal quantum cloning channel in the direction $\alpha$, that is
\begin{equation}\label{eq:equality}
\sum^N_{i=1} \alpha_i \cdot \bar{F} \big{(} {(T_\beta)}_i \big{)} = \frac{\lambda_\text{max}(R_\alpha)}{d + 1}.
\end{equation}
\end{theorem}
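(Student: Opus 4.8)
The plan is to route everything through the Choi matrix of $T_\beta$ and then invoke Proposition~\ref{prop:CloningUpperBound}, which already records that $\bar F_\alpha(T)=\frac{1}{d(d+1)}\langle C_T,R_\alpha\rangle$ for \emph{every} channel $T$. Since a channel always satisfies $\Tr(C_{T})=d$, and since for positive semidefinite $C$ one has $\langle C,R_\alpha\rangle\le \Tr(C)\,\lambda_{\max}(R_\alpha)$ with equality exactly when $\Ima C$ lies inside the top eigenspace of $R_\alpha$, the whole statement collapses to two claims: (i) $T_\beta$ is a genuine quantum channel, and (ii) $\Ima C_{T_\beta}$ is contained in the $\lambda_{\max}(R_\alpha)$-eigenspace of $R_\alpha$. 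Once both hold, $\langle C_{T_\beta},R_\alpha\rangle=\lambda_{\max}(R_\alpha)\,\Tr(C_{T_\beta})=d\,\lambda_{\max}(R_\alpha)$, and Proposition~\ref{prop:CloningUpperBound} yields $\bar F_\alpha(T_\beta)=\frac{\lambda_{\max}(R_\alpha)}{d+1}$, which is exactly \eqref{eq:equality}.

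First I would compute $C_{T_\beta}$. Because $\beta$ is real and the $\Pi_\sigma$ have real matrix entries, $P_\beta$ is real, so $P^{\T}_\beta=P^{*}_\beta$ and $T_\beta(\rho)=c\,P_\beta(\rho\otimes I^{\otimes(N-1)})P^{*}_\beta$ with $c=\frac{dN(N+d-1)}{\Tr P^+_{\mathfrak S_N}}$. This is manifestly completely positive, being conjugation by the fixed operator $P_\beta$ composed with tensoring by the fixed positive operator $I^{\otimes(N-1)}$. Writing $I^{\otimes(N-1)}=\sum_{\mathbf k}\ketbra{\mathbf k}{\mathbf k}$ puts $T_\beta$ into Kraus form and gives
\[
C_{T_\beta}=c\sum_{\mathbf k}\ketbra{v_{\mathbf k}}{v_{\mathbf k}},\qquad \ket{v_{\mathbf k}}=(\id_0\otimes P_\beta)\big(\ket{\Omega}_{(0,1)}\otimes\ket{\mathbf k}\big),
\]
so $C_{T_\beta}\ge 0$ is automatic and $\Ima C_{T_\beta}=\Span\{\ket{v_{\mathbf k}}\}$.

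The technical core is the structural identity
\[
\ket{v_{\mathbf k}}=\frac1N\sum_{i=1}^N \beta_i\,\ket{\Omega}_{(0,i)}\otimes \ket{w_{\mathbf k}},
\]
where $\ket{w_{\mathbf k}}\in\vee^{(N-1)}(\mathcal H)$ is the symmetrisation of $\ket{\mathbf k}$. I would prove it by grouping the permutations in $P_\beta$ according to $a=\sigma(0)$: the $(N-1)!$ permutations with $\sigma(0)=a$ carry the entangled leg to position $a$ and, once summed, symmetrise the remaining $N-1$ legs, producing $\beta_{a+1}\ket{\Omega}_{(0,a)}$ tensored with a symmetric vector (after the relabelling of tensor factors). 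Hence every $\ket{v_{\mathbf k}}$ lies in $\mathcal E:=\{\sum_i\beta_i\ket{\Omega}_{(0,i)}\otimes\ket{w}\mid w\in\vee^{(N-1)}(\mathcal H)\}$, which is precisely the form of the eigenvector in Theorem~\ref{thm:largest-eigenspace}. By the block diagonalisation of Lemma~\ref{lem:blockEigenvalue} together with the identification of the action of $S_\alpha$ on the symmetric block $\mathcal V^+$ with the matrix $S_+$, the operator $S_\alpha$ acts on a vector $\sum_i\beta_i\ket{\Omega}_{(0,i)}\otimes\ket{w}$ through $S_+$ on the coefficients $\beta$; since $\beta$ is the Perron eigenvector of $S_+$ with eigenvalue $\mu=\norm{S_+}=\lambda_{\max}(S_\alpha)$, every vector of $\mathcal E$ is an eigenvector of $R_\alpha=(\sum_i\alpha_i)I+S_\alpha$ for the eigenvalue $\lambda_{\max}(R_\alpha)$. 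This is claim (ii).

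It remains to verify trace preservation $\Tr_{[1,N]}C_{T_\beta}=I$. Using Lemma~\ref{lem:scalarProduct} (equivalently, $\mathcal U(d)$-covariance), the reduced state on the input leg of any vector of $\mathcal E$ is maximally mixed, so $\Tr_{[1,N]}\ketbra{v_{\mathbf k}}{v_{\mathbf k}}=\frac{\norm{v_{\mathbf k}}^2}{d}\,I$ and the condition reduces to the scalar identity $\frac cd\sum_{\mathbf k}\norm{v_{\mathbf k}}^2=1$. I would evaluate $\sum_{\mathbf k}\norm{v_{\mathbf k}}^2=\Tr\!\big[(\id\otimes P^{*}_\beta P_\beta)(\ketbra{\Omega}{\Omega}_{(0,1)}\otimes I)\big]$ with Lemma~\ref{lem:scalarProduct} and the normalisation $(d-1)\sum_i\beta_i^2+(\sum_i\beta_i)^2=1$; this is the single point where the normalisation of $\beta$ enters, and it must reproduce exactly the constant $c=\frac{dN(N+d-1)}{\Tr P^+_{\mathfrak S_N}}$, certifying the trace condition of \eqref{eq:ChoiConditions} and also $\Tr(C_{T_\beta})=d$. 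Combining (i) and (ii) closes the argument. I expect the two genuine obstacles to be the combinatorial identity for $\ket{v_{\mathbf k}}$ (keeping careful track of the $0$-based versus $1$-based indexing of the tensor factors) and the exact bookkeeping of the normalising constant $c$.
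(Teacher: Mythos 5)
Your strategy is genuinely different from the paper's, and its core is correct. The paper proves channelhood by showing that $\widetilde{C}_{\beta}=\sum_{a,b}\sum_{\sigma\in\Sigma_{a,b}}\frac{\beta_a\beta_b}{(N-1)!}\Pi^{\To}_\sigma$ is an orthogonal projection (Proposition \ref{prop:proj}, where the normalization \eqref{projectionConditionEquation} enters), computes partial traces of the blocks $\sum_{\sigma\in\Sigma_{a,b}}\Pi^{\To}_\sigma$ by hand (Lemma \ref{lem:permuPartialTrace}), and then gets optimality by evaluating each marginal fidelity explicitly, $\bar F\big((T_\beta)_i\big)=\frac{1+((d-1)\beta_i+\sum_j\beta_j)^2}{d+1}$, and matching the weighted sum with $\langle\chi|S_\alpha|\chi\rangle=\lambda_{\max}(S_\alpha)$. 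You instead saturate the trace inequality: since $\bar F_\alpha(T)=\frac{1}{d(d+1)}\langle C_T,R_\alpha\rangle$ and $\Tr C_T=d$ for every channel, it suffices that $\Ima C_{T_\beta}$ sit inside the top eigenspace of $R_\alpha$. Your Kraus decomposition, the grouping of $P_\beta$ by $a=\sigma(0)$ yielding $\ket{v_{\mathbf k}}=\frac1N\sum_i\beta_i\ket{\Omega}_{(0,i)}\otimes\ket{w_{\mathbf k}}$ with $w_{\mathbf k}\in\vee^{(N-1)}(\mathcal H)$, and the identification of the action of $S_\alpha$ on such vectors with $S_+$ acting on the coefficient vector $\beta$ (so $S_+\beta=\mu\beta$ forces all of $\mathcal E$ into the top eigenspace) are all sound. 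This route avoids computing any marginal or any partial trace of permutation operators, and it makes transparent both why the normalization of $\beta$ is needed and where the constant $c$ comes from.

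There is, however, one genuinely false step: the claim that the input marginal of each individual Kraus vector is maximally mixed, i.e.\ $\Tr_{[1,N]}\ketbra{v_{\mathbf k}}{v_{\mathbf k}}=\frac{\norm{v_{\mathbf k}}^2}{d}I$. The cross terms $i\neq j$ ruin this. Already for $N=2$, with $v=\beta_1\ket{\Omega}_{(0,1)}\otimes\ket{w}_{(2)}+\beta_2\ket{\Omega}_{(0,2)}\otimes\ket{w}_{(1)}$, a direct computation gives $\Tr_{[1,2]}\ketbra{v}{v}=(\beta_1^2+\beta_2^2)\norm{w}^2 I+2\beta_1\beta_2\ketbra{\bar w}{\bar w}$, where $\ket{\bar w}$ is the entrywise complex conjugate of $\ket{w}$; this is not proportional to $I$ when $\beta_1\beta_2\neq0$. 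Lemma \ref{lem:scalarProduct} computes overlaps, not partial traces, so it cannot justify the step, and $\mathcal U(d)$-covariance does not hold vector by vector (a single $\ket{\Omega}_{(0,i)}\otimes\ket{w}$ is not $\bar U\otimes U^{\otimes N}$-invariant). The repair is standard and preserves your architecture: argue on the sum, not on individual vectors. Since $P_\beta$ is a linear combination of permutation operators, it commutes with every $U^{\otimes N}$, hence $C_{T_\beta}$ commutes with $\bar U\otimes U^{\otimes N}$ for all $U\in\mathcal U(d)$, so $\Tr_{[1,N]}C_{T_\beta}$ commutes with every $\bar U$ and must equal $\frac{\Tr C_{T_\beta}}{d}\,I$. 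Trace preservation then reduces exactly to your scalar identity $\frac cd\sum_{\mathbf k}\norm{v_{\mathbf k}}^2=1$, which your own ingredients verify: Lemma \ref{lem:scalarProduct} plus Eq.~\eqref{projectionConditionEquation} give $\norm{v_{\mathbf k}}^2=\norm{w_{\mathbf k}}^2/N^2$, then $\sum_{\mathbf k}\norm{w_{\mathbf k}}^2=\Tr P^+_{\mathfrak S_{N-1}}$ and $\Tr P^+_{\mathfrak S_{N-1}}=\frac{N}{N+d-1}\Tr P^+_{\mathfrak S_N}$ close the computation. With that one fix, your proof is complete.
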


First we shall prove that the linear map $T_\beta$ from Eq~\eqref{eq:optimalCloning} is a quantum channel.
To this end we introduce the following Choi matrix which will be the candidate
\begin{equation*}
    C_{\beta} = \frac{d}{\Tr P^+_{\mathfrak{S}_N}} \frac{N + d - 1}{N} \sum_{\substack{1 \leq a,b \leq N \\ \sigma \in \Sigma_{a,b}}} \frac{\beta_a \beta_b}{(N - 1)!} \Pi_\sigma^{\To}
\end{equation*}
for some positive reals ${(\beta_i)}_{1 \leq i \leq N}$ satisfying
\begin{equation} \label{projectionConditionEquation}
    (d - 1) \sum^N_{i = 1} \beta^2_i + {\bigg{(} \sum^N_{i = 1} \beta_i \bigg{)}}^2 = 1.
\end{equation}

The first task is to state that this candidate is indeed a Choi matrix. This is the content of the following proposition whose technical proof is postponed in Lemma \ref{lemm:appchoi0} in Appendix \ref{app:opt}.

\begin{proposition}\label{prop:proj} Consider real numbers ${(\beta_i)}_{1 \leq i \leq N}$ such that there exists $a$ and $b$ with $\beta_a\beta_b\neq 0$. Then, the operator 
$$\widetilde{C}_{\beta}=\sum_{\substack{1 \leq a,b \leq N \\ \sigma \in \Sigma_{a,b}}} \frac{\beta_a \beta_b}{(N - 1)!} \Pi_\sigma^{\To}$$
is an orthogonal projection if and only if the condition \eqref{projectionConditionEquation} is satisfied.

Assume \eqref{projectionConditionEquation} is satisfied. The operator $C_{\beta}$ is a positive operator, such that $$\Tr_{\scriptscriptstyle [\![ 0,N ]\!] \setminus \{0\}} C_{\beta} = I.$$ 
As a consequence the matrix $C_{T_\beta}$ is a Choi Matrix
\end{proposition}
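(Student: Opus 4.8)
The plan is to establish the three assertions in order: that $\widetilde{C}_\beta$ is an orthogonal projection precisely when \eqref{projectionConditionEquation} holds, that $C_\beta$ is then positive, and that its partial trace over the output factors is the identity. The workhorse throughout is Lemma~\ref{lem:struct}, which pins down both the image of each $\Pi_\sigma^{\To}$ and its action on the generating vectors $\ket{\Omega}_{(0,c)}\otimes\ket{\phi}$. I would first record self-adjointness: since $(\Pi_\sigma^{\To})^*=\Pi_{\sigma^{-1}}^{\To}$, and $\sigma\mapsto\sigma^{-1}$ maps $\Sigma_{a,b}$ bijectively onto $\Sigma_{b,a}$ while the weight $\beta_a\beta_b$ is symmetric in $(a,b)$, the defining sum is invariant under taking adjoints. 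Then I would compute the action on a generic $\ket{\Omega}_{(0,c)}\otimes\ket{\phi}$: as $\sigma$ ranges over $\Sigma_{a,b}$ the associated $\hat\sigma$ ranges over all of $\mathfrak{S}_{N-1}$, so the average $\tfrac{1}{(N-1)!}\sum_{\sigma\in\Sigma_{a,b}}\Pi_{\hat\sigma(a,b,c)}$ collapses, via $P^+_{\mathfrak{S}_{N-1}}\Pi_\tau=\Pi_\tau P^+_{\mathfrak{S}_{N-1}}=P^+_{\mathfrak{S}_{N-1}}$, to the symmetric projector $P^+_{\mathfrak{S}_{N-1}}$. Merging the two cases $b=c$ and $b\neq c$ of Lemma~\ref{lem:struct} through the factor $1+(d-1)\delta_{b,c}$ yields
\begin{equation*}
    \widetilde{C}_\beta\big(\ket{\Omega}_{(0,c)}\otimes\ket{\phi}\big)
    = \Big(\sum_{i=1}^N\beta_i + (d-1)\beta_c\Big)\sum_{a=1}^N \beta_a\,\ket{\Omega}_{(0,a)}\otimes P^+_{\mathfrak{S}_{N-1}}\ket{\phi}.
\end{equation*}

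This shows $\Ima\widetilde{C}_\beta$ lies in the subspace $\mathcal{K}$ spanned by the vectors $\sum_a\beta_a\ket{\Omega}_{(0,a)}\otimes\ket{\psi}$ with $\psi\in\vee^{(N-1)}(\mathcal{H})$, while $\widetilde{C}_\beta$ annihilates the orthogonal complement of $\Span\{\ket{\Omega}_{(0,k)}\otimes\ket{\phi}\}$ by the kernel statement of Lemma~\ref{lem:struct}. Applying the displayed formula a second time to a vector of $\mathcal{K}$ gives $\widetilde{C}_\beta^2=\kappa\,\widetilde{C}_\beta$ with $\kappa=(d-1)\sum_i\beta_i^2+(\sum_i\beta_i)^2$; the scalar-product computation of Lemma~\ref{lem:scalarProduct} shows the generators of $\mathcal{K}$ have squared norm $\kappa>0$, so $\widetilde{C}_\beta\neq0$ under the hypothesis that some $\beta_a\beta_b\neq0$. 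Hence $\widetilde{C}_\beta$ is idempotent iff $\kappa=1$, i.e.\ iff \eqref{projectionConditionEquation} holds, and with self-adjointness this gives the equivalence. Assuming \eqref{projectionConditionEquation}, $C_\beta$ is a strictly positive multiple of the orthogonal projection $\widetilde{C}_\beta$, hence positive.

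For the partial-trace condition $\Tr_{\scriptscriptstyle\llbracket0,N\rrbracket\setminus\{0\}}C_\beta=I$, I would invoke covariance: each $\Pi_\sigma^{\To}$ commutes with $\bar U\otimes U^{\otimes N}$, hence so does $C_\beta$. Taking the partial trace over factors $1,\dots,N$ (which is invariant under conjugation by the unitary $U^{\otimes N}$ on those factors) turns this into $\bar U\,\big(\Tr_{1,\dots,N}C_\beta\big)\bar U^*=\Tr_{1,\dots,N}C_\beta$ for every $U$, so by Schur's lemma the partial trace equals $\lambda I$. The constant $\lambda$ is then fixed by the global trace: because $\widetilde{C}_\beta$ is the orthogonal projection onto $\mathcal{K}$, which is isometric to $\vee^{(N-1)}(\mathcal{H})$, we have $\Tr\widetilde{C}_\beta=\dim\vee^{(N-1)}(\mathcal{H})$, and the binomial identity $\dim\vee^{(N-1)}(\mathcal{H})=\tfrac{N}{N+d-1}\Tr P^+_{\mathfrak{S}_N}$ makes the prefactor of $C_\beta$ collapse to give $\Tr C_\beta=d$, whence $\lambda=1$. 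Thus $C_\beta$ satisfies both Choi conditions of \eqref{eq:ChoiConditions}; checking directly from \eqref{eq:optimalCloning} that $C_\beta$ is the Choi matrix of $T_\beta$ then shows $T_\beta$ is a genuine quantum channel.

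I expect the main obstacle to be the averaging/idempotency step: showing cleanly that the sum over each $\Sigma_{a,b}$ produces the symmetric projector $P^+_{\mathfrak{S}_{N-1}}$, and correctly tracking the factor $1+(d-1)\delta_{b,c}$ so that a single scalar $\kappa$ governs the action of $\widetilde{C}_\beta$ on $\mathcal{K}$. Once that reduction is in place, both the equivalence with \eqref{projectionConditionEquation} and the trace normalization are essentially bookkeeping, the latter resting only on the dimension identity relating $\dim\vee^{(N-1)}(\mathcal{H})$ and $\Tr P^+_{\mathfrak{S}_N}$.
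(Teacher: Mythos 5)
Your proof is correct, but it follows a genuinely different route from the paper's own argument (Lemma~\ref{lemm:appchoi0} in Appendix~\ref{app:opt}). For the projection equivalence, the paper never acts on vectors: it squares the operator sum directly, using the composition rule for partially transposed permutation operators (for $\sigma\in\Sigma_{a,b}$ and $\tau\in\Sigma_{c,d}$ the product $\Pi_\sigma^{\To}\Pi_\tau^{\To}$ lies among the $\Pi_\eta^{\To}$ with $\eta\in\Sigma_{a,d}$, with an extra factor $d$ exactly when $b=c$), obtaining $\widetilde C_\beta^{\,2}=\kappa\,\widetilde C_\beta$ with $\kappa=(d-1)\sum_i\beta_i^2+\big(\sum_i\beta_i\big)^2$, and concludes by comparing coefficients. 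You instead compute the action of $\widetilde C_\beta$ on the generating vectors $\ket{\Omega}_{(0,c)}\otimes\ket{\phi}$ via Lemma~\ref{lem:struct}, collapse each sum over $\Sigma_{a,b}$ to $P^+_{\mathfrak{S}_{N-1}}$ (this uses that $\sigma\mapsto\hat\sigma$ is a bijection of $\Sigma_{a,b}$ onto $\mathfrak{S}_{N-1}$, which you should state explicitly: it follows from the uniqueness in Lemma~\ref{lem:struct} together with $|\Sigma_{a,b}|=(N-1)!$), identify the range $\mathcal K$, and reduce idempotency to the scalar equation $\kappa=1$. For the trace normalization the paper relies on the combinatorial Lemma~\ref{lem:permuPartialTrace}, whereas you use covariance plus Schur's lemma and then fix the constant from the rank of the projection via $\dim\vee^{(N-1)}(\mathcal H)=\frac{N}{N+d-1}\Tr P^+_{\mathfrak{S}_N}$. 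Your route has real advantages: self-adjointness is proved rather than tacitly assumed (idempotency alone does not give an \emph{orthogonal} projection, and the paper skips this point); the range of the projection is identified explicitly as the span of the vectors $\sum_a\beta_a\ket{\Omega}_{(0,a)}\otimes\ket{\psi}$ with $\psi$ symmetric, i.e.\ precisely the eigenvector space of Theorem~\ref{thm:largest-eigenspace}, which illuminates why this Choi matrix saturates the fidelity bound; and working on vectors sidesteps the identification of coefficients in an expansion over the operators $\Pi_\sigma^{\To}$, which the paper's ``is equivalent to'' step leans on even though these operators can be linearly dependent when $d\leq N$. What the paper's route buys is Lemma~\ref{lem:permuPartialTrace} itself, which is reused in Lemma~\ref{lemm:appchoi2} to compute the marginal overlaps $\Tr\big[C_{T_\beta}\,\omega_{(0,i)}\big]$. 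One spot to tighten in your write-up: to pass from $(\kappa-1)\widetilde C_\beta=0$ to $\kappa=1$ you need $\widetilde C_\beta\neq0$, and the clean argument is that $\kappa\geq(d-1)\sum_i\beta_i^2>0$ since some $\beta_a\neq0$ and $d\geq2$, that $\mathcal K\neq\{0\}$ because its generators have squared norm $\kappa\,\norm{\psi}^2>0$, and that $\widetilde C_\beta$ acts on $\mathcal K$ as multiplication by $\kappa$; your current phrasing gestures at this but does not quite pin the implication down.
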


Now that we have designed a Choi matrix, the second task is to derive the associated quantum cloning channel. To this end, assume $\beta$ satisfies condition \eqref{projectionConditionEquation} let define the operator
\begin{equation*}
    P_\beta = \frac{1}{N!} \sum_{\sigma \in \mathfrak{S}_N} \beta_{\scriptscriptstyle \sigma(0) + 1} \cdot \Pi_\sigma,
\end{equation*}
and define for all states~$\rho$,
\begin{equation*}
    T_\beta(\rho) = \frac{d N (N + d - 1)}{\Tr P^+_{\mathfrak{S}_N}} P_\beta \Big{(} \rho \otimes I^{\otimes (N - 1)} \Big{)} P^\T_\beta.
\end{equation*}
This construction is a reminder of the symmetric quantum cloning case (see the en d of this section). Note in general that $P_\beta$ is in general not a projector. In order to show that $T_\beta$ is the corresponding channel associated with $C_{\beta}$ we need the following Lemma whose proof is postponed in Lemma \ref{lemm:appchoi1} in Appendix \ref{app:opt}

\begin{lemma}\label{cor:ChoiToChannel}
For any~$1 \leq a,b \leq N$, and any $\sigma \in \Sigma_{a,b}$, the permutation operator~$\sigma^{\To}$ is the Choi matrix a linear map~$T_{\mu,\nu}: \mathcal{M}_d \to {(\mathcal{M}_d)}^{\otimes N}$ defined by
\begin{equation*}
    T_{\mu,\nu}(X) = \Pi_\mu \big{(} X \otimes I^{\otimes (N - 1)} \big{)} \Pi_\nu
\end{equation*}
for some permutations $\mu$ and $\nu$ in $\mathfrak{S}_N$ such that $\mu(0) = a - 1$ and $\nu(b - 1) = 0$.

As a consequence for some fixed $1 \leq a,b \leq N$,
    \begin{equation*}
        \sum_{\sigma \in \Sigma_{a,b}} \Pi_\sigma^{\To} = \frac{1}{(N - 1)!} C_{T_{a,b}}
    \end{equation*}
    where
    \begin{equation*}
        T_{a,b}(X) = \sum_{\substack{\scriptstyle \mu,\nu \scriptstyle \in \mathfrak{S}_N \\ \scriptstyle \mu(0) \scriptstyle = a - 1 \\ \scriptstyle \nu(b - 1) \scriptstyle = 0}} \Pi_\mu \big{(} X \otimes I^{\otimes (N - 1)} \big{)} \Pi_\nu.
    \end{equation*}

\end{lemma}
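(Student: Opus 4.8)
The plan is to prove the first assertion by directly computing the Choi matrix of $T_{\mu,\nu}$, reading off the associated permutation, and then to deduce the summation formula by a counting argument. Throughout I write the reference factor as the index $0$ and the $N$ output factors as $1,\ldots,N$; the permutations $\mu,\nu\in\mathfrak{S}_N$ act on the output legs with internal index set $\{0,\ldots,N-1\}$, so that output leg $k\in\{1,\ldots,N\}$ carries internal index $k-1$. Denote by $\bar\mu,\bar\nu\in\mathfrak{S}_{N+1}$ their images under the index shift $k\mapsto k+1$ (fixing $0$).

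First I would expand $C_{T_{\mu,\nu}}=(\id\otimes T_{\mu,\nu})\ketbra{\Omega}{\Omega}$. Since the input $X$ is inserted in output leg $1$, the two permutations factor out and one obtains
$$C_{T_{\mu,\nu}} = (I\otimes\Pi_\mu)\big(\omega_{(0,1)}\otimes I^{\otimes(N-1)}\big)(I\otimes\Pi_\nu),$$
where the leading $I$ acts on the reference factor $0$. Using the identity $\Pi_{(0\,1)}^{\To}=\omega_{(0,1)}\otimes I^{\otimes(N-1)}$ recorded after Lemma~\ref{lem:struct}, the middle factor is itself a partially transposed permutation operator. The key step is to pull the two permutations through the partial transpose: because $\Pi_\mu,\Pi_\nu$ act trivially on the transposed factor $0$, a check on pure tensors gives $WM^{\To}=(WM)^{\To}$ and $M^{\To}W=(MW)^{\To}$ for any $W$ supported on the output legs. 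This yields
$$C_{T_{\mu,\nu}} = \big[(I\otimes\Pi_\mu)\,\Pi_{(0\,1)}\,(I\otimes\Pi_\nu)\big]^{\To} = \Pi_\sigma^{\To}, \qquad \sigma=\bar\mu\circ(0\,1)\circ\bar\nu,$$
where the last identity uses that $\Pi_\cdot$ is a homomorphism and $I\otimes\Pi_\mu=\Pi_{\bar\mu}$.

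Next I would pin down the cell $\Sigma_{a,b}$ containing $\sigma$. A short computation using $\bar\nu(0)=0$ and $\bar\nu(b)=\nu(b-1)+1=1$ (as $\nu(b-1)=0$), together with $\bar\mu(0)=0$ and $\bar\mu(1)=\mu(0)+1$, gives $\sigma(0)=\bar\mu((0\,1)(0))=\bar\mu(1)=\mu(0)+1=a$ and $\sigma(b)=\bar\mu((0\,1)(1))=\bar\mu(0)=0$; hence $\sigma\in\Sigma_{a,b}$ whenever $\mu(0)=a-1$ and $\nu(b-1)=0$. Conversely, given $\sigma\in\Sigma_{a,b}$, I would fix any $\nu$ with $\nu(b-1)=0$ and set $\bar\mu:=\sigma\,\bar\nu^{-1}(0\,1)$; the same relations show $\bar\mu(0)=\sigma(\bar\nu^{-1}(1))=\sigma(b)=0$ and $\bar\mu(1)=\sigma(\bar\nu^{-1}(0))=\sigma(0)=a$, so $\bar\mu$ descends to a genuine $\mu\in\mathfrak{S}_N$ with $\mu(0)=a-1$ and $\Pi_\sigma^{\To}=C_{T_{\mu,\nu}}$. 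This establishes the first claim. For the summation formula I would then count: the constraints $\mu(0)=a-1$ and $\nu(b-1)=0$ each leave $(N-1)!$ free choices, giving $(N-1)!^2$ admissible pairs, while $|\Sigma_{a,b}|=(N-1)!$. The relation $\bar\mu=\sigma\,\bar\nu^{-1}(0\,1)$ shows that for fixed $\sigma$ the admissible $\mu$ is uniquely determined by $\nu$, so $(\mu,\nu)\mapsto\sigma$ is exactly $(N-1)!$-to-one onto $\Sigma_{a,b}$. Summing $C_{T_{\mu,\nu}}=\Pi_\sigma^{\To}$ over admissible pairs and invoking linearity of the Choi correspondence ($C_{T_{a,b}}=\sum_{\mu,\nu}C_{T_{\mu,\nu}}$) gives $C_{T_{a,b}}=(N-1)!\sum_{\sigma\in\Sigma_{a,b}}\Pi_\sigma^{\To}$, as asserted.

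The main obstacle is purely the bookkeeping: correctly tracking the two indexing conventions (the shift between $\mathfrak{S}_N$ on the $N$ output legs and $\mathfrak{S}_{N+1}$ on the full Choi space) and justifying the passage of the output permutations through the partial transpose on factor $0$. Once the single relation $\sigma=\bar\mu\circ(0\,1)\circ\bar\nu$ is secured, both the membership $\sigma\in\Sigma_{a,b}$ and the $(N-1)!$-to-one counting drop out of elementary permutation algebra.
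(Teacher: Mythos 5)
Your argument is correct, and it runs the Choi correspondence in the opposite direction from the paper's proof. The paper starts from the operator $\Pi_\sigma^{\To}$, invokes the structural decomposition of Lemma~\ref{lem:premutationForm}, namely $\Pi_\sigma^{\To} = \Pi_{(1\,a)}\,(\omega_{(0,1)}\otimes\Pi_{\hat\sigma})\,\Pi_{(1\,b)}$, and recovers the channel by an explicit partial-trace computation of $\Tr_0\big[\Pi_\sigma^{\To}(X^{\T}\otimes I^{\otimes N})\big]$, expanding $\omega$ in matrix units; this produces the concrete pair $\Pi_\mu = \Pi_{(0\,(a-1))}$ and $\Pi_\nu = (I\otimes\Pi_{\hat\sigma})\,\Pi_{(0\,(b-1))}$. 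You instead compute the Choi matrix of $T_{\mu,\nu}$ forward, obtaining $C_{T_{\mu,\nu}} = \Pi_{\bar\mu}\,\Pi^{\To}_{(0\,1)}\,\Pi_{\bar\nu} = \Pi^{\To}_{\bar\mu\,(0\,1)\,\bar\nu}$, which requires only the homomorphism property of $\sigma\mapsto\Pi_\sigma$, the identity $\Pi^{\To}_{(0\,1)} = \omega_{(0,1)}\otimes I^{\otimes(N-1)}$ recorded after Lemma~\ref{lem:struct}, and the (correct) observation that the partial transpose on factor $0$ commutes with left or right multiplication by operators supported on factors $1,\ldots,N$; Lemma~\ref{lem:premutationForm} is bypassed entirely. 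Your route has two advantages. First, the relation $\sigma = \bar\mu\circ(0\,1)\circ\bar\nu$ gives a transparent parametrization of $\Sigma_{a,b}$, and your index computations verifying $\sigma(0)=a$, $\sigma(b)=0$, as well as the converse construction $\bar\mu := \sigma\,\bar\nu^{-1}(0\,1)$, are sound. Second, and more importantly, it makes the second claim rigorous: the paper asserts the summation formula merely ``as a consequence,'' whereas you prove that $(\mu,\nu)\mapsto\sigma$ is a surjection onto $\Sigma_{a,b}$ whose fibers have size exactly $(N-1)!$, which together with linearity of $T\mapsto C_T$ yields $C_{T_{a,b}} = (N-1)!\sum_{\sigma\in\Sigma_{a,b}}\Pi_\sigma^{\To}$. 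What the paper's version buys in exchange is the explicit form of the pair $(\mu,\nu)$ — with $\mu$ a transposition — expressed through the $\hat\sigma$ machinery that the paper reuses elsewhere (e.g.\ in Lemma~\ref{lem:permuPartialTrace}); what yours buys is self-containedness and an honest proof of the counting step.
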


Under the light of this Lemma the following proposition expresses the fact that $T_\beta$ has Choi matrix $C_{T_\beta}$.

\begin{proposition} \label{thm:TbQuantumChannel}
Let ${(\beta_i)}_{1 \leq i \leq N}$ satisfying Eq.~\eqref{projectionConditionEquation} and define 
\begin{equation*}
    \widetilde{C}_{T_\beta} = \sum_{1 \leq a,b \leq N} \frac{\beta_a \beta_b}{{(N - 1)!}^2} \: C_{T_{a,b}}
\end{equation*}
with 
\begin{equation*}
    P_\beta = \frac{1}{N!} \sum_{\sigma \in \mathfrak{S}_N} \beta_{\scriptscriptstyle \sigma(0) + 1} \cdot \Pi_\sigma,
\end{equation*}
then $T_\beta$ is a quantum channel whose Choi matrix is given by 
\begin{equation} \label{eq:CTb}
    C_{T_\beta} = \frac{d}{\Tr P^+_{\mathfrak{S}_N}} \frac{N + d - 1}{N} \sum_{\substack{1 \leq a,b \leq N \\ \sigma \in \Sigma_{a,b}}} \frac{\beta_a \beta_b}{(N - 1)!} \Pi_\sigma^{\To}
\end{equation}
\end{proposition}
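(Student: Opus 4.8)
The plan is to compute directly the Choi matrix of the explicit map $T_\beta$, match it with the claimed expression \eqref{eq:CTb}, and then read off the quantum-channel property from Proposition~\ref{prop:proj}. First I would observe that complete positivity of $T_\beta$ comes for free: since the $\beta_i$ are real and the permutation operators have real (permutation) matrix entries, $\Pi_\nu^\T = \Pi_{\nu^{-1}} = \Pi_\nu^*$, so $P_\beta^\T = P_\beta^*$ and
$$T_\beta(X) = \frac{dN(N+d-1)}{\Tr P^+_{\mathfrak{S}_N}}\, P_\beta \left( X \otimes I^{\otimes(N-1)} \right) P_\beta^*$$
is a composition of the manifestly CP maps $X \mapsto X \otimes I^{\otimes(N-1)}$ and $Y \mapsto P_\beta Y P_\beta^*$. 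Only trace preservation will require genuine work, and it is deferred to the very end.

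The core of the argument is a bookkeeping computation identifying the Choi matrix. Expanding both copies of $P_\beta$ gives
$$T_\beta(X) = \frac{dN(N+d-1)}{\Tr P^+_{\mathfrak{S}_N}}\,\frac{1}{(N!)^2} \sum_{\mu,\nu \in \mathfrak{S}_N} \beta_{\mu(0)+1}\,\beta_{\nu(0)+1}\, \Pi_\mu \left( X \otimes I^{\otimes(N-1)} \right) \Pi_\nu^\T.$$
Using $\Pi_\nu^\T = \Pi_{\nu^{-1}}$, the substitution $\eta := \nu^{-1}$ turns each summand into an elementary map $X \mapsto \Pi_\mu (X \otimes I^{\otimes(N-1)}) \Pi_\eta$ covered by Lemma~\ref{cor:ChoiToChannel}. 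By that lemma this map has Choi matrix $\Pi_\sigma^{\To}$ for the unique block $\Sigma_{a,b}$ fixed by $\mu(0) = a-1$ and $\eta(b-1) = 0$; equivalently $a = \mu(0)+1$ and $b = \eta^{-1}(0)+1 = \nu(0)+1$. Hence the scalar weight $\beta_{\mu(0)+1}\beta_{\nu(0)+1}$ is exactly $\beta_a \beta_b$, and collecting all $(\mu,\eta)$ with $\mu(0) = a-1$, $\eta(b-1)=0$ reconstructs the map $T_{a,b}$ of Lemma~\ref{cor:ChoiToChannel}. This reindexing is the step I expect to be the main obstacle: one must track carefully the transpose-induced inversion $\nu \mapsto \nu^{-1}$ together with the off-by-one shift in the parametrization of $\Sigma_{a,b}$, so that each weight $\beta_{\mu(0)+1}\beta_{\nu(0)+1}$ lands on the correct pair $(a,b)$.

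Passing to Choi matrices by linearity then yields
$$C_{T_\beta} = \frac{dN(N+d-1)}{\Tr P^+_{\mathfrak{S}_N}}\,\frac{1}{(N!)^2} \sum_{1\le a,b\le N} \beta_a\beta_b\, C_{T_{a,b}} = \frac{dN(N+d-1)}{\Tr P^+_{\mathfrak{S}_N}}\,\frac{((N-1)!)^2}{(N!)^2}\, \widetilde{C}_{T_\beta},$$
where the last equality is just the definition of $\widetilde{C}_{T_\beta}$. Since $(N!)^2 = N^2 ((N-1)!)^2$, this collapses to $C_{T_\beta} = \frac{d}{\Tr P^+_{\mathfrak{S}_N}}\frac{N+d-1}{N}\,\widetilde{C}_{T_\beta}$, and substituting the identity $\sum_{\sigma\in\Sigma_{a,b}}\Pi_\sigma^{\To} = \frac{1}{(N-1)!}C_{T_{a,b}}$ from Lemma~\ref{cor:ChoiToChannel} into $\widetilde{C}_{T_\beta}$ recovers precisely formula \eqref{eq:CTb}.

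Finally I would invoke Proposition~\ref{prop:proj}: the right-hand side of \eqref{eq:CTb} is the operator $C_\beta$, which under the normalization \eqref{projectionConditionEquation} is positive semidefinite and satisfies $\Tr_{\scriptscriptstyle [\![ 0,N ]\!] \setminus \{0\}} C_\beta = I$. Positivity re-confirms complete positivity, while the partial-trace condition is equivalent to trace preservation of $T_\beta$. Together they establish that $T_\beta$ is a genuine quantum channel whose Choi matrix equals $C_{T_\beta} = C_\beta$, which is the assertion of the proposition.
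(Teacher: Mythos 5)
Your proposal is correct and takes essentially the same route as the paper: both identify $T_\beta$ with the combination $\sum_{a,b}\beta_a\beta_b\, T_{a,b}$ via Lemma~\ref{cor:ChoiToChannel}, match the normalization constants, and then obtain positivity and the partial-trace (trace-preservation) condition from Proposition~\ref{prop:proj}. Your write-up simply makes explicit the reindexing ($\Pi_\nu^{\T}=\Pi_{\nu^{-1}}$, $a=\mu(0)+1$, $b=\nu(0)+1$) and the factor $(N!)^2=N^2\,((N-1)!)^2$ that the paper's two-line proof leaves implicit, together with the harmless extra remark that complete positivity is already visible from the Kraus-type form $P_\beta\,(\cdot\otimes I^{\otimes(N-1)})\,P_\beta^*$.
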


\begin{proof}
As announced this proposition is a corollary of  lemma~\ref{cor:ChoiToChannel} noting that
\begin{equation*}
    \widetilde{C}_{T_\beta} = \sum_{1 \leq a,b \leq N} \frac{\beta_a \beta_b}{{(N - 1)!}^2} \: C_{T_{a,b}}
\end{equation*}
we find
\begin{equation*}
    T_\beta(\rho) = \frac{d N (N + d - 1)}{\Tr P^+_{\mathfrak{S}_N}} P_\beta \Big{(} \rho \otimes I^{\otimes (N - 1)} \Big{)}  P^\T_\beta.
\end{equation*}
and the proof is complete.
\end{proof}

Before expressing our final Theorem, which states the optimality of $T_\beta$, we need to compute the marginal of this channel. To this end, in the final proof, we shall need the following lemma whose proof is postponed in Lemma \ref{lemm:appchoi2} in Appendix \ref{app:opt}

\begin{lemma}
    Let an orthogonal projector~$\widetilde{C}_{T_\beta}$ for some reals~${(\beta_i)}_{1 \leq i \leq N}$ satisfying Eq.~\eqref{projectionConditionEquation}. Then for any~$1 \leq i \leq N$, we have
    \begin{equation*}
        \widetilde{C}_{T_\beta} \: \omega_{(0,i)} = \frac{1}{(N - 1)!} \sum_{\substack{1 \leq a \leq N \\ \sigma \in \Sigma_{a,i}}} \beta_a \bigg{(} (d - 1) \beta_i + \sum_{1 \leq b \leq N} \beta_b \bigg{)} \Pi_\sigma^{\To}.
\end{equation*}

Let Choi matrix~$C_{T_\beta}$ for some reals~${(\beta_i)}_{1 \leq i \leq N}$ satisfying Eq.~\eqref{projectionConditionEquation}. Then for any~$1 \leq i \leq N$, we have
    \begin{equation*}
        \Tr \Big{[} C_{T_\beta} \: \omega_{(0,i)} \Big{]} = d {\bigg{(} (d - 1) \beta_k + \sum^N_{j = 1} \beta_j \bigg{)}}^2.
    \end{equation*}    
\end{lemma}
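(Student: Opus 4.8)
The plan is to prove the first (operator) identity and then deduce the second (scalar) one by taking a trace. Writing $\omega_{(0,i)} = \Pi^{\To}_{(0\,i)}$ and recalling from Proposition \ref{thm:TbQuantumChannel} that $\widetilde{C}_{T_\beta} = \sum_{1\le a,b\le N}\frac{\beta_a\beta_b}{(N-1)!}\sum_{\sigma\in\Sigma_{a,b}}\Pi^{\To}_\sigma$, I expand
$$\widetilde{C}_{T_\beta}\,\omega_{(0,i)} = \sum_{1\le a,b\le N}\frac{\beta_a\beta_b}{(N-1)!}\sum_{\sigma\in\Sigma_{a,b}}\Pi^{\To}_\sigma\,\Pi^{\To}_{(0\,i)}.$$
Everything hinges on a multiplication rule: for $\sigma\in\Sigma_{a,b}$ one has $\Pi^{\To}_\sigma\,\Pi^{\To}_{(0\,i)} = d^{\,[b=i]}\,\Pi^{\To}_{\tau(\sigma)}$, where $[b=i]$ equals $1$ if $b=i$ and $0$ otherwise, $\tau(\sigma)$ is a \emph{single} permutation in $\Sigma_{a,i}$, and, for each fixed $a$ and $b$, the assignment $\sigma\mapsto\tau(\sigma)$ is a bijection $\Sigma_{a,b}\to\Sigma_{a,i}$.

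I would prove this rule by evaluating both sides on the spanning family $\{\ket{\Omega}_{(0,c)}\otimes\ket{\phi}\}$ and on its orthogonal complement, where both operators vanish by the kernel statement of Lemma \ref{lem:struct}. Using the action formula of Lemma \ref{lem:struct}, applying $\Pi^{\To}_{(0\,i)}$ first sends $\ket{\Omega}_{(0,c)}\otimes\ket{\phi}$ into $\Ima\Pi^{\To}_{(0\,i)} = \Span\{\ket{\Omega}_{(0,i)}\otimes\ket{\psi}\}$, contributing a factor $d$ exactly when $c=i$; applying $\Pi^{\To}_\sigma$ then lands in $\Span\{\ket{\Omega}_{(0,a)}\otimes\ket{\psi}\}$, contributing a further factor $d$ exactly when $b=i$. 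The resulting factor-of-$d$ pattern (one $d$ iff $c=i$, times $d$ iff $b=i$) is exactly that of a single $\Pi^{\To}_\tau$ with $\tau\in\Sigma_{a,i}$, while the induced $\mathfrak{S}_{N-1}$ factors compose into a $\Pi_{\hat\tau(a,i,c)}$, identifying $\tau$ uniquely. Granting the rule, summing over $b$ collapses the dichotomy via $\sum_b\beta_b\,d^{\,[b=i]} = d\,\beta_i + \sum_{b\ne i}\beta_b = (d-1)\beta_i + \sum_{b=1}^N\beta_b$, and reindexing by $\tau$ yields exactly the first claim.

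For the second identity (where $\beta_k$ should read $\beta_i$), recall $C_{T_\beta} = \frac{d}{\Tr P^+_{\mathfrak{S}_N}}\frac{N+d-1}{N}\,\widetilde{C}_{T_\beta}$. Taking the trace of the first identity and using $\Tr\Pi^{\To}_\sigma = \Tr\Pi_\sigma = d^{\,c(\sigma)}$, where $c(\sigma)$ is the number of cycles of $\sigma$, reduces matters to the cycle sums $\Gamma_{a,i} := \sum_{\sigma\in\Sigma_{a,i}}d^{\,c(\sigma)}$. These I would evaluate by a ``splice-out-$0$'' bijection: for $a=i$ the point $0$ sits in the $2$-cycle $(0\,i)$ and the rest is free, giving $\Gamma_{i,i} = d\sum_{\hat\sigma\in\mathfrak{S}_{N-1}}d^{\,c(\hat\sigma)}$; for $a\ne i$, deleting $0$ from its cycle (short-circuiting $i\to 0\to a$ into $i\to a$) preserves the cycle count and gives a bijection onto $\{\sigma'\in\mathfrak{S}_{\{1,\dots,N\}} : \sigma'(i)=a\}$, so $\Gamma_{a,i}$ is independent of $a\ne i$. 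With $\sum_{\sigma'\in\mathfrak{S}_m}d^{\,c(\sigma')} = d(d+1)\cdots(d+m-1)$, this yields $\Gamma_{i,i} = d\cdot d(d+1)\cdots(d+N-2)$ and $\Gamma_{a,i} = d(d+1)\cdots(d+N-2)$ for $a\ne i$, whence
$$\sum_{a}\beta_a\,\Gamma_{a,i} = d(d+1)\cdots(d+N-2)\,\Big((d-1)\beta_i+\sum_b\beta_b\Big).$$
Substituting, and using $\Tr P^+_{\mathfrak{S}_N} = \tfrac{1}{N!}\,d(d+1)\cdots(d+N-1)$ together with $d(d+1)\cdots(d+N-1) = (N+d-1)\,d(d+1)\cdots(d+N-2)$, the prefactor cancels $\Tr P^+_{\mathfrak{S}_N}$ exactly, giving $\Tr[C_{T_\beta}\,\omega_{(0,i)}] = d\big((d-1)\beta_i+\sum_j\beta_j\big)^2$.

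The main obstacle is the multiplication rule $\Pi^{\To}_\sigma\,\Pi^{\To}_{(0\,i)} = d^{\,[b=i]}\Pi^{\To}_{\tau(\sigma)}$ and the bijectivity of $\sigma\mapsto\tau(\sigma)$: showing that the product is again a \emph{single} partially transposed permutation operator carrying the correct power of $d$, rather than a genuine linear combination, is precisely where the combinatorics of the sets $\Sigma_{a,b}$ and the graphical calculus underlying Lemma \ref{lem:struct} are needed. Once this rule is in hand, both identities follow by the bookkeeping above.
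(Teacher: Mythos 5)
Your proposal is correct and takes essentially the same route as the paper: for the operator identity, the paper's ``direct calculation'' is precisely your multiplication rule $\Pi^{\To}_\sigma \Pi^{\To}_{(0\,i)} = d^{\,[b=i]}\,\Pi^{\To}_{\tau(\sigma)}$ with $\sigma\mapsto\tau(\sigma)$ a bijection onto $\Sigma_{a,i}$, left implicit there, and you are right that the $\beta_k$ in the trace formula is a typo for $\beta_i$. For the trace identity, your splice-out-$0$ cycle count re-derives inline exactly what the paper imports from its Lemma~\ref{lem:permuPartialTrace} (whose own proof is the same cycle-counting argument), since $(N-1)!\,\Tr P^+_{\mathfrak{S}_{N-1}} = d(d+1)\cdots(d+N-2)$ and $\Tr P^+_{\mathfrak{S}_{N-1}}/\Tr P^+_{\mathfrak{S}_N} = N/(N+d-1)$, so the two bookkeepings coincide.
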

Now we are in the position to give the proof of the main theorem of this section.

\begin{proof}[Proof of Theorem \ref{thm:candiate}]
The fact that the linear map $T_\beta$ is a quantum channel have been proved in Theorem~\ref{thm:TbQuantumChannel}. We now show its optimally in the direction $\alpha$.

Recall that we have
\begin{align*}
    \bar{F}_\alpha(T_\beta)&=\sum_i\alpha_i \cdot \bar{F}(T_i)\\
  &\leq\frac{\lambda_{max}(R_\alpha)}{d+1}.  
\end{align*}
Then $T_\beta$ is optimal if we have equality in the above expression, that is 
\begin{equation}
\sum^N_{i=1} \alpha_i \cdot \bar{F} \big{(} {(T_\beta)}_i \big{)} = \frac{\lambda_\text{max}(R_\alpha)}{d + 1}.
\end{equation}

Denote now $R_i=\big{(} I_{(0,i)} + \omega_{(0,i)} \big{)} \otimes I^{\otimes (N - 1)}$, using Lemma \ref{lemm:appchoi2}, for all marginal fidelities, we have ($C_{T_\beta}$ is the Choi matrix of the map $T_\beta$, see Eq~\eqref{eq:CTb})
\begin{align*}
    \bar{F}({(T_\beta)}_i) &= \mathbb{E}_\rho \bigg{[} F \Big{(} \rho , {(T_\beta)}_i(\rho) \Big{)} \bigg{]} \\
    &= \mathbb{E}_\rho \bigg{[} \Big{\langle} C_{T_\beta},\rho^\T_{(i)} \otimes \rho_{(i)} \otimes I \Big{\rangle} \bigg{]} \\
    &= \frac{1}{d (d + 1)} \big{\langle} C_{T_\beta} , R_i \big{\rangle} \\
    &= \frac{\Tr \big{[} C_{T_\beta} \big{]} + \Tr \big{[} C_{T_\beta} \: \omega_{(0,i)} \big{]}}{d (d + 1)} \\
    &= \frac{1 + {\left( (d - 1) \beta_i + \displaystyle{\sum^N_{j=1}} \beta_j \right)}^2}{d + 1}
\end{align*}
that is the equality \eqref{eq:equality} is satisfied if
\begin{align*}
    \sum^N_{i = 1} \alpha_i {\bigg{(} (d - 1) \beta_i + \sum^N_{j = 1} \beta_j \bigg{)}}^2 &= \lambda_\text{max} \bigg{(} \sum^N_{i = 1} \alpha_i \cdot \omega_{(0,i)} \bigg{)}=\lambda_\text{max}(S_\alpha) .\\
\end{align*}
In the preceding equality we have withdrawn the useless part concerning the identity operator. At this stage, since $\chi$ corresponds also to a maximal eigenvector of $S_\alpha$ from Theorem \ref{thm:largest-eigenspace}, we have
\begin{equation*}
    \bigg{\langle} \chi \bigg{|} \sum^N_{i = 1} \alpha_i \cdot \omega_{(0,i)} \bigg{|} \chi \bigg{\rangle} = \lambda_\text{max} \bigg{(} S_\alpha \bigg{)},
\end{equation*}
Now using Lemma \ref{lem:scalarProduct}, we have 
\begin{align*}
    \bigg{\langle} \chi \bigg{|} \sum^N_{i = 1} \alpha_i \cdot \omega_{(0,i)} \bigg{|} \chi \bigg{\rangle} &=  \sum^N_{i = 1} \alpha_i {\bigg{(} (d - 1) \beta_i + \sum^N_{j = 1} \beta_j \bigg{)}}^2
\end{align*}
which proves finally the optimality of $T_\beta$ and conclude the proof of Theorem \ref{thm:candiate}.
\end{proof}

Let us finish by recalling the \emph{symmetric quantum cloning problem} and showing the analogy with the known results in this context and the one we have provided. Indeed, for $\beta = \frac{1}{N (N + d - 1)}$ the following matrix is a projector:
\begin{equation*}
     \widetilde{C}_{T} = \sum_{\sigma \in \mathfrak{S}_{N + 1}} \frac{\beta}{(N - 1)!} \Pi_\sigma^{\To}.
\end{equation*}
and hence,~$\widetilde{C}_{T} \geq 0$. We define the matrix~$C_T$ to be
\begin{equation*}
    C_T = \frac{d}{\Tr P^+_{\mathfrak{S}_N}} \frac{N + d - 1}{N} \: \widetilde{C}_T,
\end{equation*}
then~$C_T \geq 0$ and~$Tr_{\scriptscriptstyle [\![ 0,N ]\!] \setminus \{1\}} (C_T) = I$. That is~$C_T$ is the Choi matrix of a quantum channel. It can be seen that it is actually the Choi matrix associated to the Optimal Symmetric Quantum cloning channel from \cite{werner1998optimal}, defined on all pure state~$\rho$ by:
\begin{equation*}
    T_{\text{opt}}(\rho) = \frac{d}{\Tr P^+_{\mathfrak{S}_N}} P^+_{\mathfrak{S}_N} \Big{(} \rho \otimes I^{\otimes (N - 1)} \Big{)} P^+_{\mathfrak{S}_N}.
\end{equation*}

\subsection{Figures of merit}

Given a set~$(p_i)_{1 \leq i \leq N}$ in~$[0,1]$ one might ask whether it exists a quantum cloning map~$T_\beta$ for some positive reals~$(\beta_i)_{1 \leq i \leq N}$ such that for all marginal~$(T_\beta)_i$ and for all (pure) states~$\rho$:
\begin{equation*}
    (T_\beta)_i (\rho) = p_i \! \cdot \! \rho + (1 - p_i) \frac{I}{d}.
\end{equation*}
We give in the next proposition a relation providing an equation satisfied by the probability tuple $(p_i)$ corresponding to the channels $T_\beta$. This relation has already been noted in the series of papers~\cite{kay2009optimal, kay2014optimal,kay2012optimal}.
\begin{proposition}\label{prop:merit}
    Let~$T_\beta$ be a~$1 \to N$ quantum cloning map, for some positive reals~$(\beta_i)_{1 \leq i \leq N}$ such that
    for all marginal~$(T_\beta)_i$ and for all pure state~$\rho$:
    \begin{equation*}
        (T_\beta)_i (\rho) = p_i \! \cdot \! \rho + (1 - p_i) \frac{I}{d}.
    \end{equation*}
    Then the~$(p_i)_{1 \leq i \leq N}$ must satisfy
    \begin{equation}\label{eq:equality-Kay}
            N + (d^2 - 1) \sum^N_{i = 1} p_i = d (d - 1) + \frac{{\Big{(} \sum^N_{i=1} \sqrt{(d^2 - 1) p_i + 1}\Big{)}}^2}{N + d - 1}.
    \end{equation}
\end{proposition}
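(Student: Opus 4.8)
The plan is to eliminate the eigenvector coefficients $\beta_i$ between the explicit formula for the marginal fidelities obtained inside the proof of Theorem~\ref{thm:candiate} and the normalization constraint \eqref{projectionConditionEquation}, leaving a single relation among the $p_i$.

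First I would compute the $i$-th marginal fidelity in two independent ways. On one hand, on a pure state $\rho = \ketbra{x}{x}$ the depolarizing form of the marginal gives $F(\rho,(T_\beta)_i(\rho)) = \bra{x}(T_\beta)_i(\rho)\ket{x} = p_i + (1-p_i)/d = ((d-1)p_i + 1)/d$, which is independent of $\rho$ and hence equals $\bar F((T_\beta)_i)$. On the other hand, the computation carried out in the proof of Theorem~\ref{thm:candiate} yields $\bar F((T_\beta)_i) = (1 + ((d-1)\beta_i + \sum_{j}\beta_j)^2)/(d+1)$. Equating the two and clearing denominators produces the pointwise identity
\begin{equation*}
    (d^2 - 1) p_i + 1 = d \left( (d - 1)\beta_i + \sum_{j=1}^N \beta_j \right)^2 .
\end{equation*}

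Next I would set $B := \sum_{j}\beta_j$ and $\gamma_i := (d-1)\beta_i + B$, so that the identity above reads $(d^2-1)p_i + 1 = d\,\gamma_i^2$, that is $\gamma_i = \sqrt{((d^2-1)p_i+1)/d}$. Summing the linear relation $\gamma_i = (d-1)\beta_i + B$ over $i$ gives $\sum_i \gamma_i = (N+d-1)B$. Squaring and summing, then substituting the normalization \eqref{projectionConditionEquation} in the form $(d-1)\sum_i \beta_i^2 = 1 - B^2$, I obtain $\sum_i \gamma_i^2 = (d-1) + (N+d-1)B^2$. The pleasant point here is that the normalization is exactly what collapses the $B^2$ bookkeeping into the constant $(d-1)$ plus a single quadratic correction.

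Finally I would combine the two sums. Writing $\sum_i \gamma_i^2 = ((d^2-1)\sum_i p_i + N)/d$ on the left, and on the right using $(N+d-1)B^2 = (\sum_i \gamma_i)^2/(N+d-1)$ together with $(\sum_i \gamma_i)^2 = (\sum_i \sqrt{(d^2-1)p_i+1})^2/d$, the identity $\sum_i \gamma_i^2 = (d-1) + (N+d-1)B^2$ becomes, after multiplying through by $d$, exactly \eqref{eq:equality-Kay}. There is no genuine obstacle in this argument; it is entirely algebraic once the two-way fidelity computation is in place. The only point demanding care is the normalization substitution and the bookkeeping of the stray factors of $d$ and $\sqrt d$ carried by the $\gamma_i$.
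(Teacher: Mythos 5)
Your proof is correct and follows essentially the same route as the paper's: both equate the depolarizing-form fidelity $((d-1)p_i+1)/d$ with the expression $(1+((d-1)\beta_i+\sum_j\beta_j)^2)/(d+1)$ from Theorem~\ref{thm:candiate}, sum the linear relation to get $\sum_i\gamma_i=(N+d-1)\sum_i\beta_i$, and use the normalization \eqref{projectionConditionEquation} to collapse $\sum_i\gamma_i^2$ into $(d-1)+(\sum_i\gamma_i)^2/(N+d-1)$. The only cosmetic difference is that you work directly with the $p_i$ (your $\gamma_i=\sqrt{((d^2-1)p_i+1)/d}$) while the paper routes through the fidelities $f_i$ (its $\sqrt{(d+1)f_i-1}$, the same quantity).
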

\begin{proof}
    The figures of merit $p_i$'s for each marginal~$(T_\beta)_i$ are related to the marginal fidelities $f_i$ by $p_i = \frac{d f_i - 1}{d - 1}$ with
    \begin{equation*}
        f_i = \frac{1 + {\big{(} (d - 1) \beta_i + \sum^N_{j = 1} \beta_j \big{)}}^2}{d + 1}.
    \end{equation*}
    By summing over all $i$, we get that
    \begin{equation*}
        \sum^N_{i = 1} \sqrt{(d + 1) f_i  - 1} = (N + d - 1) \sum^N_{i = 1} \beta_i,
    \end{equation*}
    and finally
    \begin{equation*}
        \beta_i = \frac{1}{d - 1} \bigg{(} \sqrt{(d + 1) f_i  - 1} - \frac{1}{N + d - 1} \sum^N_{j = 1} \sqrt{(d + 1) f_j - 1} \bigg{)}.
    \end{equation*}
    The Eq.~\eqref{projectionConditionEquation} yields
    \begin{align*}
        \sum^N_{i = 1} \big{(} (d + 1) f_i - 1 \big{)} &= \sum^N_{i = 1} {\bigg{(} (d - 1) \beta_i + \sum^N_{j = 1} \beta_j \bigg{)}}^2 \\
        &= (d - 1) \Bigg{(} (d - 1) \sum^N_{i = 1} \beta^2_i + {\bigg{(} \sum^N_{j = 1} \beta_j \bigg{)}}^{\!\!\!\!2} \: \Bigg{)} + (N + d - 1) {\bigg{(} \sum^N_{i = 1} \beta_i \bigg{)}}^2 \\
        &= (d - 1) + (N + d - 1) {\bigg{(} \sum^N_{i = 1} \beta_i \bigg{)}}^2 \\
        &= (d - 1) + \frac{{\Big{(} \sum^N_{i = 1} \sqrt{(d + 1) f_i  - 1} \Big{)}}^2}{N + d - 1}.
    \end{align*}
    In terms of $p_i$'s, we finally find
    \begin{equation} \label{eq:piCondition}
        N + (d^2 - 1) \sum^N_{i = 1} p_i = d (d - 1) + \frac{{\Big{(} \sum^N_{i=1} \sqrt{(d^2 - 1) p_i + 1}\Big{)}}^2}{N + d - 1}.
    \end{equation}
\end{proof}
\begin{remark}
    In particular we recover the standard result of Werner \cite{werner1998optimal} on the optimal $1 \to N$ Symmetric Quantum cloning $p_{\text{opt}} = \frac{d + N}{N (d + 1)}$ if we set all the $p_i$'s equal.
\end{remark}

\begin{example}[$N = 2$, see also \cite{nechita2021geometrical}]
    The Choi matrix~$C_T$ of an optimal Asymmetric Quantum cloning $T: \mathcal{M}_d \to {(\mathcal{M}_d)}^{\otimes 2}$ is a linear combination of four permutation operators:
    \begin{equation} \label{fourPermutationOperatorsEquation:1}
        C_T =  c_1 \cdot \Pi^{\To}_{(1 \: 2)} + c_2 \cdot \Pi^{\To}_{(1 \: 3)} + c_3 \Big{(} \Pi^{\To}_{(1 \: 2 \: 3)} + \Pi^{\To}_{(3 \: 2 \: 1)} \Big{)}
    \end{equation}
    with $c_1, c_2, c_3 \geq 0$ such that $c_3 = \sqrt{c_1 c_2}$ and $d \, (c_1 + c_2) + 2 \: c_3 = 1$. The two marginals of $T$ are
    \begin{align*}
        T_1(\rho) &= (d \: c_1 + 2 \: c_3) \!\cdot\! \rho + d \: c_2 \cdot I \\
        T_2(\rho) &= (d \: c_2 + 2 \: c_3) \!\cdot\! \rho + d \: c_1 \cdot I
    \end{align*}
    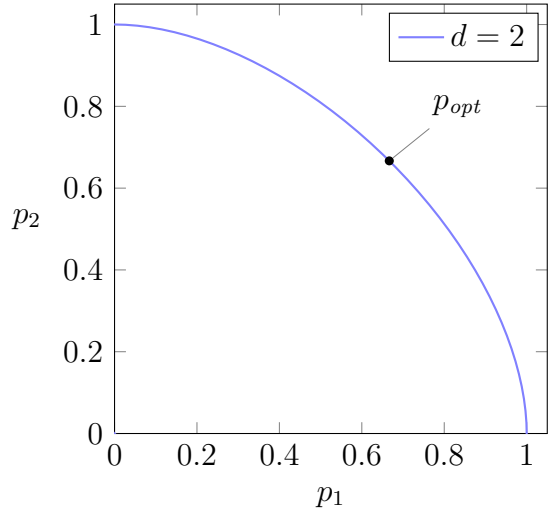
\begin{figure}[!htb]
        \centering
        \begin{tikzpicture}
            \begin{axis}[view = {0} {90},
                         xlabel = $p_1$,
                         ylabel = $p_2$,
                         ylabel style = {rotate = -90},
                         xmin = 0,
                         xmax = {1 + 0.05},
                         ymin = 0,
                         ymax = {1 + 0.05},
                         axis equal image]
                \begin{scope}[rotate around = {-45:(0,0)}]
                    \draw[draw,blue!50!white,thick] (0,{sqrt(2) / 3})
                        ellipse [x radius = {sqrt(2 / 3)},y radius = {sqrt(2) / 3}];
                \end{scope}
                \node[fill,
                      circle,
                      scale = 0.3,
                      pin = 45:{$p_{\text{opt}}$}]
                    at ({(2 + 2) / (2 * (2 + 1))},{(2 + 2) / (2 * (2 + 1))})
                    {};
                \addlegendimage{line legend,blue!50!white,thick}
                \addlegendentry{$d = 2$}
            \end{axis}
        \end{tikzpicture}
        \caption{Figures of merit of the optimal $1 \to 2$ Asymmetric Quantum cloning.}
    \end{figure}
    If we relax the optimality condition, a Choi matrix $C_T$ in the form of Eq.~\eqref{fourPermutationOperatorsEquation:1} is in the admissible region of the Asymmetric Quantum cloning problem if $C_T \geq 0$ and $\Tr_{\scriptscriptstyle [\![ 0,N ]\!] \setminus \{1\}} (C_T) = I$. These conditions, in terms of the $c_i$'s, become $c_3 \leq \sqrt{c_1 c_2}$ and $d \, (c_1 + c_2) + 2 \: c_3 = 1$, see Figure~\ref{fig:adimissibleRegion4}. In general the Choi matrix of a~$1 \to 2$ quantum cloning map is a linear combination of the six permutation operators of $\mathfrak{S}_3$, see Figure~\ref{fig:adimissibleRegion6} for the general admissible region.
    \begin{figure}[!htb]
        \begin{minipage}{0.45\textwidth}
            \begin{tikzpicture}
                \begin{axis}[scale only axis,
                             width = 0.9\textwidth,
                             view = {0} {90},
                             xlabel = $p_1$,
                             ylabel = $p_2$,
                             ylabel style = {rotate = -90},
                             xmin =  {-1 / (2^2 - 1) - 0.05},
                             xmax = {1 + 0.05},
                             ymin = {-1 / (2^2 - 1) - 0.05},
                             ymax = {1 + 0.05},
                             axis equal image,
                             axis on top = true,
                             extra x ticks = 0,
    	                     extra y ticks = 0,
    	                     extra tick style = {grid = major},
    	                     legend pos = south west]
                    \begin{scope}[rotate around = {-45:(0,0)}]
                        \draw[draw = blue!50!white,
                              fill = blue!15!white] (0,{sqrt(2) / 3})
                            ellipse [x radius = {sqrt(2 / 3)},y radius = {sqrt(2) / 3}];
                        \draw[draw = red!50!white,
                              fill = red!15!white] (0,{7 / (8 * sqrt(2))})
                            ellipse [x radius = {3 / 4},y radius = {3 / (8 * sqrt(2))}];
                        \draw[draw = orange!50!white,
                              fill = orange!15!white] (0,{(7 * sqrt(2)) / 15})
                            ellipse [x radius = {2 * sqrt(2 / 15)},y radius = {(2 * sqrt(2)) / 15}];
                    \end{scope}
                    \addlegendimage{area legend,draw = blue!50!white,fill = blue!15!white}
                    \addlegendentry{$d = 2$}
                    \addlegendimage{area legend,draw = red!50!white,fill = red!15!white}
                    \addlegendentry{$d = 3$}
                    \addlegendimage{area legend,draw = orange!50!white,fill = orange!15!white}
                    \addlegendentry{$d = 4$}
                \end{axis}
            \end{tikzpicture}
            \caption{Admissible regions with $4$ permutation operators.}
            \label{fig:adimissibleRegion4}
        \end{minipage}
        \hfill
        \begin{minipage}{0.45\textwidth}
            \begin{tikzpicture}
                \begin{axis}[scale only axis,
                             width = 0.9\textwidth,
                             view = {0} {90},
                             xlabel = $p_1$,
                             ylabel = $p_2$,
                             ylabel style = {rotate = -90},
                             xmin =  {-1 / (2^2 - 1) - 0.05},
                             xmax = {1 + 0.05},
                             ymin = {-1 / (2^2 - 1) - 0.05},
                             ymax = {1 + 0.05},
                             axis equal image,
                             axis on top = true,
                             extra x ticks = 0,
    	                     extra y ticks = 0,
    	                     extra tick style = {grid = major}]
                    \begin{scope}[rotate around = {-45:(current axis.origin)}]
                        \draw[fill,blue!15!white] (0,{sqrt(2) / 3})
                            ellipse [x radius = {sqrt(2 / 3)},y radius = {sqrt(2) / 3}];
                    \end{scope}
                    \draw[fill,blue!15!white] ({-1 / 3},{-1 / 3}) -- ({-1 / 3},{2 / 3}) -- ({2 / 3},{-1 / 3})-- ({-1 / 3},{-1 / 3});
                    \addlegendimage{area legend,fill,blue!15!white}
                    \addlegendentry{$d = 2$}
                \end{axis}
            \end{tikzpicture}
            \caption{Admissible regions with $6$ permutation operators.}
            \label{fig:adimissibleRegion6}
        \end{minipage}
    \end{figure}
\end{example}

We finish this section by showing that optimal cloning of $1\to(N-1)$ does not provide in general optimal cloning of $1\to N$.

More precisely, let $p \in [0,1]^{(N-1)}$ satisfying the optimal condition of Eq.~\eqref{eq:piCondition} for the $1 \to (N - 1)$ asymmetric quantum cloning. Under which condition $(p,0) \in [0,1]^N$ satisfies the optimal condition of Eq.~\eqref{eq:piCondition} for the $1 \to N$ asymmetric quantum cloning. The next proposition is a corrollary of Proposition \ref{prop:merit}

\begin{proposition}
Let $p \in [0,1]^{(N-1)}$ satisfying the optimal condition of Eq.~\eqref{eq:piCondition} for the $1 \to (N - 1)$ asymmetric quantum cloning. Then $(p,0) \in [0,1]^N$ satisfies the optimal condition of Eq.~\eqref{eq:piCondition} if and only if $p$ is of the form
$$p^{(i)}=e_i=(0,\ldots,1,\ldots,0),i=1,\ldots,N-1,$$ where $1$ is in position $i$ and the vector is completed with $0$
\end{proposition}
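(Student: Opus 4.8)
The plan is to invoke the optimality relation \eqref{eq:piCondition} of Proposition~\ref{prop:merit} for both $N-1$ and $N$ clones, and to reduce the two resulting equations to a pair of elementary scalar constraints on $p$. Throughout I set $s := \sum_{i=1}^{N-1} p_i$ and $\Sigma := \sum_{i=1}^{N-1} \sqrt{(d^2-1)p_i+1}$. Since the appended coordinate $p_N = 0$ contributes $\sqrt{(d^2-1)\cdot 0 + 1} = 1$ to the sum of square roots and nothing to the sum of the $p_i$, the $1\to N$ condition for $(p,0)$ reads
\begin{equation*}
    N + (d^2-1)s = d(d-1) + \frac{(\Sigma+1)^2}{N+d-1},
\end{equation*}
whereas the $1\to(N-1)$ condition for $p$, which holds by hypothesis, reads
\begin{equation*}
    (N-1) + (d^2-1)s = d(d-1) + \frac{\Sigma^2}{N+d-2}.
\end{equation*}

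First I would subtract the second relation from the first: the terms $(d^2-1)s$ and $d(d-1)$ cancel, leaving the single-variable equation
\begin{equation*}
    1 = \frac{(\Sigma+1)^2}{N+d-1} - \frac{\Sigma^2}{N+d-2}.
\end{equation*}
Writing $M := N+d-2$ so that $N+d-1 = M+1$, clearing denominators and simplifying, all the terms linear and quadratic in $\Sigma$ recombine into the perfect square $(\Sigma - M)^2 = 0$. Hence, under the standing hypothesis, $(p,0)$ is $1\to N$ optimal if and only if $\Sigma = N+d-2$. Substituting this value back into the $1\to(N-1)$ relation forces $(d^2-1)s = d^2-1$, i.e.\ $s = \sum_{i=1}^{N-1} p_i = 1$. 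Thus the statement reduces to showing that the two scalar constraints $\sum_{i=1}^{N-1} p_i = 1$ and $\sum_{i=1}^{N-1}\sqrt{(d^2-1)p_i+1} = N+d-2$, for $p \in [0,1]^{N-1}$, single out precisely the standard basis vectors $e_i$.

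For this final step I would introduce $g(t) := \sqrt{(d^2-1)t+1}$, which is \emph{strictly} concave on $[0,1]$ since $d>1$, with $g(0)=1$ and $g(1)=d$. Consequently $p \mapsto \sum_i g(p_i)$ is strictly concave on the probability simplex $\{p \geq 0 : \sum_i p_i = 1\}$, so it is minimized only at extreme points, and every vertex $e_i$ yields $g(1)+(N-2)g(0) = d+(N-2) = N+d-2$. By strict Jensen applied to a convex decomposition into vertices, any non-vertex $p$ satisfies $\sum_i g(p_i) > N+d-2$; therefore the constraint $\Sigma = N+d-2$ holds exactly at $p = e_i$. The converse ``if'' direction is then a direct substitution: for $p = e_i$ one checks $s = 1$ and $\Sigma = N+d-2$, whence both optimality relations are satisfied.

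I expect the only genuinely delicate point to be the minimization argument: one must justify that strict concavity forces the minimal value to be attained \emph{solely} at the vertices of the simplex and not on any positive-dimensional face. This follows cleanly from the fact that a strictly concave function lies strictly above its chords on the interior of every segment, so its minimum over a polytope can occur only at an extreme point; everything else is routine algebra.
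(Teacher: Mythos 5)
Your proof is correct and follows essentially the same route as the paper: subtract the two optimality relations to pin down $\Sigma = N+d-2$, substitute back to get $\sum_{i=1}^{N-1} p_i = 1$, and invoke strict concavity of $t \mapsto \sqrt{(d^2-1)t+1}$ via Jensen on the decomposition $p = \sum_i p_i e_i$ to force $p$ to be a vertex of the simplex. Your observation that the subtracted equation collapses to the perfect square $(\Sigma - (N+d-2))^2 = 0$ is in fact a slightly cleaner step than the paper's ``solve the quadratic and select the non-negative solution,'' since it makes the uniqueness of the root immediate.
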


\begin{proof} Let $p=(p_i)_{1 \leq i \leq N-1} \in [0,1]^{(N-1)}$ satisfying the optimal condition of Eq.~\eqref{eq:piCondition} and
$$A=\sum^{N - 1}_{i = 1} \sqrt{(d^2 - 1) p_i + 1}.$$
We have from Eq.~\eqref{eq:piCondition}
\begin{equation}\label{eq:1toN-1}N-1 + (d^2 - 1) \sum^{N - 1}_{i = 1} p_i = d (d - 1) + \frac{A^2}{N-1 + d - 1}.
\end{equation}
Suppose now that~$(p,0) \in [0,1]^N$ satisfies the optimal condition of Eq.~\eqref{eq:piCondition} for the $1 \to N$ asymmetric quantum cloning we have
$$N + (d^2 - 1) \sum^{N - 1}_{i = 1} p_i = d (d - 1) + \frac{(A + 1)^2}{N + d - 1},$$
this yields
$$1 = \frac{(A + 1)^2}{N + d - 1} - \frac{A^2}{N - 1 + d - 1},$$
Solving this quadratic equation in terms of $A$ and selecting the non negtaive solution yiels~$A = (N+d-2)$. Plugging it into \eqref{eq:1toN-1} we get
$$\sum_{i=1}^{N-1} p_i=1$$
Now consider the function
$$p\mapsto f(p)=\left(\sum_{k=1}^{N-1} \sqrt{(d^2 - 1)p_k + 1}\right)$$
which is strictly concave. If $\sum_{i=1}^{N-1} p_i=1$ and if we denote $e_i=(0,\ldots,0,1,0,\ldots,0)$
\begin{eqnarray*}
(N+d-2)=f(p)&=&f\left(\sum_{i=1}^{N-1}p_ie_i\right)\\
&\geq& \sum_{i=1}^{N-1}p_if(e_i)\\
&=&\sum_{i=1}^{N-1} p_i(N+d-2)\\
&=&(N+d-2)
\end{eqnarray*}
Therefore the only possibility to have equality in the above equations
is that $p=e_i$ for some $i=1,\ldots, N-1.$
\end{proof}

\section{The cloning region}\label{sec:cloning-region}

\subsection{The \texorpdfstring{$\mathcal Q$}{Q}-norm}

Having related the average fidelities $\bar{F}_\alpha(T)$ to the largest eigenvalue of the matrix $R_\alpha$ from Eq.~\eqref{eq:def-R-alpha}, we show next that this quantity is further related to a norm on $\mathbb{R}^N$, which has very interesting properties. We introduce now a quantity which is crucial for our work.

\begin{definition}\label{def:Q-norm}
For a vector $x \in \mathbb{R}^N$, define its \emph{$\mathcal Q$-norm}
\begin{equation} \label{eq:def-norm-Q}
    \norm{x}_{\mathcal{Q}} := \frac{d \, \lambda_{\max}(S_x) - \norm{x}_1}{d^2 - 1},
\end{equation}
where $\norm{x}_{1} = \sum_{i=1}^N |x_i|$ is the $\ell_1$ norm of the vector $x$ and the matrix $S_x$ is given by    
\begin{equation}\label{eq:def-S-x}
    S_x = \sum^{N}_{i = 1} |x_i| \cdot \omega_{(0,i)} \otimes I^{\otimes (N - 1)} \in \mathcal M_d^{\otimes (N+1)}.
\end{equation}
The fact that the quantity above is indeed a norm in $\mathbb R^N$ will be shown in Theorem \ref{thm:Q-norm}.
\end{definition}

\begin{remark}
    The $\mathcal Q$-norm of a vector $x \in \mathbb R^N$ depends on the dimension parameter $d$.
\end{remark}

\begin{remark}
   The fact that we subtract the one norm $\Vert.\Vert_1$ implies that it is highly non trivial to show that $\Vert.\Vert_{\mathcal Q}$ is a norm. Note that on its own $\lambda_{max}$ allows to define a norm (essentially this comes from the facts that for two matrices $0\leq A\leq B$ we have $\lambda_{max}(A)\leq\lambda_{max}(B)$ and $\lambda_{max}(A+B)\leq\lambda_{max}(A)+\lambda_{max}(B)$). In general unless trivial examples subtracting two norms don't provide a new norm. This would be the content of a theorem in our context.
\end{remark}

Note that the matrix $S_x$ defined above is closely related to both the matrix $R_\alpha$ from Eq.~\eqref{eq:def-R-alpha} and to the \emph{star Hamiltonian} considered in \cite[Eq.~(3)]{kay2009optimal} and \cite{kay2012optimal, kay2014optimal}. We shall prove in Theorem \ref{thm:Q-norm} that the quantity from Eq.~\eqref{eq:def-norm-Q} is a norm; to do so, we need the following preliminary lemma which is proved in Lemma \ref{lemma:qnormapp0},\ref{lemma:qnormapp1},\ref{lemma:qnormapp2} Appendix \ref{app:norm}. Note that the proofs are again based on the design of a maximal eigenvector. 

\begin{lemma}\label{lemm:qnormfull}
The quantity $\|\cdot\|_{\mathcal Q}$ has the following properties:
\begin{itemize}
\item For all $x\in \mathbb R^N$, we have
$$\frac 1d\Vert x\Vert_1\leq\lambda_{\max}(S_x)\leq d\Vert x\Vert_1.$$

\item For all~$x, y \in \mathbb{R}^N_+$,~$\norm{x + y}_{\mathcal{Q}} \leq \norm{x}_{\mathcal{Q}} + \norm{y}_{\mathcal{Q}}$.

\item For all~$t \in [0,1]^N$ and~$x \in \mathbb{R}^N_+$,~$\norm{t \cdot x}_{\mathcal{Q}} \leq \norm{x}_{\mathcal{Q}}$.
\end{itemize}
    
\end{lemma}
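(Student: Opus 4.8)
The plan is to dispatch the three items in increasing order of difficulty, the first two being essentially bookkeeping about $\lambda_{\max}$ while the scaling inequality is the only substantial point. For the two-sided bound, the upper estimate follows from subadditivity of $\lambda_{\max}$ on Hermitian operators together with $\lambda_{\max}\bigl(\omega_{(0,i)}\otimes I^{\otimes(N-1)}\bigr)=d$ (the operator $\omega=\ketbra{\Omega}{\Omega}$ has a single nonzero eigenvalue, equal to $\langle\Omega|\Omega\rangle=d$), giving $\lambda_{\max}(S_x)\le\sum_i|x_i|\,d=d\norm{x}_1$. For the lower estimate I would compare the top eigenvalue with the mean eigenvalue: since $\Tr\bigl(\omega_{(0,i)}\otimes I^{\otimes(N-1)}\bigr)=d\cdot d^{N-1}=d^N$ one has $\Tr S_x=\norm{x}_1 d^N$, and dividing by $\dim\mathcal H^{\otimes(N+1)}=d^{N+1}$ yields $\lambda_{\max}(S_x)\ge\Tr S_x/d^{N+1}=\norm{x}_1/d$. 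For the triangle inequality the point is that on $\mathbb R^N_+$ all absolute values disappear: $\norm{x+y}_1=\norm{x}_1+\norm{y}_1$ and $S_{x+y}=S_x+S_y$, so subadditivity $\lambda_{\max}(S_x+S_y)\le\lambda_{\max}(S_x)+\lambda_{\max}(S_y)$, inserted into \eqref{eq:def-norm-Q} and divided by $d^2-1>0$, gives $\norm{x+y}_{\mathcal Q}\le\norm{x}_{\mathcal Q}+\norm{y}_{\mathcal Q}$ at once.

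The scaling inequality is the heart of the lemma. After clearing the positive factor $d^2-1$, the claim $\norm{t\cdot x}_{\mathcal Q}\le\norm{x}_{\mathcal Q}$ is equivalent to
\begin{equation*}
d\bigl(\lambda_{\max}(S_x)-\lambda_{\max}(S_{t\cdot x})\bigr)\ \ge\ \norm{x}_1-\norm{t\cdot x}_1=\sum_{i=1}^N(1-t_i)x_i.
\end{equation*}
I would control both eigenvalues with a single test vector. Let $\chi=\sum_i\beta_i\ket{\Omega}_{(0,i)}\otimes\ket{\mathbf v}$ be the top eigenvector of $S_{t\cdot x}$ furnished by Theorem~\ref{thm:largest-eigenspace}, normalized so that $\langle\chi|\chi\rangle=(d-1)\sum_i\beta_i^2+(\sum_i\beta_i)^2=1$; since $\beta$ is the Perron eigenvector of the nonnegative matrix $S_+$, it may be chosen with $\beta\ge 0$ (when $t$ has vanishing entries one takes the nonnegative Perron eigenvector of the resulting nonnegative $S_+$, or argues by continuity from $t_i>0$). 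Writing $\gamma_i:=(d-1)\beta_i+\sum_j\beta_j$, the computation already performed in the proof of Theorem~\ref{thm:candiate} (which only uses Lemma~\ref{lem:scalarProduct}) shows that for any $\chi$ of this form and any nonnegative coefficient vector $c$ one has $\langle\chi|\sum_i c_i\,\omega_{(0,i)}\otimes I^{\otimes(N-1)}|\chi\rangle=\sum_i c_i\gamma_i^2$. Taking $c=t\cdot x$ this evaluates to $\lambda_{\max}(S_{t\cdot x})=\sum_i t_ix_i\gamma_i^2$, while taking $c=x$ and reading the left-hand side as a Rayleigh quotient (recall $\langle\chi|\chi\rangle=1$) gives $\lambda_{\max}(S_x)\ge\sum_i x_i\gamma_i^2$; subtracting yields $\lambda_{\max}(S_x)-\lambda_{\max}(S_{t\cdot x})\ge\sum_i(1-t_i)x_i\gamma_i^2$. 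It therefore suffices to prove the pointwise bound $d\gamma_i^2\ge1$ for every $i$.

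This pointwise bound is the real obstacle, and it is where the specific $\mathcal Q$-normalization and the sign of $\beta$ conspire. Since $\beta_i\ge0$ and $d\ge2$ we have $\gamma_i\ge\sum_j\beta_j=\norm{\beta}_1$; moreover $\norm{\beta}_2\le\norm{\beta}_1$, so the normalization gives $1=(d-1)\norm{\beta}_2^2+\norm{\beta}_1^2\le d\,\norm{\beta}_1^2$, i.e. $\norm{\beta}_1^2\ge 1/d$, whence $\gamma_i^2\ge\norm{\beta}_1^2\ge1/d$. Combining $d\gamma_i^2\ge1$ with the displayed difference and summing the nonnegative terms $(1-t_i)x_i$ closes the argument. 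I expect this last bound to be the delicate step: it is tight, equality occurring exactly when a coordinate is scaled to $0$ (the configuration already met in the lower estimate of the first item), which is precisely why the subtracted term $\norm{x}_1$ in the definition of the $\mathcal Q$-norm is the largest one for which the quantity remains a genuine norm.
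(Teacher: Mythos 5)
Your proposal is correct and follows essentially the same route as the paper: trace-versus-dimension and subadditivity of $\lambda_{\max}$ for the first item, subadditivity again for the second, and for the scaling inequality the Perron eigenvector $\chi=\sum_i\beta_i\ket{\Omega}_{(0,i)}\otimes\ket{\mathbf v}$ of the smaller operator used as a Rayleigh test vector on the larger one, with the key pointwise bound $\langle\chi|\omega_{(0,i)}\otimes I^{\otimes(N-1)}|\chi\rangle=\bigl((d-1)\beta_i+\sum_j\beta_j\bigr)^2\geq 1/d$ derived from the normalization $(d-1)\sum_i\beta_i^2+\bigl(\sum_i\beta_i\bigr)^2=1$. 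The paper phrases the third item as monotonicity ($0\leq x\leq y$ implies $\norm{x}_{\mathcal Q}\leq\norm{y}_{\mathcal Q}$) rather than via $t\cdot x$, but the argument is the same; your explicit handling of vanishing entries of $t$ (by continuity or the nonnegative Perron vector) is a minor point the paper glosses over.
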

In some sense this lemma gathers all the minimal requirements for a quantity to be a norm. In particular, the first point expresses the fact that the $\|\cdot\|_{\mathcal Q}$ quantity is non-negative. The first inequality is saturated if and only if the matrix $S_x$ is a (non-negative) multiple of the identity. The second inequality is saturated if and only if $x$ has at most one non-zero entry.


Under the light of the previous lemma we can prove the important result establishing that the quantity $\Vert.\Vert_{\mathcal Q}$ is a norm.

\begin{theorem}\label{thm:Q-norm}
	For all $N \geq 1$, the quantity $\norm{\cdot}_{\mathcal{Q}}$ from Eq.~\eqref{eq:def-norm-Q} is a norm on $\mathbb{R}^N$.
\end{theorem}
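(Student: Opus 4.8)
The plan is to derive the four axioms of a norm from the three properties collected in Lemma~\ref{lemm:qnormfull}, exploiting the fact that $\norm{\cdot}_{\mathcal Q}$ depends on $x$ only through the vector of absolute values $|x| := (|x_1|,\dots,|x_N|)$. Indeed, from \eqref{eq:def-S-x} we have $S_{\lambda x} = |\lambda|\,S_x$ for every scalar $\lambda$, and $\norm{\lambda x}_1 = |\lambda|\,\norm{x}_1$, so $\lambda_{\max}(S_{\lambda x}) = |\lambda|\,\lambda_{\max}(S_x)$ and hence $\norm{\lambda x}_{\mathcal Q} = |\lambda|\,\norm{x}_{\mathcal Q}$ by \eqref{eq:def-norm-Q}; in particular $\norm{x}_{\mathcal Q} = \norm{|x|}_{\mathcal Q}$, so absolute homogeneity holds. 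Non-negativity is immediate from the first bullet of Lemma~\ref{lemm:qnormfull}: the inequality $d\,\lambda_{\max}(S_x) \geq \norm{x}_1$ makes the numerator of \eqref{eq:def-norm-Q} non-negative, the denominator $d^2 - 1$ being positive.

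The first substantive step is to upgrade the contraction property (third bullet) to genuine monotonicity on the positive orthant: if $0 \leq u \leq v$ coordinatewise, set $t_i := u_i/v_i$ when $v_i > 0$ and $t_i := 0$ otherwise; then $t \in [0,1]^N$ and $u = t\cdot v$, whence $\norm{u}_{\mathcal Q} \leq \norm{v}_{\mathcal Q}$. With monotonicity available, the triangle inequality extends from $\mathbb{R}^N_+$ (second bullet) to all of $\mathbb{R}^N$: for arbitrary $x,y$ one has $|x+y| \leq |x|+|y|$ coordinatewise, so
\[
\norm{x+y}_{\mathcal Q} = \norm{|x+y|}_{\mathcal Q} \leq \norm{|x|+|y|}_{\mathcal Q} \leq \norm{|x|}_{\mathcal Q} + \norm{|y|}_{\mathcal Q} = \norm{x}_{\mathcal Q} + \norm{y}_{\mathcal Q},
\]
using monotonicity for the first inequality and subadditivity on $\mathbb{R}^N_+$ for the second.

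It remains to prove definiteness, namely that $\norm{x}_{\mathcal Q} = 0$ forces $x = 0$; the plan is to bound $\norm{\cdot}_{\mathcal Q}$ below by the sup-norm. A direct computation gives $\norm{e_i}_{\mathcal Q} = 1$ for each standard basis vector: here $S_{e_i} = \omega_{(0,i)} \otimes I^{\otimes(N-1)}$, and since $\omega = \ketbra{\Omega}{\Omega}$ from \eqref{eq:def-omega} is rank one with $\braket{\Omega}{\Omega} = d$, we get $\lambda_{\max}(S_{e_i}) = d$ and $\norm{e_i}_{\mathcal Q} = (d\cdot d - 1)/(d^2-1) = 1$. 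Because $|x_i|\,e_i \leq |x|$ coordinatewise, monotonicity together with homogeneity yields $|x_i| = \norm{|x_i|\,e_i}_{\mathcal Q} \leq \norm{|x|}_{\mathcal Q} = \norm{x}_{\mathcal Q}$ for every $i$, so $\norm{x}_{\mathcal Q} \geq \max_{1 \leq i \leq N} |x_i|$. Hence $\norm{x}_{\mathcal Q} = 0$ implies $x = 0$, which completes the argument.

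The genuine analytic difficulty is entirely absorbed into Lemma~\ref{lemm:qnormfull} (whose subadditivity clause rests on the explicit construction of a maximal eigenvector of $S_x$), so at the level of this theorem the only delicate point is that subadditivity is supplied \emph{only} on the positive orthant. The essential maneuver is therefore the observation that $\norm{\cdot}_{\mathcal Q}$ factors through $x \mapsto |x|$ and is monotone there, which is exactly what lets one transport both the triangle inequality and definiteness from $\mathbb{R}^N_+$ to the whole space; everything else is bookkeeping.
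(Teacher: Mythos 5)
Your proof is correct, and although it draws on the same key input as the paper (Lemma~\ref{lemm:qnormfull}), it organizes the two substantive steps differently. For the triangle inequality, the paper reduces to the case $x+y\in\mathbb{R}^N_+$ and then argues geometrically: it replaces $x$ and $y$ by the points $x'$, $y'$ where the segments $[x,\tfrac{x+y}{2}]$ and $[y,\tfrac{x+y}{2}]$ meet $\partial\mathbb{R}^N_+$, writes $\tfrac{x+y}{2}$ as a convex combination of $x'$ and $y'$, and then invokes the contraction property through the claim that $x'=t_x\cdot x$ and $y'=t_y\cdot y$ for some $t_x,t_y\in[0,1]^N$. Your route --- upgrading the contraction property to coordinatewise monotonicity on $\mathbb{R}^N_+$ and then combining $|x+y|\leq|x|+|y|$ with the fact that $\norm{\cdot}_{\mathcal Q}$ factors through $x\mapsto|x|$ --- is not only shorter but more robust: the paper's intermediate claim can in fact fail (for $x=(1,-1,1)$ and $y=(1,3,5)$ the segment $[x,\tfrac{x+y}{2}]$ meets $\partial\mathbb{R}^3_+$ only at $x'=(1,0,2)$, whose third coordinate exceeds that of $x$, so no $t_x\in[0,1]^N$ exists), whereas every application of the lemma in your argument is made under its stated hypotheses. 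For definiteness, the paper reasons spectrally: $\norm{x}_{\mathcal Q}=0$ forces $\lambda_{\max}(S_x)$ to coincide with the average of the eigenvalues of $S_x$, hence $S_x=c\cdot I^{\otimes(N+1)}$, and testing against the vector $\ket{ij\cdots j}$ with $i\neq j$ gives $c=0$ and thus $x=0$. You instead compute $\norm{e_i}_{\mathcal Q}=1$ and use monotonicity plus homogeneity to get $\norm{x}_{\mathcal Q}\geq\max_{1\leq i\leq N}|x_i|$; this is more elementary and yields a quantitative by-product (domination of the $\ell_\infty$ norm) that the paper's argument does not provide. In short: same key lemma, but your deductions are tighter, and in the triangle-inequality step your argument actually repairs a gap in the paper's own write-up.
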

\begin{proof}
    The absolute homogeneity is clear since both~$\lambda_{\max}(S_x)$ and $\|x\|_1$ are also absolutely homogeneous. Let~$x \in \mathbb{R}^N$ such that~$\norm{x}_{\mathcal{Q}} = 0$, then~$$d \cdot \lambda_{\max}(S_x) =  \norm{x}_1=\tr(S_x).$$ In particular all eigenvalues of~$\sum^{N}_{i = 1} |x_i| \cdot (\omega_{(0,k)} \otimes I^{\otimes (N - 1)})$ are equal to $\lambda_{\max}(S_x)$, and hence~$\sum^{N}_{i = 1} |x_i| \cdot (\omega_{(0,k)} \otimes I^{\otimes (N - 1)})$ is a constant multiple of the identity:
    \begin{equation*}
        \sum^{N}_{i = 1} |x_i| \cdot \big{(} \omega_{(0,k)} \otimes I^{\otimes (N - 1)} \big{)} = c \cdot I^{\otimes (N + 1)}
    \end{equation*}
    for a~$c=\lambda_{\max}(S_x) \geq 0$. Let~$i, j=1,\ldots,d$ such that~$i \neq j$. Let~$\ket{\psi}$ be defined by
    \begin{equation*}
        \ket{\psi} = \big{|} i \underbrace{j \cdots j}_{N \text{ times}} \big{\rangle}.
    \end{equation*}
    Then
    \begin{align*}
        \bigg{\langle} \psi \bigg{|} \sum^{N}_{i = 1} |x_i| \cdot \big{(} \omega_{(0,k)} \otimes I^{\otimes (N - 1)} \big{)} \bigg{|} \psi \bigg{\rangle} &= 0 \\
        \bra{\psi} c \cdot I^{\otimes (N + 1)} \ket{\psi} &= c.
    \end{align*}
    Finally~$c = 0$ implies~$x = 0$, and~$\norm{\cdot}_{\mathcal{Q}}$ is positive-definite. Let~$x, y \in \mathbb{R}^N$, such that both~$\norm{x}_{\mathcal{Q}} \leq 1$ and~$\norm{y}_{\mathcal{Q}} \leq 1$. Let us show that~$\norm{\frac{x + y}{2}}_{\mathcal{Q}} \leq 1$, which is equivalent to the triangle inequality. We can assume that~$x + y \in \mathbb{R}^N_+$, otherwise multiply the~$3$ vectors by~$\sign(x + y)$. Let~$x^{\prime}$ defined by
    \begin{equation*}
        x^{\prime} =
        \begin{cases}
            [x,\frac{x+y}{2}] \: \cap \: \partial \, \mathbb{R}^N_+ &\text{ if } \cap \neq \emptyset \\
            x &\text{ otherwise}
        \end{cases}
    \end{equation*}
    and similarly for~$y^{\prime}$. Then~$x^{\prime}$,~$y^{\prime}$ and~$\frac{x+y}{2}$ are in~$\mathbb{R}^N_+$, with
    \begin{equation*}
        \frac{x + y}{2} \in [x^{\prime} , y^{\prime}],
    \end{equation*}
    then there exists $\lambda\in[0,1]$ such that $\frac{x+y}{2}=\lambda x'+(1-\lambda)y'$, then by the second point of Lemma \ref{lemm:qnormfull} and homogeneity,~$$\norm{\frac{x+y}{2}}_{\mathcal{Q}} \leq \lambda\Vert x'\Vert_{\mathcal{Q}}+(1-\lambda)\Vert y'\Vert_{\mathcal{Q}} \leq \max(\norm{x^{\prime}}_{\mathcal{Q}}, \norm{y^{\prime}}_{\mathcal{Q}}).$$ Now there exists $t_x\in[0,1]^N$ and $t_y\in[0,1]^N$ such that $x'=t_xx$ and $y'=t_yy$, then by the last point of Lemma \ref{lemm:qnormfull} both~$\norm{x^{\prime}}_{\mathcal{Q}} \leq 1$ and~$\norm{y^{\prime}}_{\mathcal{Q}} \leq 1$, such that~$\norm{\frac{x+y}{2}}_{\mathcal{Q}} \leq 1$.
\end{proof}

With the help of the $\mathcal Q$-norm, Proposition \ref{prop:CloningUpperBound} can be reformulated as: for all quantum channels $T$,
\begin{equation}\label{eq:reformulation-UB-Q}
\forall \alpha \in [0,1]^N, \qquad \bar F_\alpha(T) \leq \frac 1 d \|\alpha\|_1 + \left( 1 - \frac 1 d\right) \|\alpha\|_{\mathcal Q}.
\end{equation}

We end this section by defining the dual norm:
\begin{equation}\label{eq:def-dual-Q-norm}
    \norm{y}^*_{\mathcal{Q}} := \sup_{x \neq 0} \frac{\langle y,x \rangle}{\;\:\norm{x}_\mathcal{Q}}
\end{equation}
which will play an important role later, when discussing the set admissible cloning probabilities $\mathcal R(N,d)$, see Eq.~\eqref{eq:def-R}. 
\begin{example}[$N = 1$]
    Let~$x \in \mathbb{R}$, in one dimension the~$\mathcal{Q}$-norm of~$x$ introduced in Definition~\ref{def:Q-norm} is
    \begin{equation*}
        \norm{x}_{\mathcal{Q}} := \frac{d \, \lambda_{\max} \big{(} |x| \cdot \omega \big{)} - |x|}{d^2 - 1}, 
    \end{equation*}
    with~$\lambda_{\max} \big{(} |x| \cdot \omega \big{)} = d |x|$, such that the~$\mathcal{Q}$-norm reduces to~$\norm{x}_{\mathcal{Q}} = |x|$, as well as the dual norm defined in Equation~\eqref{eq:def-dual-Q-norm}:
    \begin{align*}
        \norm{y}^*_{\mathcal{Q}} &= \sup_{x \neq 0} \frac{\langle y,x \rangle}{\;\:\norm{x}_\mathcal{Q}} \\
        &= \sup_{\norm{x}_{\mathcal{Q}} = 1} \langle y,x \rangle \\
        &= \sup_{x = \pm 1} \langle y,x \rangle \\
        &= |y|.
    \end{align*}
\end{example}
\begin{example}[$N = 2$]
    Let~$x_1, x_2 \in \mathbb{R}$, and let~$S_x$ be defined as in Equation~\eqref{eq:def-S-x}:
    \begin{equation*}
        S_x = |x_1| \cdot \omega_{(0,1)} \otimes I + |x_2| \cdot \omega_{(0,2)} \otimes I.
    \end{equation*}
    That is
    \begin{equation*}
        S_x = |x_1| \smallPermutationOperator[1]{1/2,2/1,3/3} + |x_2| \smallPermutationOperator[1]{1/3,2/2,3/1}.
    \end{equation*}
    Then a normalized eigenvector for the largest eigenvalue of~$S_x$ is
    \begin{equation*}
        \chi = \beta_1 \cdot \ket{\Omega}_{(0,1)} \otimes \ket{\mathbf{v}} + \beta_2 \cdot \ket{\Omega}_{(0,2)} \otimes \ket{\mathbf{v}},
    \end{equation*}
    that is
    \begin{equation*}
        \chi = \beta_1 \smallUnboxedPermutationOperator[1]{1/2,2/0,3/3} \hspace{-1.6em} \smallUnboxedLabeledDiagram{1/0,2/0,3/3}{\mathbf{v}}{vector}{3}\kern-.7em+ \beta_2 \smallUnboxedPermutationOperator[1]{1/3,2/2,3/0} \hspace{-1.6em} \smallUnboxedLabeledDiagram{1/0,2/2,3/0}{\mathbf{v}}{vector}{2}\kern-.7em,
    \end{equation*}
    for some coefficients~$\beta_1, \beta_2$, and~$\mathbf{v} = \frac{1}{\sqrt{d}} \sum_{1 \leq i < d} \ket{i}$. Then the action of~$S_x$ on the eigenvector~$\chi$ is given by
    \begin{align*}
        S_x (\chi) &= \bigg{[} |x_1| \smallPermutationOperator[1]{1/2,2/1,3/3} + |x_2| \smallPermutationOperator[1]{1/3,2/2,3/1} \bigg{]} \Big{(} \beta_1 \smallUnboxedPermutationOperator[1]{1/2,2/0,3/3} \hspace{-1.6em} \smallUnboxedLabeledDiagram{1/0,2/0,3/3}{\mathbf{v}}{vector}{3}\kern-.7em+ \beta_2 \smallUnboxedPermutationOperator[1]{1/3,2/2,3/0} \hspace{-1.6em} \smallUnboxedLabeledDiagram{1/0,2/2,3/0}{\mathbf{v}}{vector}{2}\kern-.7em\Big{)} \\
        &= |x_1| \Big{(} \beta_1 \smallUnboxedPermutationOperator[1]{1/2,2/1,3/3} \hspace{-1.6em} \smallUnboxedPermutationOperator[1]{1/2,2/0,3/3} \hspace{-1.6em} \smallUnboxedLabeledDiagram{1/0,2/0,3/3}{\mathbf{v}}{vector}{3}\kern-.7em+
        \beta_2 \smallUnboxedPermutationOperator[1]{1/2,2/1,3/3} \hspace{-1.6em} \smallUnboxedPermutationOperator[1]{1/3,2/2,3/0} \hspace{-1.6em} \smallUnboxedLabeledDiagram{1/0,2/2,3/0}{\mathbf{v}}{vector}{2}\kern-.7em\Big{)}
        + |x_2| \Big{(} \beta_1 \smallUnboxedPermutationOperator[1]{1/3,2/2,3/1} \hspace{-1.6em} \smallUnboxedPermutationOperator[1]{1/2,2/0,3/3} \hspace{-1.6em} \smallUnboxedLabeledDiagram{1/0,2/0,3/3}{\mathbf{v}}{vector}{3}\kern-.7em+
        \beta_2 \smallUnboxedPermutationOperator[1]{1/3,2/2,3/1} \hspace{-1.6em} \smallUnboxedPermutationOperator[1]{1/3,2/2,3/0} \hspace{-1.6em} \smallUnboxedLabeledDiagram{1/0,2/2,3/0}{\mathbf{v}}{vector}{2}\kern-.7em\Big{)} \\
        &= |x_1| (d \beta_1 + \beta_2) \smallUnboxedPermutationOperator[1]{1/2,2/0,3/3} \hspace{-1.6em} \smallUnboxedLabeledDiagram{1/0,2/0,3/3}{\mathbf{v}}{vector}{3}\kern-.7em+ |x_2| (\beta_1 + d \beta_2) \smallUnboxedPermutationOperator[1]{1/3,2/2,3/0} \hspace{-1.6em} \smallUnboxedLabeledDiagram{1/0,2/2,3/0}{\mathbf{v}}{vector}{2}\kern-.7em.
    \end{align*}
    Solving the system of equations gives
    \begin{equation*}
        \lambda_{\max}(S_x) = \frac{1}{2} \Big{(} d \, (|x_1| + |x_2|) + \sqrt{d^2 (|x_1| - |x_2|)^2 + 4 |x_1| \, |x_2|} \Big{)},
        \end{equation*}
        and finally the~$\mathcal{Q}$-norm becomes (see Figure~\ref{fig:qNorm}):
        \begin{equation*}
            \norm{x}_{\mathcal{Q}} = \frac{\frac{d}{2} \Big{(} d \, (|x_1| + |x_2|) + \sqrt{d^2 (|x_1| - |x_2|)^2 + 4 |x_1| \, |x_2|} \Big{)} - \big{(} |x_1| + |x_2| \big{)}}{d^2 - 1}.
        \end{equation*}
\end{example}
\begin{figure}[ht]
    \centering
    \includegraphics[width=0.5\textwidth]{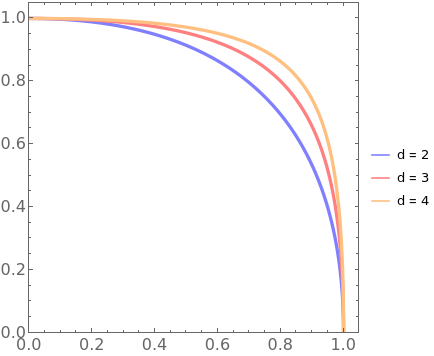}
    \caption{Some unit spheres of~$\mathbb{R}^2_+$ with the~$\mathcal{Q}$-norm.}
    \label{fig:qNorm}
\end{figure}

\subsection{Admissible region}

This section is devoted to present one of our main result, namely the cloning area. More precisely we make precise the distribution $p \in [0,1]^N$ reachable by Quantum cloning map $T : \mathcal{M}_d \to {(\mathcal{M}_d)}^{\otimes N}$, that is such that for all pure state~$\rho$ $$T_i(\rho)=p_i\!\cdot\!\rho+(1-p_i)\frac{I}{d}.$$

Let $\mathcal{R}_{N,d}$ be the admissible region of distribution $p \in [0,1]^N$ for the $1 \to N$ Asymmetric Quantum cloning problem, that is,
\begin{equation}\label{eq:def-R}
    \mathcal{R}_{N,d} := \bigg{\{} p \in [0,1]^N \: \bigg{|} \: \exists \, T \text{ cloning map s.t. } T_i(\rho) = p_i \!\cdot\! \rho + (1 - p_i) \frac{I}{d} \bigg{\}}
\end{equation}
\begin{theorem}\label{thm:region}
    $\mathcal{R}_{N,d}$ is the non-negative part of the unit ball of the dual norm $\norm{\cdot}^*_{\mathcal{Q}}$. In other words, a distribution $p \in [0,1]^N$ is in $\mathcal{R}_{N,d} $ if and only if $\norm{p}^*_{\mathcal{Q}} \leq 1$.
\end{theorem}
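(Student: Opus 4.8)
The plan is to identify the support function of the convex body $\mathcal{R}_{N,d}$ on the non-negative orthant with the $\mathcal Q$-norm, and then conclude by convex duality. The first step is to set up the dictionary between average fidelities and figures of merit: for a $\mathcal U(d)$-covariant cloner with $T_i(\rho)=p_i\rho+(1-p_i)\tfrac{I}{d}$ the marginal fidelity $F(\rho,T_i(\rho))=\tfrac{1+(d-1)p_i}{d}$ is state-independent, so
\[
\bar F_\alpha(T)=\frac{1}{d}\,\norm{\alpha}_1+\frac{d-1}{d}\,\langle\alpha,p\rangle\qquad(\alpha\in[0,1]^N).
\]
Since every $p\in\mathcal R_{N,d}$ is the figure of merit of some such channel and conversely, maximizing $\bar F_\alpha$ over channels is the same as maximizing the linear form $\langle\alpha,p\rangle$ over $p\in\mathcal R_{N,d}$.

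Using this, I would compute the support function $h(\alpha):=\sup_{p\in\mathcal R_{N,d}}\langle\alpha,p\rangle$. The exact value $\sup_T\bar F_\alpha(T)=\tfrac{\lambda_{\max}(R_\alpha)}{d+1}$ --- the upper bound of Proposition~\ref{prop:CloningUpperBound} together with its saturation by the explicit cloner $T_\beta$ of Theorem~\ref{thm:candiate} --- combined with $R_\alpha=\norm{\alpha}_1 I+S_\alpha$, hence $\lambda_{\max}(R_\alpha)=\norm{\alpha}_1+\lambda_{\max}(S_\alpha)$, makes the display above solve to
\[
h(\alpha)=\frac{d\,\lambda_{\max}(S_\alpha)-\norm{\alpha}_1}{d^2-1}=\norm{\alpha}_{\mathcal Q}\qquad(\alpha\in[0,1]^N).
\]
Because $h$ is positively homogeneous and $\norm{\cdot}_{\mathcal Q}$ is a norm (Theorem~\ref{thm:Q-norm}), rescaling $\alpha$ to $\norm{\alpha}_\infty^{-1}\alpha\in[0,1]^N$ extends this identity to all $\alpha\in\mathbb R^N_+$.

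Next I would record the structural properties of $\mathcal R_{N,d}\subseteq\mathbb R^N_+$. It is convex, since a convex combination of cloners is a cloner whose figure of merit is the corresponding convex combination; it is closed, since the $\mathcal U(d)$-covariant cloners form a compact set and $T\mapsto p$ is continuous (indeed linear in the Choi matrix); it contains $0$ (fully depolarize every output); and it is downward-closed, i.e.\ $p\in\mathcal R_{N,d}$ and $0\le p'\le p$ force $p'\in\mathcal R_{N,d}$. The last point follows by post-composing the $i$-th output with the depolarizing channel $\Lambda_{\lambda_i}(\sigma)=(1-\lambda_i)\sigma+\lambda_i\tfrac{I}{d}$ for $\lambda_i=1-p'_i/p_i\in[0,1]$, which replaces $p_i$ by $p'_i$ while keeping the map CPTP.

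Finally I would conclude by duality. For $p\ge 0$ one has $\langle p,x\rangle\le\langle p,\abs{x}\rangle$ and $\norm{x}_{\mathcal Q}=\norm{\abs{x}}_{\mathcal Q}$, so the supremum defining $\norm{p}^*_{\mathcal Q}$ is attained on $x\ge 0$; thus $\norm{p}^*_{\mathcal Q}\le 1$ is equivalent to $\langle\alpha,p\rangle\le\norm{\alpha}_{\mathcal Q}=h(\alpha)$ for all $\alpha\ge 0$. It remains to see that these non-negative supporting inequalities cut out exactly $\mathcal R_{N,d}$. One inclusion is the definition of $h$. For the other, if $p\ge 0$ lay outside the closed convex set $\mathcal R_{N,d}$, a separating hyperplane would give $\alpha$ with $\langle\alpha,p\rangle>\sup_{q\in\mathcal R_{N,d}}\langle\alpha,q\rangle=:c$; replacing $\alpha$ by its coordinatewise positive part $\alpha^+$ only increases the left-hand side (as $p\ge 0$), while downward-closedness (truncating the negative coordinates of any competitor $q$ to $0$) gives $h(\alpha^+)\le c$, and $0\in\mathcal R_{N,d}$ forces $c\ge 0$ so that $\alpha^+\ne 0$. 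Hence $\langle\alpha^+,p\rangle>h(\alpha^+)=\norm{\alpha^+}_{\mathcal Q}$ with $\alpha^+\ge 0$, a contradiction. Taking $x=e_i$, where $\norm{e_i}_{\mathcal Q}=1$, these same constraints give $p_i\le 1$, so the non-negative part of the dual ball lies automatically in $[0,1]^N$. I expect the genuine obstacle to be this convex-geometric step: the cloners $T_\beta$ realize only the boundary points with a strictly positive supporting normal, whereas the flat faces of $\partial\mathcal R_{N,d}$ --- supported by directions $\alpha$ with a vanishing coordinate, cf.~Figure~\ref{fig:flatRegion} --- are recovered only through downward-closedness, which is exactly what legitimizes the reduction to non-negative normals.
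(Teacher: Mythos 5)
Your proof is correct, and on the hard inclusion it takes a genuinely different route from the paper --- one that is in fact tighter. The inclusion $\mathcal{R}_{N,d}\subseteq\{p\in[0,1]^N : \norm{p}^*_{\mathcal{Q}}\leq 1\}$ is handled the same way in both arguments (fidelity dictionary, upper bound of Proposition~\ref{prop:CloningUpperBound}, reduction to non-negative test vectors). For the converse, the paper's proof takes $p$ with $\norm{p}^*_{\mathcal{Q}}=1$, picks a maximizing direction $\alpha$, invokes Theorem~\ref{thm:candiate}, and asserts that the marginals of the optimal cloner $T_\beta$ are exactly $p_i\cdot\rho+(1-p_i)\frac{I}{d}$; but optimality in direction $\alpha$ only yields $\langle\alpha,p^\beta\rangle=\langle\alpha,p\rangle$ for the figure of merit $p^\beta$ of $T_\beta$, which identifies $p^\beta$ with $p$ only when $p$ is an exposed point of its supporting face. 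On the flat faces the paper itself exhibits (Figure~\ref{fig:flatRegion}) this fails: $(2/3,2/3,0)\in\mathcal{R}_{3,2}$ is not the figure of merit of any $T_\beta$, and since all points of $\mathcal{R}_{3,2}$ have non-negative entries, any convex combination of $T_\beta$-points with vanishing third coordinate can only use $(1,0,0)$ and $(0,1,0)$, so the paper's closing appeal to convexity does not recover it either. Your argument circumvents this entirely: you extract from the same two ingredients (upper bound plus saturation) the support-function identity $h(\alpha)=\norm{\alpha}_{\mathcal{Q}}$ on $\mathbb{R}^N_+$, then supply the missing convex-geometric structure --- $\mathcal{R}_{N,d}$ compact, convex, containing $0$, and downward-closed by post-composition with depolarizing channels --- and conclude by a separating-hyperplane argument whose normal is replaced by its positive part, a step legitimized precisely by downward-closedness. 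This is exactly what is needed to capture the boundary points whose supporting normals have vanishing coordinates, i.e.\ the flat faces. The one caveat you share with the paper (hence not a defect of your proposal relative to it) is the use of Theorem~\ref{thm:candiate} for directions $\alpha$ with zero entries, while Theorem~\ref{thm:largest-eigenspace} is stated for strictly positive weight vectors; if one wishes to be scrupulous, your identity $h=\norm{\cdot}_{\mathcal{Q}}$ can first be proved for strictly positive $\alpha$ and then extended to all of $\mathbb{R}^N_+$ by continuity of both sides, after which your separation argument goes through unchanged.
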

\begin{proof}
    Let $p \in \mathcal{R}_{N,d} $ and a cloning map $T_p : \mathcal{M}_d \to {(\mathcal{M}_d)}^{\otimes N}$ such that for all~$\rho \in \mathcal{D}_d$ and all~$1 \leq i \leq N$ we have $${\big{(} T_p \big{)}}_i (\rho) = p_i \!\cdot\! \rho + (1 - p_i) \frac{I}{d},$$ then for all $\alpha \in [0,1]^N$, $\alpha \neq 0$
    \begin{align*}
        \sup_T \bar{F}_\alpha(T) &\geq \sum_i \alpha_i \cdot \bar{F} \big{(} {(T_p)}_i \big{)} \\
        &= \sum_i \alpha_i \cdot \mathbb{E}_\rho \bigg{[} \Big{\langle} \rho\:,\:p_i \!\cdot\! \rho + (1 - p_i) \frac{I}{d} \Big{\rangle} \bigg{]} \\
        &= \sum_i \alpha_i \cdot \mathbb{E}_\rho \Big{[} p_i + \frac{1 - p_i}{d} \Big{]} \\
        &= \langle p,\alpha \rangle + \frac{\norm{\alpha}_1 - \langle p,\alpha \rangle}{d}\\
       &=  \frac 1 d \norm{\alpha}_1  + \left( 1- \frac 1 d\right)  \langle p,\alpha \rangle.
    \end{align*}
    This inequality together with the upper bound from Proposition~\ref{prop:CloningUpperBound} in the form of Eq.~\eqref{eq:reformulation-UB-Q}
    \begin{equation*}
       \sup_T \bar F_\alpha(T) \leq \frac 1 d \|\alpha\|_1 + \left( 1 - \frac 1 d\right) \|\alpha\|_{\mathcal Q},
    \end{equation*}
    give us $\langle p,\alpha \rangle \leq \|\alpha\|_{\mathcal Q}$ for all $\alpha \in [0,1]^N$. For arbitrary $\alpha \in \mathbb R^N$, note that
    $$\langle p,\alpha \rangle \leq \langle p,|\alpha| \rangle \leq  \||\alpha|\|_{\mathcal Q} =  \|\alpha\|_{\mathcal Q},$$
    showing the inclusion 
    $$\mathcal R_{N,d} \subseteq \{ p \in [0,1]^N \, : \, \|p\|_{\mathcal Q}^* \leq 1\}.$$

    
    Let $p \in [0,1]^N$ such that $\norm{p}^*_{\mathcal{Q}} = 1$, there is $\alpha \in [0,1]^N$ such that $\norm{\alpha}_{\mathcal{Q}} = \langle p,\alpha \rangle$, that is
    \begin{align*}
        \frac{d \, \lambda_{\max}(S_\alpha) - \norm{\alpha}_1}{d^2 - 1} &= \langle p,\alpha \rangle \\
        \frac{\lambda_{\max}(S_\alpha) + \norm{\alpha}_1}{d + 1} &= \frac{d - 1}{d} \langle p,\alpha \rangle + \frac{1}{d} \norm{\alpha}_1 \\
        \frac{\lambda_\text{max} \big{(} R_\alpha \big{)}}{d + 1} &= \langle f,\alpha \rangle
    \end{align*}
    where in the last equation we set $f_i = \frac{(d - 1)p_i + 1}{d}$. Then, from Theorem \ref{thm:candiate}, there are some ${(\beta_i)}_{1 \leq i \leq N}$ satisfying Eq.~\eqref{projectionConditionEquation} and a cloning map $T_\beta$ such that
    \begin{equation*}
        \sum_i \alpha_i \cdot \bar{F} \big{(} {(T_\beta)}_i \big{)} = \frac{\lambda_\text{max}(R_\alpha)}{d + 1}
    \end{equation*}
    Then the marginals of $T_\beta$ are ${(T_\beta)}_i(\rho) = p_i \!\cdot\! \rho + (1 - p_i) \frac{I}{d}$, and $p$ is in $\mathcal{R}_{N,d}$. Since this is true for all $p \in [0,1]^N$ such that $\norm{p}^*_{\mathcal{Q}} = 1$, this is true for all $p \in [0,1]^N$ such that $\norm{p}^*_{\mathcal{Q}} \leq 1$ by convexity.
\end{proof}

Importantly, since the maps $T_\beta$ from Theorem \ref{thm:candiate} are valid cloning maps, the points $p$ corresponding to these maps (see Eq.~\eqref{eq:equality-Kay}) are elements of $\mathcal R_{N,d}$.

\subsection{Convex region}

From Theorem~\ref{thm:region}, we know that the admissible region~$\mathcal{R}_{N,d}$ is a convex set delimited by a family of hyperplanes:
\begin{equation*}
    \Big{\{} p \in \mathbb{R}^N \Big{|} \langle \alpha, p \rangle = \norm{\alpha}_{\mathcal{Q}} \Big{\}}.
\end{equation*}
The asymmetric quantum cloning is an optimization problem defined in Eq~\eqref{eq:optim-worst} by
\begin{equation*}
    \sup_T \sum^N_{i=1} \alpha_i \cdot \inf_{\rho \in \mathcal{D}_d} F \big{(} \rho , T_i(\rho) \big{)},
\end{equation*}
for a distribution~$\alpha \in [0,1]^N$, and where the supremum is taken on the quantum channels and the infimum on the pure states. 
From Proposition~\ref{prop:decompChoi}, the marginals of a~$\mathcal{U}(d)$-covariant quantum channels~$T: \mathcal{M}_d \to {\big{(} \mathcal{M}_d \big{)}}^{\otimes N}$ are of the form
\begin{equation*}
    T_i(\rho) = p_i \! \cdot \! \rho + (1 - p_i) \frac{I}{d},
\end{equation*}
for all pure states~$\rho$, where the~$p_i$'s become our figures of merit. Within this formulation of the problem, if we restrict the problem to $\mathcal{U}(d)$-covariant quantum channels, we do not ask the~$p_i$'s to be collinear with the~$\alpha_i$'s. Instead we want to maximize
\begin{equation*}
    \sum^N_{i = 1} \alpha_i \, \bigg{(} p_i + \frac{1 - p_i}{d} \bigg{)}.
\end{equation*}
The cloning maps defined in Section~\ref{sec:lower-bound} can indeed give~$p_i$'s in a different direction than the~$\alpha_i$'s, specially when the direction of the~$\alpha_i$ does not intersect an extreme point of~$\mathcal{R}_{N,d}$. As a consequence, the cloning maps $T_\beta$ defined in Section~\ref{sec:lower-bound} do not fill the boundary of~$\mathcal{R}_{N,d}$, since some points in this boundary are not optimal with respect to the optimization problem. However these points can always be reached by convex combination of the cloning maps defined in Section~\ref{sec:lower-bound}.
\begin{example}[$d = 2$ \& $N = 3$]
    Let us consider the~$1 \to 3$ quantum cloning problem where we ask the first two copies are the same (have the same fidelity). That is we have to find the set
    \begin{equation*}
        \Big{\{} (p, p, q) \: \Big{|} \: p, q \in [0, 1] \Big{\}} \bigcap \mathcal{R}_{3,2}.
    \end{equation*}
    When~$q = 0$ we are in the setting of the symmetric~$1 \to 2$ quantum cloning problem, where the best (i.e.~largest) $p$ given by~\cite{werner1998optimal} is~$p_{\text{opt}} = \frac{2}{3}$; in our setting, this means that 
    $$\|(2/3, 2/3, 0)\|_{\mathcal Q}^* = 1.$$
    
    However, the cloning map defined in Section~\ref{sec:lower-bound} cannot produce the point~$(\frac{2}{3}, \frac{2}{3}, 0)$, instead we get~$(\frac{2}{3}, \frac{2}{3}, \frac{1}{9})$ (the black dot on Figure~\ref{fig:flatRegion}), since~$(\frac{2}{3}, \frac{2}{3}, 0)$ is not optimal with respect to the optimization problem. These cloning maps are the best cloning maps for this problem, but doesn't reach all the boundary of the set~$\mathcal{R}_{N, d}$. These unreachable flat region can be achieved by convexity (the flat region is in orange on Figure~\ref{fig:flatRegion}).
    \begin{figure}[!htb]
        \begin{minipage}{0.45\textwidth}
            \includegraphics[width=\textwidth]{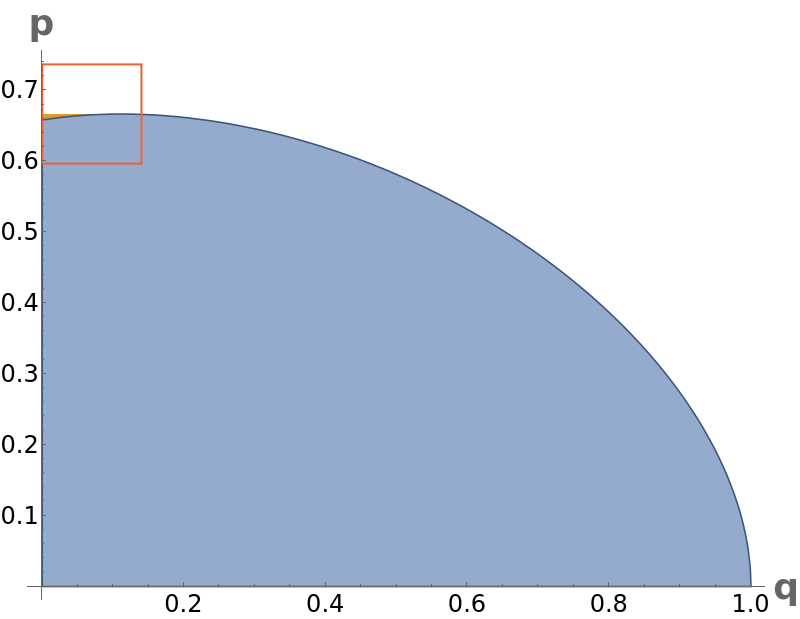}
        \end{minipage}
        \hfill
        \begin{minipage}{0.45\textwidth}
            \includegraphics[width=\textwidth]{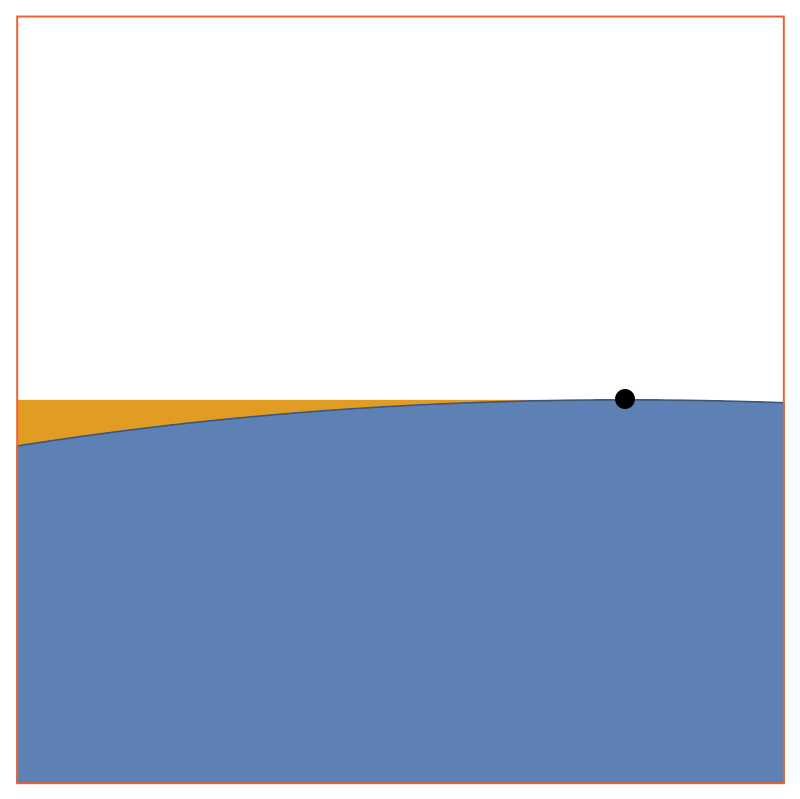}
        \end{minipage}
        \caption{The section $\{(p,p,q)\}$ of the~$1 \to 3$ quantum cloning region $\mathcal R_{3,2}$. On the axis $q=0$, the largest $p$ such that $(p,p,0) \in \mathcal R_{3,2}$ is $p=2/3$. However, the only $q$ such that $(2/3,2/3,q)$ satisfies Eq.~\eqref{eq:equality-Kay} is $q=1/9$. In the left panel (and in the zoomed region in the right panel), we have plotted in blue the points $(p,p,q)$ satisfying Eq.~\eqref{eq:equality-Kay} with an inequality, and in orange the slice of the set $\mathcal R_{3,2}$. The visible orange region in the right panel corresponds to points in $\mathcal R_{3,2}$ which do not satisfy the inequality \eqref{eq:equality-Kay}.}
        \label{fig:flatRegion}
    \end{figure}
\end{example}

In conclusion, as the previous example shows, given a tuple $p \in [0,1]^N$, in order to decide whether $p$ is admissible for the problem of asymmetric quantum cloning, it does not suffice to check whether the Eq.~\eqref{eq:equality-Kay} holds as an inequality; one needs to check the condition from Theorem \ref{thm:region}, that is $\|p\|_{\mathcal Q}^* \leq 1$. This is one of the main contributions of this paper, the realization that the convex region $\mathcal R(N,d)$ is \emph{not} described by the Eq.~\eqref{eq:equality-Kay}, seen as an inequality.  

\bigskip

\noindent\textbf{Acknowledgements.} The authors were supported by the ANR project \href{https://esquisses.math.cnrs.fr/}{ESQuisses}, grant number ANR-20-CE47-0014-01 as well as by the PHC programs \emph{Sakura} (Random Matrices and Tensors for Quantum Information and Machine Learning) and \emph{Procope} (Entanglement Preservation in Quantum Information Theory). I.N.~has also received support from the ANR project \href{https://www.math.univ-toulouse.fr/~gcebron/STARS.php}{STARS}, grant number ANR-20-CE40-0008. C.P~is also supported by the ANR projects Q-COAST ANR- 19-CE48-0003, ``Quantum Trajectories'' ANR- 20-CE40-0024-01, and ``Investissements d'Avenir'' ANR-11-LABX-0040 of the French National Research Agency.

\bibliographystyle{alpha}
\bibliography{bibliography}

\break
\appendix
\section{Structure of permutation operators}\label{app:permutation-operators}

This appendix is devoted to the presentation of technical results on permutations operator. Numerous results or remarks are illustrated by graphical calculus. 

\subsection{Contributions of permutations}

In Section \ref{sec:permutation-operators}, we only focus on permutations where $0$ is not a fixed point. This is motivated by the following remarks

Let $T : \mathcal{M}_d \to {(\mathcal{M}_d)}^{\otimes N}$ be a $\mathcal{U}(d)$-covariant quantum channel, and $C_T$ its Choi matrix, then from Proposition \ref{prop:decompChoi} we have for all $\rho \in \mathcal{D}_d$
\begin{align*}
	T(\rho) &= \Tr_0 \Big{[} C_T \big{(} \rho^\T \otimes I^{\otimes N} \big{)} \Big{]} \\
	&= \Tr_0 \bigg{[} \sum_{\sigma \in \mathfrak{S}_{N + 1}} \beta_\sigma \cdot \Pi_\sigma^{\To} \big{(} \rho^\T \otimes I^{\otimes N} \big{)} \bigg{]},
\end{align*}
for some $\beta_\sigma \in \mathbb{C}$ (see example \ref{ex:choiN2} for the case $N=2$).

Let us justify that $\sigma \in \mathfrak{S}_{N + 1}$ such that $\sigma(0) = 0$ do not contribute to the performance of the cloning map whereas the $(N + 1)$ cycles contribute the most to the copies. Indeed, for a permutation $\sigma \in \mathfrak{S}_{N + 1}$ such that $\sigma(0) = 0$ we have $\Pi_\sigma^{\To} = \Pi_\sigma$ and for all $\rho \in \mathcal{D}_d$
\begin{align*}
    \Tr_1 \Big{[} \Pi_\sigma^{\To} \big{(} \rho^\T \otimes I^{\otimes N} \big{)} \Big{]} &= \Tr_0 \Big{[} \Pi_\sigma \big{(} \rho^\T \otimes I^{\otimes N} \big{)} \Big{]} \\
    &= \frac{\Tr \rho}{d} \Tr_0 \Big{[} \Pi_\sigma \; I^{\otimes (N + 1)} \Big{]} \\
    &= \frac{1}{d} \cdot \Tr_0 \Pi_\sigma
\end{align*}
whose contributions consist in deteriorating the quality of the cloning, since the quantity $\Tr_1 \sigma$ does not depend on $\rho$. Now if $\sigma \in \mathfrak{S}_{N + 1}$ is a $(N + 1)$ cycle the marginals of
\begin{align*}
    T_{\scriptscriptstyle\text{perfect}} : \mathcal{M}_d &\longrightarrow {(\mathcal{M}_d)}^{\otimes N} \\
    \rho &\longmapsto \Tr_0 \Big{[} \Pi_\sigma^{\To} \big{(} \rho^\T \otimes I^{\otimes N} \big{)} \Big{]}
\end{align*}
are all equals to identity, i.e. ${(T_{\scriptscriptstyle\text{perfect}})}_i(\rho) = \rho$ for all $\rho \in \mathcal{M}_d$. This would be the case of a perfect quantum cloning map but the operator $T_{\scriptscriptstyle\text{perfect}}$ are of course not completly positive operator due to the no-cloning Theorem (see Example \ref{ex:perfect} fro computations with graphical calculus).

More generally, let $T : \mathcal{M}_d \to {(\mathcal{M}_d)}^{\otimes N}$ be a $\mathcal{U}(d)$-covariant quantum channel. From Proposition \ref{prop:decompChoi} the marginals $T_i$ are for all $\rho \in \mathcal{D}_d$
\begin{equation*}
	T_i(\rho) = p_i \! \cdot \! \rho + (1 - p_i)\frac{I}{d}
\end{equation*}
where the $p_i$ are the sums of the $\frac{N!}{2}$ coefficients $\beta_\sigma$ of the permutations contributing for this copy, that is the $\sigma$ such that $i$ is in the orbit of $0$ under the action of $\sigma$. The followings examples illustrates some results in terms of graphical calculus.  

\begin{example}[Partition of $\Sigma_{a,b}, N = 3$]
    The permutation set $\big{\{} \Pi^\To_\sigma \; \big{|} \;\sigma \in \mathfrak{S}_4, \quad \sigma(0) \neq 0 \big{\}}$ is partitioned in
    \begin{align*}
        \Sigma_{1,1} &: \Big{\{} \smallPermutationOperator[1]{1/2,2/1,3/3,4/4} , \smallPermutationOperator[1]{1/2,2/1,3/4,4/3} \Big{\}}, &
        \Sigma_{1,2} &: \Big{\{} \smallPermutationOperator[1]{1/2,2/3,3/1,4/4} , \smallPermutationOperator[1]{1/2,2/4,3/1,4/3} \Big{\}}, &
        \Sigma_{1,3} &: \Big{\{} \smallPermutationOperator[1]{1/2,2/3,3/4,4/1} , \smallPermutationOperator[1]{1/2,2/4,3/3,4/1} \Big{\}}, \\[0.5em]
        \Sigma_{2,1} &: \Big{\{} \smallPermutationOperator[1]{1/3,2/1,3/2,4/4} , \smallPermutationOperator[1]{1/3,2/1,3/4,4/2} \Big{\}}, &
        \Sigma_{2,2} &: \Big{\{} \smallPermutationOperator[1]{1/3,2/2,3/1,4/4} , \smallPermutationOperator[1]{1/3,2/4,3/1,4/2} \Big{\}}, &
        \Sigma_{2,3} &: \Big{\{} \smallPermutationOperator[1]{1/3,2/2,3/4,4/1} , \smallPermutationOperator[1]{1/3,2/4,3/2,4/1} \Big{\}}, \\[0.5em]
        \Sigma_{3,1} &: \Big{\{} \smallPermutationOperator[1]{1/4,2/1,3/2,4/3} , \smallPermutationOperator[1]{1/4,2/1,3/3,4/2} \Big{\}}, &
        \Sigma_{3,2} &: \Big{\{} \smallPermutationOperator[1]{1/4,2/2,3/1,4/3} , \smallPermutationOperator[1]{1/4,2/3,3/1,4/2} \Big{\}}, &
        \Sigma_{3,3} &: \Big{\{} \smallPermutationOperator[1]{1/4,2/2,3/3,4/1} , \smallPermutationOperator[1]{1/4,2/3,3/2,4/1} \Big{\}}.
    \end{align*}
\end{example}

\begin{example}[Partially transposed permutations, $N = 3$]
    On~$\mathfrak{S}_{4}$ the~$(0 \: i)$-partially transposed permutations are
    \begin{align*}
        \Pi^{\To}_{(0 \: 1)} &= \smallPermutationOperator[1]{1/2,2/1,3/3,4/4}, & \Pi^{\To}_{(0 \: 2)} &= \smallPermutationOperator[1]{1/3,2/2,3/1,4/4}, & \Pi^{\To}_{(0 \: 3)} &= \smallPermutationOperator[1]{1/4,2/2,3/3,4/1}.
    \end{align*}
\end{example}

\begin{example}\label{ex:choiN2}[Choi Matrix $N = 2$]
    Let $T : \mathcal{M}_d \to \mathcal{M}_d \otimes \mathcal{M}_d$ be a $\mathcal{U}(d)$-covariant quantum channel, then its Choi matrix $C_T$ is:
    \begin{equation*}
        \beta_{\id} \smallPermutationOperator[1]{1/1,2/2,3/3} + \beta_{(12)} \smallPermutationOperator[1]{1/2,2/1,3/3} + \beta_{(13)} \smallPermutationOperator[1]{1/3,2/2,3/1} + \beta_{(23)} \smallPermutationOperator[1]{1/1,2/3,3/2} + \beta_{(123)} \smallPermutationOperator[1]{1/2,2/3,3/1} + \beta_{(321)} \smallPermutationOperator[1]{1/3,2/1,3/2}
    \end{equation*}
\end{example}

\begin{example}\label{ex:perfect}[Contributions of quantum clonning]
    Let $T : \mathcal{M}_d \to \mathcal{M}_d \otimes \mathcal{M}_d$ be a $\mathcal{U}(d)$-covariant quantum channel, such that its Choi matrix $C_T$ is the partially transposed cycle~$(1 \: 2 \: 3)$:
    \begin{equation*}
        C_T = \smallPermutationOperator[1]{1/2,2/3,3/1}.
    \end{equation*}
    Then the two marginals are
    \begin{align*}
        T_1(\rho) &= \Tr_{\{0,2\}} \Big{[} (1 \: 2 \: 3)^{\To} \big{(} \rho^\T \otimes I^{\otimes N} \big{)} \Big{]} \\
        &= \Tr_2 \bigg{[} \smallUnboxedPermutationOperator[1]{1/0,2/1,3/3,4/4} \hspace{-1.6em} \smallUnboxedPermutationOperator[2]{1/1,2/3,3/4,4/2} \hspace{-1.6em} \smallUnboxedLabeledDiagram{1/1,2/2,3/3,4/4}{\rho^T}{operator}{2/2} \hspace{-1.6em} \smallUnboxedPermutationOperator[1]{1/2,2/0,3/3,4/4} \bigg{]} \\
        &= \Tr_1 \Big{[} \smallUnboxedPermutationOperator{1/1,2/2} \hspace{-1.6em} \smallUnboxedPermutationOperator{1/2,2/1} \hspace{-1.6em} \smallUnboxedLabeledDiagram{1/1,2/2}{\rho}{operator}{2/2} \Big{]} \\
        &= \smallUnboxedPermutationOperator[2]{1/1,2/0,3/2} \hspace{-1.6em} \smallUnboxedPermutationOperator{1/2,2/1,3/3} \hspace{-1.6em} \smallUnboxedLabeledDiagram{1/1,2/2,3/3}{\rho}{operator}{2/2} \hspace{-1.6em} \smallUnboxedPermutationOperator[2]{1/1,2/3,3/0} \\
        &= \smallUnboxedLabeledDiagram{1/1}{\rho}{operator}{1/1},
    \end{align*}
    and
    \begin{align*}
        T_2(\rho) &= \Tr_{\{0,1\}} \Big{[} (1 \: 2 \: 3)^{\To} \big{(} \rho^\T \otimes I^{\otimes N} \big{)} \Big{]} \\
        &= \Tr_1 \bigg{[} \smallUnboxedPermutationOperator[1]{1/0,2/1,3/3,4/4} \hspace{-1.6em} \smallUnboxedPermutationOperator[2]{1/1,2/3,3/4,4/2} \hspace{-1.6em} \smallUnboxedLabeledDiagram{1/1,2/2,3/3,4/4}{\rho^T}{operator}{2/2} \hspace{-1.6em} \smallUnboxedPermutationOperator[1]{1/2,2/0,3/3,4/4} \bigg{]} \\
        &= \Tr_0 \Big{[} \smallUnboxedPermutationOperator{1/1,2/2} \hspace{-1.6em} \smallUnboxedPermutationOperator{1/2,2/1} \hspace{-1.6em} \smallUnboxedLabeledDiagram{1/1,2/2}{\rho}{operator}{2/2} \Big{]} \\
        &= \smallUnboxedPermutationOperator[1]{1/0,2/1,3/3} \hspace{-1.6em} \smallUnboxedPermutationOperator{1/1,2/3,3/2} \hspace{-1.6em} \smallUnboxedLabeledDiagram{1/1,2/2,3/3}{\rho}{operator}{3/3} \hspace{-1.6em} \smallUnboxedPermutationOperator[1]{1/2,2/0,3/3} \\
        &= \smallUnboxedLabeledDiagram{1/1}{\rho}{operator}{1/1}.
    \end{align*}
\end{example}

\subsection{Proof of Lemma \ref*{lem:struct}} \label{sec:app-proof-lemma-permutation-operators}

In this subsection, we prove the key Lemma \ref{lem:struct}, which gives the structure of partially permutations operators and action on vectors. We illustrate as well with graphical calculus.

\begin{lemma}\label{lem:premutationForm}
    Let $1 \leq a,b \leq N$, for any $\sigma \in \Sigma_{a,b}$, there exists a unique $\hat{\sigma} \in \mathfrak{S}_{N - 1}$ such as the partially transposed permutation operator $\sigma^{\To}$ is 
    \begin{equation*}
        \Pi_\sigma^{\To} = \Pi_{(1 \: a)} \: (\omega_{(0,1)} \otimes \Pi_{\hat{\sigma}}) \; \Pi_{(1 \: b)}.
    \end{equation*}
\end{lemma}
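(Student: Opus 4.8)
The plan is to reduce the whole identity to the single base case $\Pi_{(0\,1)}^{\To} = \omega_{(0,1)}\otimes I^{\otimes(N-1)}$ (noted right after Lemma~\ref{lem:struct}) together with the representation property $\Pi_\mu\Pi_\nu = \Pi_{\mu\nu}$. First I would extract from $\sigma\in\Sigma_{a,b}$ the data $a=\sigma(0)$ and $b=\sigma^{-1}(0)$, and define the candidate
$\hat\sigma := (0\,1)(1\,a)\,\sigma\,(1\,b)\in\mathfrak{S}_{N+1}$. Tracking the action of each factor (and using $\sigma(0)=a$, $\sigma(b)=0$, and that the transpositions $(1\,a),(1\,b)$ fix $0$), a one-line computation gives $\hat\sigma(0)=0$ and $\hat\sigma(1)=1$; hence $\hat\sigma$ permutes only $\{2,\dots,N\}$ and may be regarded as an element of $\mathfrak{S}_{N-1}$, with $\Pi_{\hat\sigma}$ acting on the last $N-1$ tensor factors. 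This is what makes the expression $\omega_{(0,1)}\otimes\Pi_{\hat\sigma}$ well defined.

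Rearranging the definition yields $\sigma = (1\,a)(0\,1)\,\hat\sigma\,(1\,b)$, so by the representation property $\Pi_\sigma = \Pi_{(1\,a)}\,\Pi_{(0\,1)}\,\Pi_{\hat\sigma}\,\Pi_{(1\,b)}$. The crucial step is then to take the partial transpose $\T_0$ of this product. The key observation I would isolate as a short sublemma is that if $A$ and $B$ act trivially on the $0$-th tensor factor then $(A\,M\,B)^{\T_0} = A\,M^{\T_0}\,B$; this is immediate from the matrix-element definition of $\T_0$ in \eqref{eq:action-Sn}, since $A$ and $B$ carry no $0$-index to be swapped. Applying it twice—once with $A=\Pi_{(1\,a)}$, $B=\Pi_{(1\,b)}$ (which fix $0$ as $a,b\geq 1$), and once inside with $A=I$, $M=\Pi_{(0\,1)}$, $B=\Pi_{\hat\sigma}$ (which fixes $0$)—gives
\[
\Pi_\sigma^{\To} = \Pi_{(1\,a)}\,\big(\Pi_{(0\,1)}^{\To}\,\Pi_{\hat\sigma}\big)\,\Pi_{(1\,b)}.
\]
Substituting the base identity $\Pi_{(0\,1)}^{\To} = \omega_{(0,1)}\otimes I^{\otimes(N-1)}$ and noting that $\Pi_{\hat\sigma}$ touches only factors $2,\dots,N$ (disjoint from $\{0,1\}$) gives $\Pi_{(0\,1)}^{\To}\Pi_{\hat\sigma} = \omega_{(0,1)}\otimes\Pi_{\hat\sigma}$, which is exactly the claimed formula.

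For uniqueness I would observe that $a,b$ are forced by $\sigma$ and that $\hat\sigma = (0\,1)(1\,a)\,\sigma\,(1\,b)$ is a bijection in $\hat\sigma$; equivalently, since the operators $\{\Pi_\tau\}_{\tau\in\mathfrak{S}_{N+1}}$ are linearly independent, distinct $\hat\sigma$ produce distinct operators $\Pi_{(1\,a)}(\omega_{(0,1)}\otimes\Pi_{\hat\sigma})\Pi_{(1\,b)}$. The main obstacle is not conceptual but bookkeeping: one must fix the permutation conventions (the $\sigma^{\shortminus 1}$ appearing in \eqref{eq:action-Sn}, the right-to-left composition order, and the $0$-based indexing) and keep them rigidly consistent through both the verification that $\hat\sigma$ fixes $0$ and $1$ and the pull-out of $\T_0$. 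The graphical calculus of Appendix~\ref{app:permutation-operators} is well suited to make the partial-transpose manipulation transparent, effectively \emph{bending} the wire at position $0$ so that the two legs meeting there close up into the cap/cup pair of $\omega_{(0,1)}$, while the transpositions $(1\,a)$ and $(1\,b)$ merely relabel the endpoints of the surviving wires.
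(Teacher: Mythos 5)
Your proposal is correct and follows essentially the same route as the paper's own proof: both reduce to the $\Sigma_{1,1}$ case by multiplying with $\Pi_{(1\,a)}$ and $\Pi_{(1\,b)}$, using that the leg-$0$ partial transpose commutes with operators acting trivially on leg $0$, and then invoke the base identity $\Pi_{(0\,1)}^{\To}=\omega_{(0,1)}\otimes I^{\otimes(N-1)}$. One nitpick: for uniqueness, linear independence of $\{\Pi_\tau\}_{\tau\in\mathfrak{S}_{N+1}}$ can fail when $d<N+1$, but all you actually need is faithfulness of $\tau\mapsto\Pi_\tau$ (true for $d\geq 2$), which your explicit bijection $\hat\sigma=(0\,1)(1\,a)\,\sigma\,(1\,b)$ already delivers.
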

\begin{proof}
    Let $1 \leq a,b,c,d \leq N$, and $\sigma \in \Sigma_{a,b}$, then $\Pi_{(c \: a)} \; \Pi_\sigma^{\To} \; \Pi_{(d \: b)}$ is a partially transposed permutation operator from $\Sigma_{c,d}$. Let $\sigma' \in \Sigma_{1,1}$ be the permutation such that
    \begin{equation*}
        \Pi_{(1 \: a)} \; \Pi_\sigma^{\To} \; \Pi_{(1 \: b)} = \Pi_{\sigma'}^{\To}.
    \end{equation*}
    There is a unique $\hat{\sigma} \in \mathfrak{S}_{N - 1}$, such that
    \begin{equation*}
        \Pi_{(1 \: a)} \; \Pi_\sigma^{\To} \; \Pi_{(1 \: b)} = \omega_{(0,1)} \otimes \Pi_{\hat{\sigma}},
    \end{equation*}
    where the permutation operator is $\Pi_{\hat{\sigma}}$ is
    \begin{equation*}
        \Pi_{\hat{\sigma}} = \frac{1}{d} \cdot \Tr_{0,1} \big{[} \Pi_{(1 \: a)} \; \Pi_\sigma^{\To} \; \Pi_{(1 \: b)} \big{]}.
    \end{equation*}
\end{proof}

\begin{example} \label{ex:graphical1}
    Let~$\sigma = (0 \: 3 \: 2) \in \mathfrak{S}_4$. Its partial transposition operator
    \begin{equation*}
        \Pi_\sigma^{\To} = \smallPermutationOperator[1]{1/4,2/2,3/1,4/3},
    \end{equation*}
    can be decomposed into
    \begin{align*}
        \Pi_\sigma^{\To} &= \smallUnboxedPermutationOperator[1]{1/4,2/2,3/1,4/3} \\[0.5em]
        &= \smallUnboxedPermutationOperator{1/1,2/4,3/3,4/2} \hspace{-1.6em} \smallUnboxedPermutationOperator[1]{1/2,2/1,3/4,4/3} \hspace{-1.6em} \smallUnboxedPermutationOperator{1/1,2/3,3/2,4/4} \\[0.5em]
        &= \smallUnboxedPermutationOperator{1/1,2/4,3/3,4/2} \hspace{-1.6em} \smallUnboxedLabeledDiagram[1]{1/2,2/1,3/3,4/4}{\hat{\sigma}}{operator}{3/4} \hspace{-1.6em} \smallUnboxedPermutationOperator{1/1,2/3,3/2,4/4} \\[0.5em]
        &= \Pi_{(1 \: 3)} \: \big{(} \omega_{(0,1)} \otimes \Pi_{(0 \: 1)} \big{)} \; \Pi_{(1 \: 2)}.
    \end{align*}
\end{example}

Recall that for two integers $i,j$ we denote by $(i \, : \, j)$ the permutation $\mathfrak{S}_n$ defined by
\begin{equation*}
    (i \, : \, j) =
    \begin{cases}
        \big{(} (i - 1) \: (i - 2) \: \cdots \: j \big{)} &\text{if } 0 \leq j < i \leq n \\
        \big{(} i \: (i + 1) \: \cdots \: (j - 1) \big{)} &\text{if } 0 \leq i < j \leq n \\
        (0)(1) \cdots (n - 1) &\text{otherwise}
    \end{cases}.
\end{equation*}
That is, the $|i - j|$ cycle of $\mathfrak{S}_n$ between $i$ and $j$.
\begin{example}[$n = 3$] \label{ex:graphical2}
    The non-trivial~$(i \, : \, j)$-cycles of~$\mathfrak{S}_3$ are:
    \begin{align*}
        (0 \, : \, 2) &= \smallPermutationOperator{1/2,2/1,3/3}, &
        (2 \, : \, 0) &= \smallPermutationOperator{1/2,2/1,3/3}, \\[0.5em]
        (1 \, : \, 3) &= \smallPermutationOperator{1/1,2/3,3/2}, &
        (3 \, : \, 1) &= \smallPermutationOperator{1/1,2/3,3/2}, \\[0.5em]
        (0 \, : \, 3) &= \smallPermutationOperator{1/2,2/3,3/1}, &
        (3 \, : \, 0) &= \smallPermutationOperator{1/3,2/1,3/2}.
    \end{align*}
\end{example}

\begin{lemma} \label{lem:permutAction}
    Let $1 \leq a,b,c \leq N$, and let $\sigma \in \Sigma_{a,b}$. Let $\phi \in \mathcal{H}^{\otimes (N - 1)}$, then
    \begin{align*}
        \Pi_\sigma^{\To} &(\ket{\Omega}_{(0,c)} \otimes \ket{\phi}) \\
        &= \begin{cases}
            d \cdot \ket{\Omega}_{(0,a)} \otimes \big{(} \Pi_{(0 \, : \, (a - 1))} \circ \Pi_{\hat{\sigma}}\circ \Pi_{((b - 1) \, : \, 0)} \big{)} \ket{\phi} &\text{if } b = c \\[1em]
            \ket{\Omega}_{(0,a)} \otimes \big{(} \Pi_{(0 \, : \, (a - 1))} \circ \Pi_{\hat{\sigma}} \circ \Pi_{((b - 1) \, : \, 0)} \circ \Pi_{((c - 1) \, : \, (b - 1))} \big{)} \ket{\phi} &\text{if } b \neq c.
        \end{cases}
    \end{align*}
\end{lemma}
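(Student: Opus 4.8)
The plan is to reduce the computation to the factorization already established in Lemma~\ref{lem:premutationForm},
\[
    \Pi_\sigma^{\To} = \Pi_{(1 \: a)} \, \big( \omega_{(0,1)} \otimes \Pi_{\hat{\sigma}} \big) \, \Pi_{(1 \: b)},
\]
and then to apply the three factors to $\ket{\Omega}_{(0,c)} \otimes \ket{\phi}$, reading from right to left and tracking at each step the image of every tensor leg. First I would apply $\Pi_{(1 \: b)}$, which brings the $b$-th factor into position~$1$; then the middle operator $\omega_{(0,1)} \otimes \Pi_{\hat\sigma}$; and finally $\Pi_{(1 \: a)}$, which sends position~$1$ onto position~$a$, so that the surviving maximally entangled pair is relabelled as $\ket{\Omega}_{(0,a)}$, matching the left-hand factor of the claimed formula.

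The two identities that drive the computation are the normalization $\braket{\Omega}{\Omega} = d$ and the contraction rule for the maximally entangled state, namely that feeding $\ket{\Omega}_{(0,c)} \otimes \ket{\psi}$ into $\bra{\Omega}_{(0,1)}$ transports the leg occupying position~$1$ onto position~$c$ while deleting positions~$0$ and~$1$. These give a transparent origin for the two cases. When $b = c$, after the first swap the two legs of the entangled state sit exactly in positions~$0$ and~$1$, so $\omega_{(0,1)}$ acts on $\ket{\Omega}_{(0,1)}$ as multiplication by $\braket{\Omega}{\Omega} = d$; this is the source of the scalar factor~$d$. When $b \neq c$, position~$1$ instead holds a leg of $\ket{\phi}$, and $\omega_{(0,1)}$ performs a genuine contraction (no closed loop, hence no factor of~$d$) that reroutes this leg to position~$c$.

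It then remains to express each displacement of the $N-1$ body legs as an element of $\mathfrak{S}_{N-1}$ written in the canonical increasing order of positions; these reorderings are exactly the cyclic operators $(i \, : \, j)$. I would check that $\Pi_{(1 \: b)}$ produces the shift $\Pi_{((b-1) \, : \, 0)}$, that in the case $b \neq c$ the rerouting caused by $\omega_{(0,1)}$ produces the additional shift $\Pi_{((c-1) \, : \, (b-1))}$ --- which degenerates to the identity when $b = c$, by the convention that $(i \, : \, i)$ is the identity --- and that $\Pi_{(1 \: a)}$ produces $\Pi_{(0 \, : \, (a-1))}$, acting on the left of $\Pi_{\hat\sigma}$. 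Composing these in the order fixed by reading right to left recovers the operator $\Pi_{\hat\sigma(a,b,c)}$ and hence both branches of the statement.

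The real difficulty is not conceptual but combinatorial: maintaining a consistent labelling of the body legs as they are inserted, deleted and permuted, and verifying that the induced reorderings coincide with the prescribed $(i \, : \, j)$ cycles for \emph{every} configuration of $a, b, c$, including the degenerate ones ($a = 1$, $b = 1$, $c = 1$, or coincidences among them), where several cyclic factors collapse to the identity. To keep this manageable I would perform the computation in the Penrose graphical calculus, in which the loop responsible for the factor~$d$ and the wire reroutings responsible for the cyclic shifts can be read off directly from the diagram; Example~\ref{ex:graphical1} already exhibits the decomposition of Lemma~\ref{lem:premutationForm}, and the action on $\ket{\Omega}_{(0,c)}$ is obtained by capping that diagram with the entangled state and straightening the resulting wires.
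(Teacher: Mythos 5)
Your proposal is correct and follows essentially the same route as the paper's proof: starting from the factorization of Lemma~\ref{lem:premutationForm}, using $\omega\ket{\Omega} = d\cdot\ket{\Omega}$ in the case $b=c$ and the entangled-state transport rule in the case $b\neq c$, and identifying the induced relabellings of the body legs with the cycles $\Pi_{((b-1)\,:\,0)}$, $\Pi_{((c-1)\,:\,(b-1))}$ and $\Pi_{(0\,:\,(a-1))}$. The one streamlining the paper adds, which you may want to adopt, is to first commute $\big(\omega_{(0,1)}\otimes I^{\otimes(N-1)}\big)\,\Pi_{(1\:b)} = \Pi_{(1\:b)}\,\big(\omega_{(0,b)}\otimes I^{\otimes(N-1)}\big)$, so that the contraction acts directly on legs $(0,b)$ and the three shifts appear uniformly from three fixed identities, avoiding most of the case-by-case wire bookkeeping over the relative positions of $a$, $b$, $c$ that you defer to the graphical calculus.
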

\begin{proof}
    From Lemma \ref{lem:premutationForm} we can write $\Pi_\sigma^{\To} = \Pi_{(1 \: a)} \: (\omega_{(0,1)} \otimes \Pi_{\hat{\sigma}}) \; \Pi_{(1 \: b)}$, or equivalently
    \begin{equation*}
        \Pi_\sigma^{\To} = \Pi_{(1 \: a)} \: (I^{\otimes 2} \otimes \Pi_{\hat{\sigma}}) \, \big{(} \omega_{(0,1)} \otimes I^{\otimes (N - 1)} \big{)} \; \Pi_{(1 \: b)}.
    \end{equation*}
    Together with $\omega \ket{\Omega} = d \cdot \ket{\Omega}$, we have the following relations:
    \begin{equation*}
        \big{(} \omega_{(0,1)} \otimes I^{\otimes (N - 1)} \big{)} \; \Pi_{(1 \: b)} = \Pi_{(1 \: b)} \; \big{(} \omega_{(0,b)} \otimes I^{\otimes (N - 1)} \big{)},
    \end{equation*}
    and
    \begin{equation*}
        \Pi_{(1 \: b)} \big{(} \ket{\Omega}_{(0,b)} \otimes \ket{\phi} \big{)} = \ket{\Omega} \otimes \Pi_{((b - 1) \, : \, 0)} \ket{\phi}.
    \end{equation*}
    If $b = c$, then
    \begin{align*}
        \big{(} \omega_{(0,1)} \otimes I^{\otimes (N - 1)} \big{)} \; \Pi_{(1 \: b)} \big{(} \ket{\Omega}_{(0,c)} \otimes \ket{\phi} \big{)} &= \Pi_{(1 \: b)} \; \big{(} \omega_{(0,b)} \otimes I^{\otimes (N - 1)} \big{)} \big{(} \ket{\Omega}_{(0,c)} \otimes \ket{\phi} \big{)} \\
        &= \Pi_{(1 \: b)} \big{(} \omega_{(0,b)} \ket{\Omega}_{(0,b)} \otimes \ket{\phi} \big{)} \\
        &= d \cdot \Pi_{(1 \: b)} \big{(} \ket{\Omega}_{(0,b)} \otimes \ket{\phi} \big{)} \\
        &= d \cdot \big{(} \ket{\Omega} \otimes \Pi_{((b - 1) \, : \, 0)} \ket{\phi} \big{)},
    \end{align*}
    such that finally $\Pi_\sigma^{\To} \big{(} \ket{\Omega}_{(0,c)} \otimes \ket{\phi} \big{)} = d \cdot \Pi_{(1 \: a)} \Big{[} \ket{\Omega}_{(0,1)} \otimes \big{(} \Pi_{\hat{\sigma}} \circ \Pi_{((b - 1) \, : \, 0)} \circ \big{)} \ket{\phi} \Big{]}$.\\ If $b \neq c$, we have the following relation:
    \begin{equation*}
        \big{(} \omega_{(0,b)} \otimes I^{\otimes (N - 1)} \big{)} \big{(} \ket{\Omega}_{(0,c)} \otimes \ket{\phi} \big{)} = \ket{\Omega}_{(0,b)} \otimes \Pi_{((c - 1) \, : \, (b - 1))} \ket{\phi}.
    \end{equation*}
    Then
        \begin{align*}
        \big{(} \omega_{(0,1)} \otimes I^{\otimes (N - 1)} \big{)} \; \Pi_{(1 \: b)} \big{(} \ket{\Omega}_{(0,c)} \otimes \ket{\phi} \big{)} &= \Pi_{(1 \: b)} \; \big{(} \omega_{(0,b)} \otimes I^{\otimes (N - 1)} \big{)} \big{(} \ket{\Omega}_{(0,c)} \otimes \ket{\phi} \big{)} \\
        &= \Pi_{(1 \: b)} \big{(} \ket{\Omega}_{(0,b)} \otimes \Pi_{((c - 1) \, : \, (b - 1))} \ket{\phi} \big{)} \\
        &= \ket{\Omega}_{(0,1)} \otimes \big{(} \Pi_{((b - 1) \, : \, 0)} \circ \Pi_{((c - 1) \, : \, (b - 1))} \big{)} \ket{\phi},
    \end{align*}
    such that finally~$$\sigma^{\To}(\ket{\Omega}_{(0,c)} \otimes \ket{\phi}) = \Pi_{(1 \: a)} \Big{[} \ket{\Omega}_{(0,1)} \otimes \big{(} \Pi_{\hat{\sigma}} \circ \Pi_{((b - 1) \, : \, 0)} \circ \Pi_{((c - 1) \, : \, (b - 1))} \big{)} \ket{\phi} \Big{]}.$$
    To conclude, from the relation
    \begin{equation*}
        \Pi_{(1 \: a)} \big{(} \ket{\Omega}_{(0,1)} \otimes \ket{\phi} \big{)} = \ket{\Omega}_{(0,a)} \otimes \Pi_{(0 \, : \, (a - 1))} \ket{\phi},
    \end{equation*}
    we obtain
   \begin{align*}
        \Pi_\sigma^{\To}(\ket{\Omega}_{(0,c)} \otimes \ket{\phi}) =\\=&
        \begin{cases}
            d \cdot \ket{\Omega}_{(0,a)} \otimes \big{(} \Pi_{(0 \, : \, (a - 1))} \circ \Pi_{\hat{\sigma}}\circ \Pi_{((b - 1) \, : \, 0)} \big{)} \ket{\phi} &\text{if } b = c \\[1em]
            \ket{\Omega}_{(0,a)} \otimes \big{(} \Pi_{(0 \, : \, (a - 1))} \circ \Pi_{\hat{\sigma}} \circ \Pi_{((b - 1) \, : \, 0)} \circ \Pi_{((c - 1) \, : \, (b - 1))} \big{)} \ket{\phi} &\text{if } b \neq c.
        \end{cases}
    \end{align*}
\end{proof}
\begin{example} \label{ex:graphical3}
    Let~$\sigma = (0 \: 3 \: 1 \: 2) \in \mathfrak{S}_4$, and let~$\phi \in \mathcal{H}^{\otimes 2}$, then
    \begin{align*}
        \Pi_\sigma^{\To} \big{(} \ket{\Omega}_{(0,1)} \otimes \ket{\phi} \big{)} &= \smallUnboxedPermutationOperator[1]{1/4,2/3,3/1,4/2} \hspace{-1.6em} \smallUnboxedLabeledDiagram[1]{1/2,2/0,3/3,4/4}{\phi}{vector}{3/4} \\[0.5em]
        &= \smallUnboxedPermutationOperator[1]{1/4,2/2,3/3,4/0} \hspace{-1.6em} \smallUnboxedPermutationOperator{1/0,2/3,3/2,4/0} \hspace{-1.6em} \smallUnboxedLabeledDiagram{1/0,2/2,3/3,4/0}{\phi}{vector}{2/3} \\[0.5em]
        &= \ket{\Omega}_{(0,3)} \otimes \Pi_{(0 \: 1)} \ket{\phi}.
    \end{align*}
\end{example}

In order to simplify the next equations we use the notations~$\Pi_{\hat{\sigma}(a,b,c)}$ and~$\Pi_{b,c}$, defined by:
\begin{equation*}
    \Pi_{\hat{\sigma}(a,b,c)} = \Pi_{(0 \, : \, (a - 1))} \circ \Pi_{\hat{\sigma}} \circ \Pi_{b,c},
\end{equation*}
and
\begin{equation*}
    \Pi_{b,c} = 
    \begin{cases}
        \Pi_{((b - 1) \, : \, 0)} &\text{if } b = c \\
        \Pi_{((b - 1) \, : \, 0)} \circ \Pi_{((c - 1) \, : \, (b - 1))} &\text{if } b \neq c
    \end{cases}.
\end{equation*}

The following lemma concerns the image and the kernel of operators  $\Pi_\sigma^{\To}$.
\begin{lemma} \label{lem:ImKer}
    For any $1 \leq a,b \leq N$ and any $\sigma \in \Sigma_{a,b}$, 
    \begin{align*}
	    \Ima \Pi_\sigma^{\To} &= \Span \Big{\{} \ket{\Omega}_{(0,a)} \otimes \ket{\phi} \; \Big{|} \; \phi \in \mathcal{H}^{\otimes (N - 1)} \Big{\}} \\
	    \Ker \Pi_\sigma^{\To} &\supseteq \Span {\Big{\{} \ket{\Omega}_{(0,k)} \otimes \ket{\phi} \; \Big{|} \; 1 \leq k \leq N, \; \phi \in \mathcal{H}^{\otimes (N - 1)} \Big{\}}}^\perp.
    \end{align*}
\end{lemma}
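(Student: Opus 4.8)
The plan is to reduce both claims to the factorization of Lemma~\ref{lem:premutationForm},
$$\Pi_\sigma^{\To} = \Pi_{(1 \: a)} \, (\omega_{(0,1)} \otimes \Pi_{\hat{\sigma}}) \, \Pi_{(1 \: b)},$$
and then to exploit that the single rank-deficient factor is $\omega_{(0,1)} = \ketbra{\Omega}{\Omega}_{(0,1)}$, whose range is the line $\Span\{\ket{\Omega}_{(0,1)}\}$, whereas $\Pi_{(1\:a)}$, $\Pi_{(1\:b)}$ and $\Pi_{\hat\sigma}$ are all invertible.

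For the image, I would first use that $\Pi_{(1\:b)}$ is surjective to get $\Ima \Pi_\sigma^{\To} = \Pi_{(1\:a)}\big(\Ima(\omega_{(0,1)} \otimes \Pi_{\hat\sigma})\big)$. Since $\Pi_{\hat\sigma}$ is invertible and $\omega_{(0,1)}$ has range $\Span\{\ket{\Omega}_{(0,1)}\}$, this intermediate range is exactly $\Span\{\ket{\Omega}_{(0,1)} \otimes \phi : \phi \in \mathcal{H}^{\otimes(N-1)}\}$. Applying the identity $\Pi_{(1\:a)}(\ket{\Omega}_{(0,1)} \otimes \phi) = \ket{\Omega}_{(0,a)} \otimes \Pi_{(0\,:\,(a-1))}\phi$ from the proof of Lemma~\ref{lem:permutAction}, together with the fact that $\Pi_{(0\,:\,(a-1))}$ is a bijection of $\mathcal{H}^{\otimes(N-1)}$, transports this range onto $\Span\{\ket{\Omega}_{(0,a)} \otimes \psi : \psi \in \mathcal{H}^{\otimes(N-1)}\}$, which is the asserted image.

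For the kernel the situation is slightly different, because $\Pi_\sigma^{\To}$ is not self-adjoint in general and so the kernel cannot be read off from the image by orthogonality. Instead I would push the rank-one factor all the way to the right using the commutation relation $(\omega_{(0,1)} \otimes I^{\otimes(N-1)})\Pi_{(1\:b)} = \Pi_{(1\:b)}(\omega_{(0,b)} \otimes I^{\otimes(N-1)})$ recorded in the proof of Lemma~\ref{lem:permutAction}, which rewrites
$$\Pi_\sigma^{\To} = \Pi_{(1\:a)}(I^{\otimes 2} \otimes \Pi_{\hat\sigma})\,\Pi_{(1\:b)}\,(\omega_{(0,b)} \otimes I^{\otimes(N-1)}).$$
The left factor is a product of invertible operators, so in fact $\Ker \Pi_\sigma^{\To} = \Ker(\omega_{(0,b)} \otimes I^{\otimes(N-1)}) = \Span\{\ket{\Omega}_{(0,b)} \otimes \phi : \phi \in \mathcal{H}^{\otimes(N-1)}\}^{\perp}$. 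Since the spanning set on the right is contained in the larger family $\{\ket{\Omega}_{(0,k)} \otimes \phi : 1 \leq k \leq N,\ \phi\}$, passing to orthogonal complements enlarges the orthogonal subspace and yields precisely the claimed inclusion.

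Once the factorization is in place the arguments are essentially routine; I expect the only genuinely delicate point to be the kernel, where the absence of self-adjointness forces one to isolate $\omega_{(0,b)}$ as the rightmost factor rather than dualize the image computation. I do not foresee further obstacles.
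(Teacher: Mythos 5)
Your proposal is correct. It rests on the same key ingredient as the paper's own proof, namely the factorization $\Pi_\sigma^{\To} = \Pi_{(1\:a)}\,(\omega_{(0,1)} \otimes \Pi_{\hat\sigma})\,\Pi_{(1\:b)}$ of Lemma~\ref{lem:premutationForm} together with the intertwining identities recorded in the proof of Lemma~\ref{lem:permutAction}, but it draws the two conclusions differently. For the image, the paper proves the reverse inclusion by exhibiting an explicit preimage $u = \tfrac{1}{d}\ket{\Omega}_{(0,b)} \otimes \Pi^{-1}_{\hat{\sigma},a,b,c}\ket{\phi}$ and evaluating $\Pi_\sigma^{\To}(u)$ via Lemma~\ref{lem:permutAction}, whereas you transport the range of the rank-one factor $\omega_{(0,1)}$ through the surrounding invertible factors; both are sound, and yours avoids invoking the explicit action formula with an inverse. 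The more substantive difference is in the kernel: the paper dualizes, using $\Ker M = {(\Ima M^*)}^\perp$ together with the fact that the adjoint of $\Pi_\sigma^{\To}$ is the partially transposed operator of a permutation in $\Sigma_{b,a}$, so that the kernel is exactly ${\Span\{\ket{\Omega}_{(0,b)} \otimes \ket{\phi}\}}^\perp$; you instead commute $\omega_{(0,1)} \otimes I^{\otimes(N-1)}$ past $\Pi_{(1\:b)}$ so that the rank-deficient factor sits rightmost, giving $\Ker \Pi_\sigma^{\To} = \Ker\big(\omega_{(0,b)} \otimes I^{\otimes(N-1)}\big)$ directly, which is the same subspace since $\omega_{(0,b)} \otimes I^{\otimes(N-1)}$ is self-adjoint and rank-deficient only on the $(0,b)$ factors. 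Your observation that one cannot read the kernel off the image by orthogonality (because $\Pi_\sigma^{\To}$ fails to be self-adjoint when $a \neq b$) is precisely the issue the paper's adjoint argument is designed to handle; your right-factoring resolves it without mentioning the adjoint at all, which is arguably more elementary. Both routes identify the kernel exactly, and the stated inclusion then follows in either case by reversing an inclusion of spans under orthogonal complementation.
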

\begin{proof}
    Let~$1 \leq a,b \leq N$, let~$\sigma \in \Sigma_{a,b}$, and let~$v \in \Ima \Pi_\sigma^{\To}$, then~$v = \ket{\Omega}_{(0,a)} \otimes \ket{\phi}$ with~$\phi \in \mathcal{H}^{\otimes (N - 1)}$, from Lemma \ref{lem:premutationForm}. Let~$\phi \in \mathcal{H}^{\otimes (N - 1)}_d$, and define~$u = \frac{1}{d} \ket{\Omega}_{(0,b)} \otimes \Pi^{\shortminus 1}_{\hat{\sigma},a,b,c} \ket{\phi}$. Then from Lemma \ref{lem:permutAction},
    \begin{equation*}
         \Pi_\sigma^{\To}(u) = \ket{\Omega}_{(0,a)} \otimes \ket{\phi}.
    \end{equation*}
    Since on finite-dimensional spaces, the kernel of a linear map~$M$ is equal to the orthogonal complement of the image of adjoint~$M^*$, i.e.~$\Ker M = {\big{(} \Ima M^* \big{)}}^\perp$, we have
    \begin{align*}
        \Ker \Pi_\sigma^{\To} &= \Span {\Big{\{} \ket{\Omega}_{(1,b)} \otimes \ket{\phi} \; \Big{|} \; \phi \in \mathcal{H}^{\otimes (N - 1)} \Big{\}}}^\perp \\
        &\supseteq \Span {\Big{\{} \ket{\Omega}_{(1,k)} \otimes \ket{\phi} \; \Big{|} \; 1 \leq k \leq N,\; \phi \in \mathcal{H}^{\otimes (N - 1)} \Big{\}}}^\perp
    \end{align*}
\end{proof}

\subsection{Partial trace of partially transposed permutation operators}

\begin{lemma} \label{lem:permuPartialTrace}
    Let~$1 \leq a,b \leq N$, then
    \begin{align*}
        \Tr_{\scriptscriptstyle [\![ 0,N ]\!] \setminus \{0\}} \bigg{[} \sum_{\sigma \in \Sigma_{a,a}} \Pi_\sigma^{\To} \bigg{]} &= (N - 1)! \; \Tr \Big{[} P^+_{\mathfrak{S}_{N - 1}} \Big{]} \cdot I \\
        \Tr_{\scriptscriptstyle [\![ 0,N ]\!] \setminus \{0\}} \bigg{[} \sum_{\sigma \in \Sigma_{a,b}} \Pi_\sigma^{\To} \bigg{]} &= \frac{(N - 1)!}{d} \Tr \Big{[} P^+_{\mathfrak{S}_{N - 1}} \Big{]} \cdot I,
    \end{align*}
    where the orthogonal projector onto the trivial representation subspace of~$\mathfrak{S}_{N - 1}$ is defined by~$P^+_{\mathfrak{S}_{N - 1}} = \frac{1}{(N - 1)!} \sum_{\sigma \in \mathfrak{S}_{N - 1}} \Pi_\sigma$.
\end{lemma}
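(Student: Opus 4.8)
The plan is to collapse the sum over each $\Sigma_{a,b}$ into a single operator built from the symmetric projector, then to argue that the partial trace is automatically a scalar matrix and to pin down that scalar by a single full-trace computation.

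First I would use Lemma~\ref{lem:premutationForm}: each $\sigma\in\Sigma_{a,b}$ satisfies $\Pi_\sigma^{\To}=\Pi_{(1\,a)}\,(\omega_{(0,1)}\otimes\Pi_{\hat\sigma})\,\Pi_{(1\,b)}$ for a unique $\hat\sigma\in\mathfrak{S}_{N-1}$, and $\sigma\mapsto\hat\sigma$ is a bijection onto $\mathfrak{S}_{N-1}$. Since the transpositions $(1\,a),(1\,b)$ and $\omega_{(0,1)}$ do not depend on $\sigma$, summing and using $\sum_{\hat\sigma\in\mathfrak{S}_{N-1}}\Pi_{\hat\sigma}=(N-1)!\,P^+_{\mathfrak{S}_{N-1}}$ gives
$$\sum_{\sigma\in\Sigma_{a,b}}\Pi_\sigma^{\To}=(N-1)!\,\Pi_{(1\,a)}\big(\omega_{(0,1)}\otimes P^+_{\mathfrak{S}_{N-1}}\big)\Pi_{(1\,b)}.$$
Next, each $\Pi_\sigma$ commutes with $U^{\otimes(N+1)}$, hence $\Pi_\sigma^{\To}$ commutes with $\bar U\otimes U^{\otimes N}$; tracing out sites $1,\ldots,N$ then shows that $\Tr_{\llbracket 0,N\rrbracket\setminus\{0\}}[\Pi_\sigma^{\To}]$ commutes with every $\bar U$, so it is a multiple of $I$. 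Thus the left-hand sides equal $c_{a,b}\cdot I$, and since $\Tr[c_{a,b}I]=d\,c_{a,b}$ it remains only to compute the full trace. Using $\Tr[\Pi^{\To}]=\Tr[\Pi]$, the display above, and cyclicity of the full trace,
$$c_{a,b}=\frac1d\,\Tr\Big[\sum_{\sigma\in\Sigma_{a,b}}\Pi_\sigma^{\To}\Big]=\frac{(N-1)!}{d}\,\Tr\big[(\omega_{(0,1)}\otimes P^+_{\mathfrak{S}_{N-1}})\,\Pi_{(1\,b)(1\,a)}\big].$$

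Then I would eliminate site $0$. Writing $\pi=(1\,b)(1\,a)$ (which fixes $0$) and expanding $\omega_{(0,1)}=\sum_{i,i'}\ketbra{i}{i'}_{(0)}\otimes\ketbra{i}{i'}_{(1)}$, tracing out site $0$ forces $i=i'$ and replaces $\omega$ by $I_{(1)}$, so that
$$\Tr\big[(\omega_{(0,1)}\otimes P^+_{\mathfrak{S}_{N-1}})\,\Pi_{\pi}\big]=\Tr\big[(I_{(1)}\otimes P^+_{\mathfrak{S}_{N-1}})\,\Pi_{\pi'}\big],$$
where $\pi'\in\mathfrak{S}_{\{1,\ldots,N\}}$ is the restriction of $\pi$, the projector $P^+_{\mathfrak{S}_{N-1}}$ acts on sites $\{2,\ldots,N\}$, and the remaining trace is over sites $1,\ldots,N$. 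For $a=b$ one has $\pi'=\id$ and the right-hand side equals $\Tr[I_{(1)}\otimes P^+_{\mathfrak{S}_{N-1}}]=d\,\Tr[P^+_{\mathfrak{S}_{N-1}}]$, giving $c_{a,a}=(N-1)!\,\Tr[P^+_{\mathfrak{S}_{N-1}}]$. For $a\neq b$, $\pi'$ is either a transposition $(1\,a)$ or $(1\,b)$ (when $a=1$ or $b=1$) or the $3$-cycle $(1\,a\,b)$; in the last case I would write $(1\,a\,b)=(1\,a)(a\,b)$ and use the absorbing property $\Pi_{(a\,b)}(I_{(1)}\otimes P^+_{\mathfrak{S}_{N-1}})=I_{(1)}\otimes P^+_{\mathfrak{S}_{N-1}}$ (valid since $(a\,b)$ permutes $\{2,\ldots,N\}$), together with $\Tr[Q\Pi]=\Tr[Q\Pi Q]$ for the projection $Q=I_{(1)}\otimes P^+_{\mathfrak{S}_{N-1}}$, to reduce to the transposition case. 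Finally $\Tr[(I_{(1)}\otimes P^+_{\mathfrak{S}_{N-1}})\,\Pi_{(1\,2)}]=\Tr[P^+_{\mathfrak{S}_{N-1}}]$: tracing out $\{3,\ldots,N\}$ turns $P^+_{\mathfrak{S}_{N-1}}$ into $\tfrac1d\Tr[P^+_{\mathfrak{S}_{N-1}}]\,I_{(2)}$ (the partial trace of the symmetric projector is a scalar by unitary invariance), and $\Tr_{\{1,2\}}[\tfrac1d\Tr[P^+_{\mathfrak{S}_{N-1}}]\,\Pi_{(1\,2)}]=\Tr[P^+_{\mathfrak{S}_{N-1}}]$ since $\Tr_{\{1,2\}}[\Pi_{(1\,2)}]=d$. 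Hence $c_{a,b}=\tfrac{(N-1)!}{d}\,\Tr[P^+_{\mathfrak{S}_{N-1}}]$.

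The main obstacle is precisely the off-diagonal case. Unlike the diagonal situation, where $\Pi_{(1\,a)}(\,\cdot\,)\Pi_{(1\,a)}$ is a harmless conjugation, the product $(1\,b)(1\,a)$ is a genuine $3$-cycle mixing site $1$ (carried by $\omega$) with two sites inside the symmetric block, so no naive conjugation collapses it. The decisive ingredient is the absorbing property of $P^+_{\mathfrak{S}_{N-1}}$, which swallows the $(a\,b)$-factor and produces exactly the extra factor $1/d$ that distinguishes the two formulas.
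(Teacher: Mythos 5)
Your proof is correct, and it reaches the two trace identities by a genuinely different route than the paper. The shared ingredients are Lemma~\ref{lem:premutationForm} and the unitary-invariance/Schur argument showing that $\Tr_{\scriptscriptstyle [\![ 0,N ]\!] \setminus \{0\}}$ of these operators is a scalar multiple of $I$; the paper, however, invokes that argument only for the off-diagonal case $\Sigma_{a,b}$, handling $\Sigma_{a,a}$ by a direct index computation, whereas you treat both cases uniformly. The real divergence is in how the scalar is computed. The paper never collapses the sum into the symmetric projector: it evaluates $\Tr\big[\sum_{\sigma\in\Sigma_{a,b}}\Pi_\sigma\big]=\sum_\sigma d^{\#\sigma}$ combinatorially, noting that for $\sigma\in\Sigma_{a,a}$ the point $a$ sits in the $2$-cycle $(0\,a)$, so composing with $(a\,b)$ merges two cycles and $\#[\sigma\circ(a\,b)]=\#\sigma-1$, which is where its factor $1/d$ comes from. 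You instead sum first, using that $\sigma\mapsto\hat\sigma$ is a bijection $\Sigma_{a,b}\to\mathfrak{S}_{N-1}$ (injectivity by faithfulness plus $|\Sigma_{a,b}|=(N-1)!$), to get the closed form $(N-1)!\,\Pi_{(1\,a)}\big(\omega_{(0,1)}\otimes P^+_{\mathfrak{S}_{N-1}}\big)\Pi_{(1\,b)}$, and then extract the $1/d$ by operator identities: tracing $\omega$ over site $0$, the absorbing property $\Pi_{(a\,b)}\big(I\otimes P^+_{\mathfrak{S}_{N-1}}\big)=I\otimes P^+_{\mathfrak{S}_{N-1}}$, and the comparison $\Tr\big[(I\otimes P^+_{\mathfrak{S}_{N-1}})\Pi_{(1\,2)}\big]=\Tr\big[P^+_{\mathfrak{S}_{N-1}}\big]$ versus $\Tr\big[I\otimes P^+_{\mathfrak{S}_{N-1}}\big]=d\,\Tr\big[P^+_{\mathfrak{S}_{N-1}}\big]$. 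Your route buys a uniform treatment and the collapsed-sum formula as a reusable byproduct; the paper's cycle count is more elementary and has no case analysis. Two small points you should make explicit: the reduction of a general transposition $(1\,k)$ to $(1\,2)$ needs a one-line conjugation by a permutation of $\{2,\ldots,N\}$ (which leaves $I\otimes P^+_{\mathfrak{S}_{N-1}}$ invariant), and the step $\Tr_{\{3,\ldots,N\}}\big[A(B\otimes I)\big]=\Tr_{\{3,\ldots,N\}}[A]\,B$ used when you trace out the symmetric block deserves a word of justification since $\Pi_{(1\,2)}$ and $P^+_{\mathfrak{S}_{N-1}}$ overlap on site $2$; both are routine and do not affect correctness.
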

\begin{proof}
    For the first equation we have from Lemma \ref{lem:premutationForm}
    \begin{align*}
        \Tr_{\scriptscriptstyle [\![ 0,N ]\!] \setminus \{0\}} &\bigg{[} \sum_{\sigma \in \Sigma_{a,a}} \Pi_\sigma^{\To} \bigg{]} = \Tr_{\scriptscriptstyle [\![ 0,N ]\!] \setminus \{0\}} \bigg{[} \sum_{\sigma \in \mathfrak{S}_{N - 1}} \Pi_{(1 \: a)} \: (\omega \otimes \Pi_{\hat{\sigma}}) \; \Pi_{(1 \: a)} \bigg{]} \\
        &= \sum_{\sigma \in \mathfrak{S}_{N - 1}} \Tr_{\scriptscriptstyle [\![ 0,N ]\!] \setminus \{0\}} \Big{[} \Pi_{(1 \: a)} \: (\omega \otimes \Pi_{\hat{\sigma}}) \; \Pi_{(1 \: a)} \Big{]} \\
        &= \sum_{\sigma \in \mathfrak{S}_{N - 1}} \Tr_{\scriptscriptstyle [\![ 0,N ]\!] \setminus \{0\}} \Big{[} (I \otimes \Pi_{(0 \: (a - 1))}) \: (\omega \otimes \Pi_{\hat{\sigma}}) \; (I \otimes \Pi_{(0 \: (a - 1))}) \Big{]} \\
        &= \sum_{\sigma \in \mathfrak{S}_{N - 1}} \sum_{i,j=1}^d \Tr_{\scriptscriptstyle [\![ 0,N ]\!] \setminus \{0\}} \Big{[} (I \otimes \Pi_{(0 \: (a - 1))}) \: (\ketbra{i}{j} \otimes \ketbra{i}{j} \otimes \Pi_{\hat{\sigma}}) \; (I \otimes \Pi_{(0 \: (a - 1))}) \Big{]} \\
        &= \sum_{\sigma \in \mathfrak{S}_{N - 1}} \sum_{i,j=1}^d \Tr_{\scriptscriptstyle [\![ 0,N ]\!] \setminus \{0\}} \Big{[} (I \ketbra{i}{j} I) \otimes \big{(} \Pi_{(0 \: (a - 1))} \: (\ketbra{i}{j} \otimes \Pi_{\hat{\sigma}}) \: \Pi_{(0 \: (a - 1))} \big{)} \Big{]} \\
        &= \sum_{\sigma \in \mathfrak{S}_{N - 1}} \sum_{i,j=1}^d \Tr \Big{[} \Pi_{(0 \: (a - 1))} \: (\ketbra{i}{j} \otimes \Pi_{\hat{\sigma}}) \; \Pi_{(0 \: (a - 1))} \Big{]} \cdot \ketbra{i}{j} \\
        &= \sum_{\sigma \in \mathfrak{S}_{N - 1}} \sum_{i,j=1}^d \Tr \Big{[} (\ketbra{i}{j} \otimes \Pi_{\hat{\sigma}}) \; (\Pi_{(0 \: (a - 1))} \circ \Pi_{(0 \: (a - 1))}) \Big{]} \cdot \ketbra{i}{j} \\
        &= \sum_{\sigma \in \mathfrak{S}_{N - 1}} \sum_{i,j=1}^d \braket{i}{j} \otimes \Tr \big{[} \Pi_{\hat{\sigma}} \big{]} \cdot \ketbra{i}{j} \\
        &= \sum_{\sigma \in \mathfrak{S}_{N - 1}} \Tr \big{[} \Pi_{\hat{\sigma}} \big{]} \cdot I, \\
    \end{align*}
    where~$\Pi_{(0 \: (a - 1))}$ is a permutation operator of~$\mathfrak{S}_N$. For the second equation, since the partial transposition is a linear operator, we have the equality
    \begin{equation*}
        \Tr_{\scriptscriptstyle [\![ 0,N ]\!] \setminus \{0\}} \bigg{[} \sum_{\sigma \in \Sigma_{a,b}} \Pi_\sigma^{\To} \bigg{]} = {\Bigg{(} \Tr_{\scriptscriptstyle [\![ 0,N ]\!] \setminus \{0\}} \bigg{[} \sum_{\sigma \in \Sigma_{a,b}} \Pi_\sigma \bigg{]} \Bigg{)}}^{\To}.
    \end{equation*}
    For any~$\sigma \in \Sigma_{a,b}$ the relation
    \begin{equation*}
        (\bar{U} \otimes U^{\otimes N}) \: \Pi_\sigma^{\To} \: (\bar{U} \otimes U^{\otimes N}) = \Pi_\sigma^{\To}
    \end{equation*}
    holds for all~$U \in \mathcal{U}(d)$, such that~$\Tr_{\scriptscriptstyle [\![ 0,N ]\!] \setminus \{0\}} \Big{[} \sum_{\sigma \in \Sigma_{a,b}} \Pi_\sigma^{\To} \Big{]}$ is a multiple of the identity, and
    \begin{equation*}
        \Tr_{\scriptscriptstyle [\![ 0,N ]\!] \setminus \{0\}} \bigg{[} \sum_{\sigma \in \Sigma_{a,b}} \Pi_\sigma^{\To} \bigg{]} = \Tr_{\scriptscriptstyle [\![ 0,N ]\!] \setminus \{0\}} \bigg{[} \sum_{\sigma \in \Sigma_{a,b}} \Pi_\sigma \bigg{]} = c \cdot I,
    \end{equation*}
    with~$c = \frac{1}{d} \Tr \Big{[} \sum_{\sigma \in \Sigma_{a,b}} \Pi_\sigma \Big{]}$. For a permutation~$\sigma \in \mathfrak{S}_n$ we write~$\# \sigma$ the number of disjoint cycles of~$\sigma$. Then
    \begin{align*}
        \Tr \Big{[} \sum_{\sigma \in \Sigma_{a,b}} \Pi_\sigma \Big{]} &= \sum_{\sigma \in \Sigma_{a,b}} d^{\# \sigma} \\
        &= \sum_{\sigma \in \Sigma_{a,a}} d^{\# [\sigma \circ (a \: b)]}.
    \end{align*}
    Let~$\sigma \in \Sigma_{a,a}$ and write~$\sigma = c_1 \circ \cdots \circ c_k$ the decomposition of~$\sigma$ in disjoint cycles~$c_i = (s^i_1 \: \cdots \: s^i_{l_i})$. If it exists~$i \in [k]$ such that~$a,b \in c_i$ then~$c_i \circ (a \: b)$ can be decomposed in two disjoint cycles. Otherwise, if there exist~$i,j \in [k]$ such that~$a \in c_i$ and~$b \in c_j$ then~$c_i \circ c_j \circ (a \: b)$ can be decomposed in one disjoint cycle. Finally
    \begin{equation*}
        \# \big{[} \sigma \circ (a \: b) \big{]} =
        \begin{cases}
            \# \sigma + 1 &\text{if } \exists i\in [k] \text{ s.t. } a,b \in c_i \\
            \# \sigma - 1 &\text{otherwise}
        \end{cases}.
    \end{equation*}
    But since~$\sigma \in \Sigma_{a,a}$ and~$b \neq a$, in the decomposition of~$\sigma$ in disjoint cycles,~$a$ is in the cycle~$(0 \: a)$, and~$\# \big{[} \sigma \circ (a \: b) \big{]} = \# \sigma - 1$. That is
    \begin{equation*}
        \Tr \Big{[} \sum_{\sigma \in \Sigma_{a,b}} \Pi_\sigma \Big{]} = \frac{1}{d} \Tr \Big{[} \sum_{\sigma \in \Sigma_{a,a}} \Pi_\sigma \Big{]}.
    \end{equation*}
\end{proof}

\subsection{Scalar product of vectors in \texorpdfstring{$\mathcal{V}^\lambda$}{V lambda}}\label{app:scalarproduct}

\begin{lemma}
    Let~$\lambda_1, \lambda_2$ be two distinct irreducible representations of~$\mathfrak{S}_{N - 1}$, let~$\mathbf{v}_1, \mathbf{v}_2$ be two normalized vectors in the irreducible subspace~$\lambda_1$ and~$\lambda_2$ of~$\mathfrak{S}_{N - 1}$, and let~$1 \leq k,l \leq N$. Then
    \begin{equation*}
        \braket{\Omega_{(0,k)} \otimes\mathbf{v}_1}{\Omega_{(0,l)} \otimes\mathbf{v}_2 } =
        \begin{cases}
            d \cdot \braket{\mathbf{v}_1}{\mathbf{v}_2} &\text{if } k = l \\
            \bra{\mathbf{v}_1} \Pi_{((l - 1) \, : \, (k - 1))} \ket{\mathbf{v}_2} &\text{if } k \neq l.
        \end{cases}
    \end{equation*}
    As a consequence
    \begin{align*}
        \braket{\Omega_{(0,k)} \otimes\mathbf{v}_1}{\Omega_{(0,k)} \otimes\mathbf{v}_1} &= d \\
        \braket{\Omega_{(0,k)} \otimes\mathbf{v}_1}{\Omega_{(0,l)} \otimes\mathbf{v}_1} &= \bra{\mathbf{v}_1} \Pi_{((l - 1) \, : \, (k - 1))} \ket{\mathbf{v}_1},
    \end{align*}
    and
    \begin{equation*}
        \braket{\Omega_{(0,k)} \otimes\mathbf{v}_1}{\Omega_{(0,l)} \otimes\mathbf{v}_2} = 0.
    \end{equation*}
    In particular the vector spaces $\Span \mathcal{V}^\lambda$ are in orthogonal direct sum.
\end{lemma}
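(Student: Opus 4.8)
The plan is to avoid expanding the maximally entangled states in coordinates, and instead reduce everything to the already-computed action of the partially transposed transpositions $\Pi^{\To}_{(0\,k)} = \omega_{(0,k)} \otimes I^{\otimes(N-1)}$. First I would dispatch the diagonal case $k=l$ directly: the two vectors carry $\ket{\Omega}$ on the \emph{same} pair of tensor legs $(0,k)$ and the factors $\mathbf{v}_1,\mathbf{v}_2$ on the \emph{same} complementary legs $\llbracket 1,N\rrbracket \setminus \{k\}$, so the overlap factorizes as $\braket{\Omega_{(0,k)}}{\Omega_{(0,k)}}\cdot\braket{\mathbf{v}_1}{\mathbf{v}_2}$, and $\braket{\Omega}{\Omega}=d$ yields $d\braket{\mathbf{v}_1}{\mathbf{v}_2}$. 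This identity needs no irrep hypothesis and will be reused below.

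For the off-diagonal case $k\neq l$, I would insert the Hermitian operator $\tfrac1d\,\omega_{(0,k)}\otimes I^{\otimes(N-1)}$, which fixes the bra since $\omega_{(0,k)}\ket{\Omega}_{(0,k)} = d\ket{\Omega}_{(0,k)}$, giving
$$\braket{\Omega_{(0,k)}\otimes\mathbf{v}_1}{\Omega_{(0,l)}\otimes\mathbf{v}_2} = \tfrac1d\,\bra{\Omega_{(0,k)}\otimes\mathbf{v}_1}\big(\omega_{(0,k)}\otimes I^{\otimes(N-1)}\big)\ket{\Omega_{(0,l)}\otimes\mathbf{v}_2}.$$
Now $\omega_{(0,k)}\otimes I^{\otimes(N-1)} = \Pi^{\To}_{(0\,k)}$ is exactly the operator treated in Lemma~\ref{lem:permutAction} (specialized to $a=b=k$, $\widehat{\sigma}=\mathrm{id}$, $c=l\neq k$); applying it to $\ket{\Omega}_{(0,l)}\otimes\mathbf{v}_2$ produces $\ket{\Omega}_{(0,k)}\otimes\big(\Pi_{(0\,:\,(k-1))}\circ\Pi_{((k-1)\,:\,0)}\circ\Pi_{((l-1)\,:\,(k-1))}\big)\mathbf{v}_2$. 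The two outer cycles $\Pi_{(0\,:\,(k-1))}$ and $\Pi_{((k-1)\,:\,0)}$ are mutually inverse, so they cancel and the bracket collapses to $\Pi_{((l-1)\,:\,(k-1))}$. Feeding this back and invoking the diagonal identity once more (with the second vector now $\Pi_{((l-1)\,:\,(k-1))}\mathbf{v}_2$) gives $\tfrac1d\cdot d\,\bra{\mathbf{v}_1}\Pi_{((l-1)\,:\,(k-1))}\ket{\mathbf{v}_2}$, as claimed.

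The three stated consequences then follow by specialization: taking $\mathbf{v}_1=\mathbf{v}_2$ normalized gives $d$ on the diagonal and the stated matrix element off-diagonal. For distinct irreps $\lambda_1\neq\lambda_2$ with $\mathbf{v}_1\in E^{\lambda_1}$, $\mathbf{v}_2\in E^{\lambda_2}$, both cases vanish: $d\braket{\mathbf{v}_1}{\mathbf{v}_2}=0$ since distinct isotypic components are orthogonal, and $\bra{\mathbf{v}_1}\Pi_{((l-1)\,:\,(k-1))}\ket{\mathbf{v}_2}=0$ because $E^{\lambda_2}$ is $\mathfrak{S}_{N-1}$-invariant, so $\Pi_{((l-1)\,:\,(k-1))}\mathbf{v}_2\in E^{\lambda_2}\perp E^{\lambda_1}$. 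As this holds for every pair of legs $k,l$ and every basis vector, each element of $\mathcal{V}^{\lambda_1}$ is orthogonal to each element of $\mathcal{V}^{\lambda_2}$, proving the $\Span\mathcal{V}^\lambda$ are in orthogonal direct sum. The only delicate point is verifying that $\Pi_{(0\,:\,(k-1))}\circ\Pi_{((k-1)\,:\,0)}=\mathrm{id}$ and that the surviving factor is \emph{precisely} $\Pi_{((l-1)\,:\,(k-1))}$ under the paper's cycle conventions; as a cross-check one may expand $\ket{\Omega}_{(0,k)}$ and $\ket{\Omega}_{(0,l)}$ in the computational basis, contract leg $0$ to observe that the overlap ``teleports'' the content of leg $k$ onto leg $l$, and confirm by hand that the induced relabeling of $\llbracket 1,N\rrbracket\setminus\{k\}$ against $\llbracket 1,N\rrbracket\setminus\{l\}$ is exactly the cycle $((l-1)\,:\,(k-1))$.
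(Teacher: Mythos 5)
Your proof is correct, and it takes a mildly but genuinely different route from the paper's at the one place where it matters. The paper's own proof treats the diagonal case $k=l$ exactly as you do (factorizing the overlap and using $\braket{\Omega}{\Omega}=d$), and it settles the distinct-irrep consequences by the same invariance-plus-orthogonality argument you give; but for the off-diagonal case $k\neq l$ it simply invokes the identity $\braket{\Omega_{(0,k)}\otimes\phi}{\Omega_{(0,l)}\otimes\psi}=\bra{\phi}\Pi_{((l-1)\,:\,(k-1))}\ket{\psi}$ as a known fact, with no derivation (it is the kind of ``teleportation'' computation the paper elsewhere performs graphically). Your insertion trick --- writing the overlap as $\tfrac1d\bra{\Omega_{(0,k)}\otimes\mathbf{v}_1}\Pi^{\To}_{(0\,k)}\ket{\Omega_{(0,l)}\otimes\mathbf{v}_2}$ and then applying Lemma~\ref{lem:permutAction} with $a=b=k$, $\hat\sigma=\mathrm{id}$, $c=l$ --- supplies exactly the derivation the paper omits, reducing the asserted identity to a lemma the paper does prove in full. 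The cancellation $\Pi_{(0\,:\,(k-1))}\,\Pi_{((k-1)\,:\,0)}=\Pi_{\mathrm{id}}$ that you flag as the delicate point is indeed valid under the paper's cycle conventions, since the two cycles are mutually inverse (and both trivial when $k=1$), and your resulting formula agrees with the action $\Pi^{\To}_{(0\,k)}\big(\ket{\Omega}_{(0,l)}\otimes\ket{\mathbf{v}^\lambda}\big)=\ket{\Omega}_{(0,k)}\otimes\Pi^\lambda_{((l-1):(k-1))}\ket{\mathbf{v}^\lambda}$ recorded in Section~\ref{subsec:max}. What your approach buys is self-containedness: every step rests on a statement proven in the paper rather than on an unproved scalar-product identity; what the paper's version buys is brevity, at the cost of leaving that key off-diagonal computation implicit.
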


\begin{proof}
    Since both~$\mathbf{v}_1$ and~$\mathbf{v}_2$ are normalized, we have~$\braket{\mathbf{v}_i}{\mathbf{v}_i} = 1$ for all~$i \in \{1, 2\}$. Then from~$\braket{\Omega}{\Omega} = d$ we have
    \begin{align*}
       \braket{\Omega_{(0,k)} \otimes\mathbf{v}_1}{\Omega_{(0,k)} \otimes\mathbf{v}_1} &= \braket{\Omega}{\Omega} \otimes \braket{\mathbf{v}_1}{\mathbf{v}_1} \\
        &= d.
    \end{align*}
    When~$k \neq l$, from the equation~$\bra{\Omega_{(0, k)} \otimes \phi} \ket{\Omega_{(0, l)} \otimes \psi} = \bra{\phi} \Pi_{((l - 1) \, : \, (k - 1))} \ket{\psi}$ for any~$\phi, \psi \in \mathcal{H}^{\otimes (N - 1)}$, we have 
    \begin{equation*}
        \braket{\Omega_{(0,k)} \otimes\mathbf{v}_1}{\Omega_{(0,l)} \otimes\mathbf{v}_1} = \bra{\mathbf{v}_1} \Pi_{((l - 1) \, : \, (k - 1))} \ket{\mathbf{v}_1}.
    \end{equation*}
    When we look at~$\lambda_1, \lambda_2$, two distinct irreducible representations of~$\mathfrak{S}_{N - 1}$, the scalar product becomes
    \begin{equation*}
        \braket{\Omega_{(0,k)} \otimes\mathbf{v}_1}{\Omega_{(0,l)} \otimes\mathbf{v}_2 } =
        \begin{cases}
            \braket{\Omega}{\Omega} \otimes \braket{\mathbf{v}_1}{\mathbf{v}_2} &\text{if } k = l \\
            \bra{\mathbf{v}_1} \Pi_{((l - 1) \, : \, (k - 1))} \ket{\mathbf{v}_2} &\text{if } k \neq l.
        \end{cases}
    \end{equation*}
    Note that~$\Pi_{((l - 1) \, : \, (k - 1))} \ket{\mathbf{v}_2}$ lives in the irreducible subspace of~$\mathfrak{S}_{N - 1}$ associated to~$\lambda_2$, since the irreducible representations are stable by any representation of permutations. Then both scalar product are null because the irreducible spaces are orthogonal each other.
\end{proof}
\begin{remark}
    In the particular case of the trivial representation, we have for all~$1 \leq k,l \leq N$ and all normalized~$\mathbf{v}$ in the irreducible subspace~$\vee^{(N - 1)}(\mathcal{H})$,
    \begin{align*}
        \braket{\Omega_{(0,k)} \otimes\mathbf{v}}{\Omega_{(0,k)} \otimes\mathbf{v}} &= d \\
        \braket{\Omega_{(0,k)} \otimes\mathbf{v}}{\Omega_{(0,l)} \otimes\mathbf{v}} &= 1.
    \end{align*}
\end{remark}

From Lemma \ref{lem:permutAction}, the action of the permutation operators $\Pi^{\To}_{\sigma_{a,b}}$ on the vectors $\ket{\Omega}_{(0,k)} \otimes \ket{\mathbf{v}}$, for all $1 \leq k \leq N$, is
\begin{equation} \label{eq:actionOnVLambda}
    \Pi^{\To}_{\sigma_{a,b}} \big{(} \ket{\Omega}_{(0,k)} \otimes \ket{\mathbf{v}} \big{)} =
    \begin{cases}
        d \cdot \ket{\Omega}_{(1,a)} \otimes \Pi^\lambda_{\hat{\sigma},a,b,b} \ket{\mathbf{v}} &\text{ if } b = k\\
        \ket{\Omega}_{(1,a)} \otimes \Pi^\lambda_{\hat{\sigma},a,b,k} \ket{\mathbf{v}} &\text{ if } b \neq k,
    \end{cases}
\end{equation}
with~$\Pi^\lambda_\sigma = P^\lambda \: \Pi_\sigma \: P^\lambda$, and~$P^\lambda$ the projector onto the irreducible subspace~$\lambda$.

For any permutation operator~$\Pi^{\To}_{\sigma_{a,b}}$ we know from Lemma \ref{lem:ImKer} that the~$V^{\lambda}$'s form a generators of the orthogonal complement of the kernel of~$\Pi^{\To}_{\sigma_{a,b}}$. However, in general they do not define a basis. We can extract a subset of linearly independent vectors to form a basis. In this basis~$\Pi^{\To}_{\sigma_{a,b}}$ can be block diagonalized due to Eq.~\eqref{eq:actionOnVLambda}, since for all irreducible representation~$\lambda$, and any vector~$v$ in~$\mathcal{V}^{\lambda}$, we have 
\begin{equation*}
    \Pi^{\To}_{\sigma_{a,b}} (v) \in \Span\mathcal{V}^{\lambda}.
\end{equation*}

\begin{example}[$N = 3$]
    Let~$x_1, x_2, x_3 \in \mathbb{R}_+$, and define
    \begin{equation*}
        S = x_1 \cdot \Pi^{\To}_{(0 \: 1)} + x_2 \cdot \Pi^{\To}_{(0 \: 2)} + x_3 \cdot \Pi^{\To}_{(0 \: 3)},
    \end{equation*}
    that is
    \begin{equation*}
        S = x_1 \smallPermutationOperator[1]{1/2,2/1,3/3,4/4} + x_2 \smallPermutationOperator[1]{1/3,2/2,3/1,4/4} + x_3 \smallPermutationOperator[1]{1/4,2/2,3/3,4/1}.
    \end{equation*}
    Then an eigenvector for the largest eigenvalue of~$R$ is
    \begin{equation*}
        \chi = \beta_1 \cdot \ket{\Omega}_{(0,1)} \otimes \ket{\mathbf{v}} + \beta_2 \cdot \ket{\Omega}_{(0,2)} \otimes \ket{\mathbf{v}} + \beta_3 \cdot \ket{\Omega}_{(0,3)} \otimes \ket{\mathbf{v}},
    \end{equation*}
    that is
    \begin{equation*}
        \chi = \beta_1 \smallUnboxedPermutationOperator[1]{1/2,2/0,3/3,4/4} \hspace{-1.6em} \smallUnboxedLabeledDiagram{1/0,2/0,3/3,4/4}{\mathbf{v}}{vector}{3/4} + \beta_2 \smallUnboxedPermutationOperator[1]{1/3,2/2,3/1,4/4} \hspace{-1.6em} \smallUnboxedLabeledDiagram{1/0,2/2,3/0,4/4}{\mathbf{v}}{vector}{2/4} + \beta_3 \smallUnboxedPermutationOperator[1]{1/4,2/2,3/3,4/1} \hspace{-1.6em} \smallUnboxedLabeledDiagram{1/0,2/2,3/3,4/0}{\mathbf{v}}{vector}{2/3},
    \end{equation*}
    for some coefficients~$\beta_i$, and~$\mathbf{v} = \displaystyle\sum_{1 \leq i \leq j < d} \ket{ij}$.
\end{example}

\section{Some results in linear algebra}

In the proofs, we use simple linear algebra results that are not mainstream. We gather them in this appendix for the convenience of the reader. 

\begin{lemma}\label{lem:norm}
    For all~$1 \leq i, j \leq n$, let $M_{ij}$ be some matrices in~$\mathcal{M}_m$, and define the block matrix
    \begin{equation*}
        M = \sum_{1 \leq i, j \leq n} M_{ij} \otimes \ketbra{i}{j},
    \end{equation*}
    and the matrix~$M'= \sum_{1 \leq i, j \leq n} \norm{M_{ij}} \cdot \ketbra{i}{j}$. We then have
    \begin{equation*}
        \norm{M} \leq \norm{M'}.
    \end{equation*}
\end{lemma}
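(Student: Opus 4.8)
The plan is to estimate the operator norm of $M$ directly on an arbitrary vector and collapse the block structure onto the $n \times n$ nonnegative comparison matrix $M'$. First I would write a generic vector of $\mathbb{C}^m \otimes \mathbb{C}^n$ as $x = \sum_{j=1}^n x_j \otimes \ket{j}$ with $x_j \in \mathbb{C}^m$, so that $\norm{x}^2 = \sum_{j=1}^n \norm{x_j}^2$. Applying $M$ and grouping the output according to the second tensor factor gives
$$Mx = \sum_{i=1}^n \Big( \sum_{j=1}^n M_{ij}\, x_j \Big) \otimes \ket{i},$$
so by orthonormality of the $\ket{i}$ we obtain
$$\norm{Mx}^2 = \sum_{i=1}^n \norm{ \sum_{j=1}^n M_{ij}\, x_j }^2.$$

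Next, for each fixed $i$ I would bound the inner norm by the triangle inequality followed by the definition of the operator norm:
$$\norm{ \sum_{j=1}^n M_{ij}\, x_j } \;\leq\; \sum_{j=1}^n \norm{M_{ij}\, x_j} \;\leq\; \sum_{j=1}^n \norm{M_{ij}}\, \norm{x_j}.$$
Then introduce the nonnegative vector $\xi = (\norm{x_1}, \ldots, \norm{x_n})^{\T} \in \mathbb{R}^n_+$, which satisfies $\norm{\xi} = \norm{x}$, and observe that the right-hand side above is precisely the $i$-th coordinate $(M'\xi)_i$ of $M'\xi$, by the very definition of $M'$. Because $M'$ has nonnegative entries and $\xi$ has nonnegative entries, each $(M'\xi)_i$ is itself nonnegative, which is the point that legitimizes squaring the coordinatewise bound: summing over $i$ gives
$$\norm{Mx}^2 \;\leq\; \sum_{i=1}^n (M'\xi)_i^2 \;=\; \norm{M'\xi}^2.$$

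Finally I would bound $\norm{M'\xi} \leq \norm{M'}\,\norm{\xi} = \norm{M'}\,\norm{x}$, yielding $\norm{Mx} \leq \norm{M'}\,\norm{x}$ for every $x$, and taking the supremum over unit vectors gives $\norm{M} \leq \norm{M'}$. I do not expect a genuine obstacle here: the argument is a routine triangle-inequality reduction. The only subtlety worth flagging is the twofold use of nonnegativity (of the entries of $M'$ and of the coordinates of $\xi$), which is exactly what lets the blockwise estimate be repackaged as the scalar matrix--vector product $M'\xi$ and then squared without reversing the inequality.
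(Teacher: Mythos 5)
Your proof is correct and takes essentially the same approach as the paper's: both arguments collapse the block structure onto the nonnegative comparison matrix $M'$ by replacing the block components of vectors with their norms, then invoke the norm of $M'$. The only cosmetic difference is that you bound $\norm{Mx}$ directly with the triangle inequality (one vector, then squaring nonnegative coordinates), whereas the paper bounds the bilinear form $\bra{v} M \ket{w}$ with Cauchy--Schwarz using two unit vectors; the substance is identical.
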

\begin{proof}
    Consider $v$ and $w$ two unit vectors such that $\norm{M} = \bra{v} M \ket{w}$, we can decompose $v=\displaystyle\sum_{i=1}^n \vert v_i\rangle\otimes\vert i\rangle$ with $v_i\in\mathbb C^m$ and we define $v'=\displaystyle\sum^n_{i=1} \Vert v_i\Vert \cdot \vert i\rangle$. Since we consider the Euclidian norm, we have obviously $\Vert v'\Vert=\Vert v\Vert=1$. We define~$w'$ similarly, and then have
    \begin{align*}
        \big{|} \! \bra{v} M \ket{w} \! \big{|} &= \bigg{|} \sum_{ij} \bra{v_i} M_{ij} \ket{w_j} \bigg{|} \\
        &\leq \sum_{ij} \norm{v_i} \: \norm{M_{ij}} \: \norm{w_j}\quad (\text{Cauchy-Schwarz}) \\
        &= \sum_{ij} \norm{v_i} \: \norm{w_j} \: \big{\langle} i \big{|} M^{'} \big{|} j \big{\rangle} \\
        &=\big{\langle} v' \big{|} M^{'} \big{|} w' \big{\rangle}\\
        &\leq \Vert v'\Vert \, \Vert w'\Vert \, || M^{'} ||\\
        &\leq || M^{'} ||.
    \end{align*}
    which ends the proof.
\end{proof}

The foolowing lemma is usefull in the paper when we relate the spectrum of $S$ with the one of $\tilde S$

\begin{lemma}\label{lem:spectrum-inclusion}
    Let $A\in\mathcal M$ be an operator, let $n\geq m$ let~$\{ f_1, \ldots, f_n \}$ be a set of vectors such that $\mathbb C^m=Vect\{ f_1, \ldots, f_n \}$. We suppose that for all $i=1,\ldots,n$ there exists $a_{ij}$, $j=1,\ldots,n$ such that
    $$Af_i=\sum_{i,j=1}^n a_{ij}f_j$$
Denote $\tilde A$ the $n\times n$ matrix with coefficients $(a_{ij})$. 
    Then~$$\spec{A} \subseteq \spec{\tilde{A}}.$$
\end{lemma}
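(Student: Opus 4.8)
The plan is to reformulate the hypothesis as an intertwining relation through a \emph{surjective} linear map, and then to invoke the elementary fact that intertwining by a surjection forces a one-directional spectral inclusion. The overcompleteness of the family $\{f_1,\dots,f_n\}$ (we allow $n \geq m$, so the $f_i$ need not be independent) is precisely why one obtains an inclusion rather than an equality of spectra.

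First I would introduce the linear map $V:\mathbb{C}^n\to\mathbb{C}^m$ determined by $Ve_i=f_i$, where $(e_i)_{i=1}^n$ is the canonical basis of $\mathbb{C}^n$. Since $\mathbb{C}^m=\Span\{f_1,\dots,f_n\}=\Ima V$, the map $V$ is surjective. A direct computation from the hypothesis $Af_i=\sum_j a_{ij}f_j$ then gives, for every $i$,
\begin{equation*}
AVe_i = Af_i = \sum_j a_{ij}f_j = V\Big(\sum_j a_{ij}e_j\Big) = V\,\tilde{A}^{\T}e_i ,
\end{equation*}
so that $AV = V\tilde{A}^{\T}$ (the transpose appears only because $\tilde{A}^{\T}e_i=\sum_j a_{ij}e_j$).

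Next I would establish the key one-sided inclusion: if $AV=VB$ with $V$ surjective, then $\spec(A)\subseteq\spec(B)$. I would argue by contraposition. If $\lambda\notin\spec(B)$, then $B-\lambda I$ is invertible, and from $(A-\lambda I)V=V(B-\lambda I)$ the right-hand side is the composition of the bijection $B-\lambda I$ with the surjection $V$, hence surjective; therefore $(A-\lambda I)V$ is surjective, which forces $\Ima(A-\lambda I)=\mathbb{C}^m$, i.e.\ $A-\lambda I$ is surjective and, in finite dimension, invertible. Thus $\lambda\notin\spec(A)$, as desired.

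Finally, applying this with $B=\tilde{A}^{\T}$ yields $\spec(A)\subseteq\spec(\tilde{A}^{\T})$, and since a matrix and its transpose share the same characteristic polynomial, $\spec(\tilde{A}^{\T})=\spec(\tilde{A})$, which gives the claim. The only step needing a little care is the transpose bookkeeping in the intertwiner; the conceptual heart of the argument—and the reason equality of spectra can genuinely fail here—is that $V$ is merely surjective rather than invertible once the $f_i$ are linearly dependent, so the inclusion cannot in general be reversed.
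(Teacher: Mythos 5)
Your proof is correct and follows essentially the same route as the paper: your surjection $V$ is exactly the paper's rank-$m$ matrix $F$ whose columns are the $f_i$, your intertwining relation $AV = V\tilde{A}^{\T}$ is the paper's $AF = F\tilde{A}$, and your contrapositive argument (an invertible $B - \lambda I$ composed with the surjection $V$ forces $A - \lambda I$ to be surjective, hence invertible) is the paper's rank argument, namely $\rank\big((A-\alpha I_m)F\big) \leq \rank(A-\alpha I_m) < m$ while $\rank\big(F(\tilde{A}-\alpha I_n)\big) = m$ whenever $\tilde{A}-\alpha I_n$ is invertible, read in the contrapositive direction. The one genuine refinement in your write-up is the transpose bookkeeping: the intertwiner really is $\tilde{A}^{\T}$ rather than $\tilde{A}$ (the paper's identity $AF = F\tilde{A}$ holds only up to this transpose), and you correctly close that gap by noting $\spec(\tilde{A}^{\T}) = \spec(\tilde{A})$.
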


\begin{proof} Let $F$ be the matrix composed of the vectors $f_i$, $i=1,\ldots,n$ written in column. Then $F$ is a $m\times n$ matrix of rank $m$ since $\mathbb C^m=Vect\{ f_1, \ldots, f_n \}$. By definition of $\tilde A$ we have $$AF=F\tilde A$$
Furthermore for all scalar $\alpha$
\begin{align*}
        (A - \alpha \, I_m) F &= A F - \alpha \, I_m \, F \\
        &= F \tilde{A} - \alpha \, F\\
        &=F(\tilde A-\alpha I_n)
    \end{align*}
Now using the fact that $rank(AB)\leq min(rank(A),rank(B))$ for all matrices $A$ and $B$, if $\alpha\in Spec A$ since $\rank(F)=m$
$$\rank(F(\tilde A-\alpha I_n))=\rank((A - \alpha \, I_m) F)<m.$$
Now since $n\geq m$ this necessarily implies that $\rank(\tilde A-\alpha I_n)<n$ which says that $\alpha\in Spec \tilde A$
\end{proof}
\begin{remark}
    Note that the matrix $\tilde A$ is not uniquely defined and depends on a choice of the way of writing $Af_i=\sum_{i,j=1}^n a_{ij}f_j$ but the results concerning the inclusion is true whatever if the form of the matrix $\tilde A$. Of course the other inclusion is not true in general unless $n=m$ and $\{ f_1, \ldots, f_n \}$ is linearily independent (but this is trivial in this case). 
\end{remark}
\begin{remark}
    In the lemma the condition $Af_i=\sum_{i,j=1}^n a_{ij}f_j$ means that $A$ leaves $V=Vect\{ f_1, \ldots, f_n \}$ invariant and we can rephrase this lemma by considering the restriction to $A$ on $V$ and then saying that $\spec{A_{\vert V}} \subseteq \spec{\tilde{A}}.$ which is what we use in the body of the paper.
\end{remark}

\section{The \texorpdfstring{$\mathcal Q$}{Q}-norm}\label{app:norm}
\begin{lemma}\label{lemma:qnormapp0}
For all $x\in \mathbb R^N$, we have
$$\frac 1d\Vert x\Vert_1\leq\lambda_{\max}(S_x)\leq d\Vert x\Vert_1.$$
In particular, the $\|\cdot\|_{\mathcal Q}$ quantity is non-negative. The first inequality is saturated if and only if the matrix $S_x$ is a (non-negative) multiple of the identity. The second inequality is saturated if and only if $x$ has at most one non-zero entry.
\end{lemma}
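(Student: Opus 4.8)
The plan is to prove the two inequalities by two different elementary spectral principles and then to read off the saturation conditions from their respective equality cases. Throughout I would exploit that $\omega=\ketbra{\Omega}{\Omega}$ on $\mathcal H^{\otimes 2}$ has a single nonzero eigenvalue $\braket{\Omega}{\Omega}=d$, so that each summand $\omega_{(0,i)}\otimes I^{\otimes(N-1)}$ of $S_x$ is positive semidefinite with $\lambda_{\max}=d$ and top eigenspace $\mathcal E_i:=\Span\{\ket{\Omega}_{(0,i)}\otimes\ket{\phi}\mid \phi\in\mathcal H^{\otimes(N-1)}\}$ (this is exactly the image described in Lemma~\ref{lem:ImKer}). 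For the \emph{upper bound} I would use subadditivity of $\lambda_{\max}$: since $S_x=\sum_i |x_i|\,(\omega_{(0,i)}\otimes I^{\otimes(N-1)})$ is a nonnegative combination of Hermitian operators, $\lambda_{\max}(S_x)\le\sum_i |x_i|\,\lambda_{\max}(\omega_{(0,i)}\otimes I^{\otimes(N-1)})=d\,\norm{x}_1$. For the \emph{lower bound} I would compare the top eigenvalue with the mean eigenvalue: as $S_x$ is Hermitian on $\mathcal H^{\otimes(N+1)}$ (dimension $d^{N+1}$), one has $\lambda_{\max}(S_x)\ge \Tr(S_x)/d^{N+1}$, and a direct computation using $\Tr\omega=d$ and $\Tr I^{\otimes(N-1)}=d^{N-1}$ gives $\Tr(S_x)=d^N\norm{x}_1$, whence $\lambda_{\max}(S_x)\ge \tfrac1d\norm{x}_1$.

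Combining the two bounds immediately yields $d\,\lambda_{\max}(S_x)-\norm{x}_1\ge \norm{x}_1-\norm{x}_1=0$, which is the nonnegativity of $\norm{\cdot}_{\mathcal Q}$. The \emph{first} saturation is then painless: equality in $\lambda_{\max}\ge\text{mean}$ for a Hermitian matrix forces all its eigenvalues to coincide, i.e.\ $S_x$ to be a nonnegative multiple of the identity, and the converse is clear.

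The \emph{second} saturation is where the real work lies. Let $v$ be a unit top eigenvector; then equality $d\norm{x}_1=\bra{v} S_x\ket{v}=\sum_i|x_i|\,\bra{v}(\omega_{(0,i)}\otimes I^{\otimes(N-1)})\ket{v}$, together with the termwise bound $\bra{v}(\omega_{(0,i)}\otimes I^{\otimes(N-1)})\ket{v}\le d$, forces $\bra{v}(\omega_{(0,i)}\otimes I^{\otimes(N-1)})\ket{v}=d$, i.e.\ $v\in\mathcal E_i$, for every $i$ with $x_i\ne0$. Thus saturation is equivalent to the existence of a nonzero vector lying simultaneously in all the active $\mathcal E_i$, and the hard part will be the geometric fact that $\mathcal E_i\cap\mathcal E_j=\{0\}$ whenever $i\ne j$. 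I would establish this by computing the reduced operator of $\ketbra{v}{v}$ on the pair of sites $(0,i)$ in two ways: from $v\in\mathcal E_i$ it equals $\norm{\phi}^2\,\omega_{(0,i)}$, which has rank one, whereas from $v\in\mathcal E_j$ with $j\ne i$ the site $0$ is maximally entangled with site $j$ and hence decoupled from site $i$, so the reduced operator is a tensor product (maximally mixed on site $0$) $\otimes$ (a state on site $i$), of rank at least $d\ge 2$; these are incompatible unless $v=0$. Consequently a common top eigenvector exists iff at most one $x_i$ is nonzero. The converse direction is the one-line computation that for $x=\pm\norm{x}_1 e_k$ one has $S_x=|x_k|\,\omega_{(0,k)}\otimes I^{\otimes(N-1)}$ with $\lambda_{\max}=d|x_k|=d\norm{x}_1$, so the upper bound is indeed attained, which completes the proof.
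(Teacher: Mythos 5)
Your proof is correct and follows essentially the same route as the paper's: the lower bound via $\lambda_{\max}(S_x)\geq \Tr(S_x)/d^{N+1}$, the upper bound via subadditivity of $\lambda_{\max}$, and the same equality analysis for the first saturation. The only place you go beyond the paper is the second saturation condition: the paper merely asserts that the summands can have a common top eigenvector only when the support of $x$ has size $0$ or $1$, whereas your reduced-operator rank argument (rank one for $\omega_{(0,i)}$ versus rank at least $d$ for $I_{(0)}\otimes\rho_i$) actually proves that $\mathcal E_i\cap\mathcal E_j=\{0\}$ for $i\neq j$, making your write-up more complete on that point.
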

\begin{proof}
For the first inequality, note that
$$\lambda_{\max}(S_x) \geq \frac{\Tr S_x}{d^{N+1}} = \frac{ \sum^N_{i=1} d |x_i| \cdot d^{N-1}}{d^{N+1}} = \frac{\|x\|_1}{d}.$$
The inequality above is saturated if and only if the eigenvalues of $S_x$ are identical. 

For the second inequality, we use the subadditivity of the $\lambda_{\max}$ functional: 
$$\lambda_{\max}(S_x) \leq \sum^N_{i=1} \lambda_{\max}\left( |x_i| \cdot \omega_{(0,i)} \otimes I^{\otimes (N - 1)} \right) = \sum^N_{i=1} d |x_i| = d\|x\|_1.$$
The inequality above is saturated if and only if the matrices $|x_i| \cdot \omega_{(0,i)} \otimes I$ have a common largest eigenvector, which can happen only if the support of $x$ has size 0 or 1. 
\end{proof}
\begin{lemma}\label{lemma:qnormapp1}
    For all~$x, y \in \mathbb{R}^N_+$,~$\norm{x + y}_{\mathcal{Q}} \leq \norm{x}_{\mathcal{Q}} + \norm{y}_{\mathcal{Q}}$.
\end{lemma}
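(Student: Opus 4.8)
The plan is to exploit the fact that on the non-negative cone $\mathbb{R}^N_+$ both ingredients appearing in the definition of $\|\cdot\|_{\mathcal Q}$ behave additively, so that the only genuine inequality we need is the subadditivity of the largest-eigenvalue functional. Recall that
\begin{equation*}
    \|x\|_{\mathcal Q} = \frac{d\,\lambda_{\max}(S_x) - \|x\|_1}{d^2-1}, \qquad S_x = \sum_{i=1}^N |x_i| \cdot \omega_{(0,i)} \otimes I^{\otimes(N-1)}.
\end{equation*}

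First I would observe that for $x,y \in \mathbb{R}^N_+$ one has $|x_i + y_i| = x_i + y_i = |x_i| + |y_i|$ coordinate by coordinate, since all the quantities involved are non-negative. This gives two exact identities: on the one hand the operator is additive,
\begin{equation*}
    S_{x+y} = \sum_{i=1}^N (x_i + y_i)\, \omega_{(0,i)} \otimes I^{\otimes(N-1)} = S_x + S_y,
\end{equation*}
and on the other hand the $\ell_1$ norm is additive, $\|x+y\|_1 = \|x\|_1 + \|y\|_1$.

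Next I would invoke the subadditivity of $\lambda_{\max}$ on Hermitian matrices, namely $\lambda_{\max}(A+B) \le \lambda_{\max}(A) + \lambda_{\max}(B)$, which is immediate from the variational characterization $\lambda_{\max}(M) = \max_{\|v\|=1} \langle v, Mv\rangle$. Applying this to $A = S_x$ and $B = S_y$ and combining with the two identities above yields
\begin{align*}
    \|x+y\|_{\mathcal Q} &= \frac{d\,\lambda_{\max}(S_x + S_y) - \|x\|_1 - \|y\|_1}{d^2-1} \\
    &\le \frac{d\,\lambda_{\max}(S_x) + d\,\lambda_{\max}(S_y) - \|x\|_1 - \|y\|_1}{d^2-1} = \|x\|_{\mathcal Q} + \|y\|_{\mathcal Q},
\end{align*}
which is the desired inequality.

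The calculation is short, and there is no serious obstacle; the one point worth emphasizing is \emph{why} the restriction to $\mathbb{R}^N_+$ is what makes it work. The subtracted term $\|x\|_1$ is exactly additive on the cone, so it cancels cleanly against the sum of the two subtracted terms on the right. For vectors of mixed sign neither $S_{x+y} = S_x + S_y$ nor $\|x+y\|_1 = \|x\|_1 + \|y\|_1$ holds (one only has $|x_i+y_i| \le |x_i|+|y_i|$, pointing the wrong way after subtraction), which is precisely the reason the full triangle inequality in Theorem~\ref{thm:Q-norm} cannot be obtained by this direct argument and instead requires the more delicate geometric reduction to the boundary of $\mathbb{R}^N_+$.
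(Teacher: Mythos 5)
Your proof is correct and follows essentially the same route as the paper's: both rest on the subadditivity of $\lambda_{\max}$ for Hermitian matrices, together with the exact additivity of $S_x$ and of $\|\cdot\|_1$ on the cone $\mathbb{R}^N_+$ (which the paper leaves implicit and you usefully make explicit). Your closing remark about why the argument fails off the non-negative cone matches the paper's structure as well, since the general triangle inequality is handled separately in Theorem~\ref{thm:Q-norm} by the reduction to $\partial\,\mathbb{R}^N_+$.
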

\begin{proof}
This follows from the subadditivity of the $\lambda_{\max}$ functional: for $x, y \in \mathbb{R}^N_+$ we have
    \begin{align*}
        \lambda_{\max} \bigg{[} \sum^{N}_{i = 1} &(x_i + y_i) \cdot \omega_{(0,i)} \otimes I^{\otimes (N - 1)} \bigg{]} \\
        &\leq \lambda_{\max} \bigg{[} \sum^{N}_{i = 1} x_i \cdot \omega_{(0,i)} \otimes I^{\otimes (N - 1)} \bigg{]} + \lambda_{\max} \bigg{[} \sum^{N}_{i = 1}  y_i \cdot \omega_{(0,i)} \otimes I^{\otimes (N - 1)} \bigg{]}.
    \end{align*}
\end{proof}
\begin{lemma}\label{lem:rect}\label{lemma:qnormapp2}
    For all~$t \in [0,1]^N$ and~$x \in \mathbb{R}^N_+$,~$\norm{t \cdot x}_{\mathcal{Q}} \leq \norm{x}_{\mathcal{Q}}$.
\end{lemma}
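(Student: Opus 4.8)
The plan is to prove the stronger \emph{pointwise} monotonicity of the gauge $\norm{\cdot}_{\mathcal Q}$ on $\mathbb R^N_+$ by a test-vector argument, exploiting the explicit description of the top eigenvector of $S_x$ from Theorem~\ref{thm:largest-eigenspace}. For any $w\in\mathbb R^N_+$ the operator $S_w$ admits a normalized largest-eigenvalue eigenvector of the form $\chi=\sum_{i=1}^N\beta_i\,\ket{\Omega}_{(0,i)}\otimes\ket{\mathbf v}$ with $\mathbf v\in\vee^{(N-1)}(\mathcal H)$ and, by Perron--Frobenius, $\beta_i\ge 0$. Writing $\gamma_i:=(d-1)\beta_i+\sum_{j}\beta_j$, Lemma~\ref{lem:scalarProduct} (trivial representation) gives $\langle\chi|\omega_{(0,i)}\otimes I^{\otimes(N-1)}|\chi\rangle=\gamma_i^2$ and $\langle\chi|\chi\rangle=1$. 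The key idea is to take the eigenvector $\chi$ associated with $S_{t\cdot x}$ and feed it as a test vector into $S_x$, so that the two $\mathcal Q$-norms are compared through one and the same $\chi$.

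Concretely, I would first let $\chi$ be the normalized Perron eigenvector of $S_{t\cdot x}$ and compute, from $\lambda_{\max}(S_{t\cdot x})=\langle\chi|S_{t\cdot x}|\chi\rangle=\sum_i t_ix_i\gamma_i^2$,
\[
\norm{t\cdot x}_{\mathcal Q}=\frac{d\sum_i t_ix_i\gamma_i^2-\sum_i t_ix_i}{d^2-1}=\sum_{i=1}^N t_ix_i\,c_i,\qquad c_i:=\frac{d\gamma_i^2-1}{d^2-1}.
\]
If all $c_i\ge 0$, then since $0\le t_i\le 1$ and $x_i\ge 0$ we get $\sum_i t_ix_i c_i\le\sum_i x_i c_i$, and the right-hand side equals $\tfrac{1}{d^2-1}\big(d\langle\chi|S_x|\chi\rangle-\norm{x}_1\big)$. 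Bounding the Rayleigh quotient of the \emph{unit} vector $\chi$ by $\langle\chi|S_x|\chi\rangle\le\lambda_{\max}(S_x)$ then yields $\sum_i x_i c_i\le\norm{x}_{\mathcal Q}$, which is exactly the claim.

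The main obstacle is therefore establishing $c_i\ge 0$, i.e.\ $\gamma_i^2\ge 1/d$, for the Perron eigenvector of $S_{t\cdot x}$; this is where the argument really happens. Since $\beta_i\ge 0$ we have $\gamma_i\ge\Sigma:=\sum_j\beta_j$, so it suffices to prove $\Sigma^2\ge 1/d$. Reading the eigenvalue equation $S_{t\cdot x}\chi=\mu\chi$ (with $\mu=\lambda_{\max}(S_{t\cdot x})$) coordinatewise, using that the vectors $\ket{\Omega}_{(0,i)}\otimes\ket{\mathbf v}$ are linearly independent for the trivial representation, gives $t_ix_i\,\gamma_i=\mu\beta_i$; I would solve this for $\gamma_i=\frac{\mu\Sigma}{\mu-(d-1)t_ix_i}$, the denominators being positive because $\mu=\rho(S_+)\ge d\,t_ix_i$ for the nonnegative Perron matrix $S_+$. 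Setting $u_i:=\frac{t_ix_i}{\mu-(d-1)t_ix_i}\ge 0$, one finds $\beta_i=\Sigma\,u_i$, and summing yields $\sum_i u_i=1$, while the condition $\langle\chi|\chi\rangle=1$, i.e.\ $(d-1)\sum_i\beta_i^2+\Sigma^2=1$, becomes $\Sigma^2\big(1+(d-1)\sum_i u_i^2\big)=1$. The decisive (and pleasantly elementary) step is the inequality $\sum_i u_i^2\le\big(\sum_i u_i\big)^2=1$, valid for nonnegative $u_i$, which gives $\Sigma^2\ge\frac{1}{1+(d-1)}=\frac1d$ and hence $c_i\ge 0$.

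Finally I would dispose of the degenerate cases: if $t\cdot x=0$ the inequality is trivial since $\norm{\cdot}_{\mathcal Q}\ge 0$ by the first point of Lemma~\ref{lemm:qnormfull}, while if $t\cdot x\neq 0$ then $\Sigma>0$ (as $\chi\neq 0$ with all $\beta_i\ge 0$), which justifies the division used to obtain $\sum_i u_i=1$; the indices with $t_ix_i=0$ simply contribute $u_i=\beta_i=0$ and play no role.
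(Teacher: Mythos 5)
Your proof is correct and follows essentially the same route as the paper's: both take the Perron eigenvector $\chi$ of the operator attached to the smaller vector as a test vector, both reduce the claim to $\langle \chi | \omega_{(0,i)} \otimes I^{\otimes (N-1)} | \chi \rangle = \gamma_i^2 \geq 1/d$, and both conclude with the Rayleigh-quotient bound $\langle \chi | S_x | \chi \rangle \leq \lambda_{\max}(S_x)$. The only difference is your detour through the coordinatewise eigenvalue equation $t_i x_i \gamma_i = \mu \beta_i$ and the substitution $\beta_i = \Sigma u_i$: the paper gets $\Sigma^2 \geq 1/d$ directly from the normalization $(d-1)\sum_i \beta_i^2 + \Sigma^2 = 1$ combined with $\sum_i \beta_i^2 \leq \left( \sum_i \beta_i \right)^2$ (valid since $\beta_i \geq 0$), which is exactly the elementary inequality your $u_i$-argument boils down to, so the detour can be removed without loss.
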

\begin{proof}
    Let us show that for~$0 \leq x \leq y$ (meaning that $x_i\leq y_i$, $i=1,\ldots,N$), we have~$\norm{x}_{\mathcal{Q}} \leq \norm{y}_{\mathcal{Q}}$. Let~$\chi = \sum_k \beta_k \ket{\Omega_{(0,k)}}\otimes\ket{\mathbf v}$ be a normalized eigenvector corresponding to the the largest eigenvalue~$\lambda_{\max}(S_x)$, from Theorem \ref{thm:largest-eigenspace}, that is
    \begin{equation*}
        \lambda_{\max}(S_x) = \Big{\langle} \chi \Big{|} \sum^{N}_{i = 1} x_i \cdot \big{(} \omega_{(0,i)} \otimes I^{\otimes (N - 1)} \big{)} \Big{|} \chi \Big{\rangle}.
    \end{equation*}
    Note that due to the Perron–Frobenius Theorem, $\beta_i > 0$ for all $i=1,\ldots,N$. Let us point out two facts. First using Lemma \ref{lem:scalarProduct}
    \begin{align*}
        \bra{\chi} \omega_{(0,i)} \otimes I^{\otimes (N - 1)} \ket{\chi} &= \sum^N_{k = 1} \sum^N_{l = 1} \beta_k \beta_l \cdot \bra{\Omega_{(0,k)} \otimes \mathbf{v}} \big{(} \omega_{(0,i)} \otimes I^{\otimes (N - 1)} \big{)} \ket{\Omega_{(0,l)} \otimes \mathbf{v}} \\
        &= \sum^N_{k = 1} \beta_k \cdot \Big{\langle} \Omega_{(0,k)} \otimes \mathbf{v} \Big{|} d \beta_i \cdot \Omega_{(0,i)} \otimes \mathbf{v} + \sum^N_{\substack{l = 1 \\ l \neq i}} \beta_l \cdot \Omega_{(0,l)} \otimes \mathbf{v} \Big{\rangle} \\
        &= \sum^N_{k = 1} \beta_k \cdot \Big{\langle} \Omega_{(0,k)} \otimes \mathbf{v} \Big{|} (d - 1) \beta_i \cdot \Omega_{(0,i)} \otimes \mathbf{v} + \sum^N_{l = 1} \beta_l \cdot \Omega_{(0,l)} \otimes \mathbf{v} \Big{\rangle} \\
        &= {\bigg{(} (d - 1) \beta_i + \sum_k \beta_k \bigg{)}}^2
    \end{align*}
    Second
    \begin{align*}
        1 &=\braket{\chi} \\
        &= \sum_{i\neq j}\beta_i\beta_j+d\sum_i\beta_i^2\\
        &= \left(\sum_{i=1}^N\beta_i\right)^2+(d-1)\sum_{i=1}^N\beta_i^2
    \end{align*}
    Now using that $\left(\sum_{i=1}^N\beta_i\right)^2\geq \sum_{i=1}^N\beta_i^2$ we get that
  $1\leq d\left(\sum_{i=1}^N\beta_i\right)^2$ then, since $\beta_i\geq0$
  $$\bra{\chi} \omega_{(0,i)} \otimes I^{\otimes (N - 1)} \ket{\chi}\geq \left(\sum_{i=1}^N\beta_i\right)^2\geq\frac 1d$$  
This way since $y\geq x$, we have that $\lambda_{max}(S_y)\geq \langle\chi,S_y\chi\rangle$ then
    \begin{align*}
        \lambda_{max}(S_y)-\lambda_{max}(S_x)&\geq\langle \chi\vert S_y-S_x\vert \chi \rangle\\
        &=\sum_{i=1}^N(y_i-x_i)\langle \chi\vert \omega_{(0,i)} \otimes I^{\otimes (N - 1)}\vert \chi \rangle\\
        &=\sum_{i=1}^N(y_i-x_i){\bigg{(} (d - 1) \beta_i + \sum_k \beta_k \bigg{)}}^2\\
        &\geq \sum_{i=1}^N(y_i-x_i)\frac 1d
     \end{align*}
     Then
     $$\norm{y}_{\mathcal{Q}}=\lambda_{max}(S_y)-\frac 1d \Vert y\Vert\geq \lambda_{max}(S_x)-\frac 1d \Vert x\Vert=\norm{x}_{\mathcal{Q}}$$
which was the desired result.     
    \end{proof}
    
\section{Optimal cloning}\label{app:opt}

\begin{lemma}\label{lemm:appchoi0} Let some reals ${(\beta_i)}_{1 \leq i \leq N}$. The operator 
$$\widetilde{C}_{T_\beta}=\sum_{\substack{1 \leq a,b \leq N \\ \sigma \in \Sigma_{a,b}}} \frac{\beta_a \beta_b}{(N - 1)!} \Pi_\sigma^{\To}$$
such that there exists $a$ and $ b$ with $\beta_a\beta_b\neq 0$ is an orthogonal projection if and only if the condition \eqref{projectionConditionEquation}
\begin{equation*} 
    (d - 1) \sum^N_{i = 1} \beta^2_i + {\bigg{(} \sum^N_{i = 1} \beta_i \bigg{)}}^2 = 1
\end{equation*}
is satisfied.

Assume \eqref{projectionConditionEquation} is satisfied. The operator $C_{T_\beta}$ is a positive operator, such that $\Tr_{\scriptscriptstyle [\![ 0,N ]\!] \setminus \{0\}} C_{T_\beta} = I$ 
\end{lemma}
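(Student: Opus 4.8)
The plan is to first bring $\widetilde{C}_{T_\beta}$ into a manifestly positive, rank-one decomposed form, from which both the projection criterion and the partial-trace identity follow by short computations.

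First I would use the explicit structure from Lemma~\ref{lem:premutationForm}, namely $\Pi_\sigma^{\To} = \Pi_{(1\,a)}(\omega_{(0,1)} \otimes \Pi_{\hat\sigma})\Pi_{(1\,b)}$ for $\sigma \in \Sigma_{a,b}$. Since the assignment $\sigma \mapsto \hat\sigma$ is a bijection from $\Sigma_{a,b}$ onto $\mathfrak{S}_{N-1}$ (both have $(N-1)!$ elements and the construction is invertible), summing over $\Sigma_{a,b}$ replaces $\sum_{\hat\sigma}\Pi_{\hat\sigma}$ by $(N-1)!\,P^+_{\mathfrak{S}_{N-1}}$, so that
$$\widetilde{C}_{T_\beta} = \sum_{1 \le a,b \le N}\beta_a\beta_b\,\Pi_{(1\,a)}\big(\omega_{(0,1)} \otimes P^+_{\mathfrak{S}_{N-1}}\big)\Pi_{(1\,b)}.$$
Writing $\omega_{(0,1)} = \ketbra{\Omega_{(0,1)}}{\Omega_{(0,1)}}$ and $P^+_{\mathfrak{S}_{N-1}} = \sum_j \ketbra{e_j}{e_j}$ for an orthonormal basis $\{e_j\}$ of $\vee^{(N-1)}(\mathcal{H})$, and using $\Pi_{(1\,a)}(\ket{\Omega}_{(0,1)} \otimes \ket{e_j}) = \ket{\Omega}_{(0,a)} \otimes \ket{e_j}$ (valid because symmetric vectors are fixed by the cycle $\Pi_{(0\,:\,(a-1))}$ appearing in Lemma~\ref{lem:permutAction}), this collapses to
$$\widetilde{C}_{T_\beta} = \sum_j \ketbra{\psi_j}{\psi_j}, \qquad \ket{\psi_j} := \sum_{a=1}^N \beta_a\,\ket{\Omega}_{(0,a)} \otimes \ket{e_j},$$
which makes self-adjointness and positivity immediate. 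It then remains to decide when this is idempotent. Computing the Gram matrix of the $\ket{\psi_j}$ with Lemma~\ref{lem:scalarProduct} gives $\langle\psi_j|\psi_{j'}\rangle = K\,\delta_{jj'}$ with $K = (d-1)\sum_i\beta_i^2 + (\sum_i\beta_i)^2$, hence $\widetilde{C}_{T_\beta}^2 = K\,\widetilde{C}_{T_\beta}$. Under the hypothesis that some $\beta_a\beta_b \neq 0$ we have $K > 0$ (as $d>1$), so $\widetilde{C}_{T_\beta}$ is a nonzero orthogonal projection exactly when $K = 1$, which is precisely \eqref{projectionConditionEquation}.

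For the second assertion, positivity of $C_{T_\beta}$ is immediate as it is a nonnegative scalar multiple of the projection $\widetilde{C}_{T_\beta}$. For the partial trace, I would apply Lemma~\ref{lem:permuPartialTrace} termwise: the blocks $a=b$ contribute $(N-1)!\Tr[P^+_{\mathfrak{S}_{N-1}}]\,I$ and the blocks $a\neq b$ contribute $\tfrac{(N-1)!}{d}\Tr[P^+_{\mathfrak{S}_{N-1}}]\,I$, so that after dividing by $(N-1)!$ and using $\sum_{a\neq b}\beta_a\beta_b = (\sum_a\beta_a)^2 - \sum_a\beta_a^2$ one obtains
$$\Tr_{\scriptscriptstyle \llbracket 0,N \rrbracket \setminus \{0\}}\widetilde{C}_{T_\beta} = \frac{\Tr[P^+_{\mathfrak{S}_{N-1}}]}{d}\Big((d-1)\sum_a\beta_a^2 + \big(\textstyle\sum_a\beta_a\big)^2\Big)I = \frac{\Tr[P^+_{\mathfrak{S}_{N-1}}]}{d}\,I$$
by \eqref{projectionConditionEquation}. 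Multiplying by the prefactor $\tfrac{d}{\Tr P^+_{\mathfrak{S}_N}}\tfrac{N+d-1}{N}$ and using the dimension count $\Tr[P^+_{\mathfrak{S}_n}] = \dim\vee^n(\mathcal{H}) = \binom{n+d-1}{d-1}$, which yields $\Tr[P^+_{\mathfrak{S}_{N-1}}]/\Tr[P^+_{\mathfrak{S}_N}] = N/(N+d-1)$, the factors cancel and leave $I$, as required.

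The main obstacle is the first step: recognizing that the apparently complicated sum over $\Sigma_{a,b}$ of partially transposed permutations collapses, on the symmetric subspace, to the clean rank-one form $\sum_j \ketbra{\psi_j}{\psi_j}$. Once this decomposition is in hand, everything else is a bookkeeping consequence of Lemma~\ref{lem:scalarProduct} and Lemma~\ref{lem:permuPartialTrace} together with condition \eqref{projectionConditionEquation}.
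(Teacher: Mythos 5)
Your proof is correct, and for the core ``projection if and only if \eqref{projectionConditionEquation}'' claim it takes a genuinely different route from the paper's. The paper squares $\widetilde{C}_{T_\beta}$ directly inside the algebra of partially transposed permutation operators: the product $\Pi_\sigma^{\To}\Pi_\tau^{\To}$ for $\sigma\in\Sigma_{a,b}$, $\tau\in\Sigma_{c,d}$ lands in (the span of) $\Sigma_{a,d}$, with an extra factor $d$ exactly when $b=c$, and the condition is read off from the coefficient identity $\beta_a\beta_b=\beta_a K\beta_b$ with $K=(d-1)\sum_i\beta_i^2+\big(\sum_i\beta_i\big)^2$; self-adjointness and positivity are then obtained only as consequences of being a projection. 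You instead collapse each block $\sum_{\sigma\in\Sigma_{a,b}}\Pi_\sigma^{\To}$ onto the symmetrizer via the re-indexing $\sigma\mapsto\hat{\sigma}$ of Lemma~\ref{lem:premutationForm} (this bijection is legitimate: uniqueness of $\hat\sigma$, faithfulness of $\sigma\mapsto\Pi_\sigma$ for $d\geq 2$, and equal cardinalities; the paper itself performs the same re-indexing inside the proof of Lemma~\ref{lem:permuPartialTrace}), arriving at $\widetilde{C}_{T_\beta}=\sum_j\ketbra{\psi_j}{\psi_j}$ with $\ket{\psi_j}=\sum_a\beta_a\ket{\Omega}_{(0,a)}\otimes\ket{e_j}$, after which Lemma~\ref{lem:scalarProduct} gives the Gram matrix $K\delta_{jj'}$ and hence $\widetilde{C}_{T_\beta}^2=K\,\widetilde{C}_{T_\beta}$. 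Your approach buys several things the paper's computation does not: self-adjointness and positivity of $\widetilde{C}_{T_\beta}$ are manifest and independent of \eqref{projectionConditionEquation}; the non-vanishing needed for the ``only if'' direction is transparent, since $\Tr\widetilde{C}_{T_\beta}=K\dim\vee^{(N-1)}(\mathcal{H})>0$ once some $\beta_a\neq 0$ and $d>1$; and the range of the projection is exhibited explicitly as the span of the vectors $\chi$ appearing in Theorem~\ref{thm:largest-eigenspace}, which makes the link between $C_{T_\beta}$ and the maximal eigenvector of $S_\alpha$ visible rather than hidden in an algebraic identity. What the paper's route buys in exchange is that it never leaves the operator algebra generated by the $\Pi_\sigma^{\To}$, so it needs neither Lemma~\ref{lem:permutAction} nor Lemma~\ref{lem:scalarProduct}. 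Your treatment of the second assertion (positivity as a nonnegative multiple of a projection, and the partial-trace identity via term-by-term application of Lemma~\ref{lem:permuPartialTrace} together with $\Tr\big[P^+_{\mathfrak{S}_{N-1}}\big]/\Tr\big[P^+_{\mathfrak{S}_{N}}\big]=N/(N+d-1)$) coincides with the paper's argument.
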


\begin{proof}
Let some reals ${(\beta_i)}_{1 \leq i \leq N}$, then
\begin{align*}
    {\big{(} \widetilde{C}_{T_\beta} \big{)}}^2 &= \sum_{\substack{1 \leq a,b \leq N \\ 1 \leq c,d \leq N}} \: \sum_{\substack{\sigma \in \Sigma_{a,b} \\ \tau \in \Sigma_{c,d}}} \frac{\beta_a \beta_b \beta_c \beta_d}{{(N-1)!}^2} \: \Pi_\sigma^{\To} \Pi_\tau^{\To} \\
    &= \sum_{1 \leq a,d \leq N} \bigg{[} d \sum_{\substack{1 \leq b,c \leq N \\ b = c}} \sum_{\sigma \in \Sigma_{a,d}} \frac{\beta_a \beta_b \beta_c \beta_d}{{(N-1)!}^2} \: \Pi_\sigma^{\To} + \sum_{\substack{1 \leq b,c \leq N \\ b \neq c}} \sum_{\sigma \in \Sigma_{a,d}} \frac{\beta_a \beta_b \beta_c \beta_d}{{(N-1)!}^2} \: \Pi_\sigma^{\To} \bigg{]}.
\end{align*}
The projection condition ${\big{(} \widetilde{C}_{T_\beta} \big{)}}^2 = \widetilde{C}_{T_\beta}$ is equivalent to
\begin{align*}
    \beta_a \beta_b &= d \sum^N_{i = 1} \beta_a \beta_i \beta_i \beta_b + \sum_{\substack{1 \leq i,j \leq N \\ i \neq j}} \beta_a \beta_i \beta_j \beta_b \\
    &= \beta_a \Bigg{(} (d - 1) \sum_{i=1}^N\ \beta^2_i + {\bigg{(} \sum_i \beta_i \bigg{)}}^2 \Bigg{)} \beta_b.
\end{align*}

Since \eqref{projectionConditionEquation} is assumed to be satisfied then $\widetilde{C}_{T_\beta}$ is an orthogonal projection and hence a positive operator. This way $C_{T_\beta}$ is positive. We have now from Lemma \ref{lem:permuPartialTrace}
\begin{align*}
    \Tr_{\scriptscriptstyle [\![ 0,N ]\!] \setminus \{0\}} \big{(} \widetilde{C}_{T_\beta} \big{)} &= \Tr_{\scriptscriptstyle [\![ 0,N ]\!] \setminus \{0\}} \bigg{[} \sum_{\substack{1 \leq a,b \leq N \\ \sigma \in \Sigma_{a,b}}} \frac{\beta_a \beta_b}{(N - 1)!} \sigma^{\To} \bigg{]} \\
    &= \Tr \Big{[} P^+_{\mathfrak{S}_{N - 1}} \Big{]} \bigg{(} \sum_i \beta_i \beta_i + \frac{1}{d} \sum_{i \neq j} \beta_i \beta_j \bigg{)} \cdot I \\
    &= \frac{\Tr P^+_{\mathfrak{S}_{N - 1}}}{d} \Bigg{[} (d - 1) \sum_i \beta^2_i + {\bigg{(} \sum_i \beta_i \bigg{)}}^2 \Bigg{]} \cdot I \\
    &= \frac{\Tr P^+_{\mathfrak{S}_{N - 1}}}{d} \cdot I
\end{align*}
Then
\begin{align*}
    \Tr_{\scriptscriptstyle [\![ 0,N ]\!] \setminus \{0\}} (C_{T_\beta}) &= \Tr_{\scriptscriptstyle [\![ 0,N ]\!] \setminus \{0\}} \Big{[} \frac{d}{\Tr P^+_{\mathfrak{S}_N}} \frac{N + d - 1}{N} \cdot \widetilde{C}_{T_\beta} \Big{]} \\
    &= \frac{\Tr P^+_{\mathfrak{S}_{N - 1}}}{\Tr P^+_{\mathfrak{S}_N}} \frac{N + d - 1}{N} \cdot I \\
    &= I,
\end{align*}
where the last equation comes from~$\Tr \big{[} P^+_{\mathfrak{S}_{N - 1}} \big{]} = \frac{N}{N + d - 1} \Tr \big{[} P^+_{\mathfrak{S}_{N}} \big{]}$.
\end{proof}

\begin{lemma}\label{lemm:appchoi1}
For any~$1 \leq a,b \leq N$, and any $\sigma \in \Sigma_{a,b}$, the permutation operator~$\sigma^{\To}$ is the Choi matrix a linear map~$T_{\mu,\nu}: \mathcal{M}_d \to {(\mathcal{M}_d)}^{\otimes N}$ defined by
\begin{equation*}
    T_{\mu,\nu}(X) = \Pi_\mu \big{(} X \otimes I^{\otimes (N - 1)} \big{)} \Pi_\nu
\end{equation*}
for some permutations $\mu$ and $\nu$ in $\mathfrak{S}_N$ such that $\mu(0) = a - 1$ and $\nu(b - 1) = 0$.

As a consequence for some fixed $1 \leq a,b \leq N$,
    \begin{equation*}
        \sum_{\sigma \in \Sigma_{a,b}} \Pi_\sigma^{\To} = \frac{1}{(N - 1)!} C_{T_{a,b}}
    \end{equation*}
    where
    \begin{equation*}
        T_{a,b}(X) = \sum_{\substack{\scriptstyle \mu,\nu \scriptstyle \in \mathfrak{S}_N \\ \scriptstyle \mu(0) \scriptstyle = a - 1 \\ \scriptstyle \nu(b - 1) \scriptstyle = 0}} \Pi_\mu \big{(} X \otimes I^{\otimes (N - 1)} \big{)} \Pi_\nu.
    \end{equation*}

\end{lemma}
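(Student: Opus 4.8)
The plan is to compute the Choi matrix of $T_{\mu,\nu}$ by hand, recognise it as a single partially transposed permutation operator, and then deduce the summed identity from a fiber count. First I would write out
$$C_{T_{\mu,\nu}} = \sum_{i,j=1}^d \ketbra{i}{j} \otimes \Pi_\mu\big(\ketbra{i}{j}\otimes I^{\otimes(N-1)}\big)\Pi_\nu,$$
pull the permutation operators $\Pi_\mu,\Pi_\nu$ out of the sum (they act only on the $N$ output factors, internally labelled $0,\ldots,N-1$, which become the Choi positions $1,\ldots,N$ under the relabelling $k\mapsto k+1$), and use $\sum_{i,j}\ketbra{i}{j}\otimes\ketbra{i}{j} = \omega$ to couple the ancilla (Choi position $0$) to the first output factor (Choi position $1$). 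This gives $C_{T_{\mu,\nu}} = \Pi_\mu\,(\omega_{(0,1)}\otimes I^{\otimes(N-1)})\,\Pi_\nu$, where now $\mu,\nu$ are read as permutations of $\{0,1,\ldots,N\}$ fixing the ancilla $0$, with $\mu(1)=a$ (from $\mu(0)=a-1$ before relabelling) and $\nu(b)=1$ (from $\nu(b-1)=0$).

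Next I would invoke the identity $\Pi_{(0\,1)}^{\To} = \omega_{(0,1)}\otimes I^{\otimes(N-1)}$, already recorded in the text, to turn the middle factor into a partial transpose. Since $\mu$ and $\nu$ fix the position $0$ on which $\To=\T_0$ acts, the partial transpose commutes with conjugation by them: for any operator $M$ and any $\alpha,\beta\in\mathfrak{S}_{N+1}$ fixing $0$ one has $\Pi_\alpha M^{\To}\Pi_\beta = (\Pi_\alpha M\Pi_\beta)^{\To}$, which is immediate on elementary tensors $M_0\otimes M_{\text{rest}}$ and then follows by linearity. Applying this with $M=\Pi_{(0\,1)}$ yields $C_{T_{\mu,\nu}} = \big(\Pi_\mu\,\Pi_{(0\,1)}\,\Pi_\nu\big)^{\To} = \Pi_\tau^{\To}$ with $\tau = \mu\circ(0\,1)\circ\nu\in\mathfrak{S}_{N+1}$. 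To confirm $\tau\in\Sigma_{a,b}$ I compute $\tau(0)=\mu((0\,1)(\nu(0)))=\mu(1)=a$ and $\tau(b)=\mu((0\,1)(\nu(b)))=\mu(0)=0$. This proves the first assertion; conversely, given $\sigma\in\Sigma_{a,b}$ one picks any $\nu$ fixing $0$ with $\nu(b)=1$ and sets $\mu=\sigma\,\nu^{-1}\,(0\,1)$, which automatically satisfies the required constraints, so every $\sigma\in\Sigma_{a,b}$ indeed arises.

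For the summed identity I would sum the first statement over admissible pairs: by linearity of the Choi map, $C_{T_{a,b}} = \sum_{\mu(0)=a-1,\,\nu(b-1)=0}\Pi_{\mu\circ(0\,1)\circ\nu}^{\To}$, each term being a $\Pi_\sigma^{\To}$ with $\sigma\in\Sigma_{a,b}$. The real content is then a counting lemma: the assignment $(\mu,\nu)\mapsto \mu\circ(0\,1)\circ\nu$ is exactly $(N-1)!$-to-one onto $\Sigma_{a,b}$. Fixing $\sigma\in\Sigma_{a,b}$, the permutation $\mu$ is forced by $\nu$ through $\mu=\sigma\,\nu^{-1}\,(0\,1)$, and one checks that whenever $\nu$ fixes $0$ with $\nu(b)=1$ the resulting $\mu$ automatically fixes $0$ with $\mu(1)=a$; since there are exactly $(N-1)!$ such $\nu$, each $\sigma$ has exactly $(N-1)!$ preimages, giving $C_{T_{a,b}} = (N-1)!\sum_{\sigma\in\Sigma_{a,b}}\Pi_\sigma^{\To}$. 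I expect the genuinely delicate part to be precisely this bookkeeping: tracking the index shift between the internal labelling $\{0,\ldots,N-1\}$ of $T_{\mu,\nu}$ and the Choi labelling $\{0,\ldots,N\}$, and verifying that the constraints $\mu(0)=a-1$, $\nu(b-1)=0$ are exactly what forces $\tau$ into $\Sigma_{a,b}$ and makes the preimage count independent of $\sigma$. The operator manipulations themselves are routine once the partial-transpose/conjugation commutation is in hand.
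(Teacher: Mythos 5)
Your proof is correct, and it runs in the opposite direction from the paper's. The paper starts from the operator $\Pi_\sigma^{\To}$, invokes the structural decomposition $\Pi_\sigma^{\To} = \Pi_{(1\,a)}\,(\omega_{(0,1)} \otimes \Pi_{\hat\sigma})\,\Pi_{(1\,b)}$ of Lemma~\ref{lem:premutationForm}, and computes the inverse Choi map $X \mapsto \Tr_0\big[\Pi_\sigma^{\To}(X^\T \otimes I^{\otimes N})\big]$, reading off at the end that one may take $\Pi_\mu = \Pi_{(0\,(a-1))}$ and $\Pi_\nu = (I \otimes \Pi_{\hat\sigma})\,\Pi_{(0\,(b-1))}$. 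You instead compute the forward Choi matrix of $T_{\mu,\nu}$, recognize the middle factor as $\Pi_{(0\,1)}^{\To}$, and use the (correct, and easily checked on elementary tensors) fact that the partial transpose on the ancilla commutes with conjugation by permutations fixing it, so that $C_{T_{\mu,\nu}} = \Pi_{\mu\circ(0\,1)\circ\nu}^{\To}$; membership of $\mu\circ(0\,1)\circ\nu$ in $\Sigma_{a,b}$, and the converse surjectivity via $\mu = \sigma\,\nu^{-1}\,(0\,1)$, are then two-line group computations, and your index bookkeeping (the shift between the internal labels $\{0,\dots,N-1\}$ and the Choi labels $\{1,\dots,N\}$) is handled correctly. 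Your route buys two things: it bypasses Lemma~\ref{lem:premutationForm} entirely, needing only the identity $\Pi_{(0\,i)}^{\To} = \omega_{(0,i)} \otimes I^{\otimes (N-1)}$ already recorded in the text, and, more substantively, it makes explicit the $(N-1)!$-to-one fiber count of $(\mu,\nu) \mapsto \mu\circ(0\,1)\circ\nu$ over $\Sigma_{a,b}$, which is precisely where the prefactor $\frac{1}{(N-1)!}$ in the summed identity comes from. The paper's proof establishes only the first assertion and introduces the second with ``as a consequence,'' leaving both surjectivity and the fiber count implicit, so your argument actually fills a gap in the written proof. What the paper's direction buys in exchange is an explicit canonical pair $(\mu_\sigma,\nu_\sigma)$ for each $\sigma$, expressed through $\hat\sigma$ and thus consistent with the structural machinery used throughout the appendix.
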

\begin{proof}
    Let~$1 \leq a,b \leq N$, and $\sigma \in \Sigma_{a,b}$. then from Lemma \ref{lem:premutationForm} we have
    \begin{equation*}
        \Pi_\sigma^{\To} = \Pi_{(1 \: a)} \: (\omega \otimes \Pi_{\hat{\sigma}}) \; \Pi_{(1 \: b)},
    \end{equation*}
    such that on~$X \in \mathcal{M}_d$, the partial trace yields
    \begin{align*}
        \Tr_0 \big{[} &\Pi_\sigma^{\To} (X^\T \otimes I^{\otimes N}) \big{]} = \Tr_0 \big{[} \Pi_{(1 \: a)} \: (\omega \otimes \Pi_{\hat{\sigma}}) \; \Pi_{(1 \: b)} \; (X^\T \otimes I^{\otimes N}) \big{]} \\
        &= \Tr_0 \big{[} (I \otimes \Pi_{(0 \: (a-1))}) \: (\omega \otimes \Pi_{\hat{\sigma}}) \; (I \otimes \Pi_{(0 \: (b-1))}) \; (X^\T \otimes I^{\otimes N}) \big{]} \\
        &= \sum_{i,j \in [d]} \Tr_0 \big{[} (I \otimes \Pi_{(0 \: (a - 1))}) \: (\ketbra{i}{j} \otimes \ketbra{i}{j} \otimes \Pi_{\hat{\sigma}}) \; (I \otimes \Pi_{(0 \: (b - 1))}) \; (X^\T \otimes I^{\otimes N}) \big{]} \\
        &= \sum_{i,j \in [d]} \Tr \big{[} \ketbra{i}{j} X^\T\big{]} \cdot \Pi_{(0 \: (a - 1))} \: (\ketbra{i}{j} \otimes \Pi_{\hat{\sigma}}) \; \Pi_{(0 \: (b - 1))} \\
        &= \sum_{i,j \in [d]} \bra{i} X \ket{j} \cdot \Pi_{(0 \: (a - 1))} \: (\ketbra{i}{j} \otimes \Pi_{\hat{\sigma}}) \; \Pi_{(0 \: (b - 1))} \\
        &= \Pi_{(0 \: (a - 1))} \: (X \otimes \Pi_{\hat{\sigma}}) \; \Pi_{(0 \: (b - 1))} \\
        &= \Pi_{(0 \: (a - 1))} \: (X \otimes I^{\otimes (N - 1)}) \; (I \otimes \Pi_{\hat{\sigma}}) \; \Pi_{(0 \: (b - 1))}.
    \end{align*}
    And the result hold for~$\Pi_\mu = \Pi_{(0 \: (a - 1))}$ and~$\Pi_\nu = (I \otimes \Pi_{\hat{\sigma}}) \circ \Pi_{(0 \: (b - 1))}$.
\end{proof}

\begin{lemma}\label{lemm:appchoi2}
    Let an orthogonal projector~$\widetilde{C}_{T_\beta}$ for some reals~${(\beta_i)}_{1 \leq i \leq N}$ satisfying Eq.~\eqref{projectionConditionEquation}. Then for any~$1 \leq i \leq N$, we have
    \begin{equation*}
        \widetilde{C}_{T_\beta} \: \omega_{(0,i)} = \frac{1}{(N - 1)!} \sum_{\substack{1 \leq a \leq N \\ \sigma \in \Sigma_{a,i}}} \beta_a \bigg{(} (d - 1) \beta_i + \sum_{1 \leq b \leq N} \beta_b \bigg{)} \Pi_\sigma^{\To}.
\end{equation*}

Let Choi matrix~$C_{T_\beta}$ for some reals~${(\beta_i)}_{1 \leq i \leq N}$ satisfying Eq.~\eqref{projectionConditionEquation}. Then for any~$1 \leq i \leq N$, we have
    \begin{equation*}
        \Tr \Big{[} C_{T_\beta} \: \omega_{(0,i)} \Big{]} = d {\bigg{(} (d - 1) \beta_k + \sum^N_{j = 1} \beta_j \bigg{)}}^2.
    \end{equation*}    
\end{lemma}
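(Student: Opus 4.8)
The plan is to establish the operator identity first and then deduce the scalar identity by taking traces, observing throughout that $\omega_{(0,i)} = \Pi^{\To}_{(0\,i)}$ with $(0\,i) \in \Sigma_{i,i}$, i.e.\ $c=d=i$ in the notation of Lemma~\ref{lem:struct}. The engine for the first identity is the multiplication rule for partially transposed permutation operators that already underlies the computation of $\big(\widetilde{C}_{T_\beta}\big)^2$ in Lemma~\ref{lemm:appchoi0}. Using the factorization $\Pi_\sigma^{\To} = \Pi_{(1\,a)}\,(\omega_{(0,1)} \otimes \Pi_{\hat\sigma})\,\Pi_{(1\,b)}$ of Lemma~\ref{lem:struct}, for a fixed $\tau \in \Sigma_{c,d}$ the product $\Pi_\sigma^{\To}\Pi_\tau^{\To}$ collapses the middle transpositions $\Pi_{(1\,b)}\Pi_{(1\,c)}$, and the relation $\omega_{(0,1)}^2 = d\,\omega_{(0,1)}$ produces an extra factor $d$ exactly when $b=c$. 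Since $\sigma \mapsto \hat\sigma$ identifies $\Sigma_{a,b}$ with $\mathfrak{S}_{N-1}$ and right-multiplication by the fixed $\hat\tau$ permutes $\mathfrak{S}_{N-1}$, summing over $\sigma$ gives
$$\sum_{\sigma \in \Sigma_{a,b}} \Pi_\sigma^{\To}\,\Pi_\tau^{\To} = \kappa \sum_{\rho \in \Sigma_{a,d}} \Pi_\rho^{\To}, \qquad \kappa = \begin{cases} d & \text{if } b = c, \\ 1 & \text{if } b \neq c. \end{cases}$$

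Specializing to $\tau = (0\,i)$, so that $d=i$ and $\kappa = d$ precisely when $b=i$, and inserting this into $\widetilde{C}_{T_\beta}\,\omega_{(0,i)} = \sum_{a,b}\frac{\beta_a\beta_b}{(N-1)!}\sum_{\sigma\in\Sigma_{a,b}}\Pi_\sigma^{\To}\Pi^{\To}_{(0\,i)}$, the inner sum over $b$ splits into the term $b=i$ (weighted by $d\beta_i$) and the terms $b\neq i$ (weighted by $\beta_b$), which together contribute the factor $(d-1)\beta_i + \sum_b \beta_b$. After relabelling $\rho$ as $\sigma$ this is exactly the first identity.

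For the second identity I take the full trace of the first one, using $\Tr[\Pi_\sigma^{\To}] = \Tr[\Pi_\sigma]$. Tracing additionally over the $0$-th tensor factor in Lemma~\ref{lem:permuPartialTrace} gives $\Tr\big[\sum_{\sigma\in\Sigma_{a,i}}\Pi_\sigma^{\To}\big] = \kappa_a\,(N-1)!\,\Tr[P^+_{\mathfrak{S}_{N-1}}]$, with $\kappa_a = d$ if $a=i$ and $\kappa_a=1$ otherwise. The weight $\sum_a \beta_a \kappa_a$ again collapses to $(d-1)\beta_i + \sum_a\beta_a$, so $\Tr[\widetilde{C}_{T_\beta}\,\omega_{(0,i)}] = \Tr[P^+_{\mathfrak{S}_{N-1}}]\big((d-1)\beta_i + \sum_b\beta_b\big)^2$. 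Multiplying by the normalization constant $\frac{d}{\Tr P^+_{\mathfrak{S}_N}}\frac{N+d-1}{N}$ of $C_{T_\beta}$ and using $\frac{N+d-1}{N}\Tr[P^+_{\mathfrak{S}_{N-1}}] = \Tr[P^+_{\mathfrak{S}_N}]$ (the identity closing Lemma~\ref{lemm:appchoi0}) cancels the prefactor and leaves $d\big((d-1)\beta_i + \sum_j\beta_j\big)^2$, as claimed.

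The main obstacle is the single-$\tau$ multiplication rule: one must check that summing over $\sigma$ alone, rather than over both $\sigma$ and $\tau$ as in Lemma~\ref{lemm:appchoi0}, still reconstructs the entire sum over $\Sigma_{a,d}$ with the same constant $\kappa$. This hinges on the bijection $\sigma \leftrightarrow \hat\sigma$ between $\Sigma_{a,b}$ and $\mathfrak{S}_{N-1}$ together with the invariance of $\mathfrak{S}_{N-1}$ under right translation by $\hat\tau$; once this is granted, everything else is the bookkeeping already carried out in the proof of Lemma~\ref{lemm:appchoi0}.
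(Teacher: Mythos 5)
Your proof is correct and follows essentially the same route as the paper's: the first identity via the product rule $\Pi_\sigma^{\To}\Pi_{(0\,i)}^{\To}$ (picking up a factor $d$ exactly when $b=i$), and the second by tracing it against Lemma~\ref{lem:permuPartialTrace} and cancelling the normalization through $\Tr P^+_{\mathfrak{S}_N} = \frac{N+d-1}{N}\Tr P^+_{\mathfrak{S}_{N-1}}$. The only difference is one of detail: the paper dismisses the first identity as ``a direct calculation,'' whereas you make explicit the bijection $\Sigma_{a,b}\to\Sigma_{a,i}$ induced by right multiplication by the fixed $\tau=(0\,i)$, which is precisely the point that justifies that terse step.
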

\begin{proof}
    Since we are summing on all permutations~$\Pi^{\To}_{\sigma_{a,b}}$, a direct calculation gives us
    \begin{align*}
        \widetilde{C}_{T_\beta} \: \omega_{(0,i)} &= \frac{1}{(N - 1)!} \sum_{\substack{1 \leq a,b \leq N \\ \sigma \in \Sigma_{a,b}}} \beta_a \beta_b \cdot \Pi_\sigma^{\To} \: \omega_{(0,i)} \\
        &= \frac{1}{(N - 1)!} \sum_{1 \leq a \leq N} \beta_a \bigg{(} (d - 1) \sum_{\sigma \in \Sigma_{a,i}} \beta_i \cdot \sigma^{\To} + \sum_{\substack{1 \leq b \leq N \\ \sigma \in \Sigma_{a,i}}} \beta_b \cdot \Pi_\sigma^{\To} \bigg{)} \\ 
        &= \frac{1}{(N - 1)!} \sum_{\substack{1 \leq a \leq N \\ \sigma \in \Sigma_{a,i}}} \beta_a \bigg{(} (d - 1) \beta_i + \sum_{1 \leq b \leq N} \beta_b \bigg{)} \Pi_\sigma^{\To}.
    \end{align*}

And then we look at the action of the Choi matrix~$C_{T_\beta}$ onto some~$\omega_{(0,i)}$, and take the trace.

    The Lemma \ref{lem:permuPartialTrace} yields
    \begin{align*}
        \Tr \Big{[} C_{T_\beta} \: \omega_{(0,i)} \Big{]} &= \frac{d (N + d - 1)}{N \: \Tr \! \Big{[} P^+_{\mathfrak{S}_N} \Big{]}} \sum^N_{a = 1} \frac{\beta_a}{(N - 1!)} \bigg{(} (d - 1) \beta_i + \sum^N_{b = 1} \beta_b \bigg{)} \Tr \bigg{[} \sum_{\sigma \in \Sigma_{a,i}} \Pi_\sigma^{\To} \bigg{]} \\
        &= \frac{d (N + d - 1)}{N \: \Tr \! \Big{[} P^+_{\mathfrak{S}_N} \Big{]}} \bigg{(} (d - 1) \beta_i + \sum^N_{b = 1} \beta_b \bigg{)} \bigg{(} (d - 1) \beta_i + \sum^N_{a = 1} \beta_a \bigg{)} \Tr \Big{[} P^+_{\mathfrak{S}_{N - 1}} \Big{]} \\
        &= \frac{d (N + d - 1)}{N \: \Tr \! \Big{[} P^+_{\mathfrak{S}_N} \Big{]}} {\bigg{(} (d - 1) \beta_i + \sum^N_{j = 1} \beta_j \bigg{)}}^2 \Tr \Big{[} P^+_{\mathfrak{S}_{N - 1}} \Big{]} \\
        &= d {\bigg{(} (d - 1) \beta_k + \sum^N_{j = 1} \beta_j \bigg{)}}^2.
    \end{align*}
\end{proof}

\end{document}